\newtheorem{thm}{Theorem}[section]
\newtheorem{lem}{Lemma}[section]
\theoremstyle{definition}
\newtheorem{defn}{Definition}[section]
\theoremstyle{remark}
\newtheorem{rem}{Remark}[section]
\theoremstyle{definition}
\newtheorem{exm}{Example}
\numberwithin{equation}{section}
\newcommand{\rmd}{\mathrm{d}}
\newcommand\rmi{{\mathrm{i}}}
\newcommand{\bmf}[1]{{\mathbf{#1}}}
\title[ Stress concentration via quasi-Minnaert resonance in bubble-elastic  ]{ Stress concentration via quasi-Minnaert resonance in bubble-elastic structures and applications }
\author{Ruixiang Tang}
\address{School of Mathematics, Jilin University, Changchun 130012, China.}
\email{tangrx97@gmail.com; tangrx23@mails.jlu.edu.cn}
\author{Huaian Diao}
\address{School of Mathematics, Jilin University and Key Laboratory of Symbolic Computation and Knowledge Engineering of Ministry of Education, Changchun, Jilin, China.}
\email{diao@jlu.edu.cn; hadiao@gmail.com}
\author{Hongyu Liu}
\address{Department of Mathematics, City University of Hong Kong, Kowloon, Hong Kong SAR, China.}
\email{hongyu.liuip@gmail.com; hongyliu@cityu.edu.hk}
\author{Weisheng Zhou}
\address{School of Mathematics, Jilin University, Changchun 130012, China.}
\email{zhouws24@mails.jlu.edu.cn;wszhou1211@163.com}
\date{} % Activate to display a given date or no date (if empty),
\begin{document}
	
	\maketitle
  	\begin{abstract}

Stress concentration in bubble-elastic scattering scenarios has significant applications in engineering blasting and medical treatments. This study provides a comprehensive mathematical analysis of stress concentration in bubbly-elastic structures, induced by the quasi-Minnaert resonance. The quasi-Minnaert resonance manifests as two distinct wave patterns near the bubble's boundary: boundary localization and high-oscillation phenomena. We demonstrate how to leverage the quasi-Minnaert resonance to induce stress concentration in the elastic total wave field near the air bubble's boundary by appropriately selecting the incident elastic wave and high-contrast structure. The interaction between the air bubble and the elastic background couples two physical wave fields-acoustic and elastic waves-across the bubble's boundary. The intricate transmission conditions, combined with the scalar nature of acoustic waves and the vectorial nature of elastic waves, present significant analytical challenges. To address these, we employ layer potential theory and asymptotic analysis to rigorously establish the stress concentration and quasi-Minnaert resonance phenomena in a radially geometry bubble-elastic model. Extensive numerical experiments are conducted to demonstrate the stress concentration phenomenon alongside quasi-Minnaert resonance for various bubble geometries, including a unit disk, a corner domain, an apple-shaped domain in $\mathbb{R}^2$, and a ball in $\mathbb{R}^3$. The findings of this study enhance the understanding of stress concentration mechanisms and their applications in engineering blasting and medical therapies.

		\noindent{\bf Keywords:}~~Acoustic-elastic;  Neumann-Poincar\'e operator; Bubbly-elastic structure; Stress concentration;   Boundary localization;  Surface resonance
		
		\noindent{\bf 2020 Mathematics Subject Classification:}~~35B34; 35C20; 35M10; 35P25; 47G40
	\end{abstract}
		
	%35B34      Resonance in context of PDEs 
	%35C20     Asymptotic expansions of solutions to PDEs
	%35M10     PDEs of mixed type
	%35P25  	Scattering theory for PDEs
	%47G40    Potential operators

	\section{Introduction}

  In this paper, we investigate the stress concentration phenomenon in a bubble embedded within a soft elastic material in the subwavelength regime. This stress concentration arises from the resonance associated with the bubbly-elastic structure. The resonance of bubbly-elastic structures within elastic materials has been extensively studied \cite{CTL12, CTL15, LLZ, PM73}. Leveraging these resonant properties, bubbly-elastic structures have been employed in the development of novel metamaterials, including bubble phononic crystals \cite{LB09}, super-absorption applications \cite{LPLF}, and the mitigation of underwater sound transmission \cite{CTL12}. The bubble immersed in an elastic background undergoes shape deformations and, during its expansion and collapse, induces shear stresses on adjacent structures \cite{G10, OSK, ZWH}. During this scattering process, stress concentration manifests in the external total wave field \cite{LB09, LLH06, ROO, ZWH}. This phenomenon has practical applications in medical diagnostics, such as detecting and characterizing kidney stones \cite{SAB}. Despite numerous physical experiments and medical engineering applications of stress concentration in bubbly-elastic structures, the mathematical analysis of this phenomenon remains largely unexplored. In this paper, we provide a rigorous mathematical investigation of stress concentration in bubbly-elastic structures.

  This study examines the stress concentration phenomenon in the external total field of bubble-elastic structures, utilizing the \emph{quasi-Minnaert resonance} under high density contrast conditions. We demonstrate that the quasi-Minnaert resonance induces stress concentration in the external elastic wave field near the bubble's boundary. Furthermore, we establish that the quasi-Minnaert resonance frequency forms a continuous spectrum, in contrast to the discrete Minnaert resonance frequency analyzed in \cite{AFGL, LLZ, LZ23}.  The Minnaert resonance, a significant low-frequency phenomenon, arises in high-contrast physical configurations, such as a bubble immersed in a liquid, within acoustic contexts \cite{M1933}. The Minnaert resonance in acoustic settings was first rigorously analyzed in \cite{AFGL}, which established its dependence on high-contrast parameters. For bubbles embedded in soft elastic media, \cite{LLZ} derives a relationship between the Minnaert resonance frequency and the density contrast parameters between the bubble and the surrounding medium. A comparable relationship for hard inclusions in soft elastic media is developed in \cite{LZ23}. Recent studies further demonstrate that bubbles, leveraging Minnaert resonance as contrast agents, enable the reconstruction of material properties such as mass density and bulk modulus \cite{ACC, DGS}.

  The stress concentration phenomenon is closely tied to the boundary localization and high-oscillation behavior of the generated wave field near the bubble's boundary. The quasi-Minnaert resonance manifests as two distinct wave patterns near the boundary: boundary localization and high oscillation phenomena. Recent studies \cite{ABD24, ABU25, ACL22, ADH24, AFH18, AHR25, DBH25} have extensively investigated the boundary localization of wave fields under non-Hermitian resonances in high-contrast materials. In this paper, we provide a rigorous analysis of stress concentration induced by the quasi-Minnaert resonance. The quasi-Minnaert resonance in acoustic and elastic scattering was recently introduced in \cite{DTL, ZDL}. Here, we demonstrate how to leverage the quasi-Minnaert resonance to induce stress concentration in the elastic total wave field near the air bubble's boundary by appropriately selecting the incident elastic wave and high-contrast structure. The interaction between the air bubble and the elastic background couples two physical wave fields—acoustic and elastic waves—across the bubble's boundary. The intricate transmission conditions, combined with the scalar nature of acoustic waves and the vectorial nature of elastic waves, pose significant analytical challenges. To address these, we employ layer potential theory and asymptotic analysis to rigorously establish the stress concentration and quasi-Minnaert resonance phenomena in a radially symmetric bubble-elastic model.  Extensive numerical experiments are conducted to validate the theoretical findings of this paper. Stress concentration and quasi-Minnaert resonance are demonstrated through several numerical examples involving various bubble shapes in $\mathbb{R}^N$ ($N=2, 3$), achieved by carefully selecting incident waves.

\medskip
	
      The main contributions of this paper can be summarized in several key points:

	 \begin{enumerate}
	 	\item[(i)]

       We establish a sharp quantitative lower bound for the stress in a neighborhood of the bubble's boundary in Theorem~\ref{thm:Eu definition in thm}, as presented in \eqref{eq:Eus th4.2}. This lower bound reveals an intricate relationship between the stress of the exterior total field and several key parameters: the Lam\'e parameters, the index $n$ associated with a tuned incident wave, the comparison ratio $\tau$ of two different wave speeds in the acoustic and elastic media, and the incident wave number. From \eqref{eq:Eus th4.2}, we deduce that the stress concentration of the exterior total field, induced by the choice of the incident wave's index and the high-oscillation parameter, increases with a larger index $n$, corresponding to greater stress. Numerical examples, presented in Table~\ref{tab:22} of Section~\ref{sec:5}, demonstrate that the lower bound in \eqref{eq:Eus th4.2} is sharp for estimating the stress. Meanwhile, through the mathematical analysis of the radially symmetric case, we have established in Section~\ref{sec:5} that the stress concentration of the exterior total field holds for general shapes.

      For further explanation of the stress concentration of the exterior total field in Theorem~\ref{thm:Eu definition in thm}, we provide the following remarks for discussion. Remark~\ref{rem:41rem} clarifies that, as the stress of the incident wave is bounded relative to the stress of the exterior scattering field, the stress concentration in the exterior total field is mainly contributed by  the corresponding stress concentration in the exterior scattering field. In Remark~\ref{rem:43rem}, we discuss how the interior total field, the exterior scattering field, and the exterior total field exhibit different phenomena depending on the index $n$ of the incident wave or the high-oscillation parameter satisfying different conditions.

       \item[(ii)]

       To induce stress concentration in a neighborhood of the bubble's boundary, we first localize the wave and generate high-oscillation behavior near the boundary, as established in Theorems~\ref{thm:th3.1} and~\ref{thm:nabla u in thm}. These wave behaviors, referred to as boundary localization and surface resonance, respectively, characterize the dynamics near the bubble's boundary. When both boundary localization and surface resonance occur simultaneously, the quasi-Minnaert resonance is triggered. In this context, we demonstrate that the stress concentration mechanism is driven by the quasi-Minnaert resonance.

       Meanwhile, Remark~\ref{rem:32rem} introduces that we can flexibly choose the incident wave to separately realize the quasi-Minnaert resonance for the interior total field and the exterior scattering field, respectively. In Remark~\ref{rem:33rem}, we provide another option, incorporating the index of the incident wave and the high-oscillation parameter, to independently achieve the quasi-Minnaert resonance of the interior total field and the exterior scattering field, respectively. Both stress concentration and quasi-Minnaert resonance depend on a tuned incident wave and the high-contrast structure of the physical configuration. Specifically, a larger index $n$ associated with the incident wave induces stronger boundary localization and surface resonance.

	 \end{enumerate}

	The paper is structured as follows: Section~\ref{sec:mathematical formula} establishes the mathematical framework and layer potential theory, formally defining the concepts of  {stress concentration} and  {quasi-Minnaert resonance}. Section~\ref{sec:concentration of u and u^s} analyzes the boundary localization and surface resonance behavior of the internal acoustic total field and the exterior elastic scattered field across their respective interior and exterior boundary layers. Section~\ref{sec:summary of major findings} rigorously establishes the stress concentration phenomenon in the exterior elastic total field. Section~\ref{sec:5} validates the theoretical results through numerical experiments and demonstrates the stress concentration induced by appropriately selected incident waves and high-contrast bubble-elastic structures.

	\section{Mathematical formulations}\label{sec:mathematical formula}
	In this section, we present the mathematical formulation for our subsequent analysis. Let $ D \subset \mathbb{R}^3 $ denote an air bubble defined by the parameters $ (\rho_b, \kappa) $. Here, $ \rho_b \in \mathbb{R}_+ $ denotes the air density, and $ \kappa \in \mathbb{R}_+ $ represents the bulk modulus of the air inside the bubble. The domain $ \mathbb{R}^3 \setminus \overline{D} $ is occupied by a homogeneous elastic medium parameterized by $ (\tilde{\lambda}, \tilde{\mu}, \rho_e) $. The constant $ \rho_e \in \mathbb{R}_+ $ corresponds to the density of the surrounding elastic medium. The Lam\'e constants adhere to the strong convexity condition as specified in \cite{Kupradze}:
	\begin{align}\notag
		\tilde \mu>0, \quad  3 \tilde\lambda+2\tilde \mu>0.
	\end{align}
	Let $\mathbf{u}^i$ denote a time harmonic incident elastic wave constituting an entire solution to  
	\begin{align}\label{eq:incident elastic wave}  
		\mathcal{L}_{\tilde{\lambda}, \tilde{\mu}} \mathbf{u}^i(\mathbf{x}) + \omega^2 \rho_e \mathbf{u}^i(\mathbf{x}) = \mathbf{0} \quad \text{in } \mathbb{R}^3,  
	\end{align}  
	where $\omega > 0$ denotes the angular frequency, and the Lam\'e operator $\mathcal{L}_{\tilde{\lambda}, \tilde{\mu}}$ corresponding to the parameters $(\tilde{\lambda}, \tilde{\mu})$ is defined as  
	$$  
	\mathcal{L}_{\tilde{\lambda}, \tilde{\mu}} := \tilde{\mu} \triangle + (\tilde{\lambda} + \tilde{\mu}) \nabla \nabla \cdot.  
	$$  
	
	We study the interaction between a single air bubble $(D; \kappa, \rho_b)$ and the surrounding elastic medium $(\mathbb{R}^3 \setminus \overline{D}; \tilde{\lambda}, \tilde{\mu}, \rho_e)$. The  bubble-elastic scattering  is governed by the coupled PDE system as established in \cite{DLLT,LLZ,OS}:  
	\begin{equation}\label{eq:system}  
		\begin{cases}  
			\displaystyle  
			\frac{1}{\rho_b} \nabla \cdot (\nabla u(\mathbf{x})) + \frac{\omega^2}{\kappa} u(\mathbf{x}) = 0 & \text{in } D, \\  
			\mathcal{L}_{\tilde{\lambda}, \tilde{\mu}} \mathbf{u}(\mathbf{x}) + \omega^2 \rho_e \mathbf{u}(\mathbf{x}) = \mathbf{0} & \text{in } \mathbb{R}^3 \setminus \overline{D}, \\  
			\displaystyle  
			\mathbf{u}(\mathbf{x}) \cdot \boldsymbol{\nu} = \frac{1}{\rho_b \omega^2} \nabla u(\mathbf{x}) \cdot \boldsymbol{\nu} & \text{on } \partial D, \\  
			\partial_{\boldsymbol{\nu}, \tilde{\lambda}, \tilde{\mu}} \mathbf{u}(\mathbf{x}) = -u(\mathbf{x}) \boldsymbol{\nu} & \text{on } \partial D, \\  
			\mathbf{u}(\mathbf{x}) - \mathbf{u}^i(\mathbf{x}) \quad \text{satisfies the radiation condition},  
		\end{cases}  
	\end{equation}  
	where $\mathbf{u}(\mathbf{x})$ represents the total elastic wave field in $\mathbb{R}^3 \setminus \overline{D}$, and $u(\mathbf{x})$ denotes the acoustic pressure within $D$. In \eqref{eq:system}, the third equation enforces continuity of the normal displacement across $\partial D$, while the fourth ensures stress continuity at the interface. The co-normal derivative operator $\partial_{\boldsymbol{\nu}, \tilde{\lambda}, \tilde{\mu}}$, associated with the Lamé parameters $(\tilde{\lambda}, \tilde{\mu})$, is defined as  
	\begin{align}\notag %\label{eq:co_normal_derivative}  
		\partial_{\boldsymbol{\nu}, \tilde{\lambda}, \tilde{\mu}} \mathbf{u} := \tilde{\lambda} (\nabla \cdot \mathbf{u}) \boldsymbol{\nu} + 2\tilde{\mu} (\nabla^s \mathbf{u}) \boldsymbol{\nu},  
	\end{align}  
	where $\boldsymbol{\nu}$ denotes the outward unit normal to $\partial D$, and the symmetric gradient operator $\nabla^s$ is expressed as  
	\begin{align}\notag %\label{eq:symmetric_gradient}  
		\nabla^s \mathbf{u} := \frac{1}{2} \left( \nabla \mathbf{u} + \nabla \mathbf{u}^\top \right).  
	\end{align}  
      Here $\nabla \mathbf{u}=\left(\partial_j u_i\right)_{i, j=1}^3$ denotes the gradient of $\mathbf u$ at $\mathbf x$ and the superscript $\top$ denotes the matrix transpose. Using Helmholtz decomposition, the scattered wave field \(\mathbf{u}^s\) can be expressed as a superposition of  compressional and shear waves:
		\[
		\mathbf{u}^s = \mathbf{u}_p^s + \mathbf{u}_s^s,
		\]
		where the compressional wave component \(\mathbf{u}_p^s\) is given by:
		\[
		\mathbf{u}_{p}^{s} = -\frac{1}{\tilde{k}_{p}^{2}} \nabla (\nabla \cdot \mathbf{u}^{s}),
		\]
		and the shear wave component \(\mathbf{u}_s^s\) is defined as:
		\[
		\mathbf{u}_{s}^{s} = \frac{1}{\tilde{k}_{s}^{2}} \nabla \times \nabla \times \mathbf{u}^{s}.
		\]
	In this formulation, \(\tilde{k}_p\) and \(\tilde{k}_s\) are the wave numbers for the compressional and shear waves, respectively, defined as:
	\begin{align}\notag%\label{eq:tilde kp ks}
			\tilde{k}_p = \frac{\omega}{\tilde{c}_p}, \quad \tilde{k}_s = \frac{\omega}{\tilde{c}_s},
	\end{align}
		with 
		\begin{align}\notag%\label{eq:cs cp defination}
			\tilde{c}_s=\sqrt{\tilde{\mu} / \rho_e} \quad \mbox{and} \quad \tilde{c}_p=\sqrt{(\tilde{\lambda}+2 \tilde{\mu}) / \rho_e},
		\end{align} 
		representing the compressional and shear wave speeds in the medium. The radiation condition in \eqref{eq:system} designates the following condition as $|\mathbf{x}| \rightarrow+\infty$ \cite{Kupradze}:
		$$
		\begin{array}{r}
			\left(\nabla \times \nabla \times \mathbf{u}^s\right)(\mathbf{x}) \times \frac{\mathbf{x}}{|\mathbf{x}|}- \mathrm{i} \tilde{k}_s \nabla \times \mathbf{u}^s(\mathbf{x})=\mathcal{O}\left(|\mathbf{x}|^{-2}\right), \\
			\frac{\mathbf{x}}{|\mathbf{x}|} \cdot\left[\nabla\left(\nabla \cdot \mathbf{u}^s\right)\right](\mathbf{x})-\mathrm{i} \tilde{k}_p \nabla \mathbf{u}^s(\mathbf{x})=\mathcal{O}\left(|\mathbf{x}|^{-2}\right),
		\end{array}
		$$
		where $\rmi$ is the imaginary unit. To facilitate analysis of the coupled system \eqref{eq:system}, we introduce the following non-dimensional parameters:  
	\begin{align}\label{eq:delta tau}
		\delta = \frac{\rho_b}{\rho_e}, \quad
		\tau = \frac{c_b}{\tilde{c}_p} = \frac{\sqrt{\kappa / \rho_b}}{\sqrt{(\tilde{\lambda} + 2\tilde{\mu}) / \rho_e}}, \quad
		\tilde{k} = \omega / c_b, \quad
		c_b = \sqrt{\kappa / \rho_b},
	\end{align}  
	where $c_b$ denotes the acoustic wave speed within $D$. The parameter $\delta \ll 1$ serves to quantify the density contrast between the bubble $D$ and the surrounding elastic medium $ \mathbb{R}^3 \setminus \overline{D} $, whereas $\tau$ serves to characterize the wave speed contrast between the air bubble $D$ and the elastic medium $ \mathbb{R}^3 \setminus \overline{D} $. Let $l_D$ denote the characteristic diameter of the bubble $D$. We introduce the following non-dimensional parameters:  
	\begin{align}\label{eq:non dim}
		\mathbf{x}' = \frac{\mathbf{x}}{l_D}, \quad
		k = \tilde{k} l_D, \quad
		\mathbf{u}' = \frac{\mathbf{u}}{l_D}, \quad
		\mu = \frac{\tilde{\mu}}{\tilde{\lambda} + 2\tilde{\mu}}, \quad
		\lambda = \frac{\tilde{\lambda}}{\tilde{\lambda} + 2\tilde{\mu}}, \quad
		v' = \frac{v}{\rho_b c_b^2}.
	\end{align}  
    Therefore, under the aforementioned assumptions, it follows that   
	\begin{align}\label{eq:assume}
		k = o(1), \quad
		\delta = o(1), \quad
		\tau < 1, \quad
		\mu = \mathcal{O}(1), \quad
		\lambda = \mathcal{O}(1).
	\end{align} 

    \begin{rem}\label{rem:tau}
         The assumption of $\tau$ in \eqref{eq:assume} is consistent with the physical configuration \cite{Born}, in which the wave speed within the background material exceeds that in the bubble region $D$. The elastic medium in the background may include metallic materials, such as titanium, nickel, or gold, as well as non-metallic materials, such as glass or rubber.
    \end{rem}

   By substituting these parameters into \eqref{eq:system} and omitting the prime notation, we derive the following non-dimensional coupled PDE system (cf.\cite{DLLT,LLZ}):  
	\begin{equation}\label{eq:system2}
		\begin{cases}
		\Delta u(\mathbf{x}) + k^2 u(\mathbf{x}) = 0 & \text{in } D, \\
		\mathcal{L}_{\lambda,\mu} \mathbf{u}(\mathbf{x}) + k^2 \tau^2 \mathbf{u}(\mathbf{x}) = \mathbf{0}	 & \text{in } \mathbb{R}^3 \setminus \overline{D}, \\
		\mathbf{u}(\mathbf{x}) \cdot \boldsymbol{\nu} - \frac{1}{k^2} \nabla u(\mathbf{x}) \cdot \boldsymbol{\nu} = 0 & \text{on } \partial D, \\
		\partial_{\boldsymbol{\nu}, \lambda, \mu} \mathbf{u}(\mathbf{x}) + \delta \tau^2 u(\mathbf{x}) \boldsymbol{\nu} = \mathbf{0} & \text{on } \partial D.
		\end{cases}
	\end{equation}
In the following study, we replace $\partial_{\boldsymbol{\nu},\lambda,\mu}$ with $\partial_{\boldsymbol{\nu}}$. It is remarked that the system \eqref{eq:system2} is equivalent to the original system \eqref{eq:system}. Consequently, in what follows we focus on studying the system \eqref{eq:system2} instead of the system \eqref{eq:system}. Here, $\tau$ is as defined in \eqref{eq:delta tau}. We emphasize that in \eqref{eq:system2}, the compressional and shear waves satisfy 
\begin{align}\label{eq:ks kp defination}
	k_p= \frac{k\tau}{\sqrt{\lambda + 2\mu}}=o(1), \quad k_s=\frac{k\tau}{\sqrt{ 2\mu}}=o(1).
\end{align}

We employ layer potential theory to recast the coupled PDE system \eqref{eq:system2} as a scattering problem expressed through a system of integral equations. To that end, we first introduce the layer potential operators for our subsequent analysis. Let $G^k(\mathbf{x})$ be the fundamental solution of the operator  $\Delta + k^2$, namely 
\begin{align}\notag %\label{eq:fundamental solution with acoustic} 
	G^k(\mathbf{x})=-\frac{e^{\rmi k|\mathbf{x}|}}{4\pi |\mathbf{x}|}.
\end{align}
The single layer potential associated with the Helmholtz system is defined by
\begin{align}\label{eq:single layer potential with acoustic}
	\mathcal{S}_{\partial D}^{k}[\varphi](\mathbf{x})=\int_{\partial D} G^{k}(\mathbf{x}-\mathbf{y})\varphi(\mathbf{y})ds(\mathbf{y}) \quad \mathbf{x} \in \mathbb{R}^{3},
\end{align}
with $\varphi(\mathbf{x}) \in L^2(\partial D)^3$. Then the co-normal derivative of the single layer potential enjoys the jump formula 
\begin{align}\label{eq:jump in acoustic}
	\nabla \mathcal{S}_{\partial D}^{k}[\varphi] \cdot \boldsymbol{\nu}_{\pm}(\mathbf{x}) = \left( \pm \frac{1}{2} \mathcal{I} + \mathcal{K}_{\partial D}^{k,*}\right) [\varphi](\mathbf{x}), \quad \mathbf{x} \in \partial D,
\end{align}
where $\mathcal{I}$ is an identity operator and 
\begin{align}\label{eq:np acoustic}
	\mathcal{K}_{\partial D}^{k,*}[\varphi](\mathbf{x}) = \int_{\partial D} \nabla_{\mathbf{x}} G^{k}(\mathbf{x} - \mathbf{y}) \cdot \boldsymbol{\nu}_{\mathbf{x}} \varphi(\mathbf{y}) ds(\mathbf{y}), \quad \mathbf{x} \in \partial D,
\end{align}
which is also known as the Neumann-Poincar\'e (N-P) operator associated with Helmholtz system. Here and also in what follows, the subscript $\pm$ indicates the limits from outside and inside $D$, respectively.

For elasticity, we begin by introducing the potential theory relevant to the Lam\'e system. The fundamental solution $\mathbf{\Gamma}^\omega = \left(\Gamma_{i, j}^\omega\right)_{i, j=1}^3$ for the operator $\mathcal{L}_{\lambda, \mu} + \omega^2\rho_e$ in three dimensions is given in  as \cite{ABG}:
\begin{align}%\label{eq:fundamental solution lame}
	\left(\Gamma_{i, j}^\omega\right)_{i, j=1}^3(\mathbf{x}) = -\frac{\delta_{ij}}{4 \pi \mu |\mathbf{x}|} e^{\mathrm{i} k_s |\mathbf{x}|} + \frac{1}{4 \pi \omega^2 \rho_e} \partial_i \partial_j \frac{e^{\mathrm{i} k_p |\mathbf{x}|} - e^{\mathrm{i} k_s |\mathbf{x}|}}{|\mathbf{x}|},\notag
\end{align}
where $\delta_{ij}$ is the Kronecker delta function. Here, $k_\alpha$, for $\alpha = p, s$, is defined by \eqref{eq:ks kp defination}. It can be shown that $\Gamma_{i j}^\omega$ has the following series representation (cf. \cite{ABG}):
\begin{align}
	\Gamma_{i j}^\omega(\mathbf{x}) = & -\frac{1}{4 \pi \rho_e} \sum_{n=0}^{+\infty} \frac{\mathrm{i}^n}{(n+2) n!}\left(\frac{n+1}{c_s^{n+2}} + \frac{1}{c_p^{n+2}}\right) \omega^n \delta_{ij} |\mathbf{x}|^{n-1} \notag\\
	& + \frac{1}{4 \pi \rho_e} \sum_{n=0}^{+\infty} \frac{\mathrm{i}^n (n-1)}{(n+2) n!} \left(\frac{1}{c_s^{n+2}} - \frac{1}{c_p^{n+2}}\right) \omega^n |\mathbf{x}|^{n-3} x_i x_j.\notag
\end{align}
In particular, when $\omega = 0$, we denote $\mathbf{\Gamma}^0$ as $\mathbf{\Gamma}$ for simplicity, and its expression (cf. \cite{ABG}) is given by:
\begin{align}\notag
	\Gamma_{i, j}^0(\mathbf{x}) = -\frac{1}{8 \pi} \left(\frac{1}{\mu} + \frac{1}{\lambda + 2\mu}\right) \frac{\delta_{ij}}{|\mathbf{x}|} - \frac{1}{8 \pi} \left(\frac{1}{\mu} - \frac{1}{\lambda + 2\mu}\right) \frac{x_i x_j}{|\mathbf{x}|^3}.
\end{align}
The single-layer potential associated with the fundamental solution $\mathbf{\Gamma}^\omega$ is defined as:
\begin{align}\label{eq:single layer lame}
	\mathbf{S}_{\partial D}^\omega[\boldsymbol{\varphi}](\mathbf{x}) = \int_{\partial D} \mathbf{\Gamma}^\omega(\mathbf{x} - \mathbf{y}) \boldsymbol{\varphi}(\mathbf{y}) \, \mathrm{d}s(\mathbf{y}), \quad \mathbf{x} \in \mathbb{R}^3,
\end{align}
for $\boldsymbol{\varphi} (\mathbf{x}) \in L^2(\partial D)^3$. On $\partial D$, the co-normal derivative of the single-layer potential satisfies the following jump relation:
\begin{align}\label{eq:jump relation lame}
	\left.\partial_{\boldsymbol{\nu}} \mathbf{S}_{\partial D}^\omega[\boldsymbol{\varphi}]\right|_{\pm}(\mathbf{x}) = \left( \pm \frac{1}{2} \mathbf{I} + \mathbf{K}_{\partial D}^{\omega, *} \right)[\boldsymbol{\varphi}](\mathbf{x}), \quad \mathbf{x} \in \partial D,
\end{align}
where
\begin{align}\label{eq:np operator lame}
	\mathbf{K}_{\partial D}^{\omega, *}[\boldsymbol{\varphi}](\mathbf{x}) = \text{p.v.} \int_{\partial D} \partial_{\boldsymbol{\nu}_{\mathbf{x}}} \mathbf{\Gamma}^\omega(\mathbf{x} - \mathbf{y}) \boldsymbol{\varphi}(\mathbf{y}) \, \mathrm{d}s(\mathbf{y}).
\end{align}
Here, p.v. refers to the Cauchy principal value. Notably, the operator $\mathbf
{K}_{\partial D}^{\omega, *}$, defined in \eqref{eq:np operator lame}, is known as the Neumann-Poincar\'e (N-P) operator with Lam\'e system. With the help of the potential operators introduced above, the solution to the system \eqref{eq:system2} can be expressed as:
\begin{align}\label{eq:u express}
	\mathbf{u}=
	\begin{cases}
		\mathcal{S}_{\partial D}^{k}[\varphi_b](\mathbf{x}), & \mathbf{x} \in D, \\
		\mathbf{S}_{\partial D}^{k\tau}[\boldsymbol{\varphi}_e](\mathbf{x})+\mathbf{u}^i, & \mathbf{x} \in \mathbb{R}^3 \setminus \overline{D},
	\end{cases}
\end{align}
where the density functions $\varphi_b \in L^2(\partial D)^3$ and $\boldsymbol{\varphi}_e \in L^2(\partial D)^3$ are determined by the transmission conditions across $\partial D$ in \eqref{eq:system2}. Here, the operator $\mathcal{S}_{\partial D}^{k}$ represents the single-layer potential for the Helmholtz system defined in \eqref{eq:single layer potential with acoustic}, corresponding to the parameters $\kappa$ and $\rho_b$. Meanwhile, the operator $\mathbf{S}_{\partial D}^{k\tau}$ corresponds to the single-layer potential for the Lam\'e system defined in \eqref{eq:single layer lame}, associated with the parameters $\lambda$, $\mu$, and $\rho_e$. Using the transmission conditions in \eqref{eq:system2} and the jump relations \eqref{eq:jump in acoustic} and \eqref{eq:jump relation lame}, the equivalent integral equations for \eqref{eq:system2} can be derived as:  
\begin{align}\label{eq:2.20}
	\mathcal{A}(k, \delta)[\bmf{\Phi}](\mathbf{x}) = \bmf{F}(\mathbf{x}), \quad \mathbf{x} \in \partial D,
\end{align}
where
\begin{align}
	\mathcal{A}(k, \delta) 
	=\begin{pmatrix}
		-\frac{\mathcal{I}}{2}+\mathcal{K}_{\partial D}^{k,*}& 
		-k^2 \boldsymbol{\nu}\cdot\mathbf{S}_{\partial D}^{k\tau} \notag\\
		\delta\tau^2\boldsymbol{\nu}\mathcal{S}_{\partial D}^{k}& \frac{\mathbf{I}}{2}+\mathbf{K}^{k \tau,*}_{\partial D}
	\end{pmatrix},\quad 
	\bmf{\Phi} = \begin{pmatrix}
		\varphi_b \\
		\boldsymbol{\varphi_e}
	\end{pmatrix}, \quad 
	\bmf{F} = \begin{pmatrix}
		k^2 \boldsymbol{\nu}\cdot\mathbf{u}^i \\
		-\partial_{\boldsymbol{\nu}} \mathbf{u}^i
	\end{pmatrix}.
\end{align}
Here the Neumann-Poincar\'e operator $\mathcal{K}_{\partial D}^{k,*}$ is defined in \eqref{eq:np acoustic}, corresponding to the parameters $\kappa$ and $\rho_b$. Similarly, the operator $\mathbf{K}_{\partial D}^{k\tau,*}$ represents the Neumann-Poincar\'e operator for the Lam\'e system defined in \eqref{eq:np operator lame}, associated with the parameters $(\lambda, \mu, \rho_e)$. We demonstrate that the scattering problem \eqref{eq:system2} is mathematically equivalent to the boundary integral equations \eqref{eq:2.20}. For $k \ll 1$, the single-layer potential operator $\mathcal{S}^{k}_{\partial D}: L^{2}(\partial D) \rightarrow L^{2}(\partial D)$ is invertible. Combining \eqref{eq:2.20} with the invertibility of $\mathcal{S}^{k}_{\partial D}$, we derive  
\begin{align}\label{eq:varphi b}
   \varphi_b=\frac{1}{\delta\tau^2}\left(\mathcal{S}^{k}_{\partial D}\right)^{-1}\left(-\boldsymbol{\nu}\cdot\left(\frac{\mathbf{I}}{2}+\mathbf{K}_{\partial D}^{k \tau,*}\right)\boldsymbol{\varphi_e}\left(\mathbf{x}\right)-\boldsymbol{\nu}\cdot\partial_{\boldsymbol{\nu}}\mathbf{u}^i\right).
\end{align} 
Inserting this into \eqref{eq:2.20} yields that
\begin{align}\label{eq:tilde A k delta}
	\tilde{\mathcal{A}}(k,\delta)[\boldsymbol{\varphi_e}](\mathbf{x})=\tilde{\bmf{F}}(\mathbf{x}),
\end{align}
where 
 \begin{align}
 	\tilde{\mathcal{A}}(k,\delta)[\boldsymbol{\varphi_e}](\mathbf{x})
 	&=\left(\left(-\frac{\mathcal{I}}{2}+\mathcal{K}_{\partial D}^{k,*}\right)\left(\mathcal{S}^{k}_{\partial D}\right)^{-1}
 	\boldsymbol{\nu}\cdot\left(\frac{\mathbf{I}}{2}+\mathbf{K}_{\partial D}^{k\tau,*}\right)
 	+\delta \tau^2 k^2 \boldsymbol{\nu} \cdot \mathbf{S}_{\partial D}^{k \tau}\right)
 	[\boldsymbol{\varphi_e}](\mathbf{x}), \label{eq:tilde Ak delta}\\
 	\tilde{\bmf{F}}(\mathbf{x})
 	&=-\delta\tau^2 k^2 \boldsymbol{\nu} \cdot \mathbf{u}^i
 	-\left(-\frac{\mathcal{I}}{2}+\mathcal{K}_{\partial D}^{k,*}\right)\left(\mathcal{S}^{k}_{\partial D}\right)^{-1}
 	\boldsymbol{\nu}\cdot \partial_{\boldsymbol{\nu}} \mathbf{u}^i (\mathbf{x}). \label{eq:tilde F}
 \end{align}
This is an equivalent integral equation of the system \eqref{eq:2.20}.

 In this paper, we investigate the stress concentration of the external total field $\mathbf{u}|_{\mathbb{R}^3 \setminus \overline{D}}$ in the system \eqref{eq:system2} through quasi-Minnaert resonance. The quasi-Minnaert resonance is characterized by boundary localization and surface resonance of the associated wave field. Specifically, we demonstrate that when the external total field exhibits boundary localization and surface resonance, stress concentration occurs near the boundary of the air bubble. To facilitate this analysis, we first introduce the relevant definitions and notations.

  \begin{defn}\label{def:stress concentration}
 	Let $D$ be a bounded Lipschitz domain with a connected complement. We consider the exterior total field $\mathbf{u}|_{\mathbb{R}^3 \setminus \overline{D}}$, which satisfies the scattering problem \eqref{eq:system2} for a given incident wave $\mathbf{u}^i$.
 	Let $B_R$ be an origin centered ball of radius $R$ in $\mathbb{R}^3$ such that $D \subset B_R$. For sufficiently small parameters $\varsigma_1, \varsigma_2 \in \mathbb{R}_+$, we define the interior and exterior boundary layers with respect to $\partial D$ by
 	\begin{align}\label{eq:Sdef}
 		\mathscr{S}_{-}^{\varsigma_1}(\partial D) := \left\{ \mathbf{x} \in D \mid \mathrm{dist}_{\partial D}(\mathbf{x}) < \varsigma_1 \right\}, \quad \mathscr{S}_{+}^{\varsigma_2}(\partial D) := \left\{ \mathbf{x} \in B_R \setminus \overline{D} \mid \mathrm{dist}_{\partial D}(\mathbf{x}) < \varsigma_2 \right\},
 	\end{align}
 	where $\mathrm{dist}_{\partial D}(\mathbf{x}) := \inf\limits_{\mathbf{y} \in \partial D}  \|\mathbf{x} - \mathbf{y}\|$ represents the Euclidean distance between $\mathbf{x}$ and $\partial D$. Define the stress \(E(\mathbf{u})\) as
 	\begin{align}\label{eq:Eu}
 		E(\mathbf{u}) = \int_{\mathscr{S}_{+}^{\varsigma_2 }(\partial D)} \sigma(\mathbf{u}) : \nabla \overline{\mathbf{u}} \, \mathrm{d} \mathbf{x}, \quad \text{where} \quad \sigma(\mathbf{u}) = \lambda (\nabla \cdot \mathbf{u}) \mathbf{I} + \mu \left( \nabla \mathbf{u} + (\nabla \mathbf{u})^{\top} \right).
 	\end{align}
 	Here, \(\lambda\) and \(\mu\) are the Lam\'e constants, as defined in \eqref{eq:non dim}, \(\mathbf{I}\) is the identity matrix, and the operator ":" represents the Frobenius inner product for matrices. If the following condition holds:
 	\begin{equation}
 		\frac{E(\mathbf{u})}{\| \mathbf{u}^i\|_{L^2(D)^3}^2} \gg 1,
 	\end{equation}
 	then the exterior total field \(\mathbf{u}|_{\mathbb{R}^3 \setminus \overline{D}}\) demonstrates stress concentration.
 \end{defn}

 \begin{rem}
     
   In linear elasticity, the vector field $\mathbf{u}$ in \eqref{eq:system} represents the displacement field, which characterizes the deformation of the external elastic medium $\mathbb{R}^3 \setminus \overline{D}$ (cf.~\cite{Kupradze}).  The total strain energy $E(\mathbf{u})$ in linear elasticity quantifies the energy stored in the displacement field $\mathbf{u}$. Typically, we assume $\zeta_2 = 1 + \delta_{\zeta_2}$, where $\delta_{\zeta_2} \ll 1$, indicating that stress concentration occurs near the boundary of the domain $D$. The integrand 
   \begin{equation}\label{eq:Eu223}
       {\mathcal E}(\mathbf{u})=\sigma(\mathbf{u})(\mathbf x) : \nabla \overline{\mathbf{u}}(\mathbf x)
   \end{equation}
   represents the strain energy density (cf.~\cite{A12, Kupradze}), with its integral over the surrounding elastic medium $\mathscr{S}_{+}^{\zeta_2}(\partial D)$ quantifying the stored energy. In the sub-wavelength regime, using a tuned incident wave $\mathbf{u}^i$ and a bubbly-elastic structure, we demonstrate in Theorem \ref{thm:Eu definition in thm} that stress concentration occurs in the external total wave field, where $E(\mathbf{u})$ is primarily driven by the stress concentration of the external scattered field $\mathbf{u}^s$. Furthermore, we prove that the stress associated with the incident wave $\mathbf{u}^i$ is bounded relative to that of the scattered field $\mathbf{u}^s$. A detailed explanation is provided in Remark \ref{rem:41rem}.
 \end{rem}

To induce stress concentration near the bubble's boundary, we first localize the wave and generate high-oscillation behavior near the bubble's boundary. These phenomena, termed boundary localization and surface resonance, respectively, are used to characterize  quasi-Minnaert resonance. We then provide definitions for boundary localization and surface resonance.

\begin{defn}\label{def:boundary localization}
	Let $D$ be a bounded Lipschitz domain with a connected complement. We consider the interior total field $\mathbf{u}|_{D}$ and the exterior scattered field $\mathbf{u}^s|_{\mathbb{R}^3 \setminus \overline{D}}$, which satisfy the scattering problem defined in \eqref{eq:system2} for a given incident wave $\mathbf{u}^i$. The interior and exterior boundary layers are defined in \eqref{eq:Sdef}. The field $\mathbf{u}|_{D}$ is termed  {interior boundary localization}, and $\mathbf{u}^s|_{\mathbb{R}^3 \setminus \overline{D}}$ is termed {exterior boundary localization} if there exist sufficiently small $\varsigma_1, \varsigma_2, \eta \in \mathbb{R}_+$ such that
	\begin{align}\label{eq:226}
		\frac{\|{u}\|_{L^2(D \setminus \mathscr{S}_{-}^{\varsigma_1}(\partial D))^3}}{\|{u}\|_{L^2(D)^3}} \leqslant \eta, \quad 
		\frac{\|\mathbf{u}^s\|_{L^2((B_R \setminus \overline{D}) \setminus \mathscr{S}_{+}^{\varsigma_2}(\partial D))^3}}{\|\mathbf{u}^s\|_{L^2(B_R \setminus \overline{D})^3}} \leqslant \eta.
	\end{align}  
	The parameter $\eta$ quantifies the level of boundary localization.   
\end{defn}

\begin{defn}\label{def:surface resonant}
	Let ${u}|_{D}$ denote the interior total field and $\mathbf{u}^s|_{\mathbb{R}^3 \setminus \overline{D}}$ denote the exterior scattered field for the scattering problem \eqref{eq:system2} with a given incident wave $\mathbf{u}^i$, where $D$ is a bounded Lipschitz domain. The boundary layers $\mathscr{S}_{-}^{\varsigma_1}(\partial D)$ and $\mathscr{S}_{+}^{\varsigma_2}(\partial D)$ are defined as in \eqref{eq:Sdef}. We say the fields ${u}|_{D}$ and $\mathbf{u}^s|_{\mathbb{R}^3 \setminus \overline{D}}$ exhibit  {surface resonance} if the following conditions hold:
	\begin{align}\label{eq:228}
		\frac{\|\nabla {u}\|_{L^2(\mathscr{S}_{-}^{\varsigma_1}(\partial D))^3}}{\|\mathbf{u}^i\|_{L^2(D)^3}} \gg 1, 
		\quad \text{and} \quad
		\frac{\|\nabla \mathbf{u}^s\|_{L^2(\mathscr{S}_{+}^{\varsigma_2}(\partial D))^3}}{\|\mathbf{u}^i\|_{L^2(D)^3}} \gg 1.
	\end{align}
\end{defn}

\begin{rem}\label{rem:2.2}

   In the sub-wavelength regime, where the bubble $D$ is significantly smaller than the incident wavelength, the $L^2$-norm of the incident field $\mathbf{u}^i$ over $D$ is typically small. Consequently, it is reasonable to normalize $\mathbf{u}^i$ with respect to its $L^2$-norm over $D$. This normalization introduces the ratio between $\|\nabla \mathbf{u}\|_{L^2(\mathscr{S}_{-}^{\zeta_1}(\partial D))^3}$, $\|\nabla \mathbf{u}^s\|_{L^2(\mathscr{S}_{+}^{\zeta_2}(\partial D))^3}$, and $\|\mathbf{u}^i\|_{L^2(D)^3}$ in \eqref{eq:228}. Notably, the two ratios defined in \eqref{eq:226} remain invariant when both their numerators and denominators are divided by $\|\mathbf{u}^i\|_{L^2(D)^3}$. Thus, we implicitly normalize $\mathbf{u}^i$ in Definition \ref{def:boundary localization}.

\end{rem}

Building on the preceding Definitions~\ref{def:boundary localization} and~\ref{def:surface resonant}, we formulate the following definition of quasi-Minnaert resonance, which will used to generate the stress concentration in Section \ref{sec:concentration of u and u^s} by appropriately choosing the incident wave and bubble-elastic structure.

%%%%%%%%%%%%%
%%%%%%%%%%%%%

\begin{defn}\label{def:quasi minnaert}
	
	Let $D \subset \mathbb{R}^3$ be a bounded Lipschitz domain in the scattering system \eqref{eq:system2}, excited by an incident wave $\mathbf{u}^i$ at frequency $\omega$. We say $\omega$ constitutes a \textit{quasi-Minnaert resonance frequency} for $\mathbf{u}^i$ if the induced total field $\mathbf{u}|_{D}$ and scattered field $\mathbf{u}^s|_{\mathbb{R}^3 \setminus \overline{D}}$ simultaneously satisfy the conditions for boundary localization in Definition~\ref{def:boundary localization} and surface resonance in Definition~\ref{def:surface resonant}. The domain $D$ that exhibits these phenomena is then designated as the \textit{quasi-Minnaert resonator} associated with $\mathbf{u}^i$.  
	
\end{defn}

%%%%%%%%%%%%%%%%
\section{Boundary localization and surface resonance }\label{sec:concentration of u and u^s}

This section presents the spectral properties of the single-layer acoustic potential $\mathcal{S}_{\partial D}^{k}$  and elastic potential $\mathbf{S}_{\partial D}^\omega$, alongside the Neumann-Poincaré operators $\mathcal{K}_{\partial D}^{k,*}$ and $\mathbf{K}_{\partial D}^{\omega,*}$, which are crucial to proving Theorems \ref{thm:th3.1} and \ref{thm:nabla u in thm}. Utilizing these spectral properties, we select an incident wave $\mathbf{u}^i$, as defined in \eqref{eq:ui def}, that is an entire solution to \eqref{eq:incident elastic wave}. This ensures that the interior total field $\mathbf{u}|_D$ and the exterior scattered field $\mathbf{u}^s|_{\mathbb{R}^3 \setminus \overline{D}}$, governed by \eqref{eq:system2} and associated with $\mathbf{u}^i$, exhibit boundary localization and surface resonance within the interior and exterior of $D$, respectively, as established in Theorems \ref{thm:th3.1} and \ref{thm:nabla u in thm}. After we show the exterior scattered field $\mathbf{u}^s|_{\mathbb{R}^3 \setminus \overline{D}}$ exhibits localized, highly oscillatory  behavior near the boundary of the bubble $D$, we shall rigorously prove the stress concentration of the exterior elastic total field $\mathbf u$ near $\partial D$ in Section \ref{sec:summary of major findings}. This analysis is elaborated in detail therein.

Throughout this section, the bubble $D$ in \eqref{eq:system2} is assumed to be a unit ball centered at the origin in $\mathbb{R}^3$. In the subsequent analysis, we formalize the requisite notation and analytical framework. Let $\mathbb{N}$ denote the set of positive integers, and let $\mathbb{N}_0 = \mathbb{N} \cup \{0\}$. The spherical harmonic functions $Y_n^m(\theta, \varphi)$ (cf.\cite{CK}) are defined as
\begin{align}\label{eq:ynm and cnm def}
	Y_n^m(\theta, \varphi) := C_n^m P_n^m(\cos\theta) e^{\mathrm{i} m \varphi}, \quad  C_n^m = \sqrt{\frac{2n+1}{4\pi} \frac{(n-|m|)!}{(n+|m|)!}},
\end{align}
where $n \in \mathbb{N}_0$ and $-n \leqslant m \leqslant n$. It is straightforward to verify that
\begin{align}\label{eq:ynm3.2}
	\int_{\mathbb{S}} \nabla_\mathbb{S} Y_{n}^{m} \cdot \nabla_\mathbb{S}  \overline{Y_{n^\prime}^{m^\prime}} \mathrm{d}s = n(n + 1)\int_{\mathbb{S}} Y_{n}^{m} \overline{Y_{n^\prime}^{m^\prime}} \mathrm{d}s = n(n+1) \delta_{n n^\prime} \delta_{m m^\prime},
\end{align}
where $\delta_{n n^\prime}$ denotes the Kronecker delta and $\mathbb{S}$ represents the surface of the unit sphere. Let $j_n(z)$ and $h_n(z)$ denote the spherical Bessel and Hankel functions of the first kind, respectively, of order $n$. For any fixed $n \in \mathbb{N}_0$ and $0 < |z| \ll 1$, the asymptotic expansions hold (cf. \cite[(2.32)]{CK}):
\begin{subequations}
	\begin{align}
		j_n(z) &= \frac{z^n}{(2n+1)!!} \left(1 - \frac{z^2}{2(2n+3)} + \frac{z^4}{8(2n+3)(2n+5)} + \mathcal{O}(z^5)\right), \label{eq:jn expansion}\\
		h_0(z) &= \frac{1}{\mathrm{i} z}\left(1 + \mathrm{i} z - \frac{z^2}{2} - \frac{\mathrm{i} z^3}{6} + \frac{z^4}{24} + \mathcal{O}(z^5)\right), \notag\\ %\label{eq:h0 expansion}\\
		h_1(z) &= \frac{1}{\mathrm{i} z^2}\left(1 + \frac{z^2}{2} + \frac{\mathrm{i} z^3}{3} - \frac{z^4}{8} + \mathcal{O}(z^5)\right), \notag\\%\label{eq:h1 expansion}\\
		h_n(z) &= \frac{(2n-1)!!}{\mathrm{i} z^{n+1}} \left(1 - \frac{z^2}{2(2n-1)} + \frac{z^4}{8(2n-1)(2n-3)} + \mathcal{O}(z^5)\right), \quad n \geqslant 2. \label{eq:hn expansion}
	\end{align}
\end{subequations}
From the series representations of $h_n(z)$, the following recurrence relations hold (cf. \cite{NIST}):
\begin{align}\label{eq:hn qiudao}
	h_n^{\prime}(z) &= h_{n-1}(z) - \frac{(n+1)}{z} h_n(z), \quad n=1, 2, \dots, \notag \\
	h_n^{\prime}(z) &= -h_{n+1}(z) + \frac{n}{z} h_n(z), \quad n=0, 1, \dots,\\
	h_{n+1}(z)&=\frac{2n+1}{z}h_n(z)-h_{n-1}(z), \quad n=1, 2, \dots .\notag
\end{align}
These differentiation formulae apply equally to $j_n(z)$. Next, let us define the vectorial spherical harmonics of order n. The set $\left(\mathcal{I}_n^m, \mathcal{T}_n^m, \mathcal{N}_n^m\right)$ constitutes an orthogonal basis of $\left(L^2(\partial D)\right)^3$, where 
\begin{align}\label{eq:Inm Tnm Nnm definition}
	\mathcal{I}_n^m &=\nabla_{\partial D} Y_{n+1}^m+(n+1) Y_{n+1}^m \boldsymbol{\nu}, \quad n \geqslant 0, n+1 \geqslant m \geqslant-(n+1), \notag\\
	\mathcal{T}_n^m &=\nabla_{\partial D} Y_n^m \wedge \boldsymbol{\nu}, \quad n \geqslant 1, n \geqslant m \geqslant-n, \\
	\mathcal{N}_n^m &=-\nabla_{\partial D} Y_{n-1}^m+n Y_{n-1}^m \boldsymbol{\nu}, \quad n \geqslant 1, n+1 \geqslant m \geqslant-(n+1).\notag 
\end{align}
 Thanks to \eqref{eq:Inm Tnm Nnm definition}, one has that
\begin{align}\label{eq:nu TIN}
	\boldsymbol{\nu}  \cdot \mathcal{T}_{n}^{m}
	&=\boldsymbol{\nu}  \cdot (\nabla_{\mathbb{S}^2} Y_{n + 1}^{m} \wedge \boldsymbol{\nu}) = 0, \notag\\
	\boldsymbol{\nu} \cdot \mathcal{I}_{n}^{m}
	&=\boldsymbol{\nu} \cdot (\nabla_{\mathbb{S}^2} Y_{n+1}^{m}+(n+1)Y_{n+1}^{m}\boldsymbol{\nu}) = (n+1)Y_{n + 1}^{m}, \\
	\boldsymbol{\nu}  \cdot \mathcal{N}_{n}^{m} 
	&=\boldsymbol{\nu} \cdot (-\nabla_{\mathbb{S}^2} Y_{n-1}^{m}+nY_{n-1}^{m}\boldsymbol{\nu})=nY_{n-1}^{m}.\notag
\end{align}

We proceed to define the eigenfunctions and their corresponding eigenvalues, which are central to the subsequent analytical framework. We begin by examining the eigenvalues and eigenfunctions of the acoustic single layer potential operator $\mathcal{S}_{\partial D}^k$, defined in \eqref{eq:single layer potential with acoustic}, for clarity.

\begin{lem}\cite[Lemma 3.1]{DLL2020}%\label{lem:3.1}
	The eigensystem of the single layer potential operator $\mathcal{S}_{\partial D}^k$ defined in \eqref{eq:single layer potential with acoustic} is given as follows:
	\begin{align}\notag
		\mathcal{S}_{\partial D}^k [Y_n^m](\mathbf{x}) = -\rmi k j_n(k)h_n(k)Y_n^m, \quad \mathbf{x} \in \partial D.
	\end{align}
	Moreover, the following two identities hold:
	\begin{equation}\label{eq:S in D}
		\mathcal{S}_{\partial D}^k [Y_n^m](\mathbf{x}) = -\rmi k j_n(k|\mathbf{x}|)h_n(k)Y_n^m, \quad \mathbf{x} \in D,
	\end{equation}
	and
	\begin{align}\notag
		\mathcal{S}_{\partial D}^k [Y_n^m](\mathbf{x}) = -\rmi k j_n(k)h_n(k|\mathbf{x}|)Y_n^m, \quad \mathbf{x} \in \mathbb{R}^3\setminus\overline{D}.
	\end{align}
\end{lem}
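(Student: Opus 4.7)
The plan is to exploit the spherical-harmonic expansion of the Helmholtz fundamental solution together with the orthonormality of $Y_n^m$ on the unit sphere. Recall the standard addition theorem (cf.\ Colton--Kress), valid for $|\mathbf{x}| \ne |\mathbf{y}|$:
\begin{align}\notag
	G^k(\mathbf{x}-\mathbf{y}) \;=\; -\frac{\sfe^{\rmi k|\mathbf{x}-\mathbf{y}|}}{4\pi|\mathbf{x}-\mathbf{y}|}
	\;=\; -\rmi k \sum_{n'=0}^{\infty}\sum_{m'=-n'}^{n'} j_{n'}(k r_<)\, h_{n'}(k r_>)\, Y_{n'}^{m'}(\hat{\mathbf{x}})\,\overline{Y_{n'}^{m'}(\hat{\mathbf{y}})},
\end{align}
where $r_< = \min(|\mathbf{x}|,|\mathbf{y}|)$, $r_> = \max(|\mathbf{x}|,|\mathbf{y}|)$, and $\hat{\mathbf x}:=\mathbf{x}/|\mathbf{x}|$. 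Since $\partial D$ is the unit sphere, any $\mathbf{y}\in\partial D$ satisfies $|\mathbf{y}|=1$, so the geometry determines $r_<$ and $r_>$ unambiguously for $\mathbf{x}\in D$ or $\mathbf{x}\in\mathbb{R}^3\setminus\overline{D}$.

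First, I would treat the case $\mathbf{x}\in D$. Then $|\mathbf{x}|<1$, hence $r_<=|\mathbf{x}|$ and $r_>=1$. Substituting the expansion into \eqref{eq:single layer potential with acoustic}, interchanging sum and integral (justified by uniform convergence of the series on compact subsets of $|\mathbf{x}|\ne|\mathbf{y}|$), and using the $L^2(\mathbb{S})$-orthonormality identity from \eqref{eq:ynm3.2},
\begin{align}\notag
	\int_{\partial D}\overline{Y_{n'}^{m'}(\mathbf{y})}\,Y_n^m(\mathbf{y})\,\rmd s(\mathbf{y}) \;=\; \delta_{nn'}\delta_{mm'},
\end{align}
only the single term $(n',m')=(n,m)$ survives, yielding
\begin{align}\notag
	\mathcal{S}_{\partial D}^{k}[Y_n^m](\mathbf{x}) \;=\; -\rmi k\, j_n(k|\mathbf{x}|)\, h_n(k)\, Y_n^m(\hat{\mathbf{x}}), \qquad \mathbf{x}\in D.
\end{align}
Next, the exterior case $\mathbf{x}\in\mathbb{R}^3\setminus\overline{D}$ is identical up to swapping the roles of $j_n$ and $h_n$, since now $r_<=1$ and $r_>=|\mathbf{x}|$. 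The boundary formula on $\partial D$ then follows from the well-known continuity of the acoustic single-layer potential across $\partial D$ (for $L^2$-densities), which forces the interior and exterior traces to agree; setting $|\mathbf{x}|=1$ in either expression gives the eigenvalue $-\rmi k\, j_n(k)\,h_n(k)$.

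The only delicate point is to justify the termwise integration of the Mie-type series. I would handle this by fixing $|\mathbf{x}|=r\ne 1$, noting that $|j_n(kr)h_n(k)|$ decays factorially in $n$ from the leading-order asymptotics of $j_n$ and $h_n$ (the leading behavior being $(kr)^n/(2n+1)!!$ and $(2n-1)!!/(\rmi k^{n+1})$ respectively, so the product behaves like $r^n/k$), which combined with the $L^2(\mathbb{S})$-boundedness of $Y_n^m$ yields absolute convergence of the double series uniformly on compact subsets of $\{|\mathbf{x}|\ne 1\}$. This legitimizes the interchange of sum and integral, and the remainder is a routine invocation of orthogonality; I do not anticipate further obstacles.
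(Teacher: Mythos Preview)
Your proposal is correct and is precisely the canonical derivation of this well-known result: expand the Helmholtz fundamental solution via the addition theorem, integrate termwise against $Y_n^m$ using orthonormality, and recover the boundary identity from the continuity of the single-layer potential. The paper itself does not supply a proof but simply cites \cite[Lemma~3.1]{DLL2020}, where the same argument is carried out; there is nothing to compare.
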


The spectral structure of the elastic single layer potential operator $\mathbf{S}_{\partial D}^{k}$ defined in \eqref{eq:single layer lame} can be explicitly characterized as follows.

 \begin{lem}\label{lem:single on TIN}\cite[Proposition 3.3]{DLL2020}
	The single layer potential operator $\mathbf{S}_{\partial D}^k$ on $\partial D$ satisfies that
	\begin{align}
		\mathbf{S}_{\partial D}^k[\mathcal{T}_{n}^m] &=b_{n}(k) \mathcal{T}_{n}^m, \notag\\
		\mathbf{S}_{\partial D}^k[\mathcal{I}_{n-1}^m] &=c_{1n}(k) \mathcal{I}_{n-1}^m+ d_{1n}(k)\mathcal{N}_{n+1}^m, \notag\\
		\mathbf{S}_{\partial D}^k[\mathcal{N}_{n + 1}^m] &= c_{2n}(k) \mathcal{I}_{n-1}^m+ d_{2n}(k)\mathcal{N}_{n+1}^m.\notag
	\end{align}
	Here
	\begin{align}\label{eq:52}
		b_{n}(k) &=-\frac{\rmi k_{s}j_{n}(k_{s})h_{n}(k_{s})}{\mu}, \notag \\ 
		c_{1n}(k) &=-\rmi \left(\frac{(n+1)j_{n-1}(k_{s})h_{n-1}(k_{s})k_{s}}{\mu(2n+1)} +\frac{nj_{n-1}(k_{p})h_{n-1}(k_{p})k_{p}}{(\lambda+2\mu)(2n+1)} \right), \notag \\ 
		d_{1n}(k) &=-\rmi \left(\frac{nj_{n-1}(k_{s})h_{n+1}(k_{s})k_{s}}{\mu(2n+ 1)}-\frac{nj_{n-1}(k_{p})h_{n+1}(k_{p})k_{p}}{(\lambda+2\mu)(2n+1)} \right), \\
		c_{2n}(k) &=-\rmi \left(\frac{(n+1)j_{n+1}(k_{s})h_{n-1}(k_{s})k_{s}}{\mu(2n+1)}- \frac{(n+1)j_{n+1}(k_{p})h_{n-1}(k_{p})k_{p}}{(\lambda+2\mu)(2n+1)} \right), \notag \\ 
		d_{2n}(k) &=-\rmi \left(\frac{nj_{n+1}(k_{s})h_{n+1}(k_{s})k_{s}}{\mu(2n+1)}+\frac{(n+1)j_{n+1}(k_{p})h_{n+1}(k_{p})k_{p}}{(\lambda+2\mu)(2n+1)} \right), \notag
	\end{align}
	where $k_{s}$ and $k_{p}$ are defined in \eqref{eq:ks kp defination}.
\end{lem}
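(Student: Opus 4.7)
The plan is to reduce the vectorial single-layer operator $\mathbf{S}_{\partial D}^k$ on the unit sphere to scalar Helmholtz single-layer potentials via a Helmholtz decomposition of the Lam\'e fundamental solution, and then to evaluate each entry using the scalar eigenfunction identity \eqref{eq:S in D} together with the Bessel/Hankel recurrences in \eqref{eq:hn qiudao}. Starting from the explicit expression for $\mathbf{\Gamma}^k$ given after \eqref{eq:single layer lame}, I rewrite
\begin{equation*}
\mathbf{\Gamma}^k(\mathbf{x}) = \frac{1}{\mu}\,G^{k_s}(\mathbf{x})\,\mathbf{I} + \frac{1}{k^2\tau^2}\,\nabla\nabla^\top\bigl[G^{k_s}(\mathbf{x}) - G^{k_p}(\mathbf{x})\bigr],
\end{equation*}
using $k^2\tau^2 = (\lambda+2\mu)k_p^2 = 2\mu k_s^2$; consequently, for any $\boldsymbol{\varphi}\in L^2(\partial D)^3$,
\begin{equation*}
\mathbf{S}_{\partial D}^k[\boldsymbol{\varphi}](\mathbf{x}) = \frac{1}{\mu}\,\mathcal{S}_{\partial D}^{k_s}[\boldsymbol{\varphi}](\mathbf{x}) + \frac{1}{k^2\tau^2}\,\nabla\nabla\cdot\bigl(\mathcal{S}_{\partial D}^{k_s}[\boldsymbol{\varphi}](\mathbf{x}) - \mathcal{S}_{\partial D}^{k_p}[\boldsymbol{\varphi}](\mathbf{x})\bigr),
\end{equation*}
with the scalar Helmholtz single layer applied componentwise.

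For the tangential divergence-free basis field $\mathcal{T}_n^m$, \eqref{eq:nu TIN} gives $\boldsymbol{\nu}\cdot\mathcal{T}_n^m = 0$, and since $\mathcal{T}_n^m$ is tangential and surface-divergence-free on $\partial D$, an integration by parts in the identity $\nabla_{\mathbf{x}}\cdot\mathcal{S}_{\partial D}^\alpha[\mathcal{T}_n^m](\mathbf{x}) = -\int_{\partial D}\nabla_{\partial D,\mathbf{y}}G^{\alpha}(\mathbf{x}-\mathbf{y})\cdot\mathcal{T}_n^m(\mathbf{y})\,ds(\mathbf{y}) = 0$ shows that the entire $\nabla\nabla\cdot$ piece vanishes on $\mathbb{R}^3\setminus\partial D$. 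Only the first summand survives, and \eqref{eq:S in D} applied componentwise yields $\mathbf{S}_{\partial D}^k[\mathcal{T}_n^m] = -\mathrm{i}k_s j_n(k_s)h_n(k_s)\mathcal{T}_n^m/\mu = b_n(k)\mathcal{T}_n^m$, reproducing the first formula in \eqref{eq:52}.

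For the remaining basis fields $\mathcal{I}_{n-1}^m$ and $\mathcal{N}_{n+1}^m$, both summands contribute and mix the two directions, since these fields carry both normal and surface-tangential parts associated with $Y_n^m$. I would employ the solid-harmonic representation
\begin{equation*}
\mathcal{S}_{\partial D}^{\alpha}[Y_n^m](\mathbf{x}) = -\mathrm{i}\alpha\, j_n(\alpha|\mathbf{x}|)\,h_n(\alpha)\,Y_n^m(\mathbf{x}/|\mathbf{x}|),\qquad |\mathbf{x}|\leqslant 1,
\end{equation*}
for $\alpha\in\{k_s,k_p\}$ (and the exterior analogue for $|\mathbf{x}|\geqslant 1$), take the $\mathbb{R}^3$-gradient, restrict to $\partial D$, and split the resulting trace into its radial component (paired with $\boldsymbol{\nu}$) and its surface-tangential component (paired with $\nabla_{\partial D}Y_n^m$). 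Recombining these via \eqref{eq:Inm Tnm Nnm definition} expresses the output in the basis $\{\mathcal{I}_{n-1}^m,\mathcal{N}_{n+1}^m\}$; the recurrences in \eqref{eq:hn qiudao} reduce $j_n'(\alpha)$ and $h_n'(\alpha)$ to $j_{n\pm 1}(\alpha)$ and $h_{n\pm 1}(\alpha)$, so that the four bilinear products $j_{n\pm 1}(\alpha)h_{n\pm 1}(\alpha)$ in $c_{1n},c_{2n},d_{1n},d_{2n}$ emerge naturally, while the combinatorial factors $(n+1)/(2n+1)$ and $n/(2n+1)$ arise from the standard decomposition of $\nabla(r^n Y_n^m)$ and $\nabla(r^{-(n+1)}Y_n^m)$ into their $\mathcal{I}$- and $\mathcal{N}$-components.

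The main obstacle lies in this last bookkeeping step: the $\nabla\nabla^\top$ piece produces contributions along both $\boldsymbol{\nu}$ and $\nabla_{\partial D}Y_{n\pm 1}^m$ at neighbouring orders, so isolating the coefficients in front of $\mathcal{I}_{n-1}^m$ and $\mathcal{N}_{n+1}^m$ requires applying the Bessel/Hankel recurrences in a disciplined order, while keeping the $1/\mu$ and $1/(\lambda+2\mu)$ weights (coming respectively from the shear and compressional summands via $k^2\tau^2 = 2\mu k_s^2 = (\lambda+2\mu)k_p^2$) cleanly separated. Once this is carried out and the coefficients of $\mathcal{I}_{n-1}^m$ and $\mathcal{N}_{n+1}^m$ are matched against the expansions of $\mathbf{S}_{\partial D}^k[\mathcal{I}_{n-1}^m]$ and $\mathbf{S}_{\partial D}^k[\mathcal{N}_{n+1}^m]$, one recovers precisely the four expressions in \eqref{eq:52}.
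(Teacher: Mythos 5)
The paper itself does not prove this lemma: it is quoted verbatim from \cite[Proposition 3.3]{DLL2020}, so there is no internal proof to compare your attempt against. Your Helmholtz decomposition of $\mathbf{\Gamma}^k$ into $\tfrac1\mu G^{k_s}\mathbf{I}$ plus $\tfrac{1}{k^2\tau^2}\nabla\nabla^\top(G^{k_s}-G^{k_p})$ is a correct rewriting of the explicit formula the paper gives for $\Gamma^{\omega}_{ij}$ (using $k^2\tau^2=2\mu k_s^2=(\lambda+2\mu)k_p^2$), and reducing $\mathbf{S}_{\partial D}^k$ to componentwise scalar Helmholtz single layers is a standard, sound route. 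The $\mathcal{T}_n^m$ step is executed correctly: since $\mathcal{T}_n^m$ is tangential and surface-divergence-free the $\nabla\nabla\cdot$ piece vanishes, and componentwise application of \eqref{eq:S in D} at $\alpha=k_s$ on the sphere gives $b_n(k)\mathcal{T}_n^m$. Note, though, that this componentwise use of \eqref{eq:S in D} tacitly relies on the fact that the Cartesian components of $\mathcal{T}_n^m$, $\mathcal{I}_{n-1}^m$, $\mathcal{N}_{n+1}^m$ are spherical harmonics of degrees $n$, $n-1$, $n+1$, respectively; this is exactly what fixes the Bessel/Hankel orders, and you should state it explicitly rather than leave it implicit in the phrase ``solid-harmonic representation.''

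The genuine gap is that the two remaining cases are not actually done. For $\mathcal{I}_{n-1}^m$ and $\mathcal{N}_{n+1}^m$ you describe a procedure (take the gradient of $\nabla\cdot\mathcal{S}^{\alpha}$, trace on $\partial D$, split into radial and tangential parts, recombine via \eqref{eq:Inm Tnm Nnm definition}, apply \eqref{eq:hn qiudao}) and then assert the outcome, explicitly flagging the computation as ``the main obstacle.'' But the four coefficients $c_{1n},d_{1n},c_{2n},d_{2n}$ \emph{are} the content of the lemma, and nothing in the sketch actually produces them. For instance, the shear summand alone already contributes $-\mathrm{i}k_s j_{n-1}(k_s)h_{n-1}(k_s)/\mu$ to the $\mathcal{I}_{n-1}^m\mapsto\mathcal{I}_{n-1}^m$ coefficient, whereas $c_{1n}$ carries the weight $(n+1)/(2n+1)$ on that same product; so the $\nabla\nabla\cdot$ part must supply a compensating $-n/(2n+1)$ correction in $k_s$ together with the full $k_p$-dependence and all of $d_{1n}$, and none of that is computed or checked. (A further small imprecision: you refer to contributions along $\nabla_{\partial D}Y_{n\pm1}^m$; in fact the spherical-harmonic degree stays at $n$ for both output directions $\mathcal{I}_{n-1}^m$ and $\mathcal{N}_{n+1}^m$ — it is the Bessel/Hankel \emph{orders} that shift to $n\pm1$.) As it stands, this is a credible proof outline consistent with how the cited result is derived, but not a proof.
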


\begin{lem}\label{lem:trace single in TIN}\cite[Proposition 3.3]{DLL2020}
	The tractions of the single layer potentials $\mathbf{S}_{\partial D}^k[\mathcal{T}_n^m]$, $\mathbf{S}_{\partial D}^k[\mathcal{I}_{n-1}^m]$ and $\mathbf{S}_{\partial D}^k[\mathcal{N}_{n+1}^m]$ on $\partial D$ satisfy that
	\begin{align}
		\left.\partial_{\boldsymbol{\nu}}\mathbf{S}_{\partial D}^k[\mathcal{T}_n^m]\right|_+ &= \mathfrak{b}_{n}(k)\mathcal{T}_n^m,\notag\\
		\left. \partial_{\boldsymbol{\nu}}\mathbf{S}_{\partial D}^k[\mathcal{I}_{n-1}^m]\right|_+ &= \mathfrak{c}_{1n}(k)\mathcal{I}_{n-1}^m+\mathfrak{dd}_{1n}(k)\mathcal{N}_{n+1}^m,\label{eq:partial nu in I}\\
		\left. \partial_{\boldsymbol{\nu}}\mathbf{S}_{\partial D}^k[\mathcal{N}_{n+1}^m]\right|_+ &= \mathfrak{c}_{2n}(k)\mathcal{I}_{n-1}^m+ \mathfrak{d}_{2n}(k)\mathcal{N}_{n+1}^m. \label{eq:partial nu in N}
	\end{align}
	Here
	\begin{subequations}
		\begin{align}
			\mathfrak{b}_{n}(k) &= -\rmi k_s j_n(k_s)(k_s h_n^{\prime}(k_s)-h_n(k_s)), \notag\\
			\mathfrak{c}_{1n}(k) &= -2(n-1)\rmi \left(\frac{j_{n-1}(k_s)h_{n-1}(k_s)k_s(n+1)}{2n+1}+\frac{j_{n-1}(k_p)h_{n- 1}(k_p)k_p \mu n}{(\lambda+2\mu)(2n+1)} \right) \label{eq:c1n d1n def}\\
			& \quad +\rmi \left(\frac{j_{n-1}(k_s)h_n(k_s)k_s^2(n+1)+j_{n-1}(k_p)h_n(k_p)k_p^2 n}{2n+1} \right), \notag\\
			\mathfrak{d}_{1n}(k) &=2n(n+2)\rmi \left(\frac{j_{n-1}(k_s)h_{n+1}(k_s)k_s}{2n+1}-\frac{j_{n-1}(k_p)h_{n+1}(k_p)k_p \mu}{(\lambda+2\mu)(2n+1)} \right)\notag \\
			& \quad + n\rmi \left(\frac{-j_{n-1}(k_s)h_n(k_s)k_s^2+ j_{n-1}(k_p)h_n(k_p)k_p^2}{2n+1} \right), \notag\\
			\mathfrak{c}_{2n}(k) &= -2(n^2- 1)\rmi \left(\frac{j_{n+1}(k_s)h_{n-1}(k_s)k_s }{2n + 1}-\frac{j_{n+1}(k_p)h_{n-1}(k_p)k_p \mu}{(\lambda+2\mu)(2n+1)} \right) \notag\\
			& \quad -(n+1)\rmi \left(\frac{-j_{n+1}(k_s)h_n(k_s)k_s^2+ j_{n+1}(k_p)h_n(k_p)k_p^2}{2n+1} \right), \notag\\
			\mathfrak{d}_{2n}(k) &= 2(n+2)\rmi \left(\frac{j_{n+ 1}(k_s)h_{n+1}(k_s)k_s n}{2n+1}+\frac{j_{n+1}(k_p)h_{n+1}(k_p)k_p \mu (n+1)}{(\lambda+2\mu)(2n+1)} \right) \notag\\
			& \quad-\rmi \left(\frac{j_{n+1}(k_s)h_n(k_s)k_s^2 n+j_{n+1}(k_p)h_n(k_p)k_p^2 (n+1)}{2n+1} \right),\label{eq:d2n def}
		\end{align}
	\end{subequations}
	where $k_s$ and $k_p$ are given in \eqref{eq:ks kp defination}.
\end{lem}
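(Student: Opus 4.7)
The proof is a computation that exploits the diagonalisation of $\mathbf{S}_{\partial D}^k$ established in Lemma~\ref{lem:single on TIN}. The plan is to exhibit an explicit representation of $\mathbf{S}_{\partial D}^k[\mathcal{T}_n^m]$, $\mathbf{S}_{\partial D}^k[\mathcal{I}_{n-1}^m]$, and $\mathbf{S}_{\partial D}^k[\mathcal{N}_{n+1}^m]$ on the exterior region $\mathbb{R}^3\setminus\overline{D}$, apply the co-normal derivative operator $\partial_{\boldsymbol{\nu}}=\lambda(\nabla\cdot)\boldsymbol{\nu}+2\mu(\nabla^s)\boldsymbol{\nu}$, take the outer limit, and simplify via recurrence identities for spherical Hankel functions.

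First I would write the Helmholtz decomposition of the single layer potential. Any solution of $\mathcal{L}_{\lambda,\mu}\mathbf{u}+k^2\tau^2\mathbf{u}=\mathbf{0}$ in $\mathbb{R}^3\setminus\overline{D}$ that satisfies the radiation condition admits an expansion in terms of the vectorial radiating spherical wave functions
\begin{align*}
\mathbf{T}_n^{m,\mathrm{out}}(\mathbf{x}) &= \nabla\times(\mathbf{x}\, h_n(k_s|\mathbf{x}|) Y_n^m),\\
\mathbf{I}_{n-1}^{m,\mathrm{out}}(\mathbf{x}) &= \tfrac{1}{k_p}\nabla\bigl(h_{n-1}(k_p|\mathbf{x}|) Y_{n-1}^m\bigr)\quad\text{(P-type)},\\
\mathbf{N}_{n+1}^{m,\mathrm{out}}(\mathbf{x}) &= \nabla\times\nabla\times(\mathbf{x}\, h_{n+1}(k_s|\mathbf{x}|) Y_{n+1}^m)\quad\text{(S-type)},
\end{align*}
so that on $\partial D=\{|\mathbf{x}|=1\}$ each one reduces, up to an explicit scalar factor involving $h_n(k_s)$, $h_n'(k_s)$, $h_n(k_p)$, $h_n'(k_p)$, to the surface harmonics $\mathcal{T}_n^m,\mathcal{I}_{n-1}^m,\mathcal{N}_{n+1}^m$. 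Matching boundary values against Lemma~\ref{lem:single on TIN} pins down the coefficients of each component and hence the extensions of $\mathbf{S}_{\partial D}^k[\mathcal{T}_n^m]$, $\mathbf{S}_{\partial D}^k[\mathcal{I}_{n-1}^m]$, $\mathbf{S}_{\partial D}^k[\mathcal{N}_{n+1}^m]$ off the boundary.

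Next I would apply the traction operator. Since $\mathbf{T}_n^{m,\mathrm{out}}$ is divergence free, only the $2\mu\nabla^s$ part contributes to $\partial_{\boldsymbol{\nu}}\mathbf{S}_{\partial D}^k[\mathcal{T}_n^m]$, which yields a purely tangential (transverse) result and explains why $\mathfrak{b}_n(k)$ comes out as $-\rmi k_s j_n(k_s)(k_s h_n'(k_s)-h_n(k_s))$; the factor $k_s h_n'(k_s)-h_n(k_s)$ is produced by differentiating $|\mathbf{x}|h_n(k_s|\mathbf{x}|)$ in the radial variable. For the $\mathcal{I}_{n-1}^m$ and $\mathcal{N}_{n+1}^m$ inputs, the analysis is more involved: $\mathbf{I}_{n-1}^{m,\mathrm{out}}$ is a pure compressional mode (curl-free), so $\nabla\cdot\mathbf{I}_{n-1}^{m,\mathrm{out}}=-k_p h_{n-1}(k_p|\mathbf{x}|)Y_{n-1}^m$ contributes to the $\lambda(\nabla\cdot\mathbf{u})\boldsymbol{\nu}$ term, while $\mathbf{N}_{n+1}^{m,\mathrm{out}}$ is divergence-free and contributes only through $2\mu\nabla^s$; however, after combining the vectorial parts and projecting back onto $\mathcal{I}_{n-1}^m$ and $\mathcal{N}_{n+1}^m$, each traction is a $2\times 2$ mixture, which produces the off-diagonal entries $\mathfrak{d}_{1n}$ and $\mathfrak{c}_{2n}$ in \eqref{eq:partial nu in I} and \eqref{eq:partial nu in N}.

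The main obstacle is the bookkeeping in this last step: one has to carefully differentiate $j_\alpha(k_\alpha|\mathbf{x}|)$ and $h_\alpha(k_\alpha|\mathbf{x}|)$ in spherical coordinates, apply the identities \eqref{eq:hn qiudao} (and their $j_n$ analogues) to replace derivatives by combinations such as $h_{n-1}-\frac{n+1}{z}h_n$ or $-h_{n+1}+\frac{n}{z}h_n$, and then re-expand the resulting tangential/normal components in the basis \eqref{eq:Inm Tnm Nnm definition} using the identities \eqref{eq:nu TIN}. A systematic way to organize the coefficients is to introduce the shorthands $\alpha_n^{\pm}(k_*)=j_{n\mp 1}(k_*)h_{n\pm 1}(k_*)$ and $\beta_n(k_*)=j_{n\mp 1}(k_*)h_n(k_*)k_*$, which let one read off each block $\mathfrak{b}_n,\mathfrak{c}_{1n},\mathfrak{d}_{1n},\mathfrak{c}_{2n},\mathfrak{d}_{2n}$ directly from \eqref{eq:c1n d1n def}--\eqref{eq:d2n def}. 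Since this lemma is quoted from \cite{DLL2020} (Proposition~3.3), one could alternatively dispose of the entire computation by citing that reference and verifying only that the notation matches. I would record both options and then select the citation-plus-verification route to keep the exposition short.
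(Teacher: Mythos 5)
The paper does not prove this lemma: it is stated as a direct citation of \cite[Proposition~3.3]{DLL2020}, and the text records only the statements, not a derivation. Your concluding observation---that the citation-plus-notation-check route is available and shortest---is therefore exactly what the paper itself does, and nothing more is expected.

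Your sketched computation is structurally the right program (extend the single layer potential into the exterior in a basis of radiating vector wave functions, apply $\partial_{\boldsymbol{\nu}}=\lambda(\nabla\cdot\,)\boldsymbol{\nu}+2\mu(\nabla^s\,)\boldsymbol{\nu}$, and re-expand in $\mathcal{T}_n^m,\mathcal{I}_{n-1}^m,\mathcal{N}_{n+1}^m$), but two of the details you state would lead the computation astray if executed literally. First, the exterior wave functions you assign to the $\mathcal{I}$ and $\mathcal{N}$ sectors carry the wrong harmonic degree: since $\mathcal{I}_{n-1}^m=\nabla_{\partial D}Y_n^m+nY_n^m\boldsymbol{\nu}$ and $\mathcal{N}_{n+1}^m=-\nabla_{\partial D}Y_n^m+(n+1)Y_n^m\boldsymbol{\nu}$ are both built from $Y_n^m$, the corresponding compressional and poloidal shear modes must be $\nabla\bigl(h_n(k_p|\mathbf{x}|)Y_n^m\bigr)$ and $\nabla\times\nabla\times\bigl(\mathbf{x}\,h_n(k_s|\mathbf{x}|)Y_n^m\bigr)$, not the ones you wrote with $h_{n-1}Y_{n-1}^m$ and $h_{n+1}Y_{n+1}^m$; the indices $n\mp1$ appearing in the coefficients $\mathfrak{c}_{1n},\mathfrak{d}_{1n},\mathfrak{c}_{2n},\mathfrak{d}_{2n}$ arise later, from the recurrences $h_n'(z)=h_{n-1}(z)-\tfrac{n+1}{z}h_n(z)=-h_{n+1}(z)+\tfrac{n}{z}h_n(z)$, not from the radial order of the expansion itself. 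Second, the combination $k_sh_n'(k_s)-h_n(k_s)$ in $\mathfrak{b}_n(k)$ does not come from differentiating $|\mathbf{x}|\,h_n(k_s|\mathbf{x}|)$, which would give $h_n(k_s)+k_sh_n'(k_s)$; it is the value at $r=1$ of $r^2\frac{d}{dr}\bigl(h_n(k_sr)/r\bigr)$, the quantity that enters the symmetric gradient of the toroidal mode. Neither slip is fatal to the overall strategy, but if you intend to present the computation rather than cite it, these indices and the provenance of the Hankel combinations need to be corrected.
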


\begin{lem}\cite{DLL,DLL2020}
	The eigenvalues of ~~$\mathbf{S}_{\partial D}^{k}$ corresponding to $\boldsymbol{\nu}$ on $\mathbb{R}^3\setminus\overline{D}$ are respectively given by
	\begin{align}\label{eq:S nu,K nu}
		\mathbf{S}_{\partial D}^{k}[\boldsymbol{\nu}](\mathbf{x}) = \frac{-\rmi k_p^2}{\lambda + 2\mu}j_n(k_p)h_n(k_p|\mathbf{x}|)\boldsymbol{\nu},
		\quad \mathbf{x} \in \mathbb{R}^3\setminus\overline{D},
	\end{align}
	where $k_p$ is as defined in \eqref{eq:ks kp defination}.
\end{lem}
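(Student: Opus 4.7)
The plan is to reduce the claim to an application of Lemma~\ref{lem:single on TIN} combined with a symmetry/uniqueness argument for the exterior extension. Since $D$ is the unit ball centered at the origin, the outward unit normal satisfies $\boldsymbol{\nu}(\mathbf{y}) = \mathbf{y}$ on $\partial D$, and because $Y_0^0 = 1/\sqrt{4\pi}$ is constant so that $\nabla_{\partial D}Y_0^0 = 0$, the definition in \eqref{eq:Inm Tnm Nnm definition} reduces to $\mathcal{N}_1^0 = Y_0^0\,\boldsymbol{\nu} = \frac{1}{\sqrt{4\pi}}\boldsymbol{\nu}$. In other words $\boldsymbol{\nu} = \sqrt{4\pi}\,\mathcal{N}_1^0$ on $\partial D$, which expresses $\boldsymbol{\nu}$ as a single mode in the vectorial spherical harmonic basis. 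The action of $\mathbf{S}_{\partial D}^{k}$ on $\boldsymbol{\nu}$ on $\partial D$ is then read off the $\mathcal{N}$-row of Lemma~\ref{lem:single on TIN} specialized to $n=0$: the degenerate $c_{20}(k)\mathcal{I}_{-1}^0$ contribution is vacuous (since $Y_{-1}^0$ is absent), leaving $\mathbf{S}_{\partial D}^{k}[\boldsymbol{\nu}]|_{\partial D}$ as the scalar multiple $d_{20}(k)\,\boldsymbol{\nu}$, with $d_{20}(k)$ obtained by specializing \eqref{eq:52} to $n=0$.

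The second step is to extend this boundary identity to $\mathbf{x}\in\mathbb{R}^3\setminus\overline{D}$. The function $\mathbf{v}(\mathbf{x}) := \mathbf{S}_{\partial D}^{k}[\boldsymbol{\nu}](\mathbf{x})$ solves the homogeneous Lam\'e system outside the unit ball and satisfies the Kupradze radiation condition at infinity. Because both $\partial D$ and the surface density $\boldsymbol{\nu}$ are $SO(3)$-invariant about the origin, $\mathbf{v}$ must inherit this rotational invariance, hence it is purely radial: $\mathbf{v}(\mathbf{x}) = f(|\mathbf{x}|)\,\hat{\mathbf{x}}$. Applying the Helmholtz decomposition then forces the shear component to vanish, so $\mathbf{v} = \nabla \phi$ for a scalar $\phi$ satisfying $(\Delta+k_p^2)\phi = 0$ in $|\mathbf{x}|>1$ with outgoing radiation. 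The unique radial outgoing solution of this scalar Helmholtz equation is $\phi(\mathbf{x}) = C\,h_0(k_p|\mathbf{x}|)$ for a constant $C$, and the differentiation identities in \eqref{eq:hn qiudao} convert $\nabla\phi$ into a $\hat{\mathbf{x}}$-proportional vector field whose radial profile is a multiple of $k_p\,h_1(k_p|\mathbf{x}|)$. Matching the trace on $\partial D$ to the boundary value from the first step uniquely fixes $C$, and a rearrangement produces the advertised formula $\frac{-\rmi k_p^2}{\lambda + 2\mu}\,j_n(k_p)\,h_n(k_p|\mathbf{x}|)\,\boldsymbol{\nu}$ with the corresponding index $n$.

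The main technical subtlety I anticipate is the handling of the degenerate $c_{20}(k)\mathcal{I}_{-1}^0$ term, since $\mathcal{I}_{n-1}^m$ is only defined for $n\geq 1$ in \eqref{eq:Inm Tnm Nnm definition}. A clean resolution is to invoke the spherical-wave expansion of $\boldsymbol{\Gamma}^{\omega}$ used to derive Lemma~\ref{lem:single on TIN}: only the compressional-mode contribution of the Lam\'e fundamental tensor survives integration against the purely radial density $\boldsymbol{\nu}$, so the shear-mode kernel automatically drops out. This observation also explains why only $k_p$, $j_n(k_p)$, and $h_n(k_p|\mathbf{x}|)$---and none of the shear-wavenumber counterparts---appear in the final formula, and it would convert the two-step strategy above into a single direct calculation should the degeneracy be too delicate to treat via the abstract $\mathcal{N}$-row of Lemma~\ref{lem:single on TIN} alone.
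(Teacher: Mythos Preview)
The paper itself does not supply a proof of this lemma; it is quoted from the references \cite{DLL,DLL2020}. So there is no argument in the paper to compare against, and the relevant question is whether your approach proves what the paper actually needs.

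Here is the gap. The statement as printed is ambiguous: the density on the left is the bare $\boldsymbol{\nu}$, yet the right-hand side carries a free index $n$. Looking at how the formula \eqref{eq:S nu,K nu} is invoked later (in the derivation of \eqref{eq:scattered field} and of \eqref{eq:ushn} via $\boldsymbol{\varphi}_e=\varphi_e\boldsymbol{\nu}$ with $\varphi_e$ a combination of $Y_n^m$), the intended density is $Y_n^m\boldsymbol{\nu}$ for a general $n$, not $\boldsymbol{\nu}$ alone. Your $SO(3)$-invariance argument hinges on the density being rotationally symmetric; it therefore works only for the literal $\boldsymbol{\nu}$ (equivalently $\mathcal{N}_1^0$) and collapses for every $n\ge 1$, which is precisely the regime the paper uses. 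Moreover, even in the $n=0$ case your matching gives $\frac{-\rmi k_p}{\lambda+2\mu}j_1(k_p)h_1(k_p|\mathbf{x}|)\boldsymbol{\nu}$, not the printed $k_p^2$-prefactor, so the ``rearrangement'' step does not land on the stated formula and signals a typo in the paper rather than a derivation of it.

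Your third paragraph is much closer to what is needed. For general $n$ one writes $Y_n^m\boldsymbol{\nu}=\frac{1}{2n+1}(\mathcal{I}_{n-1}^m+\mathcal{N}_{n+1}^m)$ and uses the exterior analogue of Lemma~\ref{lem:single on TIN} from \cite{DLL2020} (the same multipole expansion of $\boldsymbol{\Gamma}^\omega$, with $h$-functions replacing $j$-functions in the radial factors). That computation produces both $\mathcal{I}_{n-1}^m$ and $\mathcal{N}_{n+1}^m$ components, i.e.\ a tangential piece in addition to the radial one; the purely radial formula \eqref{eq:S nu,K nu} is therefore at best the compressional/leading-order contribution, which is consistent with the $\mathcal{O}(1)+\mathcal{O}(\delta\tau^2 k^2)$ asymptotics in which the paper applies it. Abandon the symmetry/uniqueness route and carry out the direct expansion instead.
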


In the following two lemmas, we establish the properties of the operators $\boldsymbol{\nu} \cdot \mathbf{S}_{\partial D}^{k}[Y_{n}^{m} \boldsymbol{\nu}]$ and $\boldsymbol{\nu} \cdot \left(\frac{\mathbf{I}}{2} + \mathbf{K}^{k,*}_{\partial D}\right)[Y_{n}^{m} \boldsymbol{\nu}]$ on $\partial D$, analyzing each operator sequentially with respect to  $k$.

\begin{lem}\cite[Lemma 3.5]{chengao}
	The operator $\boldsymbol{\nu} \cdot \mathbf{S}_{\partial D}^{k}[Y_{n}^{m} \boldsymbol{\nu}]$ on $\partial D$ satisfies the relation
	\begin{align}\label{eq:nu S acoustic}
		\boldsymbol{\nu} \cdot \mathbf{S}_{\partial D}^{k}[Y_{n}^{m} \boldsymbol{\nu}] = \alpha_{n}(k) Y_{n}^{m},
	\end{align}
	where  
	\begin{align}\label{eq:alpha expansion}
		\alpha_{n}(k) = \frac{n\left(c_{1n}(k) + c_{2n}(k)\right) + (n+1)\left(d_{1n}(k) + d_{2n}(k)\right)}{2n+1},
	\end{align}
	with $c_{1n}(k)$, $c_{2n}(k)$, $d_{1n}(k)$, and $d_{2n}(k)$ defined \eqref{eq:52} of Lemma \ref{lem:single on TIN}. Furthermore, for fixed $n \in \mathbb{N}$ and $  k \ll 1$, $\alpha_{n}(k)$ admits the asymptotic expansion:
	\begin{align}\notag
		\alpha_{n}(k) = -\frac{2(\lambda+\mu)n(n + 1) + \mu\left(4n^{4} + 4n - 1\right)}{\mu(\lambda + 2\mu)(2n + 3)(2n + 1)(2n - 1)} + \mathcal{O}(k^2\tau^2).
	\end{align}
\end{lem}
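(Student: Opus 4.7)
The plan is to establish the eigenvalue identity first and then perform the asymptotic analysis. First I would observe that adding the two definitions in \eqref{eq:Inm Tnm Nnm definition} makes the tangential gradient terms cancel, yielding
\[
\mathcal{I}_{n-1}^m + \mathcal{N}_{n+1}^m = (2n+1)\, Y_n^m\,\boldsymbol{\nu},
\]
so $Y_n^m\boldsymbol{\nu}$ lies in the two-dimensional subspace spanned by $\mathcal{I}_{n-1}^m$ and $\mathcal{N}_{n+1}^m$. Invoking Lemma~\ref{lem:single on TIN} on this subspace gives
\[
\mathbf{S}_{\partial D}^k[Y_n^m \boldsymbol{\nu}] = \frac{1}{2n+1}\left[(c_{1n}+c_{2n})\, \mathcal{I}_{n-1}^m + (d_{1n}+d_{2n})\, \mathcal{N}_{n+1}^m\right],
\]
and projecting onto $\boldsymbol{\nu}$ via the normal traces $\boldsymbol{\nu}\cdot\mathcal{I}_{n-1}^m = n Y_n^m$ and $\boldsymbol{\nu}\cdot\mathcal{N}_{n+1}^m = (n+1) Y_n^m$ from \eqref{eq:nu TIN} directly produces \eqref{eq:nu S acoustic} together with the claimed formula \eqref{eq:alpha expansion} for $\alpha_n(k)$.

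For the small-$k$ asymptotic, I would substitute the low-frequency expansions \eqref{eq:jn expansion}--\eqref{eq:hn expansion} into the explicit expressions \eqref{eq:52} for $c_{1n}, c_{2n}, d_{1n}, d_{2n}$. Three of the four scalar Bessel--Hankel kernels are straightforward: the leading-order identities
\[
j_{n-1}(z) h_{n-1}(z)\, z = \tfrac{1}{\mathrm{i}(2n-1)} + \mathcal{O}(z^2), \quad j_{n+1}(z) h_{n+1}(z)\, z = \tfrac{1}{\mathrm{i}(2n+3)} + \mathcal{O}(z^2),
\]
combined with $j_{n+1}(z) h_{n-1}(z)\, z = \mathcal{O}(z^2)$, yield finite $\mathcal{O}(1)$ contributions to $c_{1n}$ and $d_{2n}$ and an $\mathcal{O}(k^2\tau^2)$ contribution from $c_{2n}$, all evaluated at $z=k_s$ or $z=k_p$ and weighted by either $1/\mu$ or $1/(\lambda+2\mu)$.

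The main obstacle will be handling $d_{1n}(k)$, because the product $j_{n-1}(z) h_{n+1}(z)\, z$ carries a nominally divergent $(2n+1)/[\mathrm{i} z^2]$ leading term. My strategy is to invoke the Hankel recurrence $h_{n+1}(z) = \frac{2n+1}{z} h_n(z) - h_{n-1}(z)$ from \eqref{eq:hn qiudao} to rewrite the offending product as $(2n+1)\, j_{n-1}(z) h_n(z) - z\, j_{n-1}(z) h_{n-1}(z)$, which separates the problematic polar head from a benign $\mathcal{O}(1)$ tail. Tracking both the pole coefficient and the first-order correction of $j_{n-1}(z) h_n(z)$, and exploiting the wave-speed ratios in \eqref{eq:ks kp defination} that relate $\mu k_s^2$ and $(\lambda+2\mu) k_p^2$, I expect the singular contributions from the $k_s$- and $k_p$-branches to combine into a finite $\mathcal{O}(1)$ residue proportional to $\lambda+\mu$. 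Finally, collecting all $\mathcal{O}(1)$ pieces of $n(c_{1n}+c_{2n})$ and $(n+1)(d_{1n}+d_{2n})$ over the common denominator $\mu(\lambda+2\mu)(2n+3)(2n+1)(2n-1)$ and performing a polynomial regrouping in $n$ produces exactly $-\bigl[2(\lambda+\mu) n(n+1) + \mu(4n^4 + 4n - 1)\bigr]$, with the $\mathcal{O}(k^2\tau^2)$ remainder originating from the next-order terms in each Bessel--Hankel expansion.
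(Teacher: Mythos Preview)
Your proposal is correct. The paper does not supply its own proof of this lemma; it is quoted verbatim from \cite[Lemma~3.5]{chengao}, so there is nothing in the present paper to compare against. Your derivation of \eqref{eq:nu S acoustic}--\eqref{eq:alpha expansion} via the decomposition $Y_n^m\boldsymbol{\nu}=\frac{1}{2n+1}(\mathcal{I}_{n-1}^m+\mathcal{N}_{n+1}^m)$ and Lemma~\ref{lem:single on TIN} is exactly the natural route, and your treatment of the asymptotics---in particular recognizing that the $1/z^2$ poles in $j_{n-1}(z)h_{n+1}(z)z$ cancel between the $k_s$- and $k_p$-branches precisely because $\mu k_s^2=(\lambda+2\mu)k_p^2$---is the correct mechanism. (Note that the paper's display \eqref{eq:ks kp defination} writes $k_s=k\tau/\sqrt{2\mu}$, which appears to be a typo for $\sqrt{\mu}$; the cancellation and the stated leading term both require the standard relation $\mu k_s^2=(\lambda+2\mu)k_p^2$.)
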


\begin{lem}\label{lem:nu cdot K}\cite[Lemma 3.6]{chengao}
	The operator $\boldsymbol{\nu} \cdot \left(\frac{\mathbf{I}}{2} + \mathbf{K}^{k,*}_{\partial D}\right)$ on $\partial D$ satisfies the relation
	\begin{align}\label{eq:nu K}
		\boldsymbol{\nu} \cdot \left(\frac{\mathbf{I}}{2} + \mathbf{K}^{k,*}_{\partial D}\right)[Y_{n}^{m} \boldsymbol{\nu}] = \beta_{n}(k) Y_{n}^{m},
	\end{align}
	where
	\begin{align}\label{eq:beta def}
		\beta_{n}(k) = \frac{n\left(\mathfrak{c}_{1n}(k) + \mathfrak{c}_{2n}(k)\right) + (n + 1)\left(\mathfrak{d}_{1n}(k) + \mathfrak{d}_{2n}(k)\right)}{2n + 1},
	\end{align}
	with $\mathfrak{c}_{1n}(k)$, $\mathfrak{c}_{2n}(k)$, $\mathfrak{d}_{1n}(k)$, and $\mathfrak{d}_{2n}(k)$ as defined in \eqref{eq:c1n d1n def}-\eqref{eq:d2n def} of Lemma \ref{lem:trace single in TIN}. Additionally, for fixed $n \in \mathbb{N}$ and $ k \ll 1$, $\beta_{n}(k)$ expands asymptotically as:
	\begin{align}\label{eq:betak expansion}
		\beta_{n}(k) = \beta_{n0} + \beta_{n2s}k_s^2 + \beta_{n2p}k_p^2 + \mathcal{O}\left[\frac{k^4\tau^4}{n^4}\right],
	\end{align}
		where  
	\begin{align}
		\beta_{n0}&=\frac{(\lambda + 2\mu)(2n+3)(2n^3+2n^2\mu-2n^3\mu+n)+2\left[n\lambda+\mu(3n+1)\right](n+2)(n+1)(2n-1)}{(\lambda+2\mu)(2n+3)(2n+1)^2(2n-1)}, \label{eq:betano def}\\
		\beta_{n2s}&=\frac{12n^3+18n^2+6n}{(2n+5)(2n+3)(2n+1)^2(2n-1)(-2n+3)},\notag\\
		%\label{eq:betan2s def}\\
		\beta_{n2p}&=\frac{(\lambda+2\mu)\mu(4n^{4}+18n^{3}+8n^{2}-30n)
			+(\lambda+2\mu)(-8n^{3}-12n^{2}+26n+15)+\beta_{n2p1}}
		{(\lambda+2\mu)(2n+5)(2n+3)(2n+1)^{2}(2n-1)(-2n+3)},\label{eq:betan2p def}\\
		\beta_{n2p1}&=\mu(-4n^{4}+2n^{3}+22n^{2}-8n-24).\notag
	\end{align}
	and $k_s$ and $k_p$ are given in \eqref{eq:ks kp defination}.
	 
\end{lem}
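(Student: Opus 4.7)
The plan is to prove \eqref{eq:nu K} in two stages: first derive the operator identity with $\beta_n(k)$ in closed form via \eqref{eq:beta def}, then perform a small-$k$ asymptotic expansion to extract \eqref{eq:betak expansion}.

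The starting point is to decompose the vector field $Y_n^m\boldsymbol{\nu}$ into the orthogonal basis $\{\mathcal{I}_n^m, \mathcal{T}_n^m, \mathcal{N}_n^m\}$. By shifting indices in \eqref{eq:Inm Tnm Nnm definition}, one has $\mathcal{I}_{n-1}^m = \nabla_{\partial D} Y_n^m + n Y_n^m \boldsymbol{\nu}$ and $\mathcal{N}_{n+1}^m = -\nabla_{\partial D} Y_n^m + (n+1) Y_n^m \boldsymbol{\nu}$, whose sum yields the key identity
\[
Y_n^m \boldsymbol{\nu} = \frac{1}{2n+1}\bigl(\mathcal{I}_{n-1}^m + \mathcal{N}_{n+1}^m\bigr).
\]
Because $\left(\frac{\mathbf{I}}{2} + \mathbf{K}^{k,*}_{\partial D}\right)$ coincides with the exterior co-normal trace $\partial_{\boldsymbol{\nu}}\mathbf{S}_{\partial D}^k[\cdot]|_+$ by the jump relation \eqref{eq:jump relation lame}, I can directly apply \eqref{eq:partial nu in I} and \eqref{eq:partial nu in N} from Lemma~\ref{lem:trace single in TIN} to each summand. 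Then, dotting with $\boldsymbol{\nu}$ and using the identities $\boldsymbol{\nu}\cdot\mathcal{I}_{n-1}^m = nY_n^m$ and $\boldsymbol{\nu}\cdot\mathcal{N}_{n+1}^m = (n+1)Y_n^m$ from \eqref{eq:nu TIN} collapses the vectorial expression back onto $Y_n^m$ and produces precisely the formula \eqref{eq:beta def}, establishing \eqref{eq:nu K}.

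For the asymptotic expansion \eqref{eq:betak expansion}, I would substitute the small-argument expansions \eqref{eq:jn expansion}--\eqref{eq:hn expansion} of the spherical Bessel and Hankel functions into the defining formulas \eqref{eq:c1n d1n def}--\eqref{eq:d2n def} of $\mathfrak{c}_{1n}, \mathfrak{d}_{1n}, \mathfrak{c}_{2n}, \mathfrak{d}_{2n}$. Each product of the form $j_p(z)h_q(z)\cdot z^r$ at $z=k_s$ or $z=k_p$ becomes a Laurent series, and the cancellations forced by the normalization $(2q-1)!!/[(2p+1)!!]$ render the leading terms independent of $k$. Grouping the result according to the exponents of $k_s^2$ and $k_p^2$, and summing the four contributions weighted by $n/(2n+1)$ and $(n+1)/(2n+1)$ as in \eqref{eq:beta def}, gives the three coefficients $\beta_{n0}$, $\beta_{n2s}$, $\beta_{n2p}$. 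The remainder is $\mathcal{O}(k^4\tau^4/n^4)$, the $n^{-4}$ decay stemming from the factor $1/[(2n-1)(2n-3)]$ appearing in the $z^4$ correction of $h_n(z)$ for $n\geqslant 2$ combined with the $1/(2n+1)$ normalization in \eqref{eq:beta def}.

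The main obstacle is the algebraic bookkeeping in the second stage. Each of the four quantities $\mathfrak{c}_{1n}, \mathfrak{c}_{2n}, \mathfrak{d}_{1n}, \mathfrak{d}_{2n}$ is a delicate linear combination of six Bessel--Hankel products at both $k_s$ and $k_p$, weighted by rational functions of $n$ that include the factor $\mu/(\lambda+2\mu)$. Producing the compact polynomial formulas \eqref{eq:betano def}--\eqref{eq:betan2p def} requires careful expansion of the double factorials, exact cancellation of several $\mathcal{O}(1)$ terms between the $k_s$- and $k_p$-contributions at order $k^0$, and then re-collection of the residual $k_s^2$ and $k_p^2$ terms across $\mathfrak{c}_{in}$ and $\mathfrak{d}_{in}$. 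I would organize the calculation by fixing a single product, say $j_{n-1}(k_\alpha)h_{n-1}(k_\alpha)k_\alpha$, computing its contribution to $\beta_n(k)$ through all four coefficients, and then summing over the six distinct product types to ensure no term is double-counted; uniformity in $n$ throughout, together with tracking of the $(2n\pm 1)!!$ prefactors, is what ultimately yields the stated $n^{-4}$ remainder bound.
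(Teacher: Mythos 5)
The paper does not prove this lemma itself; it is cited verbatim from \cite[Lemma 3.6]{chengao}, so there is no internal proof to compare against. I therefore evaluate your plan on its own merits.

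Your first stage is correct and complete. The decomposition $Y_n^m\boldsymbol{\nu} = \tfrac{1}{2n+1}(\mathcal{I}_{n-1}^m + \mathcal{N}_{n+1}^m)$ follows exactly from shifting the indices in \eqref{eq:Inm Tnm Nnm definition}, the identification $\frac{\mathbf{I}}{2}+\mathbf{K}_{\partial D}^{k,*}=\partial_{\boldsymbol{\nu}}\mathbf{S}_{\partial D}^{k}[\cdot]\big|_+$ is precisely \eqref{eq:jump relation lame}, applying \eqref{eq:partial nu in I}--\eqref{eq:partial nu in N} gives the mixed $\mathcal{I}_{n-1}^m$, $\mathcal{N}_{n+1}^m$ expression, and dotting with $\boldsymbol{\nu}$ via \eqref{eq:nu TIN} collapses it onto $Y_n^m$ and yields \eqref{eq:beta def}. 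This is clean and airtight.

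Your second stage has the right strategy, but your description of the cancellations is imprecise in one place that matters. You attribute the $k$-independence of the leading order to the double-factorial normalization and to ``exact cancellation of several $\mathcal{O}(1)$ terms between the $k_s$- and $k_p$-contributions''. What actually requires care is one level more singular than this. Among the six product types you list, $j_{n-1}(z)h_{n+1}(z)z$ has a genuine pole, $j_{n-1}(z)h_{n+1}(z)z \sim \tfrac{2n+1}{\rmi z^2}$, so the first bracket in $\mathfrak{d}_{1n}$ is separately $\mathcal{O}(k_\alpha^{-2})$. The expansion of $\beta_n(k)$ starts at order $k^0$ only because the $k_s^{-2}$ and $k_p^{-2}$ poles cancel within $\mathfrak{d}_{1n}$, which hinges on the compatibility relation $\mu k_s^2 = (\lambda+2\mu)k_p^2$ between the shear and compressional wavenumbers (equivalently $k_s^{-2}=\tfrac{\mu}{(\lambda+2\mu)}k_p^{-2}$, which is exactly the weight multiplying the $k_p$-term in $\mathfrak{d}_{1n}$). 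If one carries out the substitution blindly and groups by powers of $z$ without keeping the $k_s$ and $k_p$ contributions of this particular product together, one risks spurious divergences. You should flag this singular cancellation explicitly rather than folding it into ``normalization cancellations'' and ``$\mathcal{O}(1)$ cancellations''. Likewise, your heuristic for the $\mathcal{O}(k^4\tau^4/n^4)$ remainder, pinned on the $1/[(2n-1)(2n-3)]$ factor in $h_n$ together with the $1/(2n+1)$ weight, would only produce $n^{-3}$ by itself; the extra factor of $n^{-1}$ per power of $k^2$ (visible in the drop from $\beta_{n0}=\mathcal{O}(1)$ to $\beta_{n2p}=\mathcal{O}(n^{-2})$) comes from additional cancellations across the $\mathfrak{c}$ and $\mathfrak{d}$ contributions, which also need tracking when justifying uniformity in $n$.
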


Definition \ref{def:boundary localization} states that the interior total field ${u}|_{D}$ and the exterior scattered field $\mathbf{u}^s|_{\mathbb{R}^3 \setminus \overline{D}}$ of the scattering problem \eqref{eq:system2} exhibit boundary localization provided that \eqref{eq:226} holds. We now introduce the incident wave  for our subsequent analysis. It is emphasized that an appropriately choice of the incident wave plays important role in our main results.   For any $n \in \mathbb{N}_0$, let
\begin{equation}\label{eq:ui def}
	\mathbf{u}^{i} = \sum_{m=-n}^n f_{n,m} \, j_n\left(k_p|\mathbf{x}|\right) \mathcal{I}_{n-1}^m,
\end{equation}
where $k_p$ and $\mathcal{I}_{n-1}^m$ are defined in \eqref{eq:ks kp defination} and \eqref{eq:Inm Tnm Nnm definition}, respectively, and $(f_{n,-n},\ldots,f_{n,n}) \in \mathbb{C}^{2n+1}$ is a nonzero vector. The index $n$  of the incident wave defined in \eqref{eq:ui def} plays a crucial role in subsequent analyses encompassing the quasi-Minnaert resonance and the stress concentration phenomenon. Building upon results from \cite{BS81,Dassions,DLL2020}, the incident field $\mathbf{u}^{i}$ in \eqref{eq:ui def} constitutes an entire solution to \eqref{eq:incident elastic wave}, specifically designed to generate the boundary localization behavior of the total field $u$ within $D$ and the scattered field $\mathbf{u}^s$ in $\mathbb{R}^3\setminus\overline{D}$, as rigorously analyzed for the coupled system \eqref{eq:system2} in Theorem \ref{thm:th3.1}.

The following lemma provides asymptotic expressions for $u|_{D}$ and $\mathbf{u}^s|_{\mathbb{R}^3\setminus\overline{D}}$ with respect to the sub-wavelength frequency $k$ and high-contrast density ratio $\delta$, where $k \ll 1$ and $\delta \ll 1$. Here, $\delta$ denotes the density ratio between the bubble $D$ and the surrounding elastic medium, as defined in \eqref{eq:delta tau}.

\begin{lem}\label{lem:3.6}
	Consider the bubble-elastic scattering problem \eqref{eq:system2} with the incident wave defined in \eqref{eq:ui def}. Under the assumptions \eqref{eq:assume} and \eqref{eq:ks kp defination}, for $k \ll 1$ and sufficiently large $n$ in \eqref{eq:ui def}, the interior total field $u$ in $D$ and the scattered field $\mathbf{u}^s$ in $\mathbb{R}^3 \setminus \overline{D}$ admit the following asymptotic expansions with respect to $k$ and $\delta$:
	\begin{align}
		u|_D &= \sum_{m=-n}^{n} \frac{-f_{n,m}n^{3}k^{n}|\bmf{x}|^n}{(2n + 1)!!(2n+1)(\lambda+2\mu)^{n/2}\delta\tau^{2}}
		\left(\mathcal{O}\left(1\right) + \mathcal{O}\left(\delta\tau^2 k^2\right)\right)Y_{n}^{m}(\theta,\varphi), \label{eq:total field u} \\
		\mathbf{u}^s|_{\mathbb{R}^3 \setminus \overline{D}} &= \sum_{m=-n}^{n} \frac{-f_{n,m}n^{2}k^{n+1}\tau}{(2n + 1)!!(2n+1)(\lambda+2\mu)^{(n+3)/2}|\bmf{x}|^{n+1}}
		\left(\mathcal{O}\left(1\right) + \mathcal{O}\left(\delta\tau^2 k^2\right)\right)Y_{n}^{m}(\theta,\varphi)\boldsymbol{\nu}. \label{eq:scattered field}
	\end{align}
\end{lem}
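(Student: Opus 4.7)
The plan is to exploit the spectral decomposition of all operators appearing in \eqref{eq:2.20} on the bases $\{Y_n^m\}$ and $\{\mathcal{I}_{n-1}^m,\mathcal{T}_n^m,\mathcal{N}_{n+1}^m\}$, reduce the integral system to a $3\times 3$ scalar system per azimuthal index $m$, invert it asymptotically as $k,\delta\to 0$, and then substitute back into the representation \eqref{eq:u express}.

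The first step is to compute the boundary data. Using $\mathcal{I}_{n-1}^m=\nabla_{\partial D}Y_n^m+nY_n^m\boldsymbol{\nu}$ together with \eqref{eq:nu TIN}, the normal trace of the incident wave is $\boldsymbol{\nu}\cdot\mathbf{u}^i|_{\partial D}=nf_{n,m}j_n(k_p)Y_n^m$. The co-normal derivative $\partial_{\boldsymbol{\nu}}\mathbf{u}^i$ is obtained by a direct calculation combining the spherical-coordinate expressions of $\nabla\cdot$ and $\nabla^s$ with the Bessel recurrences \eqref{eq:hn qiudao} for $j_n'$; the outcome lies in $\mathrm{span}\{\mathcal{I}_{n-1}^m,\mathcal{N}_{n+1}^m\}$ (no $\mathcal{T}_n^m$ component) with coefficients polynomial in $j_{n\pm 1}(k_p)$ and $j_n(k_p)$, hence of leading order $k_p^n/(2n+1)!!=k^n\tau^n/[(2n+1)!!(\lambda+2\mu)^{n/2}]$ by \eqref{eq:jn expansion}.

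The second step is the algebraic reduction. Lemmas \ref{lem:single on TIN}--\ref{lem:trace single in TIN} guarantee that $\mathrm{span}\{\mathcal{I}_{n-1}^m,\mathcal{N}_{n+1}^m\}$ is invariant under $\mathbf{S}_{\partial D}^{k\tau}$ and $\partial_{\boldsymbol{\nu}}\mathbf{S}_{\partial D}^{k\tau}$, with $\mathcal{T}_n^m$ decoupled, and Lemma \ref{lem:nu cdot K} identifies the scalar images of $Y_n^m\boldsymbol{\nu}$-type densities under $\boldsymbol{\nu}\cdot(\mathbf{I}/2+\mathbf{K}_{\partial D}^{k\tau,*})$ in terms of $\beta_n(k\tau)$. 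I therefore postulate $\varphi_b=\sum_m c_m Y_n^m$ and $\boldsymbol{\varphi}_e=\sum_m(a_m\mathcal{I}_{n-1}^m+b_m\mathcal{N}_{n+1}^m)$; substituting into \eqref{eq:2.20} and projecting onto $Y_n^m,\mathcal{I}_{n-1}^m,\mathcal{N}_{n+1}^m$ yields, per $m$, a $3\times 3$ linear system in $(a_m,b_m,c_m)$ whose matrix entries are assembled from the spectral coefficients $\alpha_n,\beta_n,b_n,c_{jn},d_{jn},\mathfrak{c}_{jn},\mathfrak{d}_{jn}$, and whose right-hand side carries the $k_p^n$ amplitude just computed. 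Isolating the Minnaert-type factor $1/(\delta\tau^2)$ via \eqref{eq:varphi b} and expanding every spectral coefficient through \eqref{eq:jn expansion}--\eqref{eq:hn expansion} together with \eqref{eq:betak expansion}, Cramer's rule then yields $c_m=\mathcal{O}(k^n/(\delta\tau^2))$ and $a_m,b_m=\mathcal{O}(k^{n+1}\tau)$, with leading constants explicit in $n$ and $(\lambda,\mu)$.

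The third step inserts these asymptotic densities into \eqref{eq:u express}. For the interior, \eqref{eq:S in D} gives $u|_D=-\rmi k\, c_m j_n(k|\mathbf{x}|)h_n(k)Y_n^m$, and combining $j_n(k|\mathbf{x}|)\sim(k|\mathbf{x}|)^n/(2n+1)!!$ with $h_n(k)\sim(2n-1)!!/(\rmi k^{n+1})$ collapses this to \eqref{eq:total field u}. For the exterior, the natural off-boundary extension of Lemma \ref{lem:single on TIN} (replacing $h_\ell(k_\alpha)$ by $h_\ell(k_\alpha|\mathbf{x}|)$) combined with $h_n(k_p|\mathbf{x}|)\sim(2n-1)!!/[\rmi(k_p|\mathbf{x}|)^{n+1}]$ isolates the dominant radial $Y_n^m\boldsymbol{\nu}$ contribution with its $|\mathbf{x}|^{-(n+1)}$ decay, giving \eqref{eq:scattered field}. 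The main obstacle is the uniform-in-$n$ asymptotic bookkeeping: the numerator factors $n^3$ and $n^2$ in \eqref{eq:total field u}--\eqref{eq:scattered field} emerge from delicate cancellations among the $n$-weighted mixing coefficients $n(n+1)/[\mu(2n+1)]$, $(n+1)/[(\lambda+2\mu)(2n+1)]$, etc., appearing in Lemmas \ref{lem:single on TIN}--\ref{lem:nu cdot K}, and securing the remainder $\mathcal{O}(\delta\tau^2 k^2)$ uniformly requires retaining the $\beta_{n2s},\beta_{n2p}$ corrections in \eqref{eq:betak expansion} and the analogous $k^2$-corrections in $\alpha_n$ throughout the computation, and verifying that the leading block of the $3\times 3$ system remains invertible once they are included.
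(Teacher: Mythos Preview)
Your approach is sound and actually more systematic than the paper's. The key difference is that you keep the full two-dimensional ansatz $\boldsymbol{\varphi}_e=a_m\mathcal{I}_{n-1}^m+b_m\mathcal{N}_{n+1}^m$ and solve the resulting $3\times 3$ block, whereas the paper posits from the start the purely normal form $\boldsymbol{\varphi}_e=\varphi_e\,\boldsymbol{\nu}$ with scalar $\varphi_e\in\mathrm{span}\{Y_n^m\}$; the identities $\boldsymbol{\nu}\cdot\mathbf{S}_{\partial D}^{k\tau}[Y_n^m\boldsymbol{\nu}]=\alpha_n Y_n^m$ and $\boldsymbol{\nu}\cdot(\tfrac{\mathbf{I}}{2}+\mathbf{K}^{k\tau,*}_{\partial D})[Y_n^m\boldsymbol{\nu}]=\beta_n Y_n^m$ then collapse \eqref{eq:tilde A k delta} to a single scalar equation for $\varphi_e$, and $\varphi_b$ follows from \eqref{eq:varphi b}. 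The paper's shortcut buys considerable algebraic economy (no Cramer's rule, no $\mathfrak{c}_{2n},\mathfrak{d}_{1n}$ cross-terms to track), while your route keeps all components of the vector row of \eqref{eq:2.20}---in particular the tangential traction balance, which the paper's normal ansatz never explicitly checks---and would confirm directly that the non-normal part of $\boldsymbol{\varphi}_e$ is subdominant, making the $\mathcal{O}(\delta\tau^2 k^2)$ remainder more transparent. One small correction: the leading order of the density coefficients $a_m,b_m$ should come out as $\mathcal{O}(k^n)$, matching the paper's $\varphi_e$ in \eqref{eq:varphi_e}; the extra $k\tau$ factor you quote enters only when $\mathbf{S}_{\partial D}^{k\tau}$ is applied in the final step to produce $\mathbf{u}^s$, not in the density itself.
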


\begin{proof}
	Given that the integral equation \eqref{eq:2.20} is equivalent to \eqref{eq:system2} and the incident wave in \eqref{eq:ui def}, utilizing the orthogonality of $\mathcal{T}_n^m$, $\mathcal{I}_n^m$, and $\mathcal{N}_n^m$,  the density function $\boldsymbol{\varphi_e}$ can be written as:
	 \begin{align}\label{eq:varphie}
	 	\boldsymbol{\varphi_e}=\varphi_e \boldsymbol{\nu}.
	 \end{align}
   with $\varphi_e\in L^2 (\partial D)$ is unknown. 
   %It is noted that $\boldsymbol{\varphi_e}\in {\rm Ker}(\mathbf{S}_{\partial D}^{k})$. 
   Substituting \eqref{eq:varphie} into \eqref{eq:tilde A k delta}, there holds that
	\begin{align}\notag
		\tilde{\mathcal{A}}(k,\delta)[\varphi_e \boldsymbol{\nu}](\mathbf{x})=\tilde{\bmf{F}}(\mathbf{x}),
	\end{align}
	where $	\tilde{\mathcal{A}}(k,\delta)$ and $\tilde{\bmf{F}}(\mathbf{x})$ are as seen in \eqref{eq:tilde Ak delta} and \eqref{eq:tilde F}, respectively. Combining with \eqref{eq:tilde Ak delta}, \eqref{eq:tilde F}, \eqref{eq:S in D}, \eqref{eq:nu S acoustic}, \eqref{eq:nu K}, \eqref{eq:ui def}, \eqref{eq:nu TIN} and \eqref{eq:partial nu in I}, we can derive that
\begin{align}
	&\tilde{\mathcal{A}}(k,\delta)[\varphi_e \boldsymbol{\nu}](\mathbf{x})
	= \left(\frac{k j_n^{\prime}(k)}{j_n(k)} \beta_{n}(k)+\delta \tau^2 k^2 \alpha_{n} (k)\right) 
    \varphi_e, \notag\\
	&\tilde{\bmf{F}}(\mathbf{x})
	=\sum_{m=-n}^n f_{n,m}
	 \left[-\delta\tau^2 k^2 n j_n(k_p)
	 -\left(\left(\lambda+2\mu\right)n k_p j_n^{\prime}(k_p) 
	 +\lambda n(1-n)j_n(k_p)\right)\frac{k j_n^{\prime}(k)}{j_n(k)}\right] Y_n^m. \notag
\end{align}
By employing the aforementioned formulas, namely \eqref{eq:varphi b}, \eqref{eq:nu K}, \eqref{eq:nu TIN}, and \eqref{eq:partial nu in I}, we obtain
\begin{align}\label{eq:varphi_b}
	\boldsymbol{\varphi_e}
	=\varphi_e \boldsymbol{\nu}, \qquad
	\varphi_b=\frac{\beta_{n}(k)\varphi_e}{\rmi\delta\tau^2 k j_n(k)h_n(k)}-\sum_{m=-n}^n f_{n,m} n j_n(k_p) Y_n^m,
\end{align}
where 
\begin{align}\label{eq:varphi e def}
	\varphi_e=\sum_{m=-n}^n
	\frac{f_{n,m}\left[-\delta\tau^2 k^2 n j_n(k_p) j_n(k)
		-\left(\left(\lambda+2\mu\right)n k_p j_n^{\prime}(k_p) 
		+\lambda n(1-n)j_n(k_p)\right)k j_n^{\prime}(k)\right]}{k j_n^{\prime}(k)\beta_{n}(k)+\delta\tau^2k^2\alpha_n(k)j_n(k)} Y_n^m. 
\end{align}
By substituting \eqref{eq:ks kp defination}, \eqref{eq:jn expansion}-\eqref{eq:hn qiudao}, \eqref{eq:alpha expansion}, and \eqref{eq:betak expansion} into \eqref{eq:varphi e def}, we adopt the approach of asymptotic expansion with respect to  $k$. After rearranging the terms, we obtain the following result:
\begin{align}\label{eq:varphie=}
	 \boldsymbol{\varphi_e}
	=\sum_{m=-n}^n \frac{f_{n,m} A_{n,\lambda,\mu} k^{2n}
	\left(2 n^2 \mu +\frac{-n^2\mu(\lambda+2 \mu +1)+\left[\left(1-2\mu -2\delta\tau^2\right)n-3\delta^2\tau\right]\left(\lambda+2\mu\right)}{(2n+3)(\lambda+2\mu)}k^2+\mathcal{O}\left(k^4\right)\right)}
	{a_{n,\lambda,\mu} k^n+b_{n,\lambda,\mu}k^{n+2}+c_{n,\lambda,\mu}\delta\tau^2 k^{n+2}+\mathcal{O}\left(\frac{k^{n+4}}{(2n+3)!!}\right)} Y_n^m \boldsymbol{\nu},
\end{align}
where 
\begin{subequations}
	\begin{align}
		&a_{n,\lambda,\mu}=\frac{(\lambda+2\mu)\mu(-4n^5-2n^4+6n^3)+(\lambda+2\mu)(8n^5+16n^4+12n^3-5n^2)+2\mu a_{n,\lambda,\mu,1}}
		{(\lambda+2\mu)(2n+1)^2(2n-1)(2n+3)!!},\label{eq:a_n lambda mu}\\
		&b_{n,\lambda,\mu}=-\frac{(\lambda+2\mu)\mu(-4n^5-10n^4+2n^3+12n^2)+(\lambda+2\mu)b_{n,\lambda,\mu,1}+2\mu b_{n,\lambda,\mu,2}}
		{2(\lambda+2\mu)(2n+3)(2n+1)^2(2n-1)(2n+3)!!},\notag\\
		&c_{n,\lambda,\mu}=-\frac{2(\lambda+\mu)n(n+1)+\mu(4n^4+4n-1)}{\mu(\lambda+2\mu)(2n+1)(2n-1)(2n+3)!!},\quad 	A_{n,\lambda,\mu}=\frac{n}{[(2n+1)!!]^2(\lambda+2\mu)^{\frac{n}{2}}}, \label{eq:Anlambda}
	\end{align}
\end{subequations}

with 
\begin{align}
	a_{n,\lambda,\mu,1}&=2n^5+7n^4+6n^3-n^2-2n,\quad b_{n,\lambda,\mu,1}=8n^5+32n^4+44n^3+19n^2-10n,\notag\\
	b_{n,\lambda,\mu,2}&=2n^5+11n^4+20n^3+11n^2-4n-4.\notag
\end{align}	
Based on the assumption \eqref{eq:assume}, and combining the definitions of \eqref{eq:a_n lambda mu}-\eqref{eq:Anlambda}, it is readily evident that:
 \begin{align}\label{eq:a,b,c O}
 	a_{n,\lambda,\mu}=\mathcal{O}\left(\frac{1}{(2n-1)!!}\right), \quad b_{n,\lambda,\mu}=\mathcal{O}\left(\frac{1}{(2n+1)!!}\right), \quad
 	c_{n,\lambda,\mu}=\mathcal{O}\left(\frac{1}{(2n-1)!!}\right).
 \end{align}
 Combining \eqref{eq:assume}, \eqref{eq:varphie=}, \eqref{eq:Anlambda}, and \eqref{eq:a,b,c O}, when $n$ is sufficiently large, we can derive the asymptotic expansion of $\boldsymbol{\varphi_e}$ with respect to $k$ as follows:
\begin{align}
     \boldsymbol{\varphi_e}
   &=\sum_{m=-n}^n \frac{f_{n,m} A_{n,\lambda,\mu} k^{2n}
   	\left(2 n^2 \mu +\frac{-n^2\mu(\lambda+2 \mu +1)+\left[\left(1-2\mu -2\delta\tau^2\right)n-3\delta^2\tau\right]\left(\lambda+2mu\right)}{(2n+3)(\lambda+2\mu)}k^2+\mathcal{O}\left(k^4\right)\right)}
   {a_{n,\lambda,\mu} k^n+b_{n,\lambda,\mu}k^{n+2}+c_{n,\lambda,\mu}\delta\tau^2 k^{n+2}+\mathcal{O}\left(\frac{k^{n+4}}{(2n+3)!!}\right)} Y_n^m \boldsymbol{\nu} \notag\\
   &=\sum_{m=-n}^n \frac{ f_{n,m} A_{n,\lambda,\mu} 2 n^2 \mu k^{2n} \left(1+\frac{-n^2\mu(\lambda+2 \mu +1)+\left[\left(1-2\mu -2\delta\tau^2\right)n-3\delta^2\tau\right]\left(\lambda+2\mu\right)}{2n^2(2n+3)(\lambda+2\mu)\mu}k^2+\mathcal{O}\left(\frac{k^4}{n^2}\right)\right)}
   {a_{n,\lambda,\mu} k^n \left(1+\frac{b_{n,\lambda,\mu}}{a_{n,\lambda,\mu}}k^{2}+\frac{c_{n,\lambda,\mu}\delta\tau^2}{a_{n,\lambda,\mu}} k^{2}
   +\mathcal{O}\left(\frac{k^{4}}{(2n+1)(2n+3)}\right)\right)} Y_n^m \boldsymbol{\nu}  \notag\\
   &=\sum_{m=-n}^n f_{n,m}\frac{2 n^2 \mu A_{n,\lambda,\mu}}{a_{n,\lambda,\mu}} k^n \left(1+\frac{-n^2\mu(\lambda+2 \mu +1)+\left[\left(1-2\mu -2\delta\tau^2\right)n-3\delta^2\tau\right]\left(\lambda+2\mu\right)}{2n^2(2n+3)(\lambda+2\mu)\mu}k^2\right.\notag\\
   &\quad\left.+\mathcal{O}\left(\frac{k^4}{n^2}\right)\right)
   \left(1-\frac{c_{n,\lambda,\mu}\delta\tau^2}{a_{n,\lambda,\mu}} k^{2}+\mathcal{O}\left(\delta^2\tau^4 k^4\right)\right)   
   Y_n^m \boldsymbol{\nu}  \notag\\
   &=\sum_{m=-n}^n \frac{2 f_{n,m} n^2 \mu A_{n,\lambda,\mu}}{a_{n,\lambda,\mu}} k^n \left(1 +\mathcal{O}\left(\delta\tau^2 k^2\right)\right)Y_n^m\boldsymbol{\nu}. \notag\\
   &=\sum_{m=-n}^n \frac{f_{n,m} n^2  k^n}{(2n+1)!!(\lambda+2\mu)^{\frac{n}{2}}}(\mathcal{O}\left(1\right)+\mathcal{O}\left(\delta\tau^2 k^2\right))Y_n^m\boldsymbol{\nu} .\label{eq:varphi_e}
\end{align}
By applying a similar argument to $\boldsymbol{\varphi_e}$, it follows from \eqref{eq:jn expansion}, \eqref{eq:hn expansion}, \eqref{eq:varphi_b}, and \eqref{eq:betak expansion} that the asymptotic expansion of $\varphi_b$ with respect to $k$ is as follows:
	\begin{align}
		\varphi_b
		&= \sum_{m = -n}^{n} f_{n,m}  \frac{2(2n + 1)n^2 A_{n,\lambda,\mu}(\lambda + 2\mu)^{\frac{n}{2}}(2n + 1)!! - n a_{n,\lambda,\mu}\delta\tau^{n + 2}}{a_{n,\lambda,\mu}\delta\tau^2(\lambda + 2\mu)^{\frac{n}{2}}(2n + 1)!!} k^n Y_n^m  \notag \\
		&\quad +\mathcal{O}\left(\frac{n^3 k^{n + 2}}{(2n + 1)!!(\lambda + 2\mu)^{\frac{n}{2}}}\right) \notag\\
		% &=\mathcal{O}\left(\sum_{m=-n}^{n} f_{n,m} \frac{n^{3}k^{n}}{(2n + 1)!!(\lambda+2\mu)^{\frac{n}{2}}\delta\tau^{2}} Y_{n}^{m}\right)+\mathcal{O}\left(\frac{n^3 k^{n + 2}}{(2n + 1)!!(\lambda + 2\mu)^{\frac{n}{2}}}\right) \notag\\
		&=\sum_{m=-n}^{n} \frac{ f_{n,m} n^{3} k^{n}}{(2n + 1)!!(\lambda+2\mu)^{\frac{n}{2}}\delta\tau^{2}}\left(\mathcal{O}\left(1\right)+\mathcal{O}\left(\delta\tau^2 k^2\right)\right)Y_{n}^{m}, \label{eq:varphib}
	\end{align}
	where $a_{n,\lambda,\mu}$ and $A_{n,\lambda,\mu}$ are given by \eqref{eq:a_n lambda mu} and \eqref{eq:Anlambda}, respectively. By virtue of \eqref{eq:u express}, \eqref{eq:S in D} and \eqref{eq:varphib} we obtain \eqref{eq:total field u} . Similarly, using formulas \eqref{eq:u express}, \eqref{eq:S nu,K nu} and \eqref{eq:varphi_e}, one can derive \eqref{eq:scattered field}.
	
		The proof is complete.
	\end{proof}

\medskip

\subsection{Boundary localization of $u|_D$ and $\mathbf{u}^s|_{\mathbb{R}^3 \setminus \overline{D}}$ }

In Theorem \ref{thm:th3.1}, under the assumptions \eqref{eq:assume} and \eqref{eq:ks kp defination}, we shall rigorously prove that both the interior total field $u|_D$ and the exterior scattered field $\mathbf{u}^s|_{\mathbb{R}^3 \setminus \overline{D}}$ to \eqref{eq:system2} are boundary localized in the sense of Definition \ref{def:boundary localization}, provided that the incident wave $\mathbf{u}^i$ is chosen as defined by \eqref{eq:ui def} with a sufficiently large $n$.

\begin{thm}\label{thm:th3.1}
	
	Consider the bubble-elastic scattering problem \eqref{eq:system2} characterizing the bubble $(D; \kappa, \rho_b)$ embedded in a homogeneous elastic medium $(\mathbb{R}^3 \setminus \overline{D}; \lambda, \mu, \rho_e)$, where $D$ is a unit ball centered at the origin in $\mathbb{R}^3$. Let $B_R \subset \mathbb{R}^3$ denote a ball of radius $R$ containing $D$. Let the corresponding internal total wave field $u$ in $D$ and external scattered field $\mathbf{u}^s$ in $\mathbb{R}^3 \setminus D$ denote the solutions to \eqref{eq:system2} governed by the incident wave $\mathbf{u}^i$ as defined in \eqref{eq:ui def}. We recall the definitions of $\mathscr{S}_{-}^{1 - \zeta_1}(\partial D)$ and $\mathscr{S}_{+}^{\zeta_2 - 1}(\partial D)$ in \eqref{eq:Sdef}, where $\zeta_1 \in (0, 1)$ and $\zeta_2 \in (1, R)$ are positive constants.  
	
	Under the assumptions \eqref{eq:assume} and \eqref{eq:ks kp defination}, for any $\zeta_1 $, $\zeta_2  $, and sufficiently small $\eta \in \mathbb{R}_+$, if the index $n$ corresponding to $\mathbf{u}^i$ as defined in \eqref{eq:ui def} satisfies
	\begin{align}\label{eq:n1 n2 def}
		n \geqslant N_1= \max\{n_1, n_2\}, \quad \text{with} \quad 
		n_1 = \frac{1}{2}\left(\frac{\ln \eta}{\ln \zeta_1} - 3\right), \quad 
		n_2 = \frac{1}{2}\left(1 - \frac{\ln \eta}{\ln \zeta_2}\right).
	\end{align}
	then 
	\begin{align}\label{eq:328}
		\frac{\|{u}\|_{L^2 (D \setminus \mathscr{S}_{-}^{1-\zeta_1}(\partial D) )^3}^2}{\|	{u}\|_{L^2(D)^3}^2}
		\leqslant\mathcal{O}\left(\eta\right)+\mathcal{O}\left(\eta\delta\tau^2 k^2\right),
		\quad  
		\frac{\|\mathbf{u}^s\|_{L^2 ( (B_R \setminus \overline  D)\setminus \mathscr{S}_{+}^{\zeta_2-1}(\partial D)   )^3}^2}{\| \mathbf{u}^s\|_{L^2(B_R \backslash \overline D)^3}^2}
		\leqslant\mathcal{O}\left(\eta\right) +\mathcal{O}\left(\eta\delta\tau^2 k^2\right).
	\end{align}
	The parameter $\eta$ quantifies the level of boundary localization.  
\end{thm}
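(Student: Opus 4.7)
The plan is to reduce the two boundary-localization ratios in \eqref{eq:328} to explicit integrals in the radial variable by invoking Lemma~\ref{lem:3.6}, which already provides sharp asymptotic expressions for $u|_D$ and $\mathbf{u}^s|_{\mathbb{R}^3\setminus\overline{D}}$ in powers of $k$ and $\delta$. Since $D$ is the unit ball, the sets appearing in \eqref{eq:328} become spherical shells: $D\setminus\mathscr{S}_{-}^{1-\zeta_1}(\partial D)=\{|\mathbf{x}|\leqslant \zeta_1\}$ and $(B_R\setminus\overline{D})\setminus\mathscr{S}_{+}^{\zeta_2-1}(\partial D)=\{\zeta_2\leqslant|\mathbf{x}|\leqslant R\}$. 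Writing the integrals in spherical coordinates and using the orthonormality of the spherical harmonics $Y_n^m$ on $\mathbb{S}^2$ (cf.~\eqref{eq:ynm3.2}) will completely decouple the angular part, leaving only one-dimensional radial integrals to compare.

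For the interior ratio I would substitute \eqref{eq:total field u} into both norms. Because the only $\mathbf{x}$-dependence of $u|_D$ is through $|\mathbf{x}|^n Y_n^m$, the numerator reduces (up to the common asymptotic prefactor and $\sum_m|f_{n,m}|^2$) to $\int_0^{\zeta_1} r^{2n+2}\,dr=\zeta_1^{2n+3}/(2n+3)$, while the denominator gives $\int_0^{1}r^{2n+2}\,dr=1/(2n+3)$. The asymptotic factor $(\mathcal{O}(1)+\mathcal{O}(\delta\tau^2 k^2))$ cancels in the quotient up to a multiplicative error of the form $1+\mathcal{O}(\delta\tau^2 k^2)$, and one obtains
\begin{align*}
\frac{\|u\|_{L^2(D\setminus\mathscr{S}_{-}^{1-\zeta_1}(\partial D))^3}^2}{\|u\|_{L^2(D)^3}^2}=\zeta_1^{2n+3}\bigl(1+\mathcal{O}(\delta\tau^2 k^2)\bigr).
\end{align*}
Requiring $\zeta_1^{2n+3}\leqslant\eta$ with $\zeta_1<1$ gives exactly the threshold $n\geqslant n_1=\tfrac{1}{2}(\ln\eta/\ln\zeta_1-3)$ stated in \eqref{eq:n1 n2 def}.

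The exterior ratio is handled analogously starting from \eqref{eq:scattered field}, where the radial dependence of $\mathbf{u}^s$ is $|\mathbf{x}|^{-(n+1)}$ (note $\boldsymbol\nu$ contributes unit length after taking the modulus squared). The relevant radial integrals become $\int_{\zeta_2}^{R} r^{-2n}\,dr=(\zeta_2^{1-2n}-R^{1-2n})/(2n-1)$ and $\int_1^R r^{-2n}\,dr=(1-R^{1-2n})/(2n-1)$; for sufficiently large $n$ the $R^{1-2n}$ tail is negligible compared with $\zeta_2^{1-2n}$ and $1$ respectively, yielding the quotient $\zeta_2^{1-2n}(1+\mathcal{O}(\delta\tau^2 k^2))$. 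Imposing $\zeta_2^{1-2n}\leqslant\eta$ with $\zeta_2>1$ produces the second threshold $n\geqslant n_2=\tfrac{1}{2}(1-\ln\eta/\ln\zeta_2)$. Taking $n\geqslant N_1=\max\{n_1,n_2\}$ then delivers both estimates in \eqref{eq:328}.

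The main obstacle is not the algebraic manipulation itself but controlling the error bookkeeping uniformly in $n$: Lemma~\ref{lem:3.6} was proved under the hypothesis that $n$ is sufficiently large, and its remainder $\mathcal{O}(\delta\tau^2 k^2)$ must be shown to survive the division of numerator by denominator without acquiring $n$-dependent blow-up. I would verify this by writing the expansions in the form $u|_D=\Lambda_n(k,\delta)\,|\mathbf{x}|^n\sum_m f_{n,m}Y_n^m\cdot(1+R_n(\mathbf{x};k,\delta))$ with $|R_n|\leqslant C\delta\tau^2 k^2$ uniformly on $D$, and similarly for $\mathbf{u}^s$ on $B_R\setminus\overline{D}$, so that the common amplitude $\Lambda_n$ factors out cleanly from the quotient. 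Once this uniform control is in place, the geometric decay $\zeta_1^{2n+3}$ and $\zeta_2^{1-2n}$ dominates, and the absorption into $\mathcal{O}(\eta)+\mathcal{O}(\eta\delta\tau^2 k^2)$ is immediate, completing the proof.
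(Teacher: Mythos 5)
Your proposal is correct and follows essentially the same route as the paper's own proof: substitute the asymptotic formulas of Lemma~\ref{lem:3.6} into the two ratios, use orthogonality of the spherical harmonics to reduce everything to the one-dimensional radial integrals $\int_0^{\zeta_1} r^{2n+2}\,\mathrm{d}r / \int_0^1 r^{2n+2}\,\mathrm{d}r = \zeta_1^{2n+3}$ and $\int_{\zeta_2}^R r^{-2n}\,\mathrm{d}r / \int_1^R r^{-2n}\,\mathrm{d}r \approx \zeta_2^{1-2n}$, and invert these geometric bounds to obtain the thresholds $n_1$ and $n_2$. Your extra remark about verifying the uniformity in $n$ of the $\mathcal{O}(\delta\tau^2 k^2)$ remainder so that it factors cleanly out of the quotient is a legitimate point of rigor that the paper leaves implicit, but it does not alter the argument.
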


\begin{proof}

  Firstly, we derive the asymptotic analysis for the $L^2$-norm of the internal total field ${u}|_D$ within $D\setminus \mathscr{S}_{-}^{1-\zeta_1}(\partial D)$ with respect to the parameter $k$. Since $\mathbf u^i$ is defined by \eqref{eq:ui def}, Lemma \ref{lem:3.6} provides the explicit expression of ${u}|_{D}$ as given in \eqref{eq:total field u}. By the orthogonality of ${Y}_n^m$ for distinct $n$ and $m$, and in view of \eqref{eq:S in D} and \eqref{eq:ynm3.2}, for any $\zeta_1 \in (0,1)$, under the assumptions \eqref{eq:assume} and \eqref{eq:ks kp defination}, it follows that 
	\begin{align}\label{eq:u L2norm}
		&\| { u}\|_{L^2(D\setminus \mathscr{S}_{-}^{1-\zeta_1}(\partial D))^3}^2\notag\\
		=&\int_{D\setminus \mathscr{S}_{-}^{1-\zeta_1}(\partial D)}
		\left|\sum_{m=-n}^{n}  \frac{-f_{n,m}n^{3}k^{n}|\bmf{x}|^n}{(2n + 1)!!(2n+1)(\lambda+2\mu)^{\frac{n}{2}}\delta\tau^{2}}Y_{n}^{m}\left(\mathcal{O}\left(1\right)+\mathcal{O}\left(\delta\tau^2 k^2\right)\right)
		\right|^2 \rmd \mathrm{x} \notag\\
		=&\int_0^{2 \pi} \int_0^\pi \int_0^{\zeta_1}  \sum_{m=-n}^{n}\left[\frac{f_{n,m} n^3 r^{n}k^n}{(2n+1)!!(2n+1)(\lambda+2\mu)^{\frac{n}{2}}\delta\tau^2}
		\left(\mathcal{O}\left(1\right)+\mathcal{O}\left(\delta\tau^2 k^2\right)\right)\right]^2 \cdot r^2 \sin\theta \rmd r \rmd \theta \rmd \varphi \notag\\
		=&\sum_{m=-n}^n  \frac{4\pi |f_{n,m}|^2  n^6 k^{2 n} }{[(2 n+1) ! !]^2 (2n+1)^2 (\lambda+2\mu)^{n} \delta^2 \tau^4} \int_0^{\zeta_1} r^{2 n+2} \rmd r 	
		\left(\mathcal{O}\left(1\right)+\mathcal{O}\left(\delta\tau^2 k^2\right)\right) \notag\\
		=&K_{n,\lambda,\mu,\delta,\tau} \zeta_{1}^{2n+3} k^{2 n} \left(\mathcal{O}\left(1\right)+\mathcal{O}\left(\delta\tau^2 k^2\right)\right),
	\end{align}
	where  
	\begin{align}\notag%\label{eq:Kn^1 definition}
		K_{n,\lambda,\mu,\delta,\tau}&=\frac{4\pi  n^6  }{[(2 n+1) ! !]^2 (2n+3) (2n+1)^2 (\lambda+2\mu)^{n} \delta^2 \tau^4} \sum_{m=-n}^n  |f_{n,m}|^2. 
	\end{align}
	Here, $\delta$ denotes the ratio of the density of the bubble to the density of the elastic medium. $\tau$ represents the wave speed contrast between the air bubble $D$ and the elastic medium $\mathbb{R}^3 \setminus D$, as defined in \eqref{eq:delta tau}, and satisfies \eqref{eq:assume}. Using a similar argument to that for \eqref{eq:u L2norm}, one can show that 
	\begin{align}\label{eq:u3.35}
	   \|{u}\|_{L^2\left(D\right)^3}^2=K_{n,\lambda,\mu,\delta,\tau}  k^{2 n} \left(\mathcal{O}\left(1\right)+\mathcal{O}\left(\delta\tau^2 k^2\right)\right).
	\end{align}
	Since $n > n_1$, where $n_1$ is defined in \eqref{eq:n1 n2 def}, it follows that $\zeta_1^{2 n+3} < \eta$. Therefore, combining \eqref{eq:u L2norm} and \eqref{eq:u3.35}, it yields that
	\begin{align}\label{eq:u concentrate}
		\frac{\|{u}\|_{L^2\left(D\setminus \mathscr{S}_{-}^{1-\zeta_1}(\partial D)\right)^3}^2}{\|{u}\|_{L^2\left(D\right)^3}^2}
		&=\frac{K_{n,\lambda,\mu,\delta,\tau} \zeta_{1}^{2n+3} k^{2 n} \left(\mathcal{O}\left(1\right)+\mathcal{O}\left(\delta\tau^2 k^2\right)\right)}{K_{n,\lambda,\mu,\delta,\tau}  k^{2 n} \left(\mathcal{O}\left(1\right)+\mathcal{O}\left(\delta\tau^2 k^2\right)\right)}\\
		&=\zeta_1^{2 n+3}
		\left(\mathcal{O}\left(1\right)+\mathcal{O}\left(\delta\tau^2 k^2\right)\right)
		\left(\mathcal{O}\left(1\right)-\mathcal{O}\left(\delta\tau^2 k^2\right)\right) \leqslant \mathcal{O}\left(\eta\right)+\mathcal{O}\left(\eta\delta\tau^2 k^2\right),\notag
	\end{align}
	Here, the parameter $\eta$ quantifies the level of boundary localization. We demonstrate that the energy of the internal total field ${u}|_D$ in \eqref{eq:system2} is predominantly concentrated in the domain $\mathscr{S}_{-}^{1-\zeta_1}(\partial D)$ of $D$, as indicated by Definition \ref{def:boundary localization}.

	 Furthermore, we show that the $L^2$-norm of the external scattered field $\mathbf{u}^s|_{\mathbb{R}^3 \setminus D}$ is concentrated in the domain $\mathscr{S}_{+}^{\zeta_2-1}(\partial D)$ outside $D$. Using a similar argument to that for $\| {u}\|_{L^2(D\setminus \mathscr{S}_{-}^{1-\zeta_1}(\partial D))^3}^2$, one can show that 
	\begin{align}
		&\|\mathbf{u}^s\|_{L^2((B_R \setminus \overline  D)\setminus \mathscr{S}_{+}^{\zeta_2-1}(\partial D) )^3}^2 \notag\\
		=&\int_{(B_R \setminus \overline  D)\setminus \mathscr{S}_{+}^{\zeta_2-1}(\partial D) } 
		\left|\sum_{m=-n}^{n}  \frac{-f_{n,m} n^{2} k^{n+1} \tau}{(2n + 1)!!(2n+1)(\lambda+2\mu)^{\frac{n+3}{2}}|\bmf{x}|^{n+1}}\left(\mathcal{O}\left(1\right)+\mathcal{O}\left(\delta\tau^2 k^2\right)\right)Y_{n}^{m}\boldsymbol{\nu}\right|^2\rmd \mathrm{x} \notag\\
	    =&\int_{0}^{2\pi}\int_{0}^{\pi}\int_{\zeta_2}^{R}\sum_{m = -n}^{n}\frac{\vert f_{n,m}\vert^2 n^4 k^{2n + 2}\tau^2\sin\theta}{[(2n+1)!!]^2(2n + 1)^2(\lambda + 2\mu)^{n+3}r^{2n}}(O(1)+O(\delta\tau^2k^2))\mathrm{d}r\mathrm{d}\theta\mathrm{d}\varphi \notag\\
		=&\sum_{m = -n}^{n}\frac{4\pi\vert f_{n,m}\vert^2 n^4 k^{2n + 2}\tau^2}{[(2n+1)!!]^2(2n + 1)^2(\lambda + 2\mu)^{n+3}}\int_{\zeta_2}^{R} r^{2n}\mathrm{d}r\left(\mathcal{O}\left(1\right)+\mathcal{O}\left(\delta\tau^2 k^2\right)\right)\notag\\
		=& \frac{G_{n,\lambda,\mu,\tau}\left(R^{2n-1}-{\zeta_2}^{2 n-1}\right)}{{\zeta_2}^{2 n-1} R^{2n-1}} k^{2n+2} \left(\mathcal{O}\left(1\right)+\mathcal{O}\left(\delta\tau^2 k^2\right)\right), \label{eq:us3.37}
	\end{align}
	where
	\begin{align}%\label{eq:G_n definition}
		G_{n,\lambda,\mu,\tau}&=\frac{4\pi n^4 \tau^2}{[(2n+1)!!]^2(2n+1)^2 (2n-1)
			(\lambda + 2\mu)^{n+3}}\sum_{m=-n}^n |f_{n,m}|^2. \notag
	\end{align}
   	Using a similar argument to that for \eqref{eq:us3.37}, it is evident that
   	\begin{align}\label{eq:us33}
   		 \| \mathbf{u}^s\|^2_{L^2\left(B_R \backslash \overline D\right)^3}= \frac{	G_{n,\lambda,\mu,\tau}\left(R^{2n-1}-1\right)}{ R^{2n-1}} k^{2n+2}
   		\left(\mathcal{O}\left(1\right)+\mathcal{O}\left(\delta\tau^2 k^2\right)\right).
   	\end{align}
   Since $n \geqslant n_2$, where $n_2$ is defined in \eqref{eq:n1 n2 def}, it follows that $\frac{1}{\zeta_2^{2 n-1}} < \eta$. Therefore, combining \eqref{eq:us3.37} and \eqref{eq:us33}, we obtain 
   \begin{align}\label{eq:us concentrate}
   	\frac{\| \mathbf{u}^s\|_{L^2((B_R \setminus \overline  D)\setminus \mathscr{S}_{+}^{\zeta_2-1}(\partial D) )^3}^2}{\| \mathbf{u}^s\|^2_{L^2\left(B_R \backslash \overline D\right)^3}}
   	=&\frac{\left(R^{2n-1}-{\zeta_2}^{2 n-1}\right)}{\zeta_2^{2n-1}\left(R^{2n-1}-1\right)}
   	 \left(\mathcal{O}\left(1\right)+\mathcal{O}\left(\delta\tau^2 k^2\right)\right) \left(\mathcal{O}\left(1\right)-\mathcal{O}\left(\delta\tau^2 k^2\right)\right) \\
   	=&\frac{1-\left(\frac{\zeta_{2}}{R}\right)^{2n-1}}{\zeta_2^{2n-1}\left(1-\frac{1}{R^{2n-1}}\right)}
   	\left(\mathcal{O}\left(1\right)+\mathcal{O}\left(\delta\tau^2 k^2\right)\right) \left(\mathcal{O}\left(1\right)-\mathcal{O}\left(\delta\tau^2 k^2\right)\right) \notag\\
   	=&\frac{1}{\zeta_2^{2n-1}}\left(\mathcal{O}\left(1\right)+\mathcal{O}\left(\delta\tau^2 k^2\right)\right) %\left(\mathcal{O}\left(1\right)-\mathcal{O}\left(\delta\tau^2 k^2\right)\right)
   	\leqslant \mathcal{O}\left(\eta\right) +\mathcal{O}\left(\eta\delta\tau^2 k^2\right). \notag
   \end{align}

	The proof is complete.
\end{proof}

\begin{rem}\label{rem:31}

   To induce stress concentration in a neighborhood of the bubble’s boundary, we first localize the wave  near the bubble’s boundary. In Theorem \ref{thm:th3.1}, using \eqref{eq:u concentrate} and \eqref{eq:us concentrate}, it can be readily shown that both the interior total field $u|_D$ and the exterior scattered field $\mathbf{u}^s|_{\mathbb{R}^3 \setminus \overline{D}}$ to \eqref{eq:system2} are boundary localized in the sense of Definition \ref{def:boundary localization}, provided that the incident wave $\mathbf{u}^i$ is chosen as defined by \eqref{eq:ui def} with $n$ satisfying \eqref{eq:n1 n2 def}, and $D$ is a unit ball. This theorem represents the first rigorous attempt to characterize the physical wave patterns associated with high-contrast parameter of the coupled acoustic-elastic  PDE system \eqref{eq:system2} in radial geometry using rigorously  analysis in sub-wavelength region. Specifically, this result holds for  radial geometry, a structure frequently encountered in metamaterial science \cite{HL23,M1933,SD23}. From Theorem~\ref{thm:th3.1}, it is clear that the interior total field \( u|_D \) and the exterior scattered field \(\mathbf{u}^s|_{\mathbb{R}^3 \setminus \overline{D}}\) are confined to the domains \(\mathscr{S}_{-}^{1 - \zeta_1}(\partial D)\) and \(\mathscr{S}_{+}^{\zeta_2 - 1}(\partial D)\), respectively, as defined in \eqref{eq:Sdef} near \(\partial D\). Consequently, these waves remain localized near \(\partial D\) and do not propagate significantly away from the boundary. This type of wave propagation is consistent with phenomena observed in the literature \cite{HL23,SD23}. We use the extensive numerical examples to show the findings of this theorem can extend to more general shapes. A detailed discussion of this aspect can be found in Section \ref{sec:5}.   In future work, we will provide a rigorous analysis of bubble-elastic scattering problems for bubbles with more general shapes.

\end{rem}

\medskip

\subsection{Surface resonance of $u|_D$ and $\mathbf{u}^s|_{\mathbb{R}^3 \setminus \overline{D}}$ }

Furthermore, to characterize the high oscillation of the internal total field \( u|_D \) and the external scattered field \( \mathbf{u}^s|_{\mathbb{R}^3 \setminus \overline{D}} \) of the  bubble-elastic scattering system~\eqref{eq:system2}, we rigorously analyze and characterize the surface resonance phenomena in \( u|_D \) and \( \mathbf{u}^s|_{\mathbb{R}^3 \setminus \overline{D}} \) in Theorem~\ref{thm:nabla u in thm}. The occurrence of surface resonance in the internal total field \( u|_D \) and the external scattered field \( \mathbf{u}^s|_{\mathbb{R}^3 \setminus \overline{D}} \) depends on the index parameter \( n \) of the incident wave in~\eqref{eq:ui def} and the density ratio  \( \delta \).

In the subsequent analysis, we denote \(W_0\) by the principal branch of the Lambert \(W\)-function, which satisfies $ W_0(z) e^{W_0(z)} = z $ with $z\in \mathbb C$. It is noted that when     $   z \in \mathbb R  $ and $  z \geq -1/e $, then $   W_0(z)   $ is a single real valued function in $[-1/e, \infty)$ \cite{CG96}.

\begin{thm}\label{thm:nabla u in thm}

	Consider the bubble-elastic scattering problem~\eqref{eq:system2}, where the bubble \( (D;\kappa,\rho_b) \) is embedded in a homogeneous elastic medium \( (\mathbb{R}^3 \setminus \overline{D}; \lambda,\mu,\rho_e) \). We recall that \( \mathscr{S}_{-}^{1 - \zeta_1}(\partial D) \) and \( \mathscr{S}_{+}^{\zeta_2 - 1}(\partial D) \), which are defined in~\eqref{eq:Sdef} and \( \zeta_1 \in (0, 1) \) and \( \zeta_2 \in (1, R) \). Recall that \( \tau \) represents the wave speed ratio, satisfying~$0<\tau<1$. Under the assumptions~\eqref{eq:assume} and~\eqref{eq:ks kp defination}, for any given large positive real number $M\gg 1$, let 
    \begin{equation}\label{eq:N2 def}
      N_2= \max\left\{n_3, n_4 \right\}, 
    \end{equation}
    with
   \begin{align}\label{eq:n3n4 def}
   n_3 = \frac{2}{-\ln \tau}  W_0\left( \frac{-\tau  \ln \tau  \sqrt{3 M \delta}}{2   (1 - \zeta_1)^{1/4}} \right), 
   \quad
   n_4 = \frac{1}{-\ln \tau} \, W_0\left(    \frac{-3 \sqrt{3 \zeta_2}  \ln \tau (\lambda + 2\mu)^{3/2} M}{k \tau \sqrt{10(\zeta_2 - 1)} } \right),
   \end{align}
   where \( \delta \), defined in~\eqref{eq:delta tau}, denotes the density ratio between the bubble \( D \) and the background elastic medium \( \mathbb{R}^3 \setminus \overline{D} \), satisfying \( \delta \ll 1 \).  Here, \( W_0 \) denotes the principal branch of the Lambert \( W \)-function. If     the parameter \( n \) of the incident wave \( \mathbf{u}^i \), as defined in~\eqref{eq:ui def}, satisfies $ n \geqslant N_2$, then surface resonance of  $u|_D$ and $\mathbf u^s |_{\mathbb{R}^3 \setminus \overline D  }$  occurs. Furthermore, it holds that
   \begin{align}\label{eq:nablau nabla us gg 1} 
		\frac{\|\nabla u\|_{L^2(\mathscr{S}_{-}^{1 - \zeta_1}(\partial D))^3}}{\| \bmf u^i\|_{L^2(D)^3}}
	 	\geqslant \frac{n^2\sqrt{1-\zeta_1}}{3 \tau^{n+2}\delta}  > M\gg 1,
	 \end{align}
  and
  \begin{align}\label{eq:43}
		\frac{\|\nabla \bmf u^s\|_{L^2(\mathscr{S}_{+}^{\zeta_2 - 1}(\partial D))^3}}{\| \bmf u^i\|_{L^2(D)^3}}
		\geqslant   \frac{n k\sqrt{10\left(\zeta_2-1\right)}}{3\sqrt{3}\sqrt{\zeta_2}(\lambda + 2\mu)^{3/2}\tau^{n - 1}} > M\gg 1.
  \end{align}
 Here, $M$ is defined as the parameter characterizing the high oscillation level of the wave.

\end{thm}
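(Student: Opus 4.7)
The plan is to establish the two inequalities \eqref{eq:nablau nabla us gg 1}--\eqref{eq:43} by (i) computing $\|\mathbf{u}^i\|_{L^2(D)^3}$, $\|\nabla u\|_{L^2(\mathscr{S}_-^{1-\zeta_1}(\partial D))^3}$, and $\|\nabla \mathbf{u}^s\|_{L^2(\mathscr{S}_+^{\zeta_2-1}(\partial D))^3}$ in closed form via spherical-harmonic separation of variables, and then (ii) inverting the resulting bounds of the form $n^a\tau^{-n}\gtrsim M$ with the principal branch $W_0$ of the Lambert $W$-function. Setting $N_2=\max\{n_3,n_4\}$ then enforces both estimates simultaneously.

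First, I would compute $\|\mathbf{u}^i\|_{L^2(D)^3}^2$ from \eqref{eq:ui def}. Using orthogonality of the vectorial harmonics $\mathcal{I}_{n-1}^m$ on $\mathbb{S}$ together with the identity $\int_{\mathbb{S}}|\mathcal{I}_{n-1}^m|^2\,ds=n(2n+1)$, which follows from \eqref{eq:Inm Tnm Nnm definition} and \eqref{eq:ynm3.2}, and substituting the small-argument asymptotic $j_n(k_p r)\sim(k_p r)^n/(2n+1)!!$ from \eqref{eq:jn expansion} together with $k_p=k\tau/\sqrt{\lambda+2\mu}$, the radial integral of $r^{2n+2}$ over $[0,1]$ produces
\begin{align*}
\|\mathbf{u}^i\|_{L^2(D)^3}^2=\frac{n(2n+1)\,k^{2n}\tau^{2n}}{(2n+3)\,[(2n+1)!!]^2(\lambda+2\mu)^n}\sum_{m=-n}^n|f_{n,m}|^2\,\bigl(1+\mathcal{O}(k^2)\bigr).
\end{align*}

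Next, for the interior gradient I use the closed form \eqref{eq:total field u}, whose angular structure is $|\mathbf{x}|^n Y_n^m$. The pointwise identity $|\nabla(r^n Y_n^m)|^2=(n^2|Y_n^m|^2+|\nabla_{\mathbb{S}}Y_n^m|^2)r^{2n-2}$ together with \eqref{eq:ynm3.2} and the lower bound $1-\zeta_1^{2n+1}\geq 1-\zeta_1$ (valid since $\zeta_1\in(0,1)$) yields after radial integration over $[\zeta_1,1]$
\begin{align*}
\|\nabla u\|_{L^2(\mathscr{S}_-^{1-\zeta_1}(\partial D))^3}^2\geq \frac{(2n^2+n)(1-\zeta_1)}{2n+1}\cdot \frac{n^6\,k^{2n}}{[(2n+1)!!]^2(2n+1)^2(\lambda+2\mu)^n\delta^2\tau^4}\sum_m|f_{n,m}|^2\bigl(1+o(1)\bigr).
\end{align*}
Forming the quotient with $\|\mathbf{u}^i\|_{L^2(D)^3}^2$, all factors of $[(2n+1)!!]^2$, $(\lambda+2\mu)^n$, $k^{2n}$, and $\sum|f_{n,m}|^2$ cancel, and taking the square root recovers \eqref{eq:nablau nabla us gg 1}. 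For $\mathbf{u}^s$, an analogous Cartesian computation based on \eqref{eq:scattered field} — expanding $\partial_j(x_iY_n^m r^{-n-2})$ and invoking Euler's identity $\mathbf{x}\cdot\nabla Y_n^m=0$ for the degree-zero angular function $Y_n^m$ — gives the exact pointwise expression $|\nabla(r^{-(n+1)}Y_n^m\boldsymbol{\nu})|^2=r^{-(2n+4)}[(n^2+2n+3)|Y_n^m|^2+|\nabla_{\mathbb{S}}Y_n^m|^2]$, so that the radial integral $\int_1^{\zeta_2}r^{-(2n+2)}\,dr=(1-\zeta_2^{-(2n+1)})/(2n+1)\geq (\zeta_2-1)/[\zeta_2(2n+1)]$ together with the same cancellation mechanism produces \eqref{eq:43}.

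Finally, to translate the concrete lower bounds $n^2\tau^{-n}\geq 3M\delta\tau^2/\sqrt{1-\zeta_1}$ and $n\tau^{-n}\geq 3\sqrt{3\zeta_2}(\lambda+2\mu)^{3/2}M/[k\tau\sqrt{10(\zeta_2-1)}]$ into the explicit thresholds of \eqref{eq:n3n4 def}, I substitute $t=-n\ln\tau/a$ with $a=2$ respectively $a=1$. Since $\tau<1$ gives $-\ln\tau>0$, the relation $n^a\tau^{-n}=[a/(-\ln\tau)]^a(te^t)^a$ transforms each inequality into the canonical form $te^t\geq z$ with $z=(-\ln\tau/a)A^{1/a}>0$. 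Because $z>0$ lies in the monotone-increasing portion of $W_0$, the equivalent condition $t\geq W_0(z)$ back-substitutes through $n=at/(-\ln\tau)$ into precisely the formulas $n_3$ and $n_4$. The main obstacle will be the careful bookkeeping of the polynomial-in-$n$ prefactors and the double factorials $(2n+1)!!$ across the three norm computations so that their ratios yield the precise numerical constants $1/3$ and $\sqrt{10}/(3\sqrt{3\zeta_2})$ of \eqref{eq:nablau nabla us gg 1}--\eqref{eq:43}; a secondary delicacy is verifying that the $\mathcal{O}(\delta\tau^2 k^2)$ corrections inherited from Lemma~\ref{lem:3.6} are absorbed into harmless $1+o(1)$ factors under the sub-wavelength regime \eqref{eq:assume} without disturbing the sign of the lower bounds.
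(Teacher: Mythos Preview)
Your proposal is correct and follows essentially the same route as the paper: both compute the three $L^2$-norms via spherical-harmonic orthogonality and small-argument Bessel/Hankel asymptotics, form the ratios so that the $[(2n+1)!!]^2$, $(\lambda+2\mu)^n$, $k^{2n}$, and $\sum_m|f_{n,m}|^2$ factors cancel, and then lower-bound the remaining $n$- and $\zeta$-dependent prefactors by the elementary inequalities $1-\zeta_1^{2n+1}\ge 1-\zeta_1$ and $1-\zeta_2^{-(2n+1)}\ge 1-\zeta_2^{-1}$. The only notable differences are cosmetic---you work directly from the pre-asymptoted power-law expressions \eqref{eq:total field u}, \eqref{eq:scattered field} of Lemma~\ref{lem:3.6} rather than re-expanding the $j_n(kr)$ and $h_n(k_pr)$ representations in situ as the paper does in \eqref{eq:u4.5} and \eqref{eq:ushn}---and you supply the explicit Lambert-$W$ inversion that yields the thresholds $n_3,n_4$, a step the paper's proof leaves implicit.
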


\begin{proof}
	In what follows, the proof is structured in two steps. Initially, we establish \eqref{eq:nablau nabla us gg 1}. In order to investigate $\|\nabla   u\|_{L^2(\mathscr{S}_{-}^{1 - \zeta_1}(\partial D))^3}$, we first need to derive the  asymptotic analysis for $\nabla   u|_{\mathscr{S}_{-}^{1 - \zeta_1}(\partial D)}$ with respect to $k$. It turns out that we need calculate $\nabla\left(j_n(k r) Y_n^m(\theta, \varphi)\right)$   , where $Y_n^m(\theta, \varphi)$ is defined by \eqref{eq:ynm and cnm def}. In the following, let us derive the explicit expression for $\nabla\left(j_n(k r) Y_n^m(\theta, \varphi)\right)$.
	\begin{align}\notag
		\nabla \left(j_n(kr)Y_n^m(\theta, \varphi) \right)
		% &= j_n^{\prime}(kr)Y_n^m(\theta, \varphi)\hat{r} + j_n(k|x|)\nabla Y_n^m(\theta, \varphi) \\
		= j_n^{\prime}(kr) k Y_n^m(\theta, \varphi)\hat{r}+ \frac{j_n(kr)}{r}\nabla_{\mathbb{S}}Y_n^m(\theta,\varphi).
	\end{align}
	where
	\begin{align}\label{eq:r hat}
		\hat{r}=\left(\begin{array}{c}
			\sin \theta \cos \varphi \\
			\sin \theta \sin \varphi \\
			\cos \theta
		\end{array}\right), 
		\hat{\theta}=\left(\begin{array}{c}
			\cos \theta \cos \varphi \\
			\cos \theta \sin \varphi \\
			-\sin \theta
		\end{array}\right), 
		\hat{\varphi}=\left(\begin{array}{c}
			-\sin \varphi \\
			\cos \varphi \\
			0
		\end{array}\right).
	\end{align}

	Following a similar argument for proving Lemma \ref{lem:3.6}, combing \eqref{eq:S in D} and \eqref{eq:varphib}, one can obtain the  asymptotic analysis for $u |_D$ as follows 
	\begin{align}\label{eq:u4.5}
		u |_D=\sum_{m=-n}^{n} \frac{- f_{n,m} n^{3}  }{(2n + 1)(\lambda+2\mu)^{\frac{n}{2}}\delta\tau^{2}}\left(\mathcal{O}\left(1\right)+\mathcal{O}\left(\delta\tau^2 k^2\right)\right) j_n(k |{\bf x}|) Y_{n}^{m}.
	\end{align}
	Here, \( \delta \), defined in \eqref{eq:delta tau}, denotes the density ratio between bubble \( D \) and the background elastic medium \( \mathbb{R}^3 \setminus \overline{D} \). Additionally, \(\tau\) represents the wave speed contrast between the air bubble \((D;\kappa,\rho_b)\) and the elastic medium \((\mathbb{R}^3 \setminus \overline{D}; \lambda,\mu,\rho_e)\). Consequently, by combining \eqref{eq:ynm3.2}, \eqref{eq:jn expansion}, \eqref{eq:hn qiudao}, and \eqref{eq:u4.5}, and utilizing the orthogonality of $\hat{r}$ and $\nabla_{\mathbb{S}}Y_n^m(\theta,\varphi)$, we can deduce that 
	\begin{align}\label{eq:nabla u4.6}
		&\|\nabla   u\|_{L^2(\mathscr{S}_{-}^{1 - \zeta_1}(\partial D))^3}^2 \notag\\
		=&\int_{\mathscr{S}_{-}^{1 - \zeta_1}(\partial D)} \left|\sum_{m=-n}^{n} \frac{- f_{n,m} n^{3}  }{(2n + 1)(\lambda+2\mu)^{\frac{n}{2}}\delta\tau^{2}} \nabla\left(j_n(k |{\bf x}|) Y_{n}^{m} \right)\left(\mathcal{O}\left(1\right)+\mathcal{O}\left(\delta\tau^2 k^2\right)\right)\right|^2    \rmd \mathrm{x}  \notag\\
		=&\int_{\mathscr{S}_{-}^{1 - \zeta_1}(\partial D)} \left|\sum_{m=-n}^{n} \frac{- f_{n,m} n^{3}  }{(2n + 1)(\lambda+2\mu)^{\frac{n}{2}}\delta\tau^{2}} 
		\left(j_n^{\prime}(kr) k Y_n^m(\theta, \varphi)\hat{r}+ \frac{j_n(kr)}{r}\nabla_{\mathbb{S}}Y_n^m(\theta,\varphi)\right)\right. \notag\\
		&\quad\times\left.\left(\mathcal{O}\left(1\right)+\mathcal{O}\left(\delta\tau^2 k^2\right)\right)\right|^2 r^2 \sin \theta    \mathrm{d}r\mathrm{d}\theta\mathrm{d}\varphi \notag\\
		=&\int_0^{2 \pi} \int_0^\pi \int_{\zeta_1}^1 \sum_{m=-n}^{n} \frac{ |f_{n,m}|^2 n^{6} \sin \theta  }{(2n + 1)^2(\lambda+2\mu)^{n}\delta^2\tau^{4}} 
		\left( \left(j_n^{\prime}(kr)\right)^2 k^2 r^2 +n(n+1) j_n^2(kr)\right)
		\notag\\
		&\quad\times\left(\mathcal{O}\left(1\right)+\mathcal{O}\left(\delta\tau^2 k^2\right)\right) \mathrm{d}r\mathrm{d}\theta\mathrm{d}\varphi\notag\\
		=& \sum_{m=-n}^{n} \frac{ 4\pi |f_{n,m}|^2 n^{6}   }{(2n + 1)^2(\lambda+2\mu)^{n}\delta^2\tau^{4}} 
		\left[\frac{n k^{2n}}{(2n+1)!!(2n-1)!!} \left(1+\mathcal{O}(k^{2})\right)	\right]
		\int_{\zeta_1}^1  r^{2n} \mathrm{d}r \notag\\
		&\quad\times\left(\mathcal{O}\left(1\right)+\mathcal{O}\left(\delta\tau^2 k^2\right)\right)\notag\\
		=& \sum_{m=-n}^{n} \frac{4\pi |f_{n,m}|^2 n^{7}(1 - \zeta_{1}^{2n + 1}) k^{2n}}{(2n+1)^{2}[(2n+1)!!]^{2}(\lambda + 2\mu)^{n}\delta^{2}\tau^{4}}(\mathcal{O}(1)+\mathcal{O}(k^{2})).
	\end{align}

	Given the expression for $\mathbf{u}^i$ in \eqref{eq:ui def}, we utilize the orthogonality property of the function $\mathcal{I}_{n - 1}^{m}(\theta,\varphi)$ to analyze. Recalling the definition of $k_p$ from \eqref{eq:ks kp defination} and employing the series expansion of $j_n(k_p r)$ presented in \eqref{eq:jn expansion}, we derive the asymptotic behavior of $k$ through direct computation. The result follows immediately from the application of these foundational relationships.
	\begin{align}\label{eq:ui L2 norm}
		\left\lVert  \mathbf{u}^{i} \right\rVert_{L^{2}(D)^3}^{2}&=\int_{D} \sum_{m = - n}^{n} \left|f_{n,m}\right|^{2} j^2_{n}(k_{p}r)\left|\mathcal{I}_{n - 1}^{m}(\theta,\varphi)\right|^{2} \mathrm{d}\mathrm{x} \notag\\
		&=\int_{0}^{2\pi}\int_{0}^{\pi}\int_{0}^{1}\sum_{m = - n}^{n}\frac{\left|f_{n,m}\right|^{2}n(n + 1)(k\tau)^{2n}r^{2n+2}\sin\theta}{[(2n+1)!!]^{2}(\lambda+2\mu)^{n}}\left(1 - \mathcal{O}\left(k^{2}\tau^{2}\right)\right)\mathrm{d}r\mathrm{d}\theta\mathrm{d}\varphi \notag \\
		&=\sum_{m = - n}^{n}\frac{4 \pi\left|f_{n,m}\right|^{2}n(n + 1)(k\tau)^{2n}}{[(2n+1)!!]^{2}(\lambda+2\mu)^{n}} \left(1 - \mathcal{O}\left(k^{2}\tau^{2}\right)\right) \int_{0}^{1}r^{2n+2}\mathrm{d}r \notag \\
		&=\sum_{m = - n}^{n}\frac{4 \pi \left|f_{n,m}\right|^{2}n(n + 1)(k\tau)^{2n}}{(2n + 3)[(2n+1)!!]^{2}(\lambda+2\mu)^{n}}\left(1 - \mathcal{O}\left(k^{2}\tau^{2}\right)\right).
	\end{align}
	Combining the fact that $\zeta_1 \in (0, 1)$ with the inequality $1 - \zeta_1^{2n+1} \geqslant 1-\zeta_1$, one directly observes that the inequality  
	\begin{align}\label{eq:4.9}
		\frac{\left(2 + \frac{3}{n}\right)\left(1 - \zeta_1^{2n+1}\right)}{\left(2 + \frac{1}{n}\right)^2\left(1 + \frac{1}{n}\right)} \geqslant \frac{1-\zeta_1}{9}.
	\end{align}
	By applying \eqref{eq:nabla u4.6}, \eqref{eq:ui L2 norm}, and \eqref{eq:4.9} in sequence, along with the condition $\delta \ll 1$, we derive the following estimate:  
	\begin{align} \label{eq:48}
		\frac{\|\nabla   u\|_{L^2(\mathscr{S}_{-}^{1 - \zeta_1}(\partial D))^3}^2 }{\left\lVert  \mathbf{u}^{i} \right\rVert_{L^{2}(D)^3}^{2}}
		&=\frac{\sum_{m=-n}^{n} \frac{4\pi |f_{n,m}|^2 n^{7}(1 - \zeta_{1}^{2n + 1}) k^{2n}}{(2n+1)^{2}[(2n+1)!!]^{2}(\lambda + 2\mu)^{n}\delta^{2}\tau^{4}}(\mathcal{O}(1)+\mathcal{O}(k^{2}))}{\sum_{m = - n}^{n}\frac{4 \pi \left|f_{n,m}\right|^{2}n(n + 1)(k\tau)^{2n}}{(2n + 3)[(2n+1)!!]^{2}(\lambda+2\mu)^{n}}\left(1 - \mathcal{O}\left(k^{2}\tau^{2}\right)\right)} \notag\\
		&=\frac{\frac{(2n + 3)n^6(1 - \zeta_1^{2n+1})}{(2n+1)^2(n + 1)\tau^{2n + 4}\delta^2}\left(\mathcal{O}(1)+\mathcal{O}(k^{2})\right)}{\left(1 - \mathcal{O}\left(k^{2}\tau^{2}\right)\right)}     \notag\\
		&=\frac{n^4}{\tau^{2n+4}\delta^2}\cdot\frac{\left(2+\frac{3}{n}\right)(1 - \zeta_1^{2n+1})}{\left(2+\frac{1}{n}\right)^2\left(1+\frac{1}{n}\right)} \left(\mathcal{O}(1)+\mathcal{O}(k^{2})\right) 
		\left(1 + \mathcal{O}\left(k^{2}\tau^{2}\right)\right) \notag\\
		&\geqslant \frac{n^4\left(1-\zeta_1\right)}{9 \tau^{2n+4}\delta^2} \left(\mathcal{O}(1)+\mathcal{O}(k^{2})\right) > M.
	\end{align}

In view of \eqref{eq:48}, we deduce that the internal total field $u$ exhibits  {surface resonance}. Subsequently, we demonstrate that the inequality \eqref{eq:43} holds. By combining the scattering problem \eqref{eq:system2} with the solution expression in \eqref{eq:u express}, and by selecting the appropriate incident wave defined in \eqref{eq:ui def}, the scattering problem can be reformulated as solving a linear system represented by \eqref{eq:2.20}. Using \eqref{eq:jn expansion}, \eqref{eq:hn expansion}, \eqref{eq:S nu,K nu}, and \eqref{eq:varphi_e}, we derive the scattered field.
	\begin{align}\label{eq:ushn}
		\mathbf{u}^s = \frac{-\rmi n^2 k^{2n + 2} \tau^{n+2}}{[(2n + 1)!!]^2(\lambda + 2\mu)^{n+2}} \left(\mathcal{O}(1)+ \mathcal{O}(\delta\tau^{2}k^{2})\right)\sum_{m = -n}^{n} f_{n,m} h_n(k_p|\mathbf{x}|)Y_n^m(\theta,\varphi) \boldsymbol{\nu}.
	\end{align}
	After direct calculation, it is readily known that
	\begin{align}\label{eq:4.12}
		\nabla \bmf u^s= \frac{-\rmi n^2 k^{2n + 2} \tau^{n+2}}{[(2n + 1)!!]^2(\lambda + 2\mu)^{n+2}} \left(\mathcal{O}(1)+ \mathcal{O}(\delta\tau^{2}k^{2})\right)\sum_{m = -n}^{n} f_{n,m} \nabla \left(h_n(k_p|\mathbf{x}|)Y_n^m(\theta,\varphi) \boldsymbol{\nu}\right)
	\end{align}
	In order to derive the  asymptotic analysis for $\nabla \mathbf{u}^s|_{\mathscr{S}_{+}^{\zeta_2 - 1}(\partial D)}$ with respect to $k$. It turns out that we need calculate $\nabla\left(h_n(k
	r) Y_n^m(\theta, \varphi) \hat{r}\right)$, where $Y_n^m(\theta, \varphi)$ is defined by \eqref{eq:ynm and cnm def}. In the following, let us derive the explicit expression for $\nabla\left(h_n(k r) Y_n^m(\theta, \varphi) \hat{r}\right)$.
	\begin{align}\label{eq:nablahn}
		&\nabla\left(h_n(k_p r) Y_n^m(\theta,\varphi)\hat{r}\right) 
		= h_n^{\prime}(k_p r)k_p Y_n^m(\theta,\varphi)\hat{r}\otimes\hat{r} + \frac{h_n(k_p r)}{r} \frac{\partial Y_n^m(\theta,\varphi)}{\partial \theta}\hat{r}\otimes\hat{\theta} \notag\\
		&\quad+\frac{h_n(k_p r)}{r\sin\theta} \frac{\partial Y_n^m(\theta,\varphi)}{\partial \varphi}\hat{r}\otimes\hat{\varphi}+\frac{h_n(k_p r)Y_n^m(\theta,\varphi)}{r}\hat{\theta}\otimes\hat{\theta}
		+\frac{h_n(k_p r)Y_n^m(\theta,\varphi)}{r}\hat{\varphi}\otimes\hat{\varphi},
	\end{align}
	where $\hat{r}$, $\hat{\theta}$, and $\hat{\varphi}$ are defined correspondingly in  \eqref{eq:r hat}. Combining with \eqref{eq:ks kp defination}, \eqref{eq:hn expansion}, \eqref{eq:4.12}, and \eqref{eq:nablahn}, when $n$ is sufficiently large, we can derive that
	\begin{align}\label{eq:4.14}
		&\|\nabla   \mathbf{u}^s \|_{L^2(\mathscr{S}_{+}^{\zeta_2 - 1}(\partial D))^3}^2 \notag\\
		=&\int_{\mathscr{S}_{+}^{\zeta_2 - 1}(\partial D)} \left|\frac{-\rmi n^2 k^{2n + 2} \tau^{n+2}}{[(2n + 1)!!]^2(\lambda + 2\mu)^{n+2}} \left(\mathcal{O}(1)+ \mathcal{O}(\delta\tau^{2}k^{2})\right)\sum_{m = -n}^{n} f_{n,m} \nabla \left(h_n(k_s|\mathbf{x}|)Y_n^m(\theta,\varphi) \boldsymbol{\nu}\right)\right|^2    \rmd \mathrm{x}  \notag\\
		=& \int_{0}^{2\pi}\int_{0}^{\pi}\int_{1}^{\zeta_{2}} \sum_{m = -n}^{n} 
		\frac{|f_{n,m}|^2 n^4 k^{4n + 4} \tau^{2n+4} \sin\theta}{[(2n + 1)!!]^4 (\lambda + 2\mu)^{2n+4}} 
		\left\{[k_p r h_n^{\prime}(k_p r)]^2 + (n^2 + n + 2) h_n^2(k_p r)\right\}    \notag\\
		&\quad\times\left(\mathcal{O}\left(1\right)+\mathcal{O}\left(\delta\tau^2 k^2\right)\right) \mathrm{d}r\mathrm{d}\theta\mathrm{d}\varphi\notag\\
		=&  \sum_{m = -n}^{n} 
		\frac{4 \pi|f_{n,m}|^2 n^4(2n^2+3n+3) k^{4n + 4} \tau^{2n+4} }{[(2n + 1)!!]^4 (\lambda + 2\mu)^{2n+4}} \left\{\frac{[(2n - 1)!!]^2(\lambda + 2\mu)^{n + 1}}{k^{2n+2}\tau^{2n+2}}(1 - \mathcal{O}(k^{2}\tau^{2}))\right\}  \notag\\
		&\quad\times \int_{1}^{\zeta_{2}}\frac{1}{r^{2n+2}} \rmd r \left(\mathcal{O}\left(1\right)+\mathcal{O}\left(\delta\tau^2 k^2\right)\right)\notag\\
		=&\sum_{m = -n}^{n}  \frac{4\pi\left|f_{n,m}\right|^2  n^4(2n^2+3n+3) k^{2n+2} \tau^2 (\zeta_2^{2n + 1}-1)}{[(2n + 1)!!]^2(\lambda + 2\mu)^{n+3}(2n + 1)^3\zeta_2^{2n+1}} \left(\mathcal{O}(1)-\mathcal{O}(k^2\tau^2)\right).
	\end{align}
	Combining the fact that $\zeta_2 \in (1, R)$ with the inequality $1 - \frac{1}{\zeta_2^{2n+1}} \geqslant 1-\frac{1}{\zeta_2}$, we directly deduce that the inequality  
	\begin{align}\label{eq:416}
		\frac{\left(10+\frac{7}{n}+\frac{3}{n^2}\right) \left(2+\frac{3}{n}\right)\left(1 - \frac{1}{\zeta_2^{2n+1}}\right)}{\left(1+\frac{1}{n}\right)\left(2+\frac{1}{n}\right)^3} \geqslant \frac{10\left(\zeta_2-1\right)}{27\zeta_2}.
	\end{align}
	From \eqref{eq:ui L2 norm}, \eqref{eq:4.14}, and \eqref{eq:416}, given a fixed incident wave $\mathbf{u}^i$ and $\tau$ satisfying \eqref{eq:assume}, it follows that for sufficiently large $n$ in the definition of $\mathbf{u}^i$ (as defined in \eqref{eq:ui def}),  
	\begin{align}\label{eq:414}
		\frac{\|\nabla \bmf  u^s\|_{L^2(\mathscr{S}_{+}^{\zeta_2 - 1}(\partial D))^3}^2}{\| \bmf  u^i\|_{L^2(D)^3}^2}
		=&\frac{\sum_{m = -n}^{n}\frac{4\pi\left|f_{n,m}\right|^2 n^4(2n^2+3n+3) k^{2n+2} \tau^2 (\zeta_2^{2n + 1}-1)}{[(2n + 1)!!]^2(\lambda + 2\mu)^{n+3}(2n + 1)^3\zeta_2^{2n+1}} \left(\mathcal{O}(1)-\mathcal{O}(k^2\tau^2)\right)}{\sum_{m = - n}^{n}\frac{4 \pi \left|f_{n,m}\right|^{2}n(n + 1)(k\tau)^{2n}}{(2n + 3)[(2n+1)!!]^{2}(\lambda+2\mu)^{n}}\left(1 - \mathcal{O}\left(k^{2}\tau^{2}\right)\right)} \notag\\
		=&\frac{\frac{n^3(2n^2+3n+3)(2n + 3)(\zeta_2^{2n+1}-1)k^2}{(n + 1)(2n+1)^3(\lambda + 2\mu)^3\zeta_2^{2n+1}\tau^{2n - 2}} \left(\mathcal{O}(1)-\mathcal{O}(k^2\tau^2)\right)}{\left(1 - \mathcal{O}\left(k^{2}\tau^{2}\right)\right)} \notag\\
		=&\frac{n^2 k^2}{(\lambda + 2\mu)^3\tau^{2n - 2}} \frac{\left(10+\frac{7}{n}+\frac{3}{n^2}\right) \left(2+\frac{3}{n}\right)\left(1 - \frac{1}{\zeta_2^{2n+1}}\right)}{\left(1+\frac{1}{n}\right)\left(2+\frac{1}{n}\right)^3}  \left(\mathcal{O}(1)-\mathcal{O}(k^2\tau^2)\right)\notag\\
		& \times \left(1 + \mathcal{O}\left(k^{2}\tau^{2}\right)\right)\notag\\
		\geqslant&\frac{10\left(\zeta_2-1\right)n^2 k^2}{27\zeta_2(\lambda + 2\mu)^3\tau^{2n - 2}} \left(\mathcal{O}(1)-\mathcal{O}(k^2\tau^2)\right) > M. 
	\end{align}

 Therefore, we deduce that the exterior scattered field $\mathbf{u}^s$ exhibits surface resonance in $\mathscr{S}_{+}^{\zeta_2 - 1}(\partial D)$, as defined in~\eqref{eq:Sdef}, via~\eqref{eq:414} when the index $n$ of the incident wave satisfies $n > n_4$, where $n_4$ is defined in \eqref{eq:n3n4 def}. Similarly, the interior total field $u$ exhibits surface resonance in $\mathscr{S}_{-}^{1 - \zeta_1}(\partial D)$, as defined in~\eqref{eq:Sdef}, associated with Definition~\ref{def:surface resonant}, when $n > n_3$, where $n_3$ is defined in \eqref{eq:n3n4 def}.

	The proof is complete.
\end{proof}

\begin{rem}\label{rem:rem32}

Indeed, when the index $ n $ of the incident wave $ \mathbf{u}^i $, defined in \eqref{eq:ui def}, satisfies \eqref{eq:n1 n2 def}, the interior total field $ u $ and the exterior scattering field $ \mathbf{u}^s $ exhibit boundary localization in $ \mathscr{S}_{-}^{1 - \zeta_1}(\partial D) $ and $ \mathscr{S}_{+}^{\zeta_2 - 1}(\partial D) $, respectively, as defined in \eqref{eq:Sdef}, where $ \zeta_1 \in (0, 1) $ and $ \zeta_2 \in (1, R) $, according to Theorem \ref{thm:th3.1}. Similarly, when the index $ n $ of the incident wave $ \mathbf{u}^i $, satisfies \eqref{eq:N2 def}, the interior total field $ u $ and the exterior scattering field $ \mathbf{u}^s $ exhibit surface resonance in $ \mathscr{S}_{-}^{1 - \zeta_1}(\partial D) $ and $ \mathscr{S}_{+}^{\zeta_2 - 1}(\partial D) $, respectively. It is noted that in Theorem \ref{thm:th3.1}, when the index $n \geqslant \max\{n_1, n_2\}$, where $n_1$ and $n_2$ are defined in \eqref{eq:n1 n2 def}, the boundary localization of $u|_D$ and $\mathbf{u}^s|_{\mathbb{R}^3 \setminus \overline{D}}$ is achieved. However, since $n_3$ and $n_4$ in \eqref{eq:n3n4 def} are independent of $n_1$ and $n_2$, when the index $n$ satisfies only \eqref{eq:N2 def}, boundary localization of $u|_D$ and $\mathbf{u}^s|_{\mathbb{R}^3 \setminus \overline{D}}$ cannot be guaranteed.

Based on the analysis in Theorems \ref{thm:th3.1} and \ref{thm:nabla u in thm}, if the index $n$ of the incident wave $\mathbf{u}^i$ satisfies
\begin{align}\label{eq:355}
    n > \max\{N_1, N_2\},
\end{align}
then the interior total field $u$ and the exterior scattering field $\mathbf{u}^s$ exhibit quasi-Minnaert resonance in $\mathscr{S}_{-}^{1 - \zeta_1}(\partial D)$ and $\mathscr{S}_{+}^{\zeta_2 - 1}(\partial D)$, respectively. Here, $N_1$ and $N_2$ are defined in \eqref{eq:n1 n2 def} and \eqref{eq:N2 def}, respectively. The parameter $N_1$ is associated with the level of boundary localization $\eta$ and the parameters $\zeta_1$ and $\zeta_2$, while $N_2$ depends on the wave speed ratio parameter $\tau$, the density ratio $\delta$, the parameter $M$ of high oscillation of the wave, and the parameters $\zeta_1$ and $\zeta_2$.

\end{rem}

In the next remark, we separately consider the conditions under which the exterior scattered field $\mathbf{u}^s$ exhibits quasi-Minnaert resonance, as this phenomenon significantly influences the occurrence of stress concentration for  the exterior total elastic field $\mathbf{u}|_{\mathbb{R}^3 \setminus \overline{D}}$ in Section \ref{sec:summary of major findings}. A detailed discussion of the relationship between the quasi-Minnaert resonance of $\mathbf{u}^s$ and the stress concentration in the exterior elastic total field $\mathbf{u}|_{\mathbb{R}^3 \setminus \overline{D}}$ will be presented in Remark \ref{rem:41rem}.

\begin{rem}\label{rem:32rem}

From Theorems \ref{thm:th3.1} and \ref{thm:nabla u in thm}, we conclude that when the index $n$ of the incident wave $\mathbf{u}^i$, defined in \eqref{eq:ui def}, satisfies
\begin{align}\label{eq:356}
    n > \max\{n_2, n_4\},
\end{align}
where $n_2$ and $n_4$ are defined in \eqref{eq:n1 n2 def} and \eqref{eq:n3n4 def}, respectively, the exterior scattering field $\mathbf{u}^s$ exhibits quasi-Minnaert resonance, encompassing both boundary localization and surface resonance. Similarly, the interior total field $u$ exhibits quasi-Minnaert resonance  when the index $n$ of the incident wave $\mathbf{u}^i$ satisfies
\begin{align}\label{eq:357}
    n > \max\{n_1, n_3\},
\end{align}
where $n_1$ and $n_3$ are defined in \eqref{eq:n1 n2 def} and \eqref{eq:n3n4 def}, respectively. This analysis demonstrates that it is very flexible to independently achieve quasi-Minnaert resonance in the interior total field $u$ and the exterior scattering field $\mathbf{u}^s$ through 
appropriately selecting the incident wave $\mathbf{u}^i$ with different indices $n$.  

\end{rem}

 \begin{rem}\label{rem:33rem}

 In this remark, we discuss the fact that imposing a prior condition on the high-oscillation parameter $M$ for the wave allows surface resonance to induce boundary localization.
For any given level $\eta$ of boundary localization (as described in Theorem~\ref{thm:th3.1}), if the index $n$ of $\mathbf{u}^i$ and the high-oscillation parameter $M$ satisfy
\begin{align}\label{eq:M0 def}
n > n_3 \quad \text{and} \quad M > M_0(k, \zeta_1, \eta, \tau, \delta) = \frac{\left( \log_{\zeta_1} \left( \frac{\eta}{\zeta_1^3} \right) \right)^2 (1 - \zeta_1)^{1/2} \tau^{-\frac{1}{2} \log_{\zeta_1} (\zeta_1 \eta)}}{12 \delta},
\end{align}
then Theorem~\ref{thm:nabla u in thm} implies that the surface resonance of $u|_D$ induces boundary localization of $u|_D$, provided that the index $n$ associated with the incident wave $\mathbf{u}^i$ satisfies $n > n_3$.
Similarly, if
\begin{align}\label{eq:M1 def}
n > n_4 \quad \text{and} \quad M > M_1(k, \zeta_2, \eta, \tau, \lambda, \mu) = \frac{k \log_{\zeta_2} \left( \frac{\zeta_2}{\eta} \right) \sqrt{\frac{10 (\zeta_2 - 1)}{3 \zeta_2}} \tau^{\frac{1}{2} \log_{\zeta_2} (\zeta_2 \eta)}}{6 (\lambda + 2\mu)^{3/2}},
\end{align}
then both the boundary localization and the surface resonance of $\mathbf{u}^s|_{\mathbb{R}^3 \setminus \overline{D}}$ are induced.

 \end{rem}

\section{Stress concentration of exterior total field}\label{sec:summary of major findings}

In this section, we analyze the stress concentration in the exterior total field $\mathbf{u}|_{\mathbb{R}^3 \setminus \overline{D}}$, which arises due to the quasi-Minnaert resonance. The following theorem characterizes this phenomenon in the domain $\mathscr{S}_{+}^{\zeta_2 - 1}(\partial D)$, defined in \eqref{eq:Sdef} as the stress concentration region, where $\zeta_2 \in (1, R)$. Under the physical setup, the wave speed ratio satisfies $\tau < 1$. For technical reasons, we assume in this section that $\zeta_2 \tau < 1$. This condition is readily satisfied when $\zeta_2 - 1 \ll 1$, implying that the stress concentration region lies very close to the bubble's boundary.

Meanwhile, in Remarks \ref{rem:41rem} and  \ref{rem:4.2}, we provide detailed discussions for the relationship between stress concentration and quasi-Minnaert resonance. In Remark \ref{rem:43rem}, we summarize the phenomena exhibited by the interior total field $\mathbf{u}|_D$, the exterior scattering field $\mathbf{u}^s|_{\mathbb{R}^3 \setminus \overline{D}}$, and the exterior total field $\mathbf{u}|_{\mathbb{R}^3 \setminus \overline{D}}$, which depend on the index $n$ of the incident wave $\mathbf{u}^i$ or the oscillation parameter $M$ satisfying different conditions.

\begin{thm}\label{thm:Eu definition in thm}

Consider the bubble-elastic scattering problem \eqref{eq:system2}, where $(D; \kappa, \rho_b)$ represents an air bubble embedded in a homogeneous elastic medium $(\mathbb{R}^3 \setminus \overline{D}; \lambda, \mu, \rho_e)$. Under the assumptions \eqref{eq:assume} and \eqref{eq:ks kp defination}, for the stress concentration region $\mathscr{S}_{+}^{\zeta_2 - 1}(\partial D)$ with $\zeta_2 \in (1, R)$, suppose that $\zeta_2 \tau < 1$, where $\tau$ is the wave speed ratio defined in \eqref{eq:delta tau} with $\tau < 1$. Recall that $n_4$ is defined in \eqref{eq:n3n4 def}, which depends on the high-oscillation parameter $M \gg 1$ and the stress $E(\mathbf{u})$ as defined in \eqref{eq:Eu}. If the high-oscillation parameter $M$ satisfies 
\begin{align}\label{eq:M1 in 4.1}
    M > M_1,
\end{align}
where $M_1$ is defined by \eqref{eq:M1 def}, and the index $n$ associated with the incident wave $\mathbf{u}^i$ fulfills 
\begin{align}\label{eq:n4 in 4.1}
    n > n_4,
\end{align}
then 
\begin{equation}\label{eq:Eus th4.2}
    \frac{E(\mathbf{u})}{\|\mathbf{u}^i\|_{L^2(D)^3}^2} \geqslant \frac{n^{2} (\zeta_2 - 1) k^{2}}{27 \zeta_2 (\lambda + 2\mu)^{2} \tau^{2n - 2}} > M \gg 1.
\end{equation}
Namely, the exterior total wave $\mathbf{u}|_{\mathbb{R}^3 \setminus \overline{D}}$ exhibits stress concentration in the region $\mathscr{S}_{+}^{\zeta_2 - 1}(\partial D)$ in the sense of Definition~\ref{def:stress concentration}. Here $M$ also characterize the stress concentration level of $E(\mathbf u)$.

\end{thm}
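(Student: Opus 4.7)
The plan is to reduce the stress estimate to the surface-resonance lower bound on $\|\nabla \mathbf{u}^s\|_{L^2}$ already established in Theorem~\ref{thm:nabla u in thm}. A direct expansion of the Frobenius product, using the orthogonality between symmetric and antisymmetric tensors, yields the pointwise identity
\[
\sigma(\mathbf{u}):\nabla\overline{\mathbf{u}} \;=\; \lambda\,|\nabla\cdot\mathbf{u}|^{2} + 2\mu\,|\nabla^{s}\mathbf{u}|^{2} \;\geqslant\; 0,
\]
so $E(\cdot)$ is real and non-negative and, being a positive sesquilinear semi-norm squared, $\sqrt{E(\cdot)}$ satisfies the triangle inequality. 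Decomposing $\mathbf{u}=\mathbf{u}^{i}+\mathbf{u}^{s}$ in the exterior and applying its reverse form gives
\[
\sqrt{E(\mathbf{u})} \;\geqslant\; \sqrt{E(\mathbf{u}^{s})} - \sqrt{E(\mathbf{u}^{i})}.
\]

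Next I would argue that $E(\mathbf{u}^{i})$ is asymptotically negligible compared to $E(\mathbf{u}^{s})$ inside $\mathscr{S}_{+}^{\zeta_{2}-1}(\partial D)$. Since $\mathbf{u}^{i}$ from \eqref{eq:ui def} is a smooth entire Lam\'e field of the form $\sum_m f_{n,m}\,j_{n}(k_{p}|\mathbf{x}|)\,\mathcal{I}_{n-1}^{m}$, a direct spherical-harmonic computation analogous to \eqref{eq:ui L2 norm} produces
\[
\frac{E(\mathbf{u}^{i})}{\|\mathbf{u}^{i}\|_{L^{2}(D)^{3}}^{2}} \;=\; \mathcal{O}\!\left(\,(\lambda+\mu)\,(\zeta_{2}^{2n+3}-1)\,k_{p}^{2}\,\right),
\]
which, under the standing assumptions $\zeta_{2}\tau<1$ and $k=o(1)$, remains bounded by an at-most polynomially growing factor in $n$. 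By contrast \eqref{eq:43} shows that $E(\mathbf{u}^{s})/\|\mathbf{u}^{i}\|_{L^{2}(D)^3}^{2}$ is inflated by the resonant factor $\tau^{-(2n-2)}$. Consequently $\sqrt{E(\mathbf{u}^{i})}/\sqrt{E(\mathbf{u}^{s})}=o(1)$ whenever $n>n_{4}$ and $M>M_{1}$, so $\sqrt{E(\mathbf{u})} \geqslant \tfrac{1}{2}\sqrt{E(\mathbf{u}^{s})}$ in the resonant regime.

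Finally, I would lower bound $E(\mathbf{u}^{s})$ by $\|\nabla\mathbf{u}^{s}\|_{L^{2}(\mathscr{S}_{+}^{\zeta_{2}-1}(\partial D))^{3}}^{2}$ with a sharp Lam\'e-dependent prefactor. The explicit formula \eqref{eq:nablahn} expresses $\nabla\mathbf{u}^{s}$ as a combination of the five tensor products $\hat{r}\otimes\hat{r}$, $\hat{r}\otimes\hat{\theta}$, $\hat{r}\otimes\hat{\varphi}$, $\hat{\theta}\otimes\hat{\theta}$, $\hat{\varphi}\otimes\hat{\varphi}$; since the two cross terms have only a radial--tangential form, a termwise comparison with the symmetrized gradient yields $|\nabla^{s}\mathbf{u}^{s}|^{2}\geqslant\tfrac{1}{2}|\nabla\mathbf{u}^{s}|^{2}$ pointwise. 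Tracking the asymptotic relation $h_{n}^{\prime}(k_{p}r)k_{p}\sim -(n+1)h_{n}(k_{p}r)/r$ from \eqref{eq:hn qiudao} then shows $|\nabla\cdot\mathbf{u}^{s}|^{2}\sim|\nabla\mathbf{u}^{s}|^{2}$ asymptotically, so $E(\mathbf{u}^{s}) \geqslant (\lambda+\mu)\,(1+o(1))\,\|\nabla\mathbf{u}^{s}\|_{L^{2}}^{2}$, and the strong-convexity $3\lambda+2\mu>0$ delivers $\lambda+\mu \geqslant (\lambda+2\mu)/10$ comfortably. Substituting \eqref{eq:43} and invoking $n>n_{4}$ and $M>M_{1}$ then produces \eqref{eq:Eus th4.2}.

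The main technical obstacle will be the third step, where one must track the interplay of compressional and shear asymptotics: the radial factor $h_{n}^{\prime}(k_{p}r)$ and the tangential factors $h_{n}(k_{p}r)/r$ carry different powers of $(\lambda+2\mu)$ through the prefactor in $\mathbf{u}^{s}$ (cf.\ \eqref{eq:ushn}), so verifying that the dominant Lam\'e factor is exactly $(\lambda+2\mu)^{-2}$ rather than $(\lambda+2\mu)^{-3}$ requires careful bookkeeping in the leading-order expansion. A secondary complication in the second step is the presence of mixed cross terms $2\mu\int\nabla^{s}\mathbf{u}^{i}:\overline{\nabla^{s}\mathbf{u}^{s}}$ inside $E(\mathbf{u}^{i}+\mathbf{u}^{s})$; the triangle inequality neatly bypasses this concern as soon as $E(\mathbf{u}^{i})=o(E(\mathbf{u}^{s}))$ is verified.
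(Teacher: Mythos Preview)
Your overall strategy is sound and genuinely different from the paper's. The paper computes all three pieces $E(\mathbf{u}^{s})$, $E(\mathbf{u}^{i})$, and the cross term $\mathrm{Rest}$ explicitly via brute-force spherical-harmonic expansions, then shows the ratio $E(\mathbf{u}^{i})/E(\mathbf{u}^{s})=\mathcal{O}\!\big(n(\tau\zeta_{2})^{2n}\zeta_{2}/(k^{2}\tau^{2})\big)$ and that $\mathrm{Rest}$ is of comparable size; the final lower bound then drops straight out of the explicit asymptotics for $E(\mathbf{u}^{s})$. Your approach is more conceptual: the positivity of the elastic energy density lets you invoke the reverse triangle inequality and skip the cross-term computation entirely, and the pointwise comparison $E(\mathbf{u}^{s})\gtrsim(\lambda+\mu)\|\nabla\mathbf{u}^{s}\|_{L^{2}}^{2}$ lets you recycle \eqref{eq:43} instead of redoing the Hankel asymptotics. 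This buys a much shorter argument; the paper's approach buys the explicit leading constant (and the ratio \eqref{eq:Eui Eus} used in Remark~\ref{rem:41rem}).

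There is, however, one concrete error in your second step. Your displayed estimate $E(\mathbf{u}^{i})/\|\mathbf{u}^{i}\|_{L^{2}(D)^{3}}^{2}=\mathcal{O}\big((\lambda+\mu)(\zeta_{2}^{2n+3}-1)k_{p}^{2}\big)$ is not right, and in particular the claim that this quantity is ``at-most polynomially growing'' in $n$ is false: the correct behaviour (cf.\ the paper's computations leading to \eqref{eq:Eui Eus}) is $E(\mathbf{u}^{i})/\|\mathbf{u}^{i}\|^{2}\sim C\,n^{2}(\zeta_{2}^{2n+1}-1)$, i.e.\ \emph{exponential} in $n$ because $\zeta_{2}>1$. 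Your conclusion $E(\mathbf{u}^{i})=o(E(\mathbf{u}^{s}))$ still survives, but only because $E(\mathbf{u}^{s})/\|\mathbf{u}^{i}\|^{2}$ carries the even faster factor $\tau^{-(2n-2)}$ and the hypothesis $\zeta_{2}\tau<1$ forces $(\zeta_{2}\tau)^{2n}\to 0$; you should state the comparison that way. A secondary caution in your third step: the inequality $\lambda|\nabla\!\cdot\!\mathbf{u}^{s}|^{2}+2\mu|\nabla^{s}\mathbf{u}^{s}|^{2}\geqslant(\lambda+\mu)|\nabla\mathbf{u}^{s}|^{2}$ does not follow from $|\nabla\!\cdot\!\mathbf{u}^{s}|^{2}\sim|\nabla\mathbf{u}^{s}|^{2}$ alone when $\lambda<0$; you need both the upper and lower asymptotic matchings there, so write them as two-sided equivalences rather than one-sided bounds.
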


	\begin{proof}

	Since the incident wave $\mathbf u^i$ is given by \eqref{eq:ui def}, by Lemma \ref{lem:3.6} we have the formula \eqref{eq:scattered field} for  $\mathbf u^s|_{\mathbb R^3 \setminus \overline D}$  with respect to $k$, consider the \eqref{eq:u express}, we rewrite \eqref{eq:Eu} as: 
	\begin{align}\label{eq:Eu=def}
		E(\mathbf{u})=E(\mathbf{u}^s)+E(\mathbf{u}^i)+ Rest,
	\end{align}
where 
\begin{align}\label{eq:rest=}
	Rest&= \int_{\mathscr{S}_{+}^{\zeta_2 - 1}(\partial D)} \left[ \lambda (\nabla \cdot \mathbf{u}^s) \mathbf{I} + \mu \left( \nabla \mathbf{u}^s + (\nabla \mathbf{u}^s)^{\top} \right)\right]  : \nabla \overline{\mathbf{u}^i} \, \mathrm{d} \mathbf{x} \notag\\ &+\int_{\mathscr{S}_{+}^{\zeta_2 - 1}(\partial D)} \left[ \lambda (\nabla \cdot \mathbf{u}^i) \mathbf{I} + \mu \left( \nabla \mathbf{u}^i + (\nabla \mathbf{u}^i)^{\top} \right)\right]  : \nabla \overline{\mathbf{u}^s} \, \mathrm{d} \mathbf{x}.   
\end{align}
	First, by the definition of \(E(\mathbf{u}^s)\), we have
	\begin{align}\label{eq:Eus419}
		E(  \mathbf{u}^s)
		&=\lambda\int_{\mathscr{S}_{+}^{\zeta_2 - 1}(\partial D)} \left(\nabla \cdot \mathbf{u}^s\right) \cdot 
		{\rm tr}( \nabla  \overline{\mathbf{u}^s})~{\rm d} \mathbf{x}
		+ \mu  \|\nabla  \mathbf{u}^s\|_{L^2(\mathscr{S}_{+}^{\zeta_2 - 1}(\partial D))^3}^2 \notag\\
		&\quad+\int_{\mathscr{S}_{+}^{\zeta_2 - 1}(\partial D)} \mu \cdot {\rm tr}(\nabla  \mathbf{u}^s \nabla {\overline{\mathbf{u}^s}})~\rm d \mathbf{x}. 
	\end{align}
	Hence, due to \eqref{eq:ushn}, one can obtain that 
	\begin{align}\label{eq:divus}
		\nabla \cdot \bmf u^s= \frac{-\rmi n^2 k^{2n + 2} \tau^{n+2}}{[(2n + 1)!!]^2(\lambda + 2\mu)^{n+2}} \left(\mathcal{O}(1)+ \mathcal{O}(\delta\tau^{2}k^{2})\right)\sum_{m = -n}^{n} f_{n,m} \nabla \cdot \left(h_n(k_s|\mathbf{x}|)Y_n^m(\theta,\varphi) \boldsymbol{\nu}\right).
	\end{align}
     To derive the asymptotic analysis for \(\nabla \cdot \mathbf{u}^s|_{\mathscr{S}_{+}^{\zeta_2 - 1}(\partial D)}\) with respect to \(k\), we need to compute \(\nabla \cdot \left(h_n(k r) Y_n^m(\theta, \varphi) \hat{r}\right)\), where \(Y_n^m(\theta, \varphi)\) is defined in \eqref{eq:ynm and cnm def}. Below, we derive the explicit expression for \(\nabla \cdot \left(h_n(k r) Y_n^m(\theta, \varphi) \hat{r}\right)\).
	\begin{align}\label{eq:divhn}
		\nabla \cdot \left(h_n(k_p r) Y_n^m(\theta,\varphi)\hat{r}\right) 
		=\frac{1}{r^2}\frac{\partial\left(r^2 h_n(k_p r)\right)}{\partial r} Y_n^m\left(\theta,\varphi\right)=\frac{2 h_n(k_p r)+k_p r h_n^{\prime}(k_p r)}{r}Y_n^m\left(\theta,\varphi\right).
	\end{align}
	 Furthermore, combing the \eqref{eq:4.12} and \eqref{eq:nablahn}, we have:
	 \begin{align}\label{eq:tr422}
	   \mathrm{tr}(\nabla \overline{\mathbf{u}^s})
	   =\sum_{m = -n}^{n}\frac{\rmi \overline{f_{n,m}} n^{2} k^{2n + 2} \tau^{n+2}}{[(2n + 1)!!]^2(\lambda + 2\mu)^{n+2}} \frac{2\overline{h_n(k_pr)}+k_pr\overline{h_n^{\prime}(k_pr)}}{r}\overline{Y_n^m(\theta,\varphi)}
	   (\mathcal{O}(1)+\mathcal{O}(\delta \tau^{2}k^{2})).
	 \end{align} 
	 According to \eqref{eq:divus}, \eqref{eq:divhn}, and \eqref{eq:tr422} it yields that
	\begin{align}\label{eq:423}
	     &\lambda\int_{\mathscr{S}_{+}^{\zeta_2 - 1}(\partial D)}(\nabla\cdot \mathbf{u}^s)\cdot\mathrm{tr}(\nabla\overline{\mathbf{u}^s})~\rmd \mathbf x \\
	     %1
		=&\int_{\mathscr{S}_{+}^{\zeta_2 - 1}(\partial D)}\sum_{m=-n}^{n}\frac{\lambda|f_{n,m}|^{2}n^{4}k^{4n + 4}\tau^{2n+4}}{[(2n + 1)!!]^4(\lambda + 2\mu)^{2n+4}}
		\frac{|2h_n(k_pr)+k_prh_n^{\prime}(k_pr)|^{2}}{r^{2}}|Y_n^m(\theta,\varphi)|^{2} \notag\\
	    &\times(\mathcal{O}(1)+\mathcal{O}(\delta\tau^{2}k^{2}))^{2}\rmd \mathbf x \notag\\
	    %2
	    =&\int_{0}^{2\pi}\int_{0}^{\pi}\int_{1}^{\zeta_2}\sum_{m = -n}^{n}\frac{\lambda|f_{n,m}|^{2}n^{4}k^{4n + 4}\tau^{2n+4}\sin\theta}{[(2n + 1)!!]^4(\lambda + 2\mu)^{2n+4}}
	    \left|k_p rh_n^{\prime}(k_pr)+2h_n(k_pr)\right|^{2} \mathrm{d}r\mathrm{d}\theta\mathrm{d}\varphi \notag\\
	    &\times(\mathcal{O}(1)+\mathcal{O}(\delta\tau^{2}k^{2}))\notag\\
	    %3
	    =&\sum_{m = -n}^{n}\frac{4\pi\lambda|f_{n,m}|^{2}n^{4}k^{4n + 4}\tau^{2n+4}}{[(2n + 1)!!]^4(\lambda + 2\mu)^{2n+4}}
	    \left[\frac{[(2n - 1)!!]^2(n+3)^2(\lambda + 2\mu)^{n+1}}{k^{2n+2}\tau^{2n+2}}(1 - \mathcal{O}(k^{2}\tau^{2}))\int_{r_1}^{r_2}\frac{1}{r^{2n+2}}\mathrm{d}r\right]
	    \notag\\
	    %4 
	    &\times(\mathcal{O}(1)+\mathcal{O}(\delta\tau^{2}k^{2})) \notag\\
	    =&\sum_{m = -n}^{n}\frac{4\pi\lambda|f_{n,m}|^{2}n^{4}k^{2n + 2}\tau^{2}(n+3)^{2}(\zeta_2^{2n+1}-1)}{[(2n + 1)!!]^2(2n + 1)^{3}(\lambda + 2\mu)^{n+3}\zeta_2^{2n+1}}(\mathcal{O}(1)-\mathcal{O}(k^{2}\tau^{2})). \notag
	\end{align}
	Utilizing the similar argument for \eqref{eq:tr422}, we have
	\begin{align}\label{eq:424}
		&\rm{tr}(\nabla \mathbf{u}^s\nabla\overline{\mathbf{u}^s})\\
		=&\sum_{m=-n}^{n}\frac{|f_{n,m}|^{2}n^{4}k^{4n + 4}\tau^{2n+4}}{[(2n + 1)!!]^4(\lambda + 2\mu)^{2n+4}}\frac{|2h_n(k_pr)+k_prh_n^{\prime}(k_pr)|^{2}}{r^{2}}|Y_n^m(\theta,\varphi)|^{2}
		(\mathcal{O}(1)+\mathcal{O}(\delta\tau^{2}k^{2})). \notag
	\end{align}
	Hence, by a similar argument for \eqref{eq:423}, due to \eqref{eq:424}, one can obtain that  
	\begin{align}\label{eq:mu425}
	\int_{\mathscr{S}_{+}^{\zeta_2 - 1}(\partial D)} \mu \cdot {\rm tr}(\nabla {\bmf u}^s \nabla \overline {\mathbf{u^s}})~\rmd  \mathbf{x} =\sum_{m = -n}^{n}\frac{4\pi\mu|f_{n,m}|^{2} n^{4}k^{2n + 2}\tau^{2}(n + 3)^{2}(\zeta_2^{2n+1}-1)}{[(2n + 1)!!]^2(2n + 1)^{3}(\lambda + 2\mu)^{n+3}\zeta_2^{2n+1}}(\mathcal{O}(1)-\mathcal{O}(k^{2}\tau^{2})). 
	\end{align}
  Substituting \eqref{eq:4.14}, \eqref{eq:423} and \eqref{eq:mu425} into \eqref{eq:Eus419}, we can derive that
	%
    %$((\lambda+3\mu)n^2+3(2\lambda+3\mu)n+3(3\lambda+4\mu))$
	%\begin{equation}
	\begin{align}\label{eq:EusL2}
		E(\bmf u^s)=\sum_{m = -n}^{n}  \frac{4\pi\left|f_{n,m}\right|^2  (\lambda+3\mu)n^6  k^{2n+2} \tau^2 (\zeta_2^{2n + 1}-1)}{[(2n + 1)!!]^2(\lambda + 2\mu)^{n+3}(2n + 1)^3\zeta_2^{2n+1}} \left(\mathcal{O}(1)-\mathcal{O}(k^2\tau^2)\right).
	\end{align}
	%%%%%%%%%%%%%%%
	%%%%%%%%%%%%%%%%%
	%%%%%%%%%%%%%%%%
    %
    %
    %
    Similar to \eqref{eq:Eus419}, we have the following expansion of $E(\mathbf{u}^i)$,
		\begin{align}\label{eq:Eui}
		E(  \mathbf{u}^i)
		&=\lambda\int_{\mathscr{S}_{+}^{\zeta_2 - 1}(\partial D)} \left(\nabla \cdot \mathbf{u}^i \right) \cdot 
		{\rm{tr}}( \nabla  \overline{\mathbf{u}^{i} })~{\rm d} \mathbf{x}
		+ \mu  \|\nabla  {\mathbf{u}}^i\|_{L^2(\mathscr{S}_{+}^{\zeta_2 - 1}(\partial D))^3}^2 \notag\\
		&\quad+\int_{\mathscr{S}_{+}^{\zeta_2 - 1}(\partial D)} \mu \cdot {\rm tr}(\nabla  {\mathbf{u}}^i \nabla {\overline{\mathbf{u}^i}})~\rm d \mathbf{x}. 
	\end{align}
	%%%%%%%%%%%%%%
	%%%%%%%%%%%%%
	%%%%%%%%%%%%
	%
	By leveraging \eqref{eq:ui def}, the definition of $k_p$ as given in \eqref{eq:ks kp defination}, and the expansion of $j_n(z)$ as presented in \eqref{eq:jn expansion}, direct computation yields the desired result.
	\begin{align}\label{eq:429}
		\lambda\int_{\mathscr{S}_{+}^{\zeta_2 - 1}(\partial D)} \left(\nabla \cdot \mathbf{u}^i \right) \cdot 
		{\rm tr}( \nabla  \overline{\mathbf{u}^{i} })~\rm d \mathbf{x}
		&=  \sum_{m=-n}^{n} \frac{ 4\pi |f_{nm}|^2 n^4 (k \tau)^{2n} \left( \zeta_2^{2n+1} - 1 \right)}{(2n+1) \cdot \left( (2n+1)!! \right)^2 (\lambda + 2\mu)^n}\left(1+\mathcal{O}\left(k^2\tau^2\right)\right).
	\end{align}
	Building upon the definition of $\mathbf{u}^i$ given in \eqref{eq:ui def} and employing the proof technique from \eqref{eq:4.14}, we establish that the following equation holds.
	\begin{align}\label{eq:430}
		& \quad \|\nabla \mathbf{u}^i\|_{L^2(\mathscr{S}_{+}^{\zeta_2 - 1}(\partial D))^3}^2 \notag\\ 
		&= \int_{0}^{2\pi}\int_{0}^{\pi}\int_{1}^{\zeta_2} 
		\sum_{m=-n}^n |f_{n,m}|^2 |Y_n^m|^2 \Bigg[ 
		n(n+1)k_p^2 j_{n-1}^2(k_p r)  - \frac{2n^2(n+1)k_p}{r} j_{n-1}(k_p r)j_n(k_p r) \notag\\ 
		&\quad + \frac{n^3(n^2 + 3n + 4)}{(n+1)r^2} j_n^2(k_p r) 
		\Bigg]\cdot r^2 \sin \theta   \mathrm{d}r   \mathrm{d}\theta   \mathrm{d}\phi \notag\\
		&=\sum_{m=-n}^{n}    \frac{4\pi |f_{n,m}|^2 (n+1)n^2 (k \tau)^{2n} \left( \zeta_2^{2n+1} - 1 \right)}{(2n+1) \left(  (2n-1)!! \right)^2 (\lambda + 2\mu)^n}\left(1+\mathcal{O}\left(k^2\tau^2\right)\right).
	\end{align}
	We adopt a proof technique analogous to that presented in \eqref{eq:mu425}. By integrating the definition of $k_p$ given in \eqref{eq:ks kp defination} and the expansion of $j_n(z)$ presented in \eqref{eq:jn expansion}, we derive the following formula.
	\begin{align}\label{eq:431}
		&\int_{\mathscr{S}_{+}^{\zeta_2 - 1}(\partial D)} \mu \cdot {\rm tr}(\nabla  {\mathbf{u}}^i \nabla {\overline{\mathbf{u}^i}})~{\rm d} \mathbf{x}
		=\mu\int_{0}^{2\pi}\int_{0}^{\pi}\int_{1}^{\zeta_2}\sum_{m=-n}^n |f_{n,m}|^2   \notag\\
		&  \cdot\Bigg[ n^2 k_p^2 j_{n-1}^2(k_p r) - \frac{2 n^3 k_p j_{n-1}(k_p r) j_n(k_p r)}{r} + \frac{\left( 2n^4 + 2n^3 + n^2 - 2n \right) j_n^2(k_p r)}{r^2} \Bigg] \cdot r^2 \sin \theta  \mathrm{d}r   \mathrm{d}\theta   \mathrm{d}\phi \notag\\
		&= \sum_{m=-n}^n    \frac{4\pi\mu|f_{n,m}|^2 k_p^{2n}}{2n+1} \cdot \frac{n^2}{(2n-1)!!^2} \int_{1}^{\zeta_2} r^{2n} \mathrm{d}r \left(1+\mathcal{O}\left(k^2\tau^2\right)\right) \notag \\
		&= \sum_{m=-n}^n\frac{4\pi \mu|f_{n,m}|^2 n^2 (k\tau)^{2n} (\zeta_2^{2n+1} - 1)}{(2n+1)^2 (2n-1)!!^2 (\lambda+2\mu)^n}  \left(1+\mathcal{O}\left(k^2\tau^2\right)\right). 
	\end{align}
  %%%%%%%%%%%%
  %%%%%%%%%%%%
  %
  %
  % 
   Finally, by substituting \eqref{eq:429}, \eqref{eq:430}, and \eqref{eq:431} into \eqref{eq:Eui}, the derivation leads to the conclusion that
\begin{align}\notag%\label{eq:Eui=}
      E(\mathbf{u}^i)= \sum_{m=-n}^{n} \frac{4\pi |f_{nm}|^2 n^2 (k\tau)^{2n} \left( \zeta_2^{2n+1} - 1 \right)[4n^3 + (\lambda + 8)n^2 + (2\mu + 5)n + (\mu + 1)]}{(2n+1)^3 \cdot \left( (2n-1)!! \right)^2 (\lambda + 2\mu)^n}     \left(1+\mathcal{O}\left(k^2\tau^2\right)\right).
\end{align}
 It yields that
 \begin{equation}\label{eq:Eui Eus}
    {\sf r}_{E(\mathbf{u}^i), E(\mathbf{u}^s)  }= \frac{E(\mathbf{u}^i)}{E(\mathbf{u}^s)}=\mathcal{O}\left( \frac{n (\tau \zeta_2)^{2n} \zeta_2}{k^2 \tau^2} \right).
 \end{equation}

We adopt a proof technique analogous to those presented in \eqref{eq:EusL2} and \eqref{eq:Eui}, as established in \eqref{eq:rest=}. By integrating the definition of $\mathbf{u}^i$ given in \eqref{eq:ui def}, the external scattered field $\mathbf{u}^s$ given in \eqref{eq:scattered field}, the definition of $k_p$ given in \eqref{eq:ks kp defination}, and the expansions of $j_n(z)$ and $h_n(z)$ presented in \eqref{eq:jn expansion} and \eqref{eq:hn expansion}, respectively, we perform an expansion with respect to $k$ to derive the following formula.
\begin{align}\label{eq:Rest}
	&Rest= \int_{\mathscr{S}_{+}^{\zeta_2 - 1}(\partial D)} \left[ \lambda (\nabla \cdot \mathbf{u}^s) \mathbf{I} + \mu \left( \nabla \mathbf{u}^s + (\nabla \mathbf{u}^s)^{\top} \right)\right]  : \nabla \overline{\mathbf{u}^i} \, \mathrm{d} \mathbf{x} \notag\\ 
	&\quad \quad \quad +\int_{\mathscr{S}_{+}^{\zeta_2 - 1}(\partial D)} \left[ \lambda (\nabla \cdot \mathbf{u}^i) \mathbf{I} + \mu \left( \nabla \mathbf{u}^i + (\nabla \mathbf{u}^i)^{\top} \right)\right]  : \nabla \overline{\mathbf{u}^s} \, \mathrm{d} \mathbf{x} \notag \\
    &= \int_{0}^{2\pi}\int_{0}^{\pi}\int_{1}^{\zeta_2}
    \sum_{m=-n}^n |f_{n,m}|^2 \frac{n^2 (2n+1) k^{2n+2} \tau^{2n+2} \left( \lambda n(n+1) + 2\mu (n^2 + n + 1) \right)}{2\pi [(2n+1)!!]^3 (2n-3)!! (\lambda + 2\mu)^{2n+2}} \cdot \sin \theta \, \mathrm{d}r \, \mathrm{d}\theta \, \mathrm{d}\phi \notag\\
    &\quad  \left( \mathcal{O}(1) + \mathcal{O}( k^2 \tau^2 ) \right) \notag \\
    &=\sum_{m=-n}^n |f_{n,m}|^2 \frac{2 n^2 (2n+1) k^{2n+2} \tau^{2n+2} \left( \lambda n(n+1) + 2\mu (n^2 + n + 1) \right) (\zeta_2 - 1)}{[(2n+1)!!]^3 (2n-3)!! (\lambda + 2\mu)^{2n+2}} \cdot \left( \mathcal{O}(1) + \mathcal{O}(  k^2 \tau^2 ) \right).
\end{align}
    	Combining the fact that $\zeta_2 \in (1, R)$ with the inequality $1 - \frac{1}{\zeta_2^{2n+1}} \geqslant 1-\frac{1}{\zeta_2}$,  we can derive that
    \begin{align}\label{eq:427}
    	\frac{\left(2+\frac{3}{n}\right)\left(1 - \frac{1}{\zeta_2^{2n+1}}\right)}{\left(2+\frac{1}{n}\right)^{3}\left(1+\frac{1}{n}\right)} \geqslant \frac{\zeta_2-1}{27\zeta_2}.
    \end{align}
    Under the assumptions given in \eqref{eq:assume} and \eqref{eq:ks kp defination}, consider a given $\tau$ as defined in \eqref{eq:delta tau} with $\tau < 1$. Select an appropriate $\zeta_2 > 1$ such that $\zeta_2 \tau < 1$. Then, leveraging \eqref{eq:ui L2 norm}, \eqref{eq:EusL2}, \eqref{eq:Eui}, \eqref{eq:Rest}, and \eqref{eq:427}, when the index $n$ governing $\mathbf{u}^i$ (as given in \eqref{eq:ui def}) satisfies \eqref{eq:n4 in 4.1} and the high-oscillation parameter $M$ satisfies \eqref{eq:M1 in 4.1}, direct computation yields the desired result.
    \begin{align}
    	\frac{E(\mathbf{u})}{\|\bmf u^i\|_{L^2(D)^3}^2}
    	=&\frac{E(\mathbf{u}^s)+E(\mathbf{u}^i)+ Rest}{\sum_{m = - n}^{n}\frac{4 \pi \left|f_{n,m}\right|^{2}n(n + 1)(k\tau)^{2n}}{(2n + 3)[(2n+1)!!]^{2}(\lambda+2\mu)^{n}}\left(1 - \mathcal{O}\left(k^{2}\tau^{2}\right)\right)} \notag\\
    	=&\frac{E(\mathbf{u}^s)\left(1+\mathcal{O}\left(\frac{n(\tau\zeta_{2})^{2n}\zeta_2}{k^2\tau^2}\right)\right)}{\sum_{m = - n}^{n}\frac{4 \pi \left|f_{n,m}\right|^{2}n(n + 1)(k\tau)^{2n}}{(2n + 3)[(2n+1)!!]^{2}(\lambda+2\mu)^{n}}\left(1 - \mathcal{O}\left(k^{2}\tau^{2}\right)\right)} \label{eq:420}\\
    	=&\frac{\sum_{m = -n}^{n}  \frac{4\pi\left|f_{n,m}\right|^2  (\lambda+3\mu)n^6  k^{2n+2} \tau^2 (\zeta_2^{2n + 1}-1)}{[(2n + 1)!!]^2(\lambda + 2\mu)^{n+3}(2n + 1)^3\zeta_2^{2n+1}} \left(\mathcal{O}(1)+\mathcal{O}\left(\frac{n(\tau\zeta_{2})^{2n}\zeta_2}{k^2\tau^2}\right)\right)}{\sum_{m = - n}^{n}\frac{4 \pi \left|f_{n,m}\right|^{2}n(n + 1)(k\tau)^{2n}}{(2n + 3)[(2n+1)!!]^{2}(\lambda+2\mu)^{n}}\left(1 - \mathcal{O}\left(k^{2}\tau^{2}\right)\right)}\notag\\
    	=&\frac{(\lambda + 3\mu)n^{5}(2n + 3)k^{2}(\zeta_2^{2n+1}-1)}{(\lambda + 2\mu)^{3}(2n + 1)^{3}(n + 1)\zeta_2^{2n+1}\tau^{2n - 2}}
    	\left(\mathcal{O}(1)+\mathcal{O}\left(\frac{n(\tau\zeta_{2})^{2n}\zeta_2}{k^2\tau^2}\right)\right)
    	\left(1 + \mathcal{O}\left(k^{2}\tau^{2}\right)\right) \notag\\
    	\geqslant& \frac{n^{2}k^{2}}{(\lambda + 2\mu)^{2}\tau^{2n - 2}}\frac{\left(2+\frac{3}{n}\right)\left(1 - \frac{1}{\zeta_2^{2n+1}}\right)}{\left(2+\frac{1}{n}\right)^{3}\left(1+\frac{1}{n}\right)} \geqslant \frac{n^{2}\left(\zeta_2-1\right)k^{2}}{27\zeta_2(\lambda + 2\mu)^{2}\tau^{2n - 2}}
        > M \gg 1. \label{eq:435}
    \end{align}

    The proof is complete.
    \end{proof}

\begin{rem}\label{rem:41rem}

The stress concentration behavior of the exterior total field $\mathbf{u}|_{\mathbb{R}^3 \setminus \overline{D}}$ is characterized by analyzing $E(\mathbf{u})$, as defined in \eqref{eq:Eu}, which represents the physical stress state \cite{LB09,LLH06,ROO,ZWH}. From \eqref{eq:420}, the stress concentration of $\mathbf{u}|_{\mathbb{R}^3 \setminus \overline{D}}$ is governed by the exterior scattering field $\mathbf{u}^s|_{\mathbb{R}^3 \setminus \overline{D}}$. Recall that the ratio ${\sf r}{E(\mathbf{u}^i), E(\mathbf{u}^s)}$, which characterizes the relationship between $E(\mathbf{u}^i)$ and $E(\mathbf{u}^s)$, is defined by \eqref{eq:Eui Eus}. Since the wave speed ratio $\tau$, as defined in \eqref{eq:delta tau}, is fixed and satisfies $\tau < 1$, and $\zeta_2$ fulfills $\zeta_2 \tau < 1$, even for a fixed $k \ll 1$ in the sub-wavelength regime, when the index $n$ of the incident wave $\mathbf{u}^i$ is sufficiently large, the quantity
$$\frac{n (\tau \zeta_2)^{2n} \zeta_2}{k^2 \tau^2}$$
remains bounded. Consequently, the concentration of $E(\mathbf{u})$ is primarily contributed by that of $E(\mathbf{u}^s)$.

\end{rem}

\begin{rem}\label{rem:4.2}

From the discussion in Remark~\ref{rem:33rem}, if the index $n$ of the incident wave $\mathbf{u}^i$ satisfies \eqref{eq:n4 in 4.1} and $M$, which characterizes the stress concentration level of $E(\mathbf{u})$, satisfies \eqref{eq:M1 in 4.1}, then the quasi-Minnaert resonance of  the exterior scattering field $\mathbf{u}^s|_{\mathbb{R}^3 \setminus \overline{D}}$ occurs. Furthermore, from Remark~\ref{rem:41rem}, we conclude that this quasi-Minnaert resonance of $\mathbf{u}^s|_{\mathbb{R}^3 \setminus \overline{D}}$ induces the concentration of $E(\mathbf{u})$, since the stress concentration of the exterior total field $\mathbf{u}|_{\mathbb{R}^3 \setminus \overline{D}}$ is primarily due to that of the scattered field $\mathbf{u}^s|_{\mathbb{R}^3 \setminus \overline{D}}$.

\end{rem}

 \begin{rem}\label{rem:43rem}

From Theorems~\ref{thm:th3.1},~\ref{thm:nabla u in thm}, and~\ref{thm:Eu definition in thm}, along with Remarks~\ref{rem:rem32}--\ref{rem:33rem}, we observe that varying conditions on the index $n$ of the incident wave $\mathbf{u}^i$, the high-oscillation parameter, and the stress concentration level $M\gg 1$ give rise to diverse physical phenomena in the interior total field $\mathbf{u}|_D$, the exterior scattering field $\mathbf{u}^s|_{\mathbb{R}^3\setminus\overline{D}}$, and the exterior total field $\mathbf{u}|_{\mathbb{R}^3\setminus\overline{D}}$. In particular, these conditions can trigger boundary localization, surface resonance, and stress concentration-either independently or in combination-within the respective wave fields. The key results are summarized in Table~\ref{tab:condition}. The abbreviations in Table~\ref{tab:condition} are defined as follows: BL for boundary localization, SR for surface resonance, SC for stress concentration, and QMR for quasi-Minnaert resonance. Additionally, the symbol $\times$ denotes cases that are not investigated in the present work.

\begin{table}[h]
\centering
\caption{Boundary localization, surface resonance, and stress concentrations of $\mathbf{u}|_D$, $\mathbf{u}^s|_{\mathbb{R}^3 \setminus \overline{D}}$, and $\mathbf{u}|_{\mathbb{R}^3 \setminus \overline{D}}$ under different conditions of $n$ or $M$.}
\begin{tabular}{|c|c|c|c|}
\hline
     Conditions for  $n$ or $M$   &  $\mathbf{u}|_D$ & $\mathbf{u}^s|_{\mathbb{R}^3 \setminus \overline{D}}$   &  $\mathbf{u}|_{\mathbb{R}^3 \setminus \overline{D}}$ \\
\hline
\eqref{eq:n1 n2 def} & BL &  BL &  $\times$  \\   
\hline
 \eqref{eq:N2 def} & SR  & SR  &  $\times$  \\   
\hline
 \eqref{eq:355} & QMR    & QMR  &  $\times$  \\  
\hline
 \eqref{eq:356} &  $\times$ & QMR  &  $\times$  \\   
\hline
 \eqref{eq:357} & QMR  &  $\times$ &  $\times$  \\  
\hline
 \eqref{eq:M0 def} & QMR   &  $\times$ &  $\times$  \\  
 \hline
 \eqref{eq:M1 def} &  $\times$  & QMR &  $\times$  \\    
  \hline
\eqref{eq:M1 in 4.1}  \mbox{and}  \eqref{eq:n4 in 4.1} & $\times$ & QMR and SC & SC   \\  
\hline
\end{tabular}
\label{tab:condition}
\end{table}

\end{rem}

\section{Numerical experiments}\label{sec:5}

 This section presents extensive numerical examples to validate the theoretical results established in the preceding section. As established in Theorem \ref{thm:Eu definition in thm}, we have rigorously demonstrated that in the three-dimensional case, for a model with high-contrast densities between the bubble \(D\) and the surrounding elastic medium \(\mathbb{R}^3 \backslash \overline{D}\), by selecting an appropriate incident wave, we can observe the stress concentration phenomenon in the external total field \(\mathbf{u}\), as shown in \eqref{eq:Eus th4.2}. Stress concentration can be achieved via quasi-Minnaert resonance, which involves boundary localization and surface resonance. In Theorem \ref{thm:th3.1} and Theorem \ref{thm:nabla u in thm}, we have respectively proven that, by selecting an appropriate $\mathbf{u}^i$, the internal total field \(\mathbf{u}\) and the external scattered field \(\mathbf{u}^s\) exhibit boundary localization as shown in \eqref{eq:328}, and surface resonance as demonstrated in \eqref{eq:nablau nabla us gg 1} and \eqref{eq:43}.

 The above conclusion can also be extended to the two-dimensional case. The subsequent numerical examples illustrate these effects for radially  geometries in $\mathbb{R}^2$. Furthermore, they demonstrate that, under general conditions, the quasi-Minnaert resonance and stress concentration phenomena persist through the selection of appropriate incident waves $\mathbf{u}^i$, as defined in \eqref{eq:ui in 2D}, by exploiting the high-contrast between media. Consequently, by strategically designing high-contrast structures and incident waves $\mathbf{u}^i$, stress concentration can be induced for a general shape bubble  in $\mathbb{R}^2$.

 As noted in Remark \ref{rem:2.2}, in the sub-wavelength regime, the bubble $D$ is considerably smaller than the incident wavelength. To quantify this effect, in subsequent numerical experiments, we normalize the $L^2$-norm of the incident wave $\mathbf{u}^i$, as defined in \eqref{eq:ui in 2D}, within $D$. We utilize numerical methods implemented in COMSOL Multiphysics to solve the integral equation \eqref{eq:system}. Through several numerical examples, we demonstrate stress concentration in the external total field $\mathbf{u}$, achieved through quasi-Minnaert resonance. These findings are evident in both two-dimensional radially   geometries and other two-dimensional general  bubbles, and are validated in three-dimensional spherical scenarios.

 In this section, we consider test domains for the high-contrast bubble \( D \), including a disk, a corner-shaped domain, and an apple-shaped domain in two dimensions. In three dimensions, the validity of the theoretical results is verified using the spherical geometry. The boundaries of the bubbles  are parametrized  as follows:
\begin{equation}
	\label{eq:test_domains}
	\begin{aligned}
		&\text{Circle:} \quad &\{(x_1,x_2) \mid x_1^2 + x_2^2 = 1\}, \\
		&\text{Corner:} \quad &\{(x_1,x_2) \mid \text{boundary characterized by parameterization } g(t), \, 0 \leq t \leq 2\pi\}, \\
		&\text{Apple:} \quad &\{(x_1,x_2) \mid \text{boundary characterized by parameterization } h(t), \, 0 \leq t \leq 2\pi\}, \\
		&\text{Sphere:} \quad &\{(x_1,x_2,x_3) \mid x_1^2 + x_2^2 + x_3^2 = 1\},
	\end{aligned}
\end{equation}
where $g(t)$ and $h(t)$ are given by:
\begin{align}\label{eq:boundary_param}
	g(t) =(1 - \cos t)\bigl( \cos t,\, -\sin t \bigr), \quad h(t) =\frac{0.48 \left(1 + 0.9\cos t + 0.1\sin 2t\right)}{1 + 0.75\cos t} \left( \cos t,\, \sin t \right).
\end{align}

Next, we adopt the physical model from \cite{CTL12}, which involves thin layers of polydimethylsiloxane (PDMS) enclosing a bubble. This configuration yields the following physical parameters:
\begin{align}\label{eq:parameter}
	\text{Elastic material parameters of PDMS:} &\ \rho_e = 1042 \ \mathrm{kg/m^3}, \ \tilde{\lambda} = 1.083 \times 10^9 \ \mathrm{Pa}, \ \tilde{\mu} = 6.5 \times 10^5 \ \mathrm{Pa}; \notag\\
	\text{Bubble physical parameters:} & \ \rho_b = 1.2 \ \mathrm{kg/m^3}, \quad \kappa = 1.412 \times 10^5 \ \mathrm{Pa}.
\end{align}

For the purpose of the study, we consider non-dimensionalizing the aforementioned parameters as shown in \eqref{eq:non dim}, while combining the definitions of parameters $\delta$, $\tau$, $k_p$, and $k_s$ from \eqref{eq:delta tau} and \eqref{eq:ks kp defination}. We select an incident frequency of $\omega=0.1$ Hz, and naturally, we can obtain the following dimensionless parameters.
 \begin{align}\label{eq:5.4}
 	k &= 2.9152 \times 10^{-4}, & \tau &= 0.33627, \notag\\
 	k_p &= 9.803 \times 10^{-5}, & k_s &= 1.2159 \times 10^{-7}, & \delta &= 1.1516 \times 10^{-3}. 
 \end{align}

 In $\mathbb{R}^2$, we consider the incident wave as shown below
 %%%%%%
 \begin{align}\label{eq:ui in 2D}
 \mathbf{u}^i=\frac{\widetilde{ \mathbf{u}^i } } { \|\widetilde{ \mathbf{u}^i }\|_{L^2(D)^2} },\quad 
 	\widetilde{\mathbf{u}^i} =   k_p J_n'(k_p \lvert \mathbf{x} \rvert) e^{\rmi n\theta} \cdot \hat{\boldsymbol{r}}  +   \frac{\rmi n}{\lvert \mathbf{x} \rvert} J_n(k_p \lvert \mathbf{x} \rvert) e^{\rmi n\theta} \cdot \hat{\boldsymbol{\theta}},
 \end{align}
 %%%%%%%%%
 with 
 %%%%%%%
   \[
   \hat{\boldsymbol{r}} := \begin{pmatrix} \cos\theta \\ \sin\theta \end{pmatrix}, \quad \text{and} \quad 
   \hat{\boldsymbol{\theta}}:= \begin{pmatrix} -\sin\theta \\ \cos\theta \end{pmatrix},
   \]
 %%%%%%%%%%%%%%%%
 %%%%%%%%%%%%%%%%%
 and  $J_n(z)$ denotes the Bessel function of the first kind.

   In $\mathbb{R}^3$, the incident wave $\mathbf{u}^i$ is given by:  
 	\begin{align}
            \mathbf{u}^i&=\frac{\widetilde{ \mathbf{u}^i } } { \|\widetilde{ \mathbf{u}^i }\|_{L^2(D)^3} },\label{eq:ui in 3D}  \\
 		\widetilde {\mathbf{u}^i} =   j_{n}^{\prime}\left(k_{p}|\mathbf{x}|\right) e^{\mathrm{i}n\varphi} P_{n}^{|n|}(\cos\theta)  \hat{r} 
 		&+   \frac{j_{n}\left(k_{p}|\mathbf{x}|\right)}{k_{p}|\mathbf{x}|} e^{\mathrm{i}n\varphi} \frac{dP_{n}^{|n|}(\cos\theta)}{d\theta}  \hat{\theta}   +   \frac{j_{n}\left(k_{p}|\mathbf{x}|\right)}{k_{p}|\mathbf{x}|} \frac{\mathrm{i}n}{\sin\theta} e^{\mathrm{i}n\varphi} P_{n}^{|n|}(\cos\theta)  \hat{\varphi}, \notag
 	\end{align}
 where 
\begin{equation}\notag
	\hat{r} := \begin{pmatrix}
		\sin\theta \cos\varphi \\
		\sin\theta \sin\varphi \\
		\cos\theta
	\end{pmatrix}, \quad
	\hat{\theta} := \begin{pmatrix}
		\cos\theta \cos\varphi \\
		\cos\theta \sin\varphi \\
		-\sin\theta
	\end{pmatrix}, \quad
	\hat{\varphi} := \begin{pmatrix}
		-\sin\varphi \\
		\cos\varphi \\
		0
	\end{pmatrix},
\end{equation}
 and $P_n^{|n|}$ is the associated Legendre function of degree $n$ and order $n$. The $j_n(z)$ denotes the spherical Bessel function of the first kind.

 \subsection{$D$ is a unit disk}

   In this subsection, we examine the scenario where the bubble $D$ is a unit disk. By analyzing the physical model defined in \eqref{eq:parameter} and selecting the incident wave specified in \eqref{eq:ui in 2D}, we demonstrate that, through suitable adjustments to the incident wave and the bubble-elastic structure, the exterior total field $\mathbf{u}|_{\mathbb{R}^2 \setminus \overline{D}}$ exhibits stress concentration, while simultaneously  the interior total field $\mathbf{u}|_D$ and the exterior scattered field $\mathbf{u}^s|_{\mathbb{R}^2 \setminus \overline{D}}$ exhibit quasi-Minnaert resonance. This confirm that the  quasi-Minnaert resonance can induce  stress concentration.

  	\begin{exm}	\label{exm:1}
  	\textbf{Stress concentration of $E(\mathbf{u})|_{B_2 \setminus \overline{D}}$}.

    In this example, Figure~\ref{fig:5} illustrates the stress concentration phenomenon for elastic waves corresponding to different indices $n=5,15,25$ of the incident wave $\mathbf{u}^i$. The figure depicts the numerical values of the stress $\mathcal{E}(\mathbf{u})$, defined in \eqref{eq:Eu223}, within the domain $B_2 \setminus \overline{D}$. Blue regions represent small values of $\mathcal{E}(\mathbf{u})$, while red regions indicate large values. For $n=5,15,25$, the stress $\mathcal{E}(\mathbf{u})$ concentrates near $\partial D$. Notably, as $n$ increases, the maximum values of $\mathcal{E}(\mathbf{u})$ grow from $\mathcal{O}(10)$ to $\mathcal{O}(10^{17})$. This verifies the estimate in \eqref{eq:Eus th4.2} from Theorem~\ref{thm:Eu definition in thm}. Furthermore, as the index $n$ of $\mathbf{u}^i$ increases, local regions of relatively high stress emerge near $\partial D$, symmetrically distributed along the $x_1$-axis. Rigorous analysis of this observation will be pursued in future research.

  \end{exm}

 \begin{exm}\label{exm:2}
 	\textbf{Boundary localization of  $u|_D$ and $\mathbf{u}^s|_{B_2 \setminus \overline{D}}$ }.

In this example, we consider incident waves $\mathbf{u}^i$ with indices $n=5,15,25$, as defined in \eqref{eq:ui in 2D}. The $L^2$-norms of the interior total field $u|_D$ and the exterior scattered field $\mathbf{u}^s|_{B_2 \setminus \overline{D}}$ are depicted in Figures~\ref{fig:u_circle} and~\ref{fig:us_circle}, respectively, where $B_2$ denotes the ball of radius $2$ centered at the origin. From the first column of Figure~\ref{fig:u_circle}, it is evident that as $n$ increases, the blue region expands toward $\partial D$, indicating boundary localization of $u|_D$ near $\partial D$. The second column of Figure~\ref{fig:u_circle} illustrates the $L^2$-norm of $u|_D$ near the point $(1,0)$; as $n$ increases, the red regions narrow and approach $\partial D$, signifying that the $L^2$-norm is predominantly concentrated near the boundary. Similarly, the dense contour lines near the boundary in the third column of Figure~\ref{fig:u_circle} support this conclusion. In addition, from Figure~\ref{fig:us_circle}, we observe that the exterior scattered field $\mathbf{u}^s|_{B_2 \setminus \overline{D}}$ also exhibits boundary localization near $\partial D$. More importantly, local red regions exhibiting higher values than the surrounding periodic patterns appear in image (i) of Figure~\ref{fig:us_circle} when $n=25$.

For $\zeta_1=0.9$ and $\zeta_2=1.1$, the boundary localization ratios, as defined in \eqref{eq:226}, are reported for various values of $n$ in Table~\ref{tab:1}. These values indicate that the boundary localization levels $\eta_{\mathbf{u}}$ and $\eta_{\mathbf{u}^s}$ decrease as $n$ increases. Smaller values of these ratios correspond to stronger boundary localization for $\mathbf{u}$ and $\mathbf{u}^s$, respectively. From Table~\ref{tab:1}, we conclude that for $n=60$, approximately $93\%$ of the energy of $\mathbf{u}$ is concentrated in $D \backslash B_{0.9}$, and approximately $77\%$ of the energy of $\mathbf{u}^s$ is concentrated in $B_{1.1} \backslash \overline{D}$. This confirms that both $\mathbf{u}$ and $\mathbf{u}^s$ exhibit boundary localization near $\partial D$. These numerical results strongly validate the findings in Theorem~\ref{thm:th3.1}.

 	 \end{exm}

 	\begin{exm}\label{exm:3}
 		\textbf{Surface resonance of  $\nabla u|_D$  and   $\nabla\mathbf{u}^s|_{B_2 \setminus \overline{D}}$ }.

      Figures~\ref{fig:3} and~\ref{fig:nus_grid} present the values of $\|\nabla u\|_{L^2(D)^2}$ and $\|\nabla\mathbf{u}^s\|_{L^2(B_2\setminus\overline{D})^2}$, respectively, for different indices $n$ ($n=5,15,25$) of the incident wave $\mathbf{u}^i$. The red regions indicate high-oscillation domains, while the blue regions represent areas where the gradient is nearly zero. As the index $n$ increases, the blue domains expand and gradually approach the boundary, leading to the conclusion that the high-oscillation phenomenon occurs near $\partial D$. Although the index $n$ of $\mathbf{u}^i$ increases from $5$ to $25$, we observe that the $L^2$-norm of $\nabla u|_D$ increases from $\mathcal{O}(10^2)$ to $\mathcal{O}(10^{17})$ in Figure~\ref{fig:3}, and the $L^2$-norm of $\nabla\mathbf{u}^s|_{B_2\setminus\overline{D}}$ increases from $\mathcal{O}(10^2)$ to $\mathcal{O}(10^8)$ in Figure~\ref{fig:nus_grid}. We can thus conclude that the degree of oscillation for $\nabla u|_D$ is stronger than that for $\nabla\mathbf{u}^s|_{B_2\setminus\overline{D}}$. These numerical values are consistent with the mathematical analysis in \eqref{eq:nablau nabla us gg 1} and \eqref{eq:43} of Theorem~\ref{thm:nabla u in thm}.

 	\end{exm}

 	\begin{figure}
 		\centering
 		\begin{subfigure}[t]{0.3\textwidth}
 			\centering
 			\includegraphics[width=\textwidth]{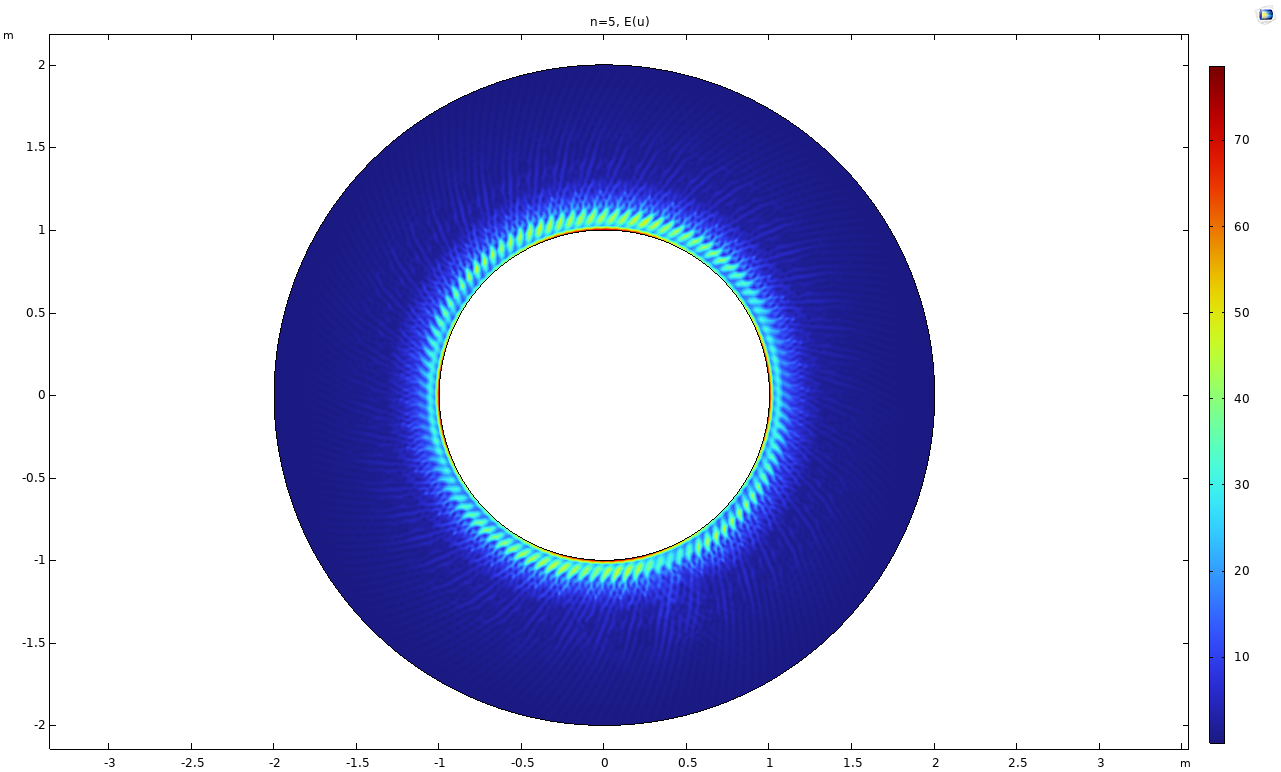} % 仅设置宽度，高度自动按比例计算
 			\caption{$n=5$,\ ${\mathcal E}(\mathbf{u})$}
 		\end{subfigure} 
 		\hfill 
 		\begin{subfigure}[t]{0.3\textwidth}
 			\centering
 			\includegraphics[width=\textwidth]{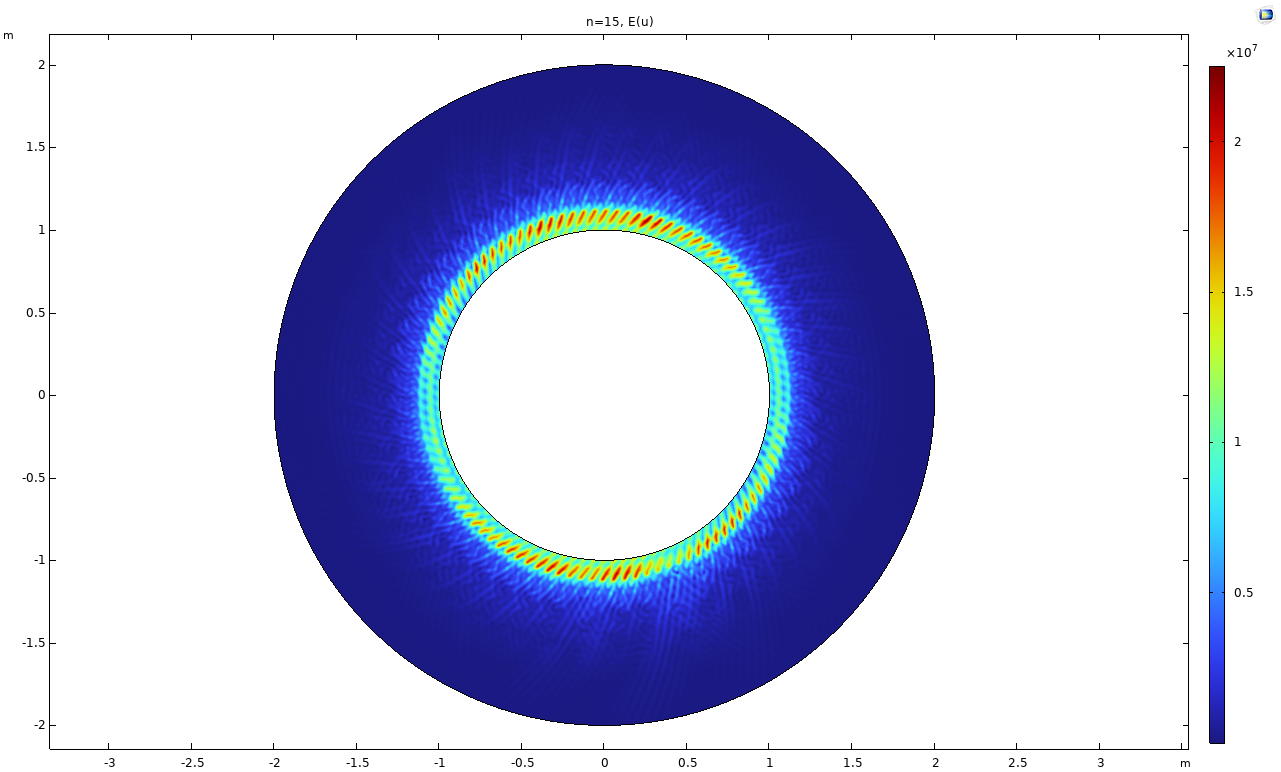}
 			\caption{$n=15$,\ ${\mathcal E}(\mathbf{u})$}
 		\end{subfigure}%
 		\hfill%
 		\begin{subfigure}[t]{0.3\textwidth}
 			\centering
 			\includegraphics[width=\textwidth]{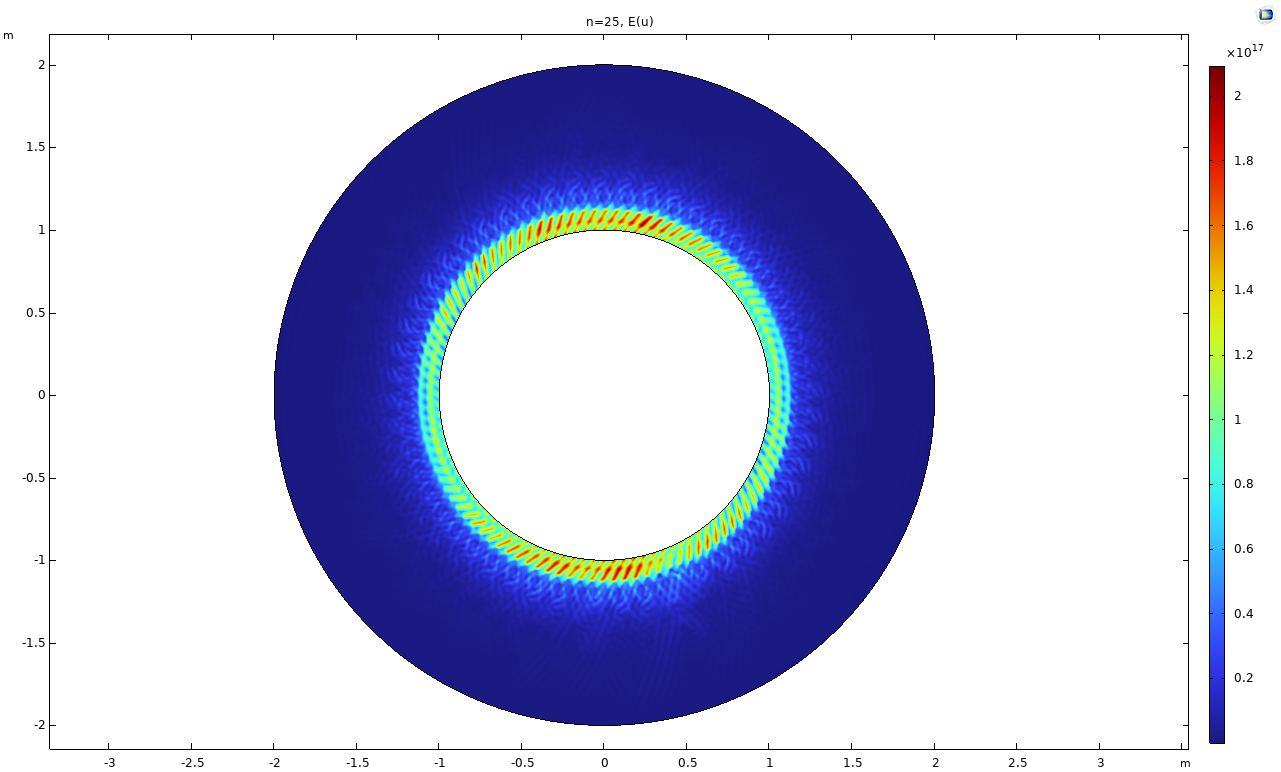}
 			\caption{$n=25$,\ ${\mathcal E}(\mathbf{u})$}
 		\end{subfigure}
 		\caption{The stress $\mathcal{E}(\mathbf{u})$ of the exterior total field $\mathbf{u}$ for the incident wave $\mathbf{u}^i$ with different indices $n$ ($n=5, 15, 25$).}
 		\label{fig:5}
 	\end{figure}

 % 正文内容（以下为图片、表格代码）
 % 第一个3×3网格（u系列）：强制固定位置
 \begin{figure}  % [H]确保图片在代码位置显示
 	\centering
 	% 减小子图宽度和行间距，避免整体过高导致空白
 	% 第一行
 	\subfloat[$n=5,  \|{u}\|_{L^2\left(D\right)^2}$ ]{\includegraphics[width=0.29\textwidth]{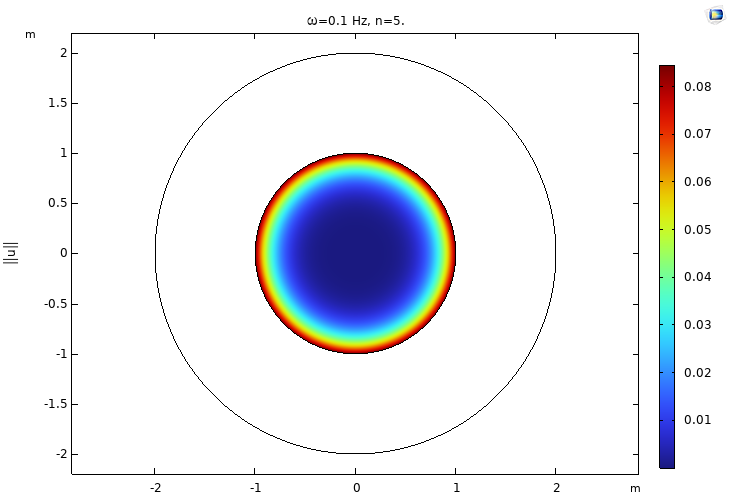}}
 	\hfill
 	\subfloat[ boundary localization of $u$ near the point $(1,0)$ for $n=5$ ]{\includegraphics[width=0.29\textwidth]{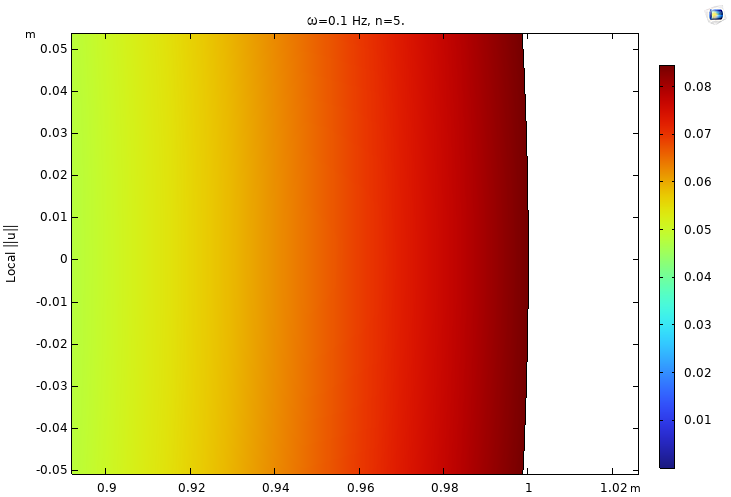}}
 	\hfill
 	\subfloat[for $n=5$: energy contours of $u$ near the boundary at $(1,0)$]{\includegraphics[width=0.29\textwidth]{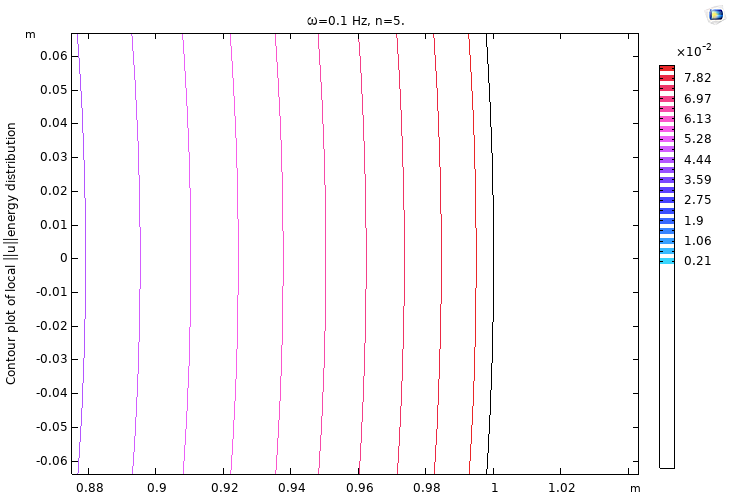}}
 	\vspace{0.1cm}  % 最小化行间距
 	% 第二行
 	\subfloat[$n=15,  \|{u}\|_{L^2\left(D\right)^2}$]{\includegraphics[width=0.29\textwidth]{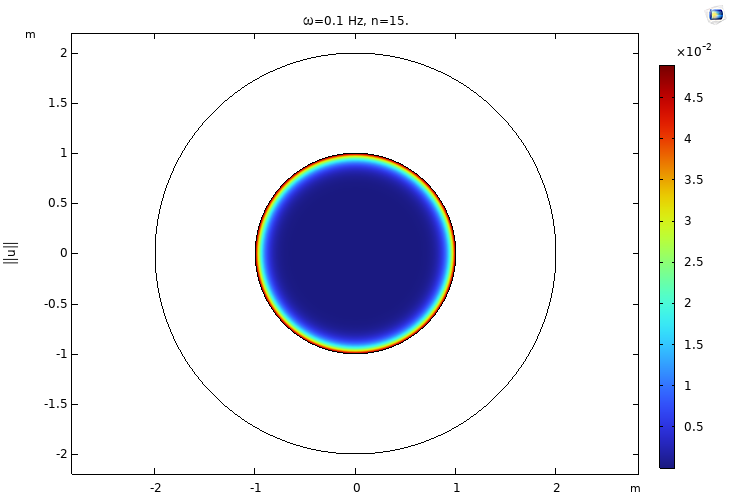}}
 	\hfill
 	\subfloat[ boundary localization of $u$ near the point $(1,0)$ for $n=15$]{\includegraphics[width=0.29\textwidth]{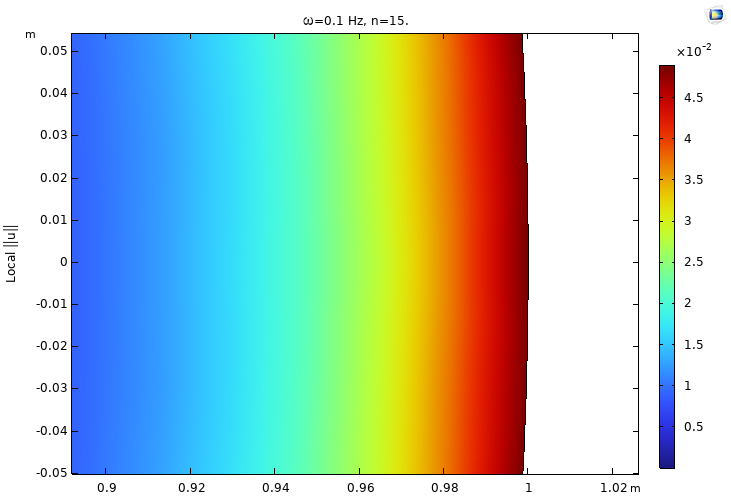}}
 	\hfill
 	\subfloat[for $n=15$: energy contours of $u$ near the boundary at $(1,0)$]{\includegraphics[width=0.29\textwidth]{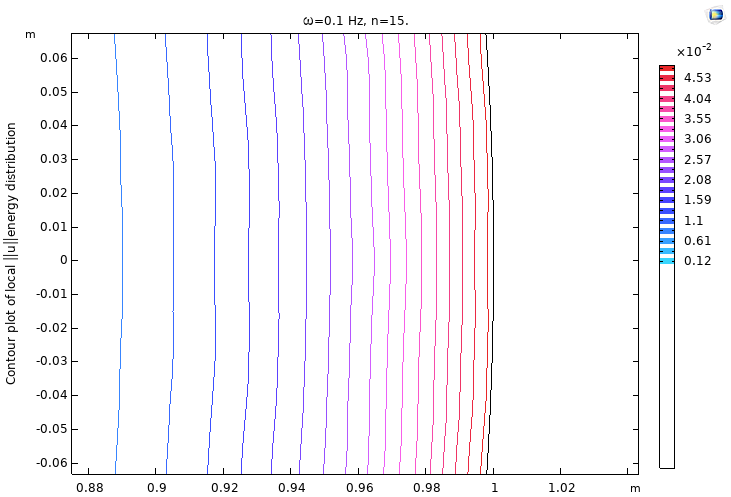}}
 	
 	\vspace{0.1cm}
 	% 第三行
 	\subfloat[$n=25,  \|{u}\|_{L^2\left(D\right)^2}$]{\includegraphics[width=0.29\textwidth]{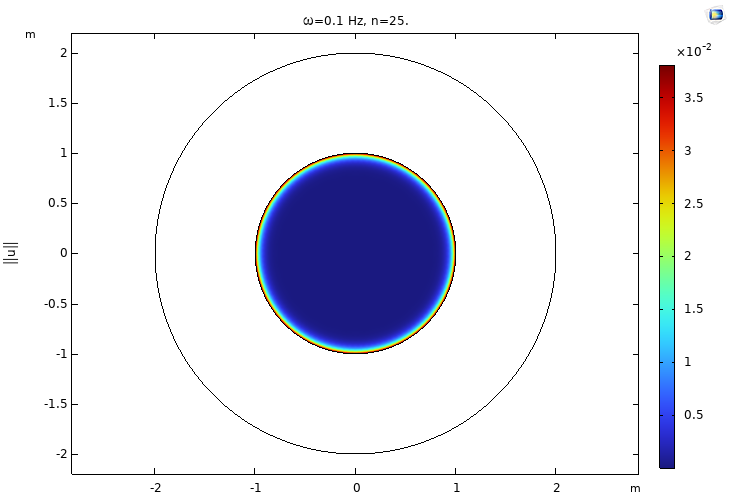}}
 	\hfill
 	\subfloat[boundary localization of $u$ near the point $(1,0)$ for $n=25$]{\includegraphics[width=0.29\textwidth]{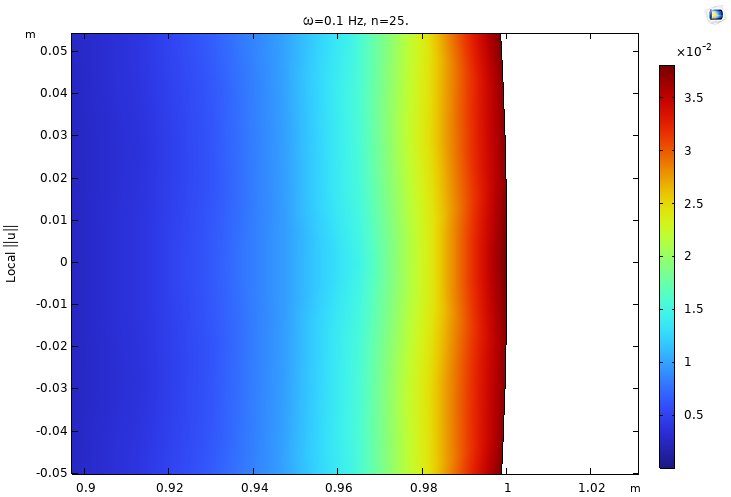}}
 	\hfill
 	\subfloat[for $n=25$: energy contours of $u$ near the boundary at $(1,0)$]{\includegraphics[width=0.29\textwidth]{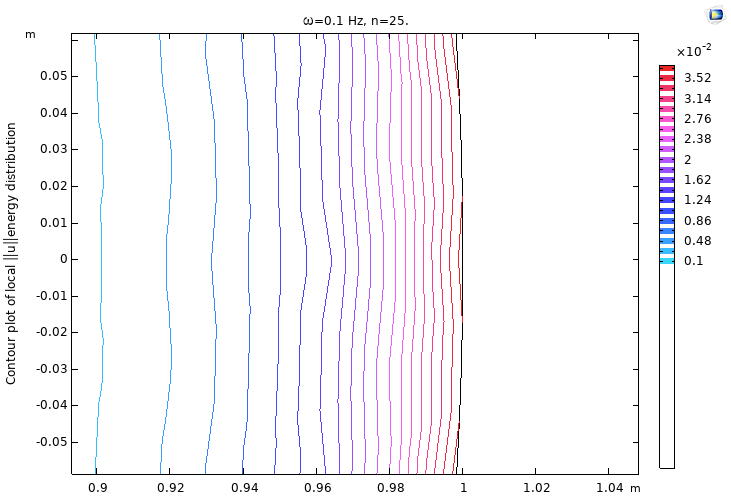}}
 	
 	\caption{ 
 		The $L^2$-norm of $u|_D$ for $\mathbf{u}^i$ with different indices $n$ ($n=5, 15, 25$), alongside the localized energy distribution near the point (1,0) and the corresponding contour plot.
 	}
 	\label{fig:u_circle}
 \end{figure}

 % 第二个3×3网格（us系列）：强制固定位置
 \begin{figure}
 	\centering
 	% 第一行
 	\subfloat[$n=5,\| \mathbf{u}^s\|_{L^2\left(B_2 \backslash \overline D\right)^2}$]{\includegraphics[width=0.29\textwidth]{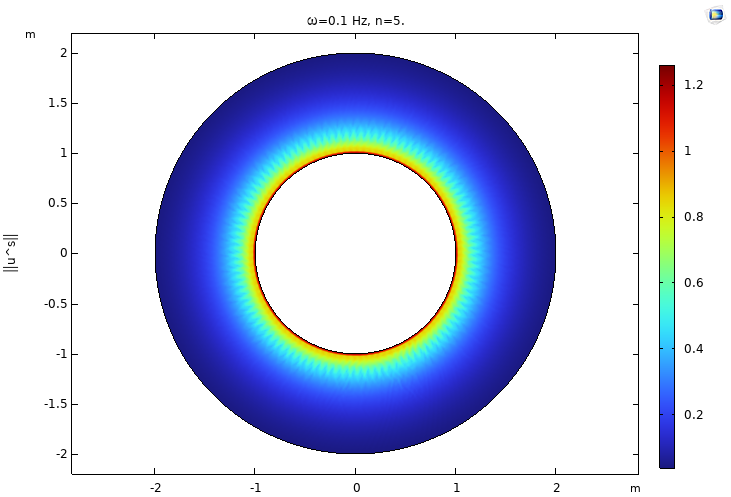}}
 	\hfill
 	\subfloat[ boundary localization of $\mathbf{u}^s$ near the point $(1,0)$ for $n=5$ ]{\includegraphics[width=0.29\textwidth]{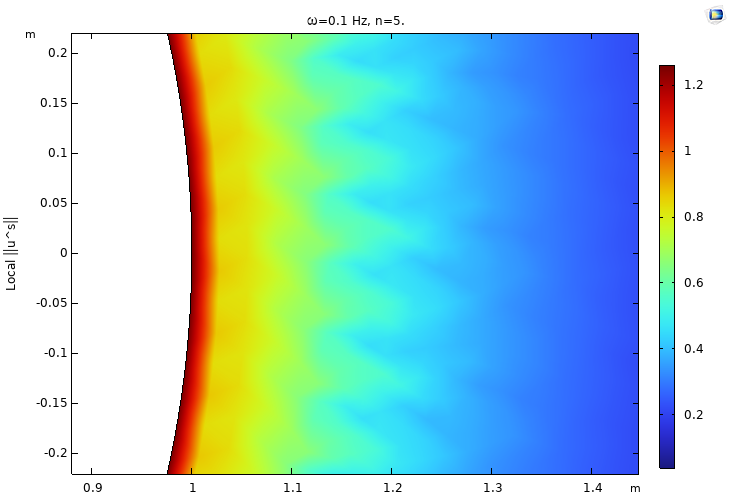}}
 	\hfill
 	\subfloat[for $n=5$: energy contours of $\mathbf{u}^s$ near the boundary at $(1,0)$]{\includegraphics[width=0.29\textwidth]{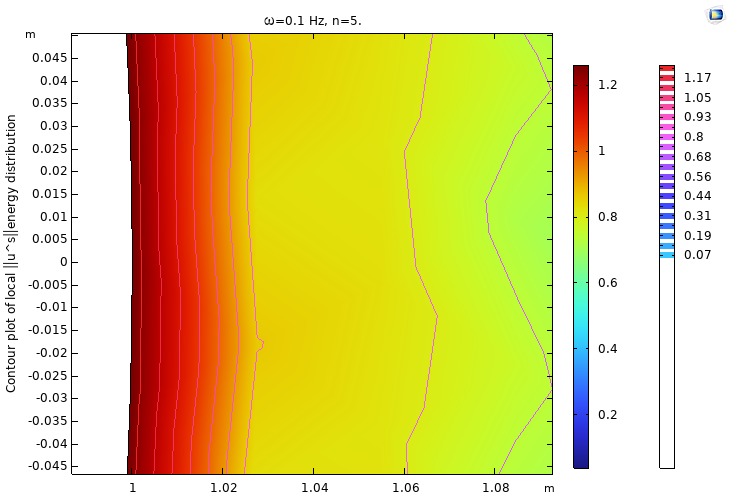}}
 	
 	\vspace{0.1cm}
 	% 第二行
 	\subfloat[$n=15,\| \mathbf{u}^s\|_{L^2\left(B_2 \backslash \overline D\right)^2}$]{\includegraphics[width=0.29\textwidth]{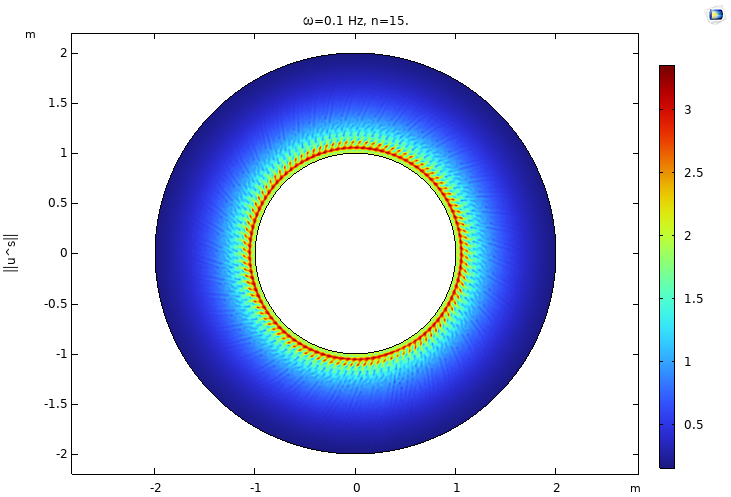}}
 	\hfill
 	\subfloat[boundary localization of $\mathbf{u}^s$ near the point $(1,0)$ for $n=15$]{\includegraphics[width=0.29\textwidth]{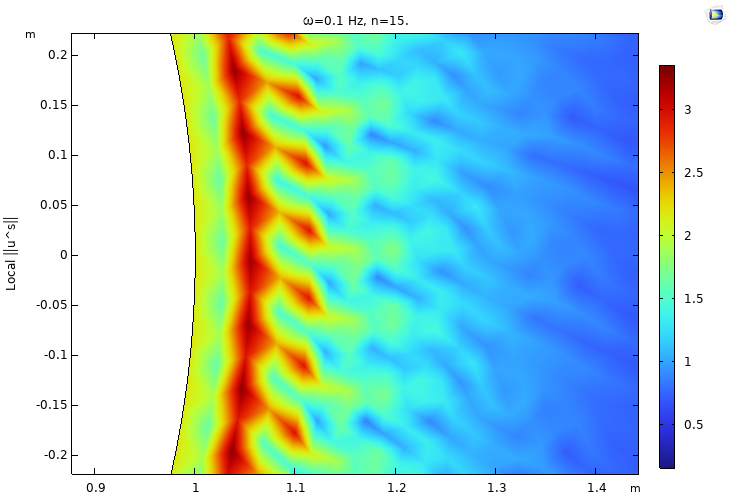}}
 	\hfill
 	\subfloat[ for $n=15$: energy contours of $\mathbf{u}^s$ near the boundary at $(1,0)$]{\includegraphics[width=0.29\textwidth]{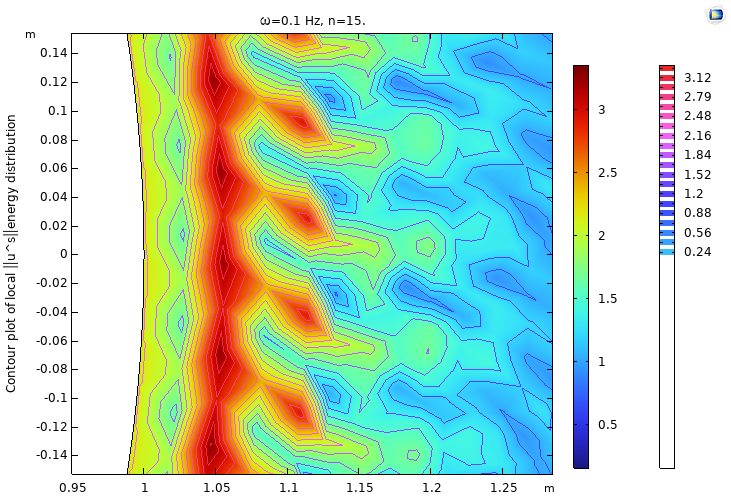}}
 	
 	\vspace{0.1cm}
 	% 第三行
 	\subfloat[$n=25,\| \mathbf{u}^s\|_{L^2\left(B_2 \backslash \overline D\right)^2}$]{\includegraphics[width=0.29\textwidth]{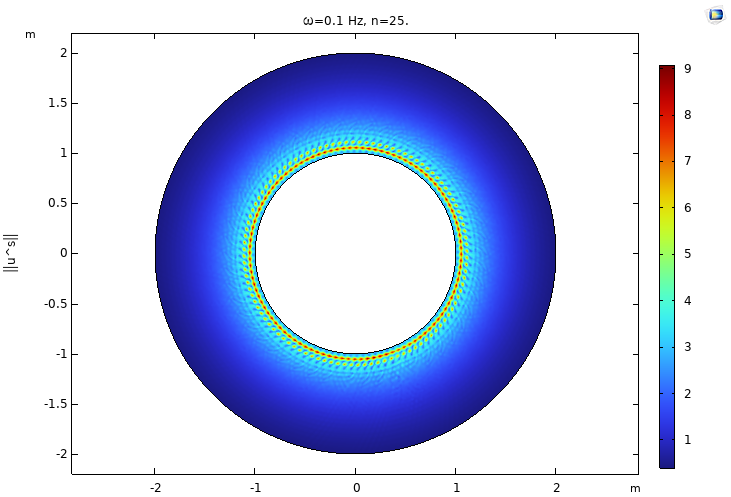}}
 	\hfill
 	\subfloat[boundary localization of $\mathbf{u}^s$ near the point $(1,0)$ for $n=25$]{\includegraphics[width=0.29\textwidth]{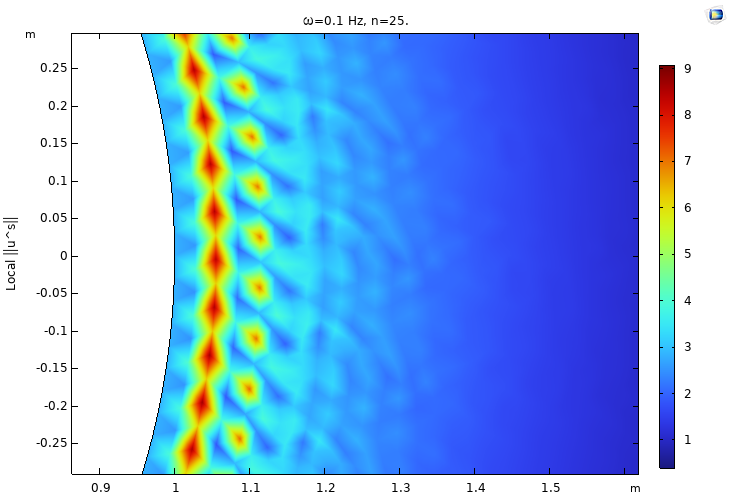}}
 	\hfill
 	\subfloat[ for $n=25$: energy contours of $\mathbf{u}^s$ near the boundary at $(1,0)$]{\includegraphics[width=0.29\textwidth]{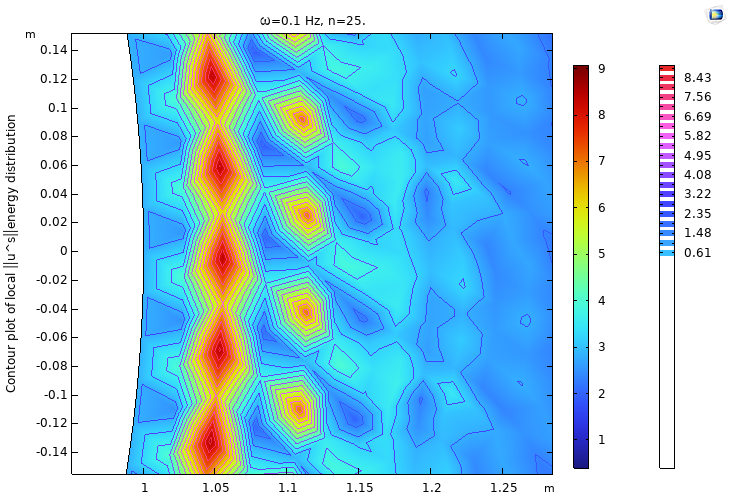}}
 	
 	\caption{
 		The $L^2$-norm of  \(\mathbf{u}^s|_{B_2\setminus\overline{D}}\) for \(\mathbf{u}^i\) with different values \(n\) ($n=5, 15, 25$), together with the localized energy distribution near (1,0) and the corresponding contour plot.
 	}
 	\label{fig:us_circle}
 \end{figure}

% Placeholder table for boundary localization ratios
\begin{table}
	\centering
	\caption{Boundary localization ratios and  for different  $n$, and given $\zeta_1=0.9$ and $\zeta_2=1.1$. }
	\label{tab:1}
	\begin{tabular}{@{}ccc@{}}
		\toprule
		$n$ & $\eta_{u}$ & $\eta_{\mathbf{u}^s} $ \\
		\midrule
		  20 & 0.5904018869589628 &  0.4590269811218674     \\
		  40 &   0.2051670856315115   & 0.4198613576351685  \\
		  60 & 0.0720115793865058  & 0.2347708781772988 \\
		\bottomrule
	\end{tabular}
\end{table}

  \begin{figure}
  	\centering
  	% 只控制宽度，不设置高度，保持原始比例
  	\begin{subfigure}[t]{0.3\textwidth}
  		\centering
  		\includegraphics[width=\textwidth]{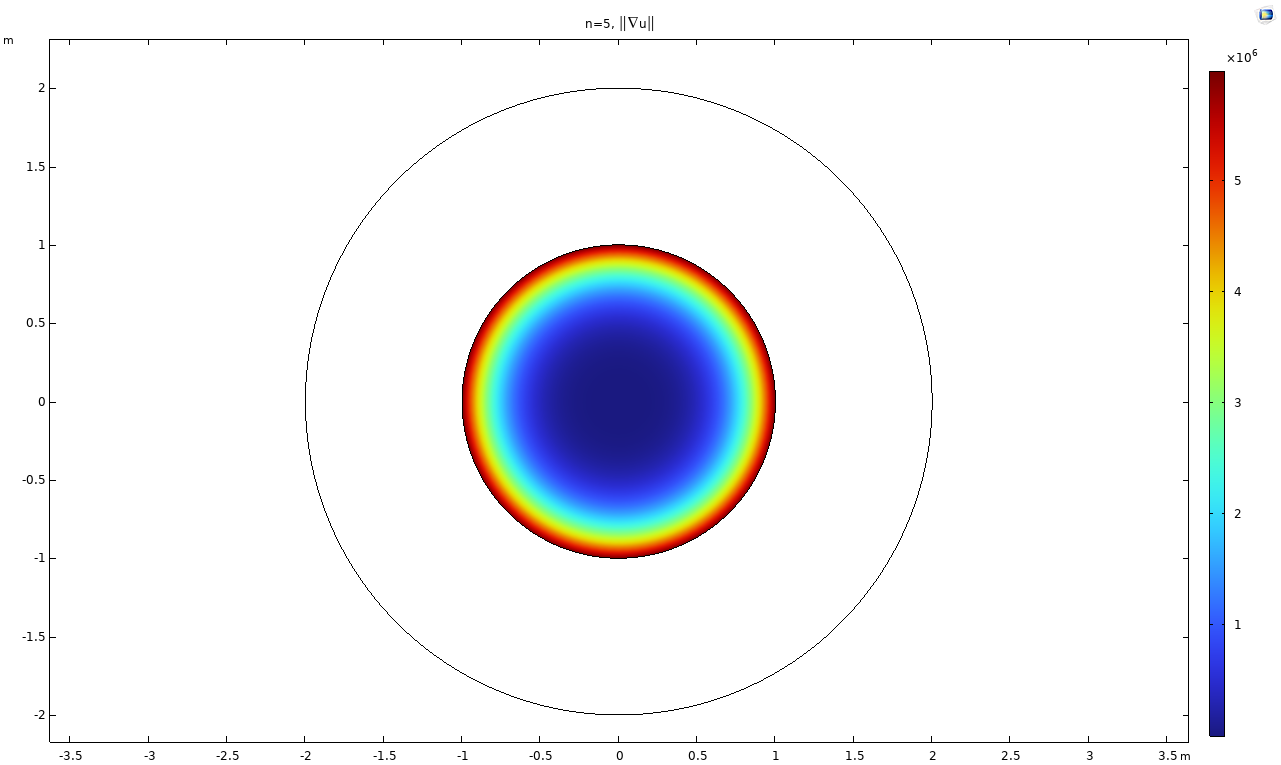} % 仅设置宽度，高度自动按比例计算
  		\caption{$n=5$,\ $\| \nabla u\|_{L^2\left(D\right)^2}$}
  		%\label{fig:1}
  	\end{subfigure}% 取消换行空格
  	\hfill% 自动分配剩余空间作为间距
  	\begin{subfigure}[t]{0.3\textwidth}
  		\centering
  		\includegraphics[width=\textwidth]{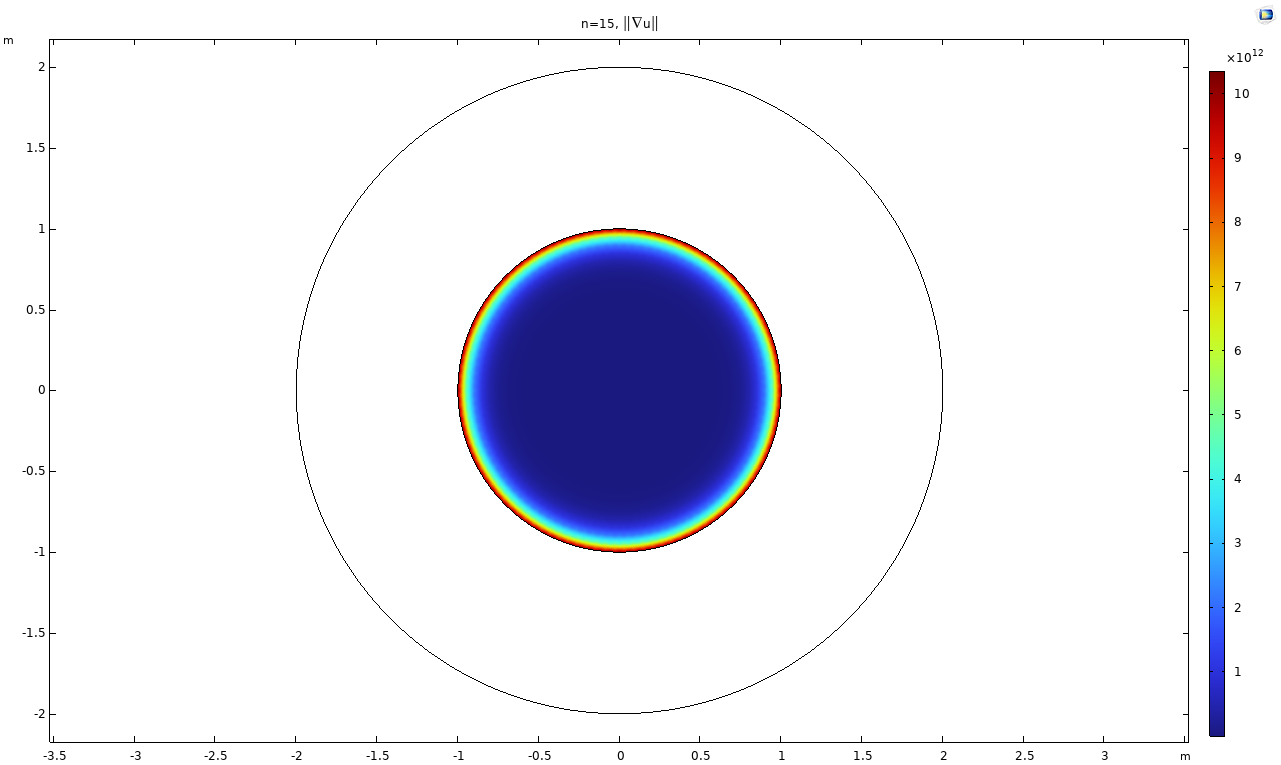}
  		\caption{$n=15$,\ $\| \nabla u\|_{L^2\left(D\right)^2}$}
  		%\label{fig:2}
  	\end{subfigure}%
  	\hfill%
  	\begin{subfigure}[t]{0.3\textwidth}
  		\centering
  		\includegraphics[width=\textwidth]{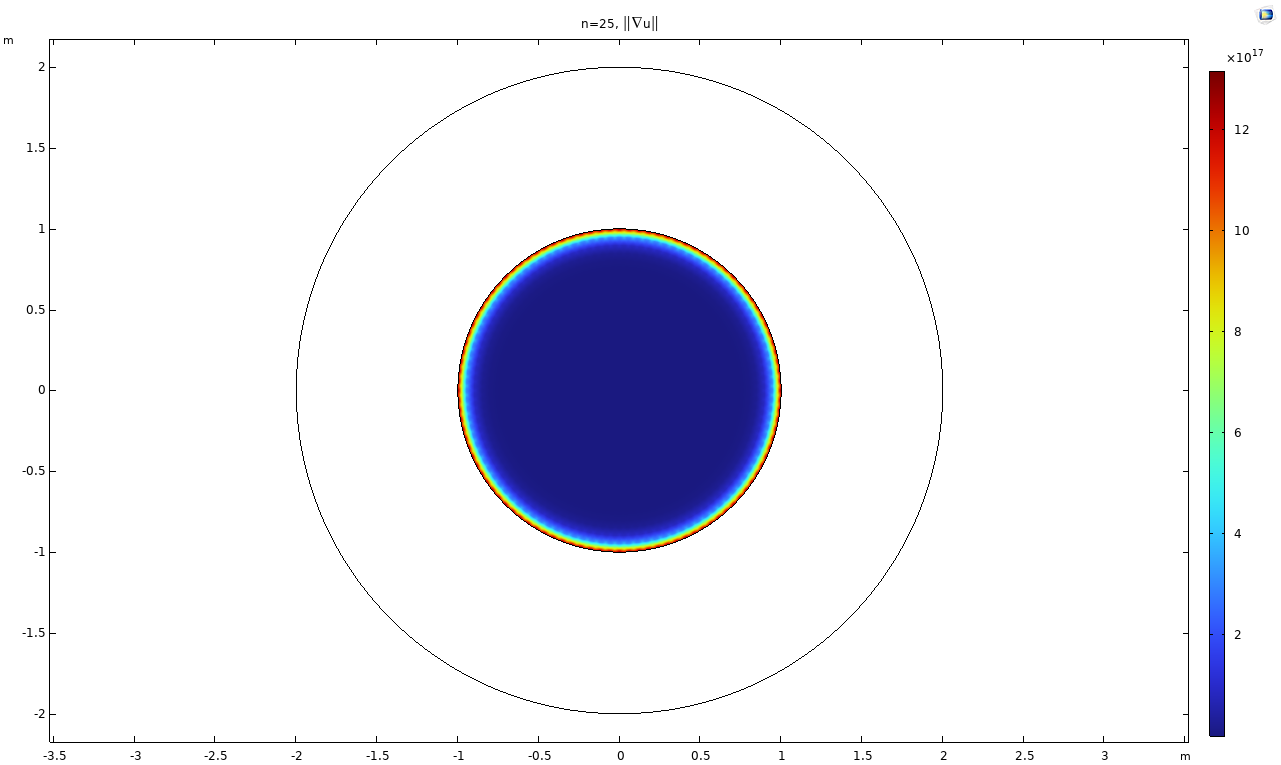}
  		\caption{$n=25$,\ $\| \nabla u\|_{L^2\left(D\right)^2}$}
  		%\label{fig:3}
  	\end{subfigure}
  	\caption{$\| \nabla u \|_{L^2(D)^2}$ for the incident wave $\mathbf{u}^i$ with different indices $n$ ($n=5,15,25$).}
  	\label{fig:3}
  \end{figure}

  \begin{figure}
    \centering
    \subfloat[$n=5,\| \nabla \mathbf{u}^s\|_{L^2\left(B_2 \backslash \overline D\right)^2}$]{\includegraphics[width=0.3\textwidth]{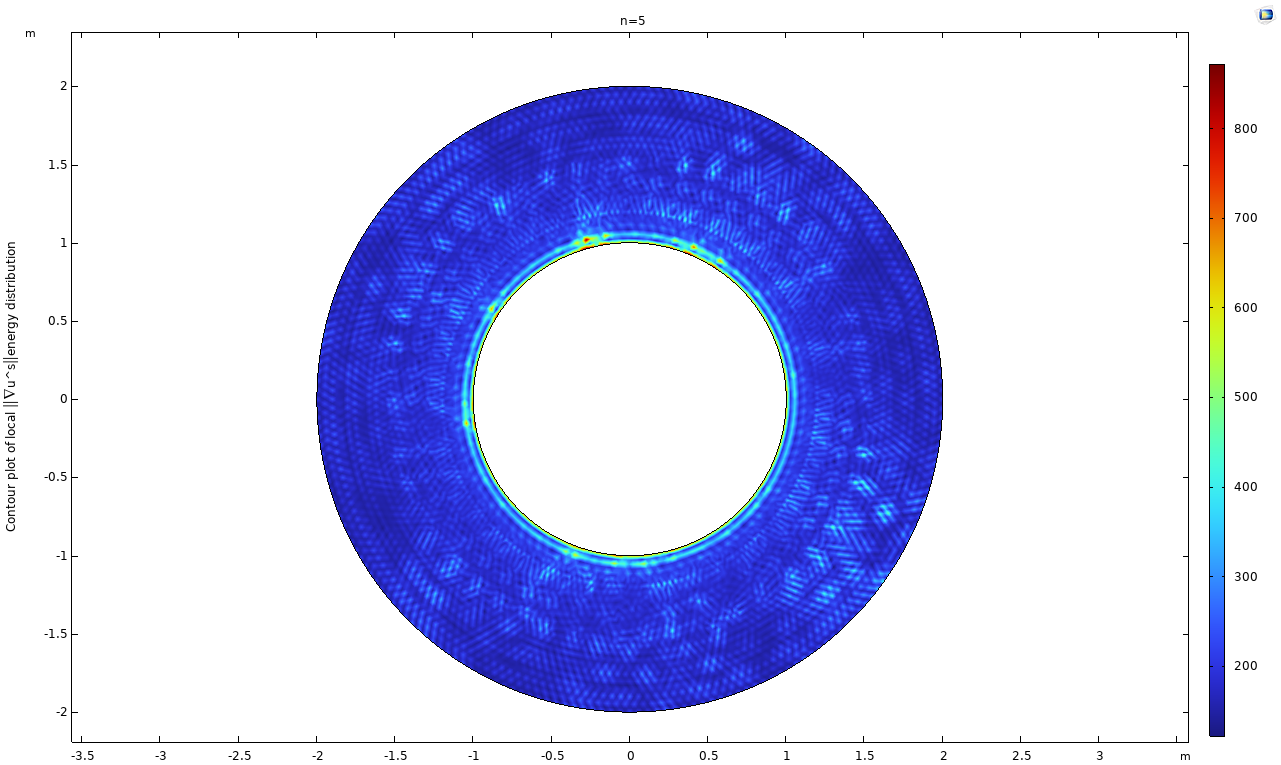}}
    \hfill
    \subfloat[$n=15,\| \nabla \mathbf{u}^s\|_{L^2\left(B_2 \backslash \overline D\right)^2}$]{\includegraphics[width=0.3\textwidth]{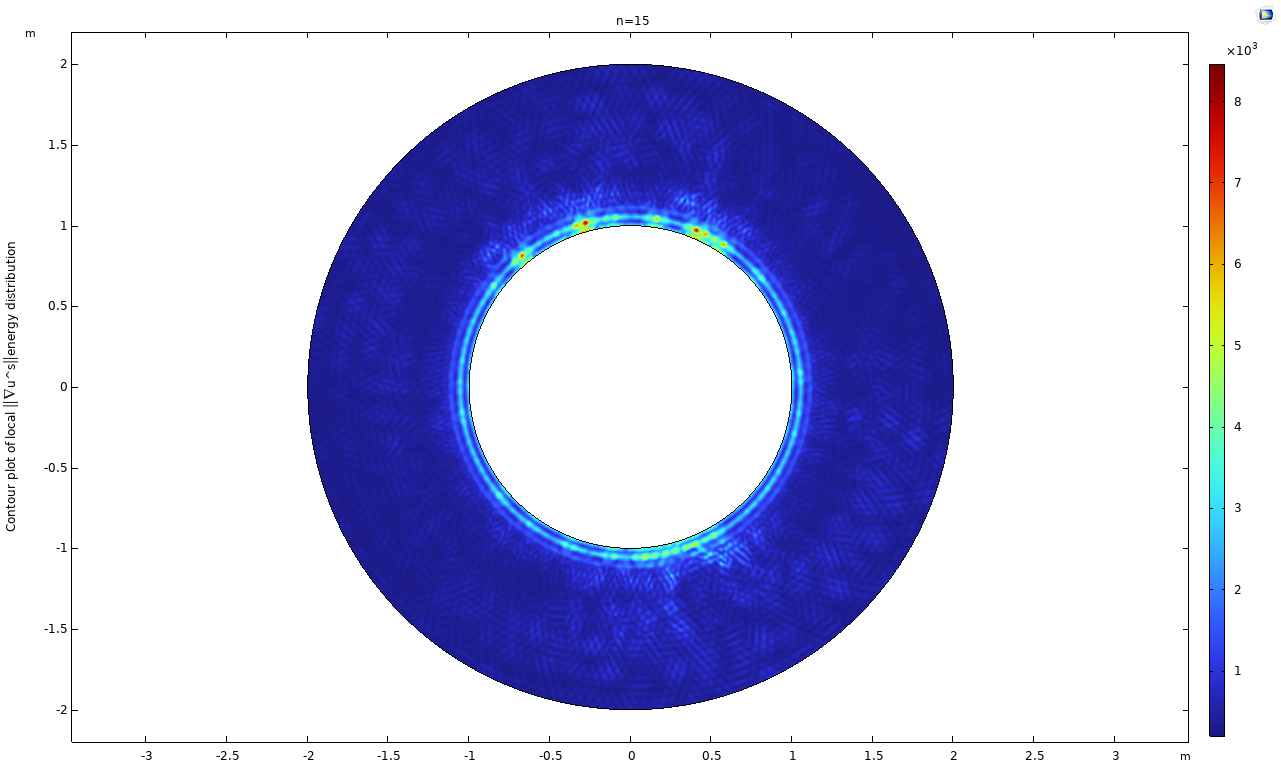}}
    \hfill
    \subfloat[$n=25,\| \nabla \mathbf{u}^s\|_{L^2\left(B_2 \backslash \overline D\right)^2}$]{\includegraphics[width=0.3\textwidth]{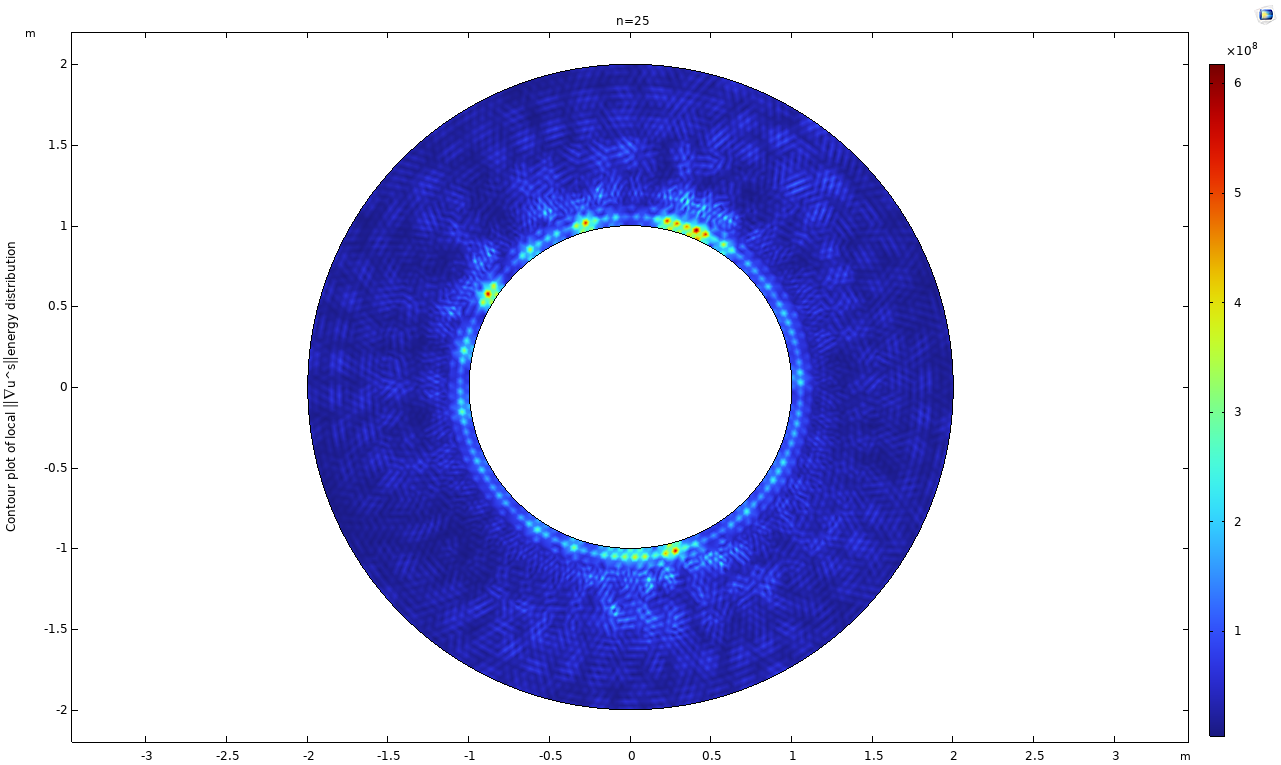}}
    
    \caption{$\| \nabla \mathbf{u}^s\|_{L^2\left(B_2 \backslash \overline D\right)^2}$ for the incident wave $\mathbf{u}^i$ with different indices $n$ ($n=5,15,25$).}
    \label{fig:nus_grid}
\end{figure}

\subsection{The boundary of $ D$ is corner-shaped or apple-shaped curve}

  \begin{exm}

In this example, we consider the case where $\partial D$ is formed by a corner-shaped curve, as parameterized in \eqref{eq:test_domains}. We choose the physical model defined in \eqref{eq:parameter} and the incident wave $\mathbf{u}^i$ defined in \eqref{eq:ui in 2D}. We first show the stress $\mathcal{E}(\mathbf{x})$ for the exterior total field $\mathbf{u}|_{B_2\setminus\overline{D}}$, defined in \eqref{eq:Eu223}, in Figure~\ref{fig:13} for the indices $n=5,15,25$. Meanwhile, the $L^2$-norms of the interior total field $\mathbf{u}|_D$ and the exterior scattered field $\mathbf{u}^s|_{B_2\setminus\overline{D}}$ are presented in Figures~\ref{fig:9} and~\ref{fig:10}, respectively. The values of $\|\nabla u\|_{L^2(D)^2}$ and $\|\nabla \mathbf{u}^s\|_{L^2(B_2\setminus\overline{D})^2}$ are exhibited in Figures~\ref{fig:11} and~\ref{fig:12}, respectively. As the index $n$ of $\mathbf{u}^i$ increases, similar observations and conclusions can be drawn as those presented in Examples \ref{exm:1}--\ref{exm:3} for the unit disk. In particular, it can be directly observed that stress concentration, boundary localization, and surface resonance become stronger near the boundary of the bubble $D$. Additionally, for $n=5$, boundary localization, high oscillation, and stress concentration are observed at the corner point, with the maximum value of $\|\nabla \mathbf{u}\|_{L^2(D)^2}$ in Figure~\ref{fig:11} larger than that in Figure~\ref{fig:3}. For $ n = 25 $, both the interior total field $ \mathbf{u}|_D $ and the exterior scattered field $ \mathbf{u}^s|_{B_2 \setminus \overline{D}} $ exhibit wave field localization, high oscillations, and stress concentrations near the boundary $ \partial D $, particularly at the convex arc vertices on the right and bottom sides. These regions are symmetric with respect to the $ x_1 $-axis.

  \end{exm}
  
  \begin{exm}

In this example, we analyze the case where the bubble $D$ is an apple-shaped domain, as defined in \eqref{eq:test_domains}. For the incident wave $\mathbf{u}^i$, specified in \eqref{eq:ui in 2D} with indices $n = 5, 15, 25$, we fix the frequency $\omega = 0.1$ Hz and adopt the physical model described in \eqref{eq:5.4}. Figure~\ref{fig:18} illustrates the stress $\mathcal{E}(\mathbf{x})$ of the exterior total field $\mathbf{u}$. Additionally, Figures~\ref{fig:14} and \ref{fig:15} present the $L^2$-norms of the internal total field $u|_D$ and the external scattered field $\mathbf{u}^s|_{B_2 \setminus \overline{D}}$, respectively. Figures~\ref{fig:16} and \ref{fig:17} depict the $L^2$-norms of $\nabla u|_D$ and $\nabla \mathbf{u}^s|_{B_2 \setminus \overline{D}}$, respectively.

As the index $n$ of the incident wave $\mathbf{u}^i$ increases, the observations align with those in Examples~\ref{exm:1}--\ref{exm:3} for the unit disk. Notably, stress concentration intensifies in regions of higher curvature, including concave points, for $n = 5$, as shown in Figure~\ref{fig:18}. For larger $n$, the stress increasingly concentrates at the convex corner in the upper-right region of the domain. This behavior supports the applicability of the analysis in Theorem~\ref{thm:Eu definition in thm} to general domains.

Figures~\ref{fig:14} and \ref{fig:15} show that, as $n$ increases, the boundary localization of the interior total field $\mathbf{u}|_D$ and the exterior scattered field $\mathbf{u}^s$ intensifies near the right convex vertex of $D$. Additionally, Figures~\ref{fig:16} and \ref{fig:17} illustrate increasingly oscillatory behavior in $\mathbf{u}|_D$ and $\mathbf{u}^s|_{\mathbb{R}^2 \setminus \overline{D}}$. As $n$ increases, this oscillatory behavior becomes more pronounced near the right convex vertex of $D$, while Figure~\ref{fig:17} also reveals high oscillation at the concave points of the bubble $D$.

\end{exm}

  \begin{figure}
  	\centering
  	% 只控制宽度，不设置高度，保持原始比例
  	\begin{subfigure}[t]{0.3\textwidth}
  		\centering
  		\includegraphics[width=\textwidth]{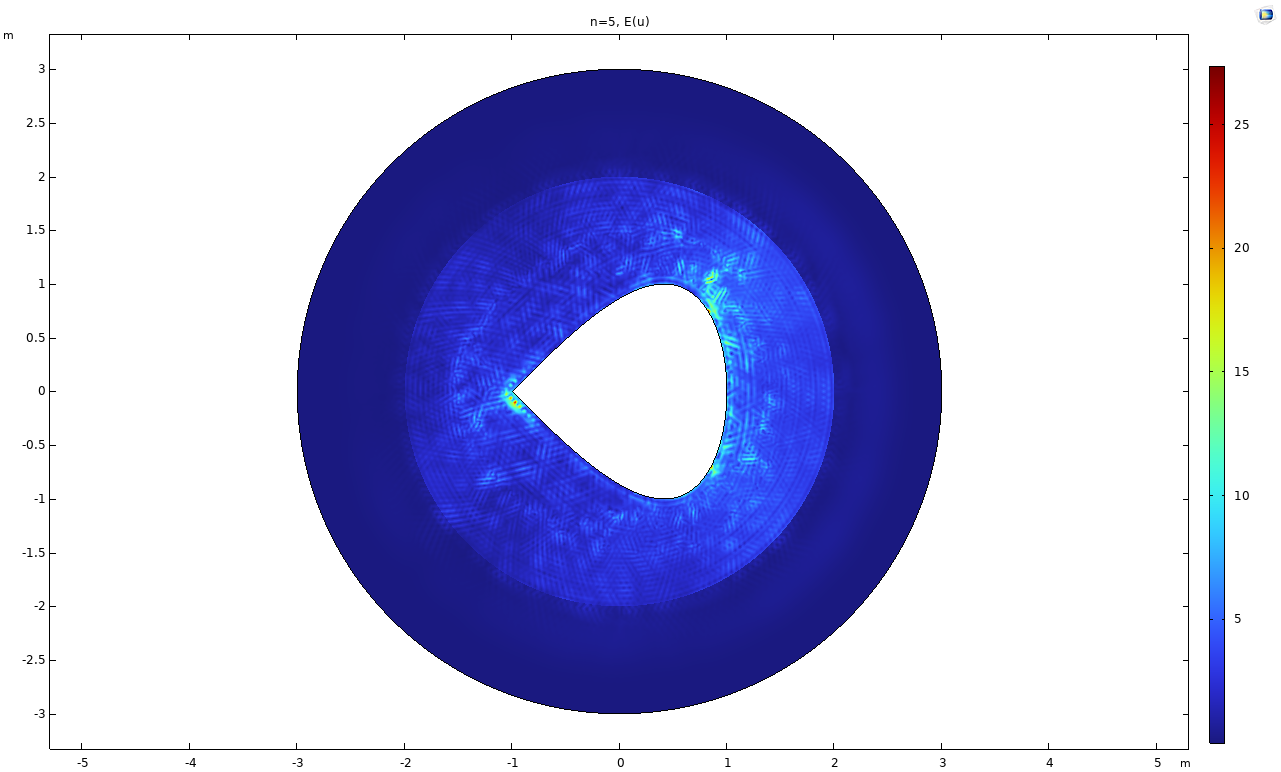} % 仅设置宽度，高度自动按比例计算
  		\caption{$n=5$,\ ${\mathcal E}(\mathbf{u})$}
  		%\label{fig:1}
  	\end{subfigure}% 取消换行空格
  	\hfill% 自动分配剩余空间作为间距
  	\begin{subfigure}[t]{0.3\textwidth}
  		\centering
  		\includegraphics[width=\textwidth]{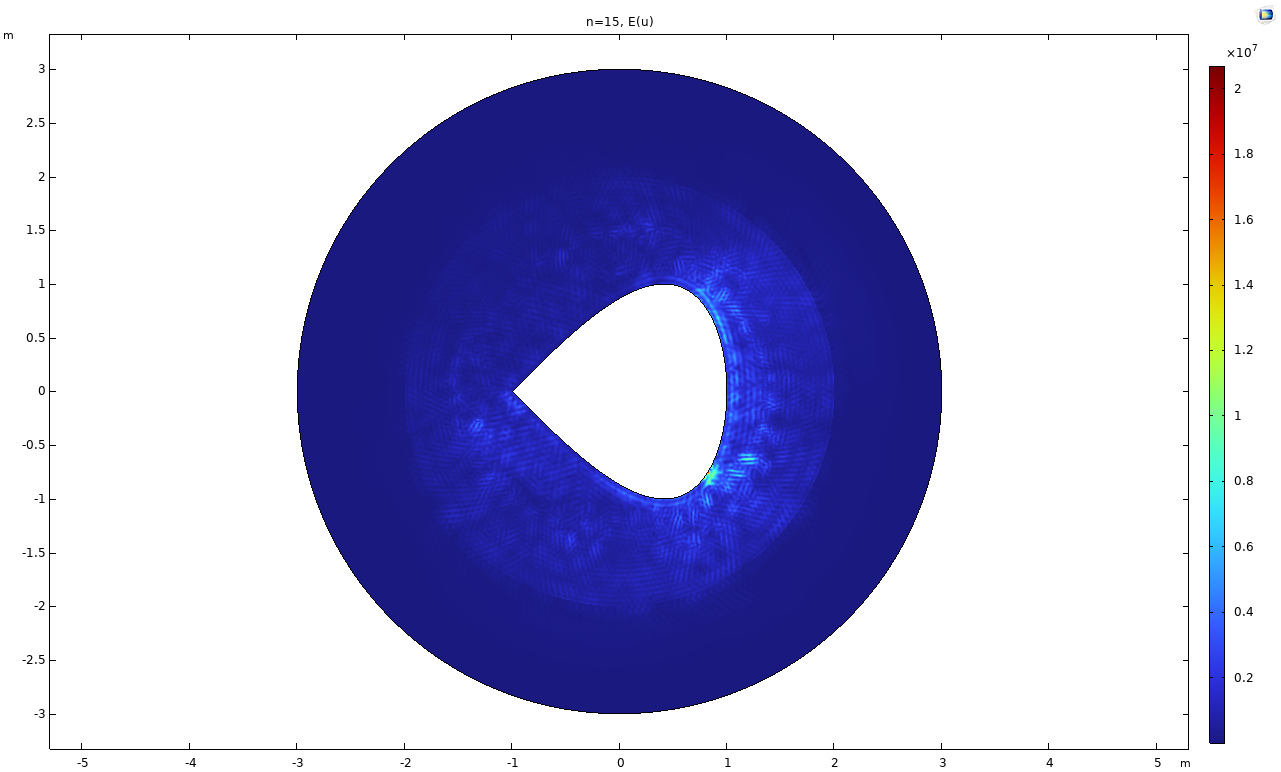}
  		\caption{$n=15$,\ ${\mathcal E}(\mathbf{u})$}
  		%\label{fig:2}
  	\end{subfigure}%
  	\hfill%
  	\begin{subfigure}[t]{0.3\textwidth}
  		\centering
  		\includegraphics[width=\textwidth]{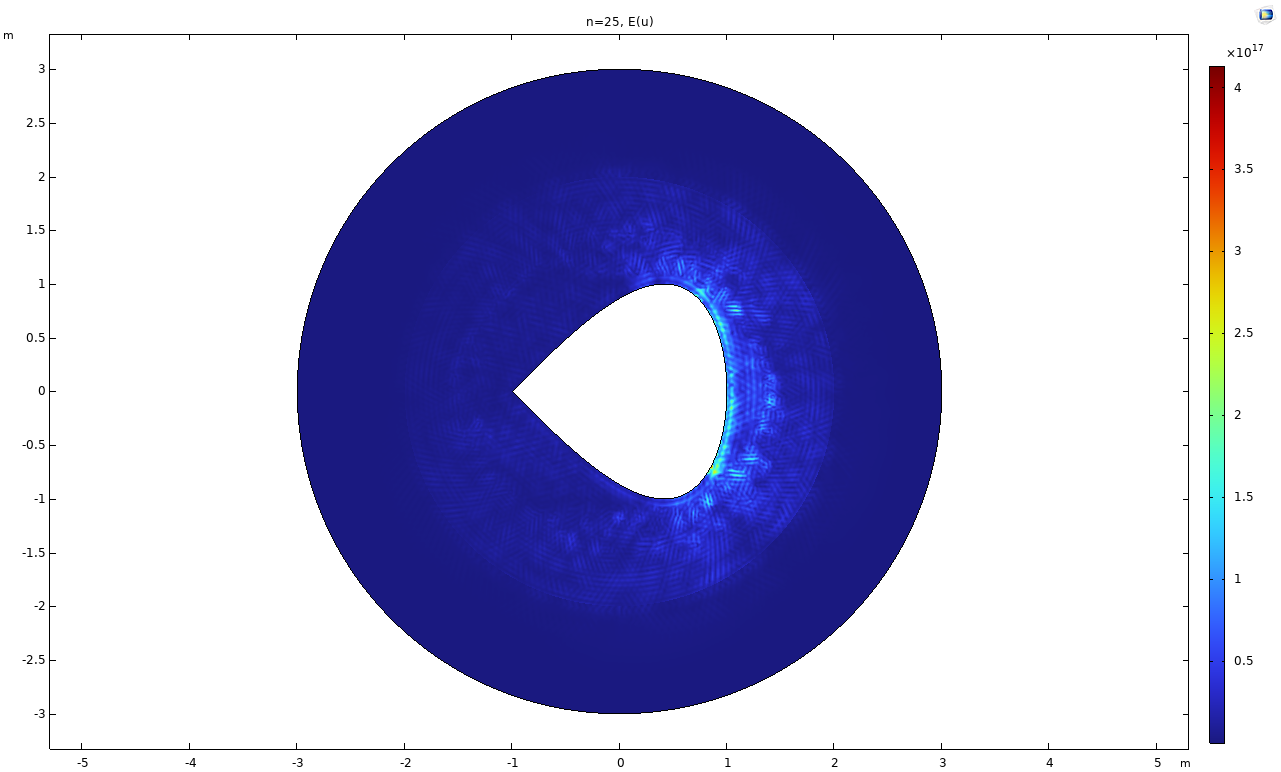}
  		\caption{$n=25$,\ ${\mathcal E}(\mathbf{u})$}
  		%\label{fig:3}
  	\end{subfigure}
  	\caption{ The stress $\mathcal{E}(\mathbf{u})$ of the exterior total field for the incident wave $\mathbf{u}^i$ with indices $n$ ($n=5, 15, 25$).}
  	\label{fig:13}
  \end{figure}

  \begin{figure}
  	\centering
  	% 只控制宽度，不设置高度，保持原始比例
  	\begin{subfigure}[t]{0.3\textwidth}
  		\centering
  		\includegraphics[width=\textwidth]{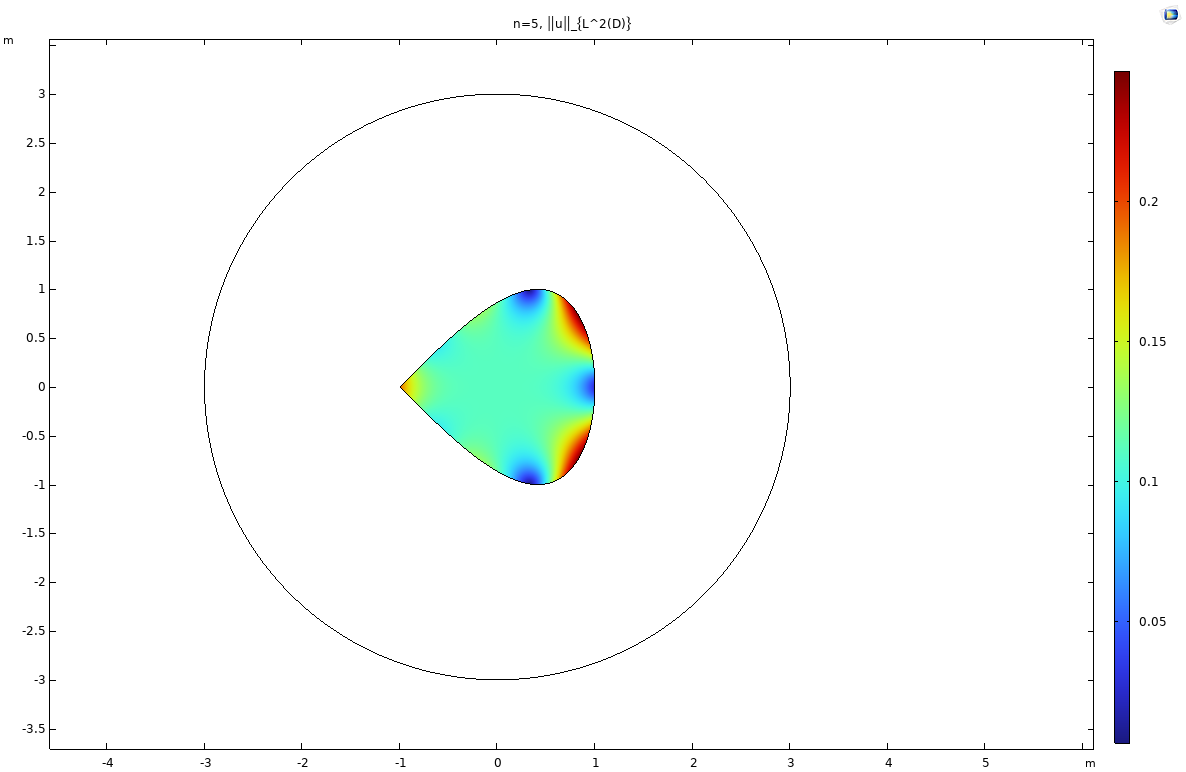} % 仅设置宽度，高度自动按比例计算
  		\caption{$n=5$,\ $\|u\|_{L^2(D)^2}$}
  		%\label{fig:1}
  	\end{subfigure}% 取消换行空格
  	\hfill% 自动分配剩余空间作为间距
  	\begin{subfigure}[t]{0.3\textwidth}
  		\centering
  		\includegraphics[width=\textwidth]{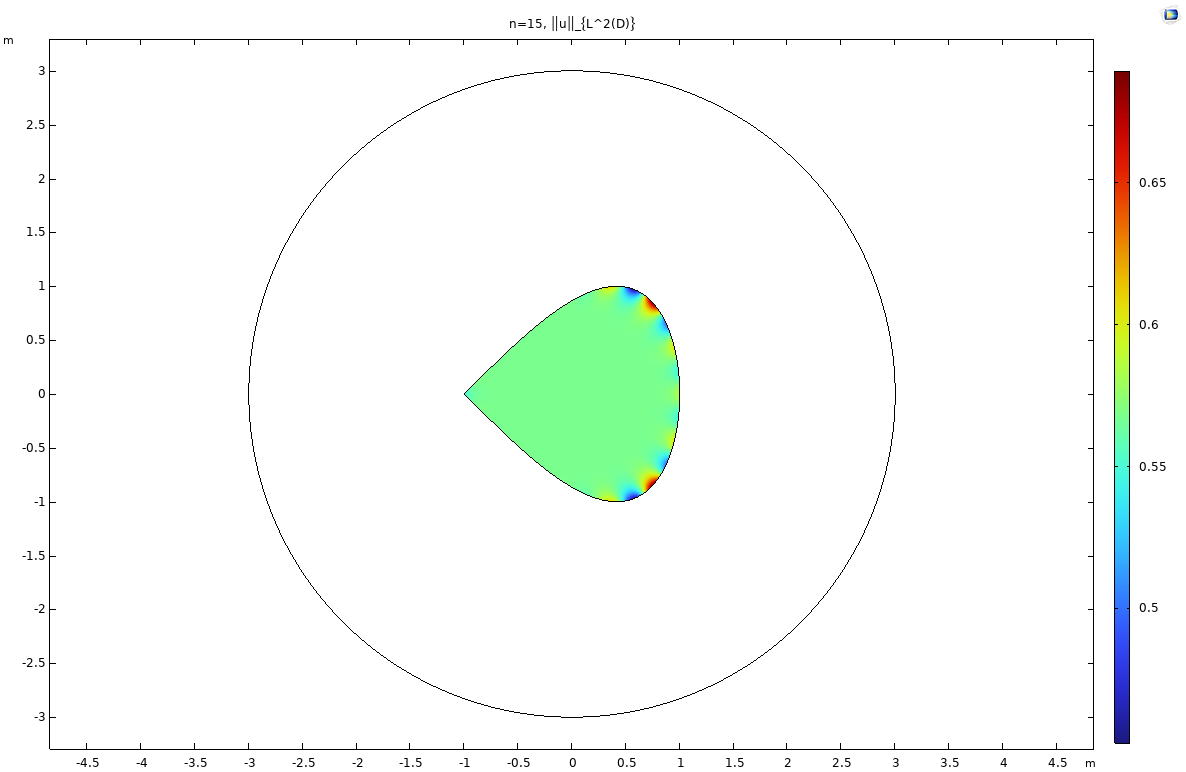}
  		\caption{$n=15$,\ $\|u\|_{L^2(D)^2}$}
  		%\label{fig:2}
  	\end{subfigure}%
  	\hfill%
  	\begin{subfigure}[t]{0.3\textwidth}
  		\centering
  		\includegraphics[width=\textwidth]{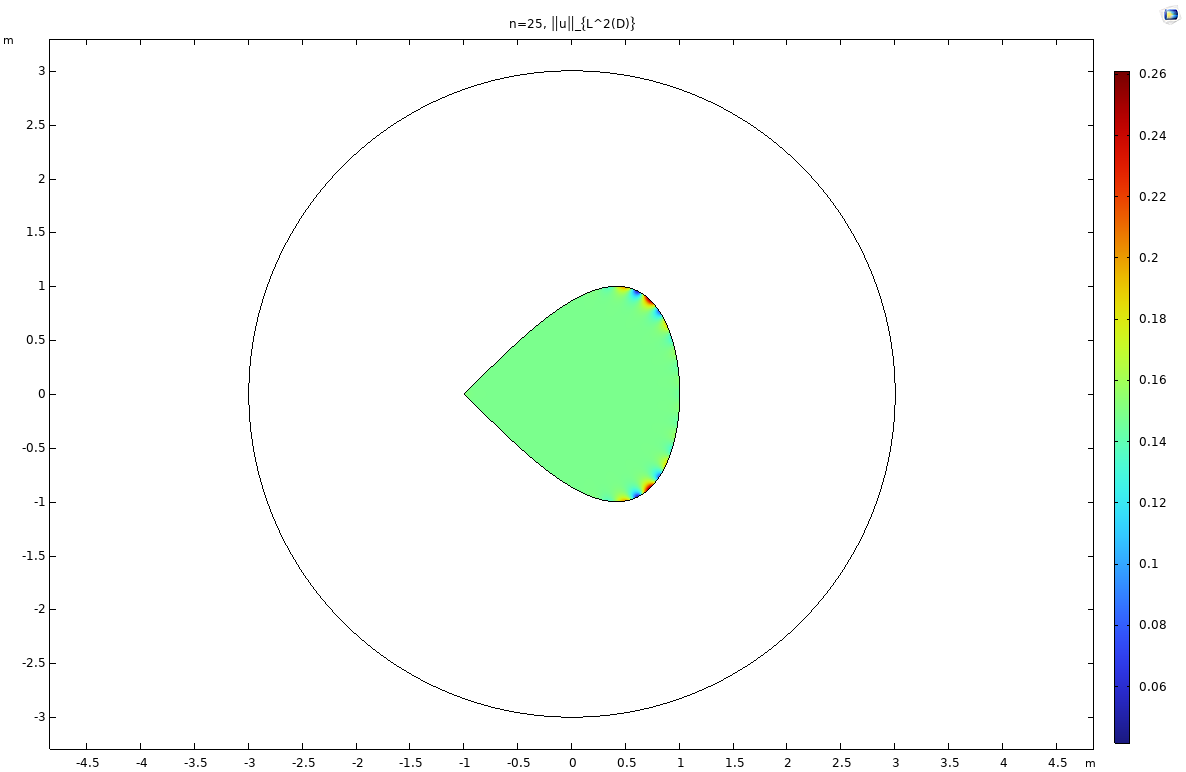}
  		\caption{$n=25$,\ $\|u\|_{L^2(D)^2}$}
  		%\label{fig:3}
  	\end{subfigure}
  	\caption{$\|u\|_{L^2(D)}$ for the incident wave $\mathbf{u}^i$ with different indices $n$ ($n=5,15,25$).}
  	\label{fig:9}
  \end{figure}

  \begin{figure}
  	\centering
  	% 只控制宽度，不设置高度，保持原始比例
  	\begin{subfigure}[t]{0.3\textwidth}
  		\centering
  		\includegraphics[width=\textwidth]{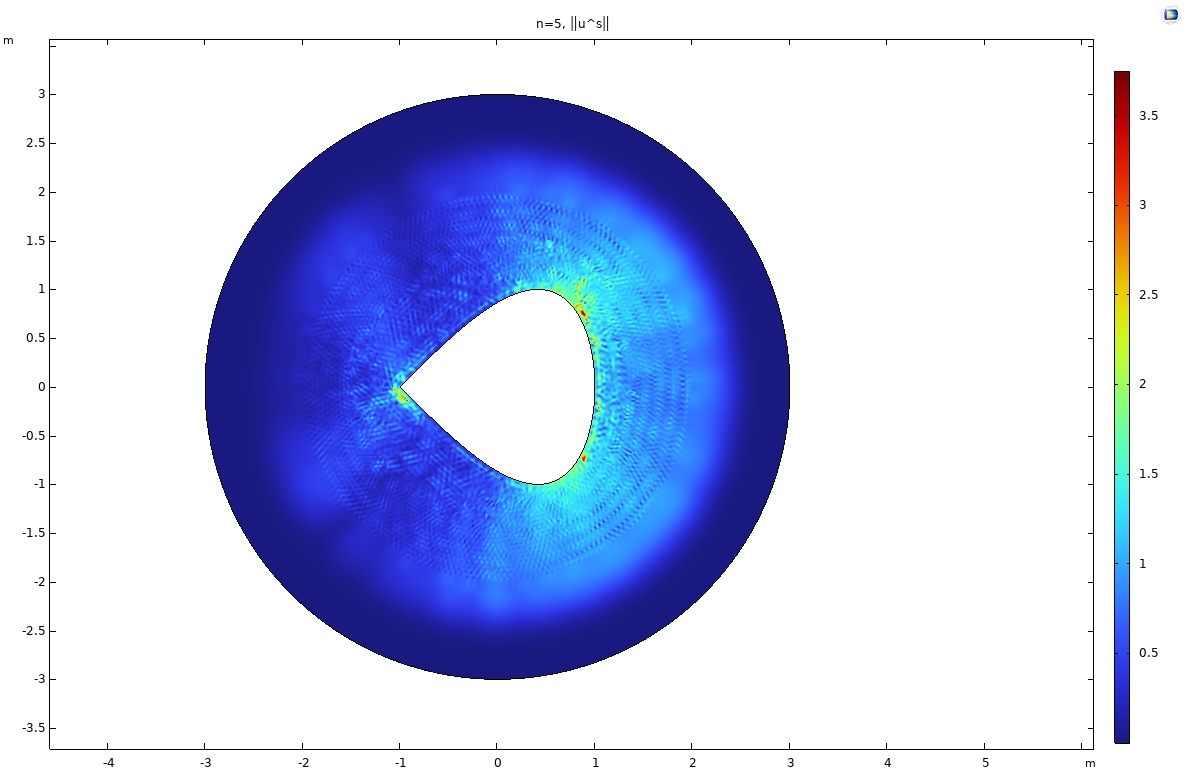} % 仅设置宽度，高度自动按比例计算
  		\caption{$n=5$,\ $\|\mathbf{u}^s\|_{L^2\left(B_2 \backslash \overline D\right)^2}$}
  		%\label{fig:1}
  	\end{subfigure}% 取消换行空格
  	\hfill% 自动分配剩余空间作为间距
  	\begin{subfigure}[t]{0.3\textwidth}
  		\centering
  		\includegraphics[width=\textwidth]{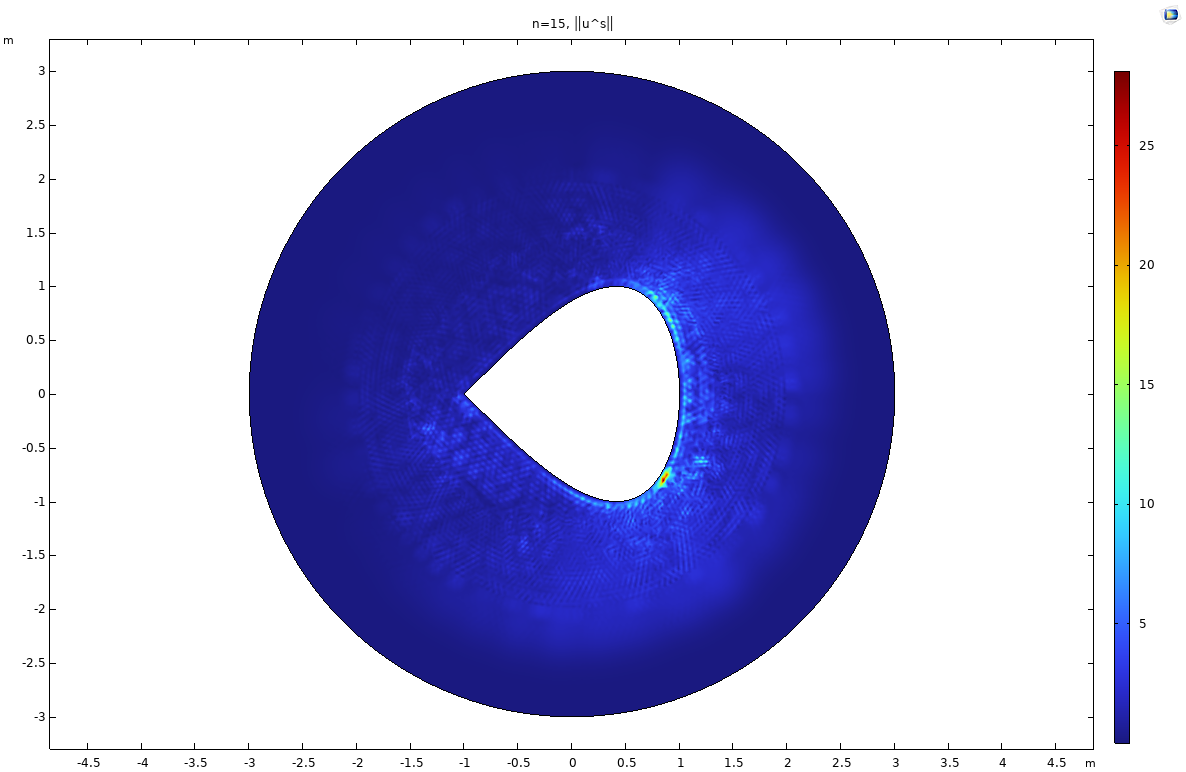}
  		\caption{$n=15$,\ $\|\mathbf{u}^s\|_{L^2\left(B_2 \backslash \overline D\right)^2}$}
  		%\label{fig:2}
  	\end{subfigure}%
  	\hfill%
  	\begin{subfigure}[t]{0.3\textwidth}
  		\centering
  		\includegraphics[width=\textwidth]{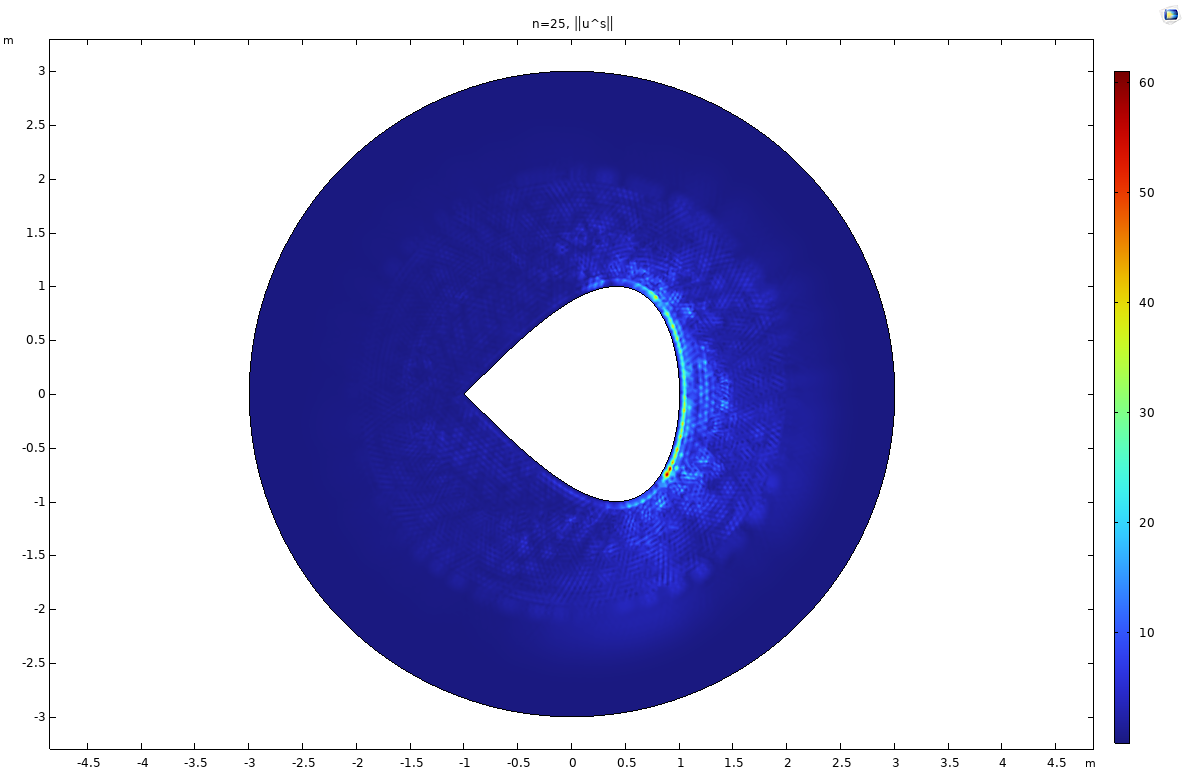}
  		\caption{$n=25$,\ $\|\mathbf{u}^s\|_{L^2\left(B_2 \backslash \overline D\right)^2}$}
  		%\label{fig:3}
  	\end{subfigure}
  	\caption{ $\|\mathbf{u}^s\|_{L^2\left(B_2 \backslash \overline D\right)^2}$ for the incident wave $\mathbf{u}^i$ with different indices $n$ ($n=5,15,25$).}
  	\label{fig:10}
  \end{figure}

   \begin{figure}
   	\centering
   	% 只控制宽度，不设置高度，保持原始比例
   	\begin{subfigure}[t]{0.3\textwidth}
   		\centering
   		\includegraphics[width=\textwidth]{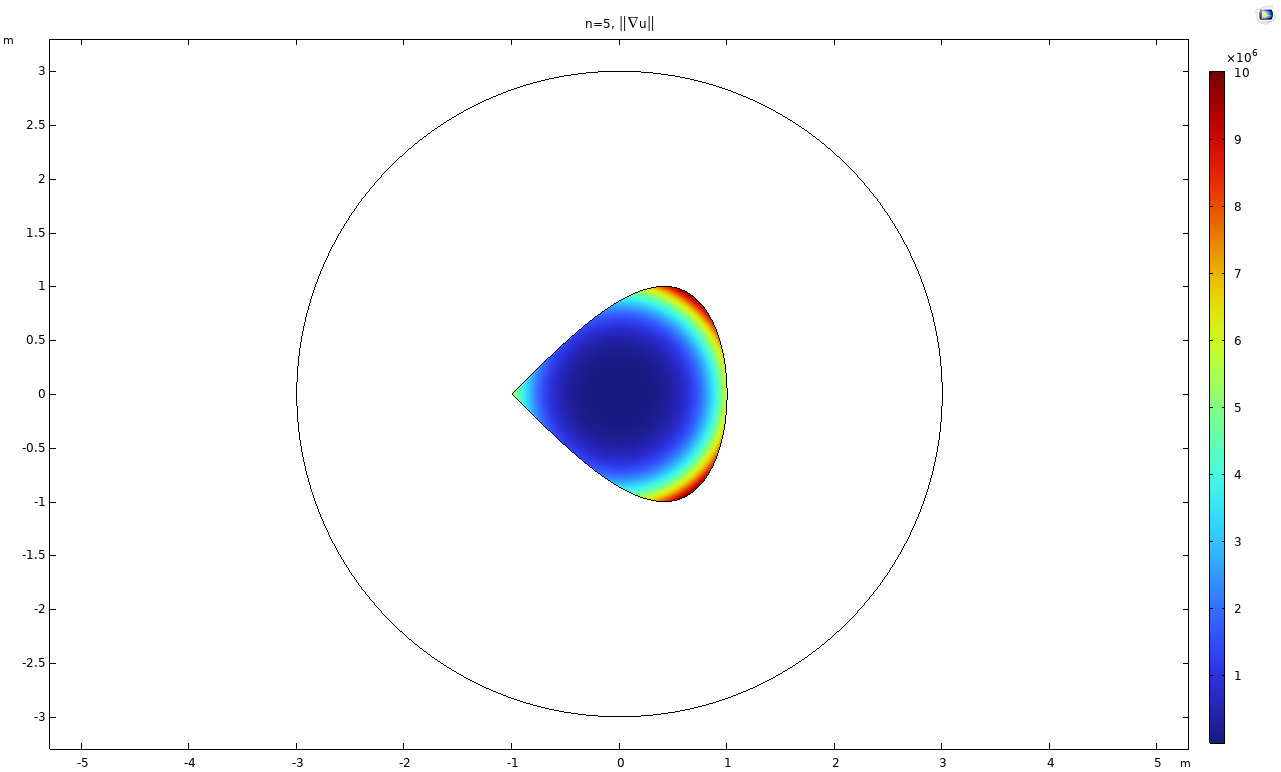} % 仅设置宽度，高度自动按比例计算
   		\caption{$n=5$,\ $\| \nabla u\|_{L^2\left(D\right)^2}$}
   		%\label{fig:1}
   	\end{subfigure}% 取消换行空格
   	\hfill% 自动分配剩余空间作为间距
   	\begin{subfigure}[t]{0.3\textwidth}
   		\centering
   		\includegraphics[width=\textwidth]{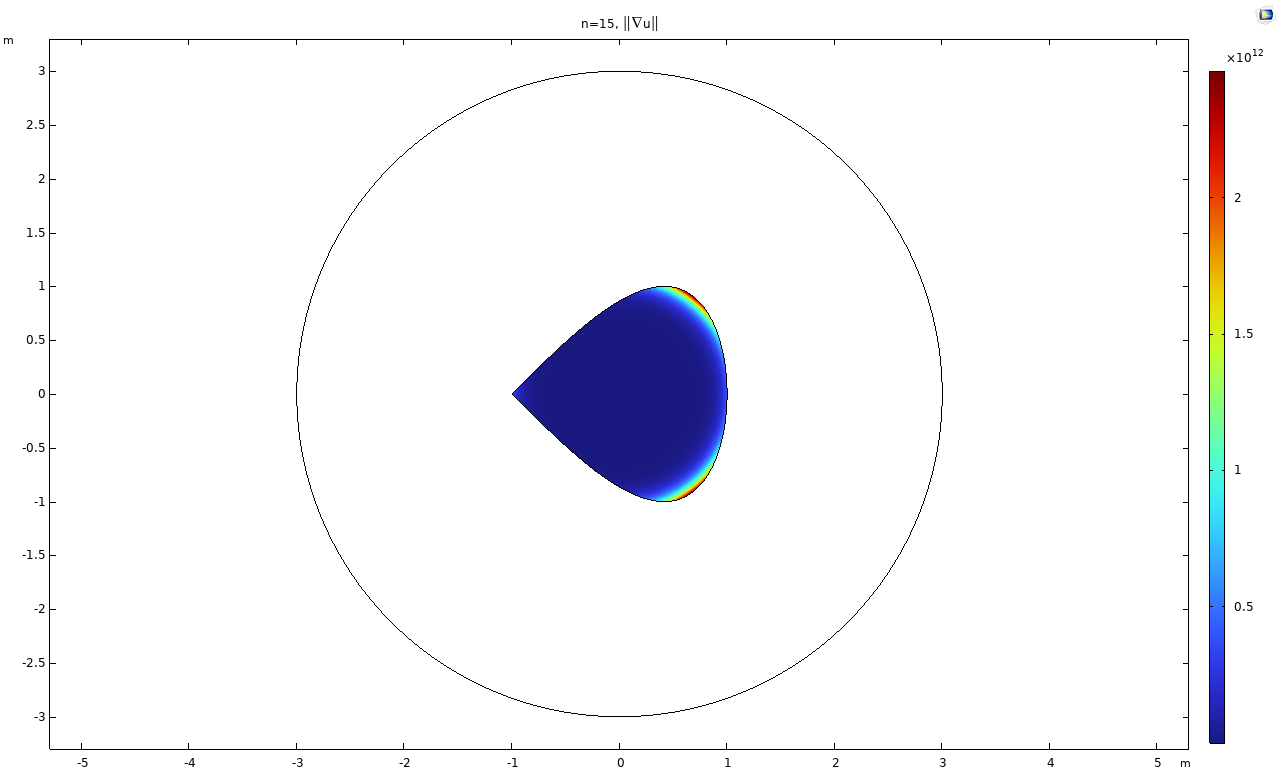}
   		\caption{$n=15$,\ $\| \nabla u\|_{L^2\left(D\right)^2}$}
   		%\label{fig:2}
   	\end{subfigure}%
   	\hfill%
   	\begin{subfigure}[t]{0.3\textwidth}
   		\centering
   		\includegraphics[width=\textwidth]{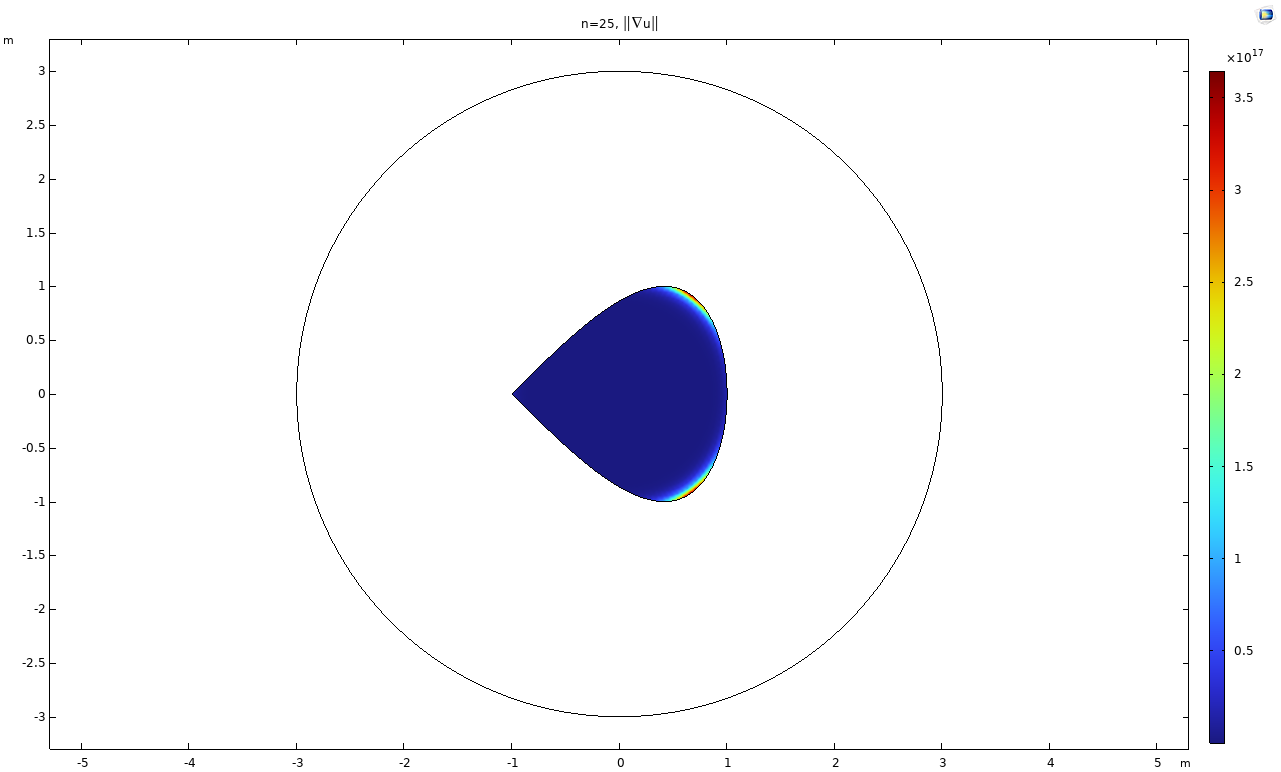}
   		\caption{$n=25$,\ $\| \nabla u\|_{L^2\left(D\right)^2}$}
   		%\label{fig:3}
   	\end{subfigure}
   	\caption{$\| \nabla u \|_{L^2(D)^2}$ for the incident wave $\mathbf{u}^i$ with different indices $n$ ($n=5,15,25$).}
   	\label{fig:11}
   \end{figure}

  \begin{figure}
  	\centering
  	% 只控制宽度，不设置高度，保持原始比例
  	\begin{subfigure}[t]{0.3\textwidth}
  		\centering
  		\includegraphics[width=\textwidth]{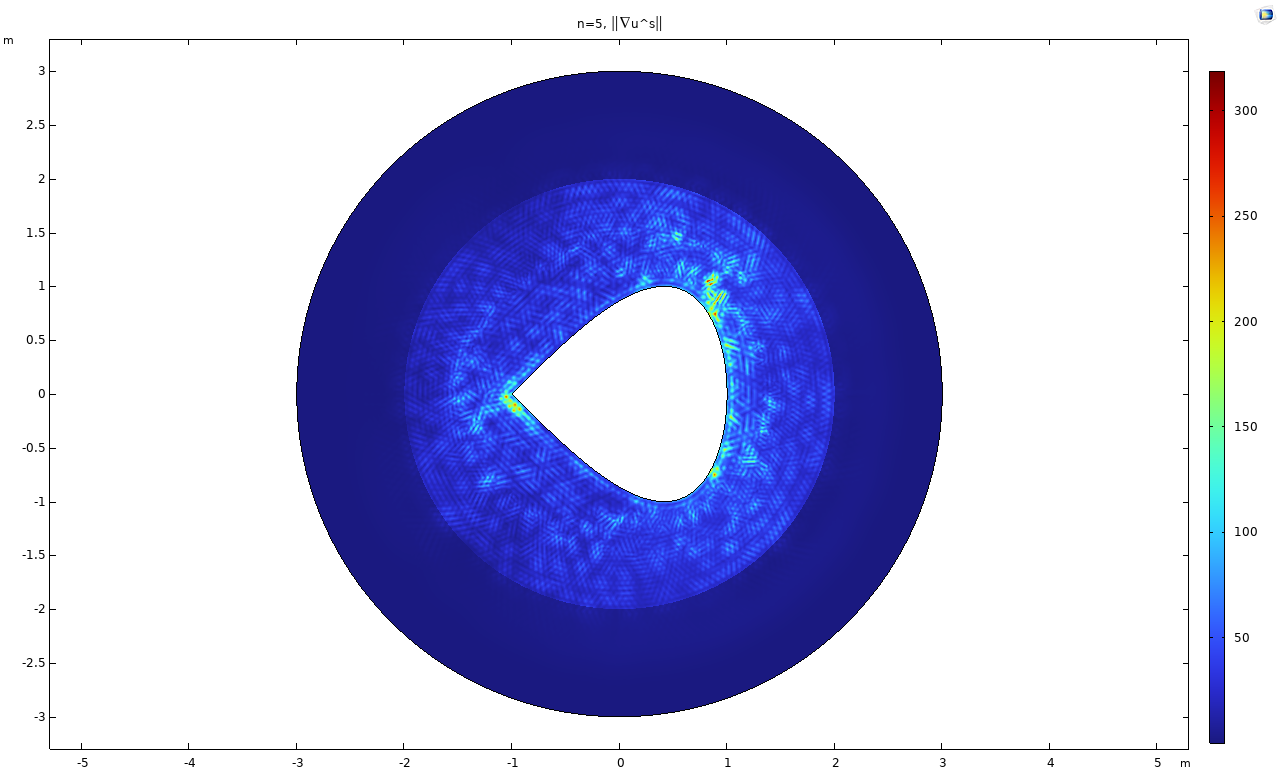} % 仅设置宽度，高度自动按比例计算
  		\caption{$n=5$,\ $\| \nabla \mathbf{u}^s\|_{L^2\left(B_2 \backslash \overline D\right)^2}$}
  		%\label{fig:1}
  	\end{subfigure}% 取消换行空格
  	\hfill% 自动分配剩余空间作为间距
  	\begin{subfigure}[t]{0.3\textwidth}
  		\centering
  		\includegraphics[width=\textwidth]{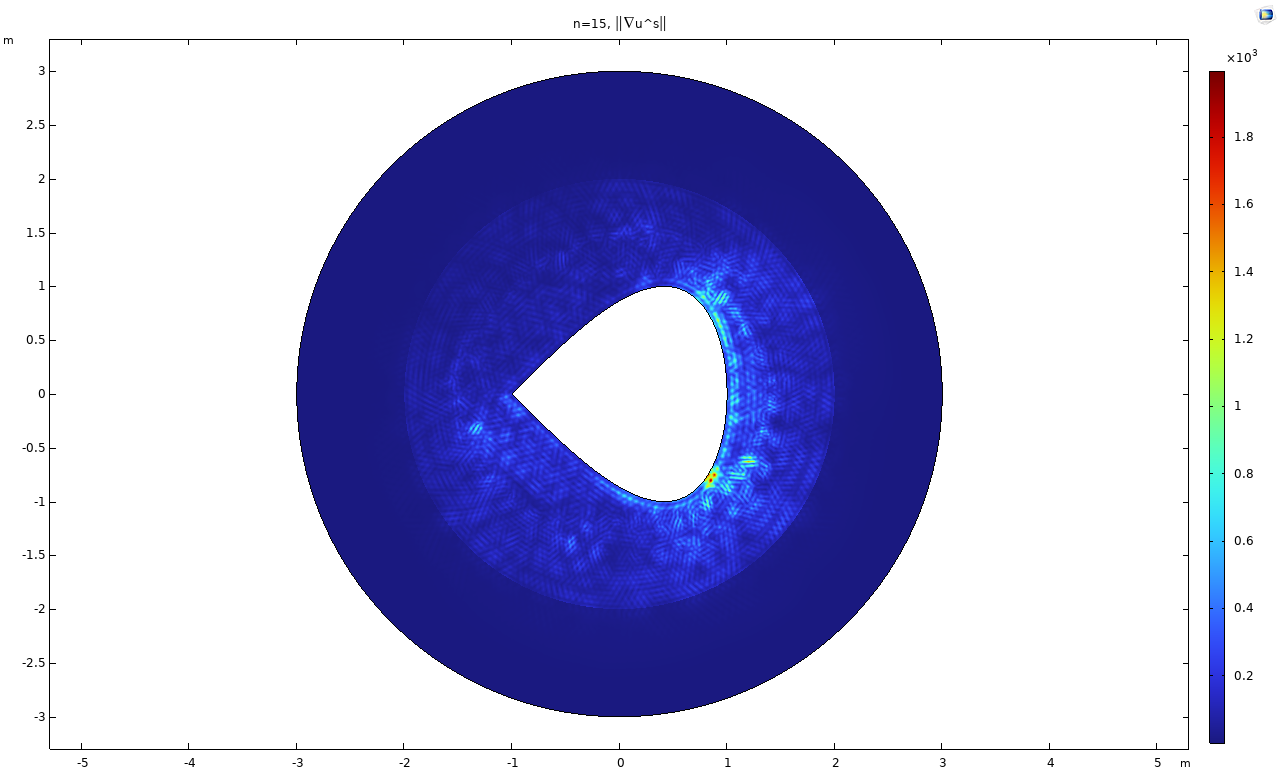}
  		\caption{$n=15$,\ $\| \nabla \mathbf{u}^s\|_{L^2\left(B_2 \backslash \overline D\right)^2}$}
  		%\label{fig:2}
  	\end{subfigure}%
  	\hfill%
  	\begin{subfigure}[t]{0.3\textwidth}
  		\centering
  		\includegraphics[width=\textwidth]{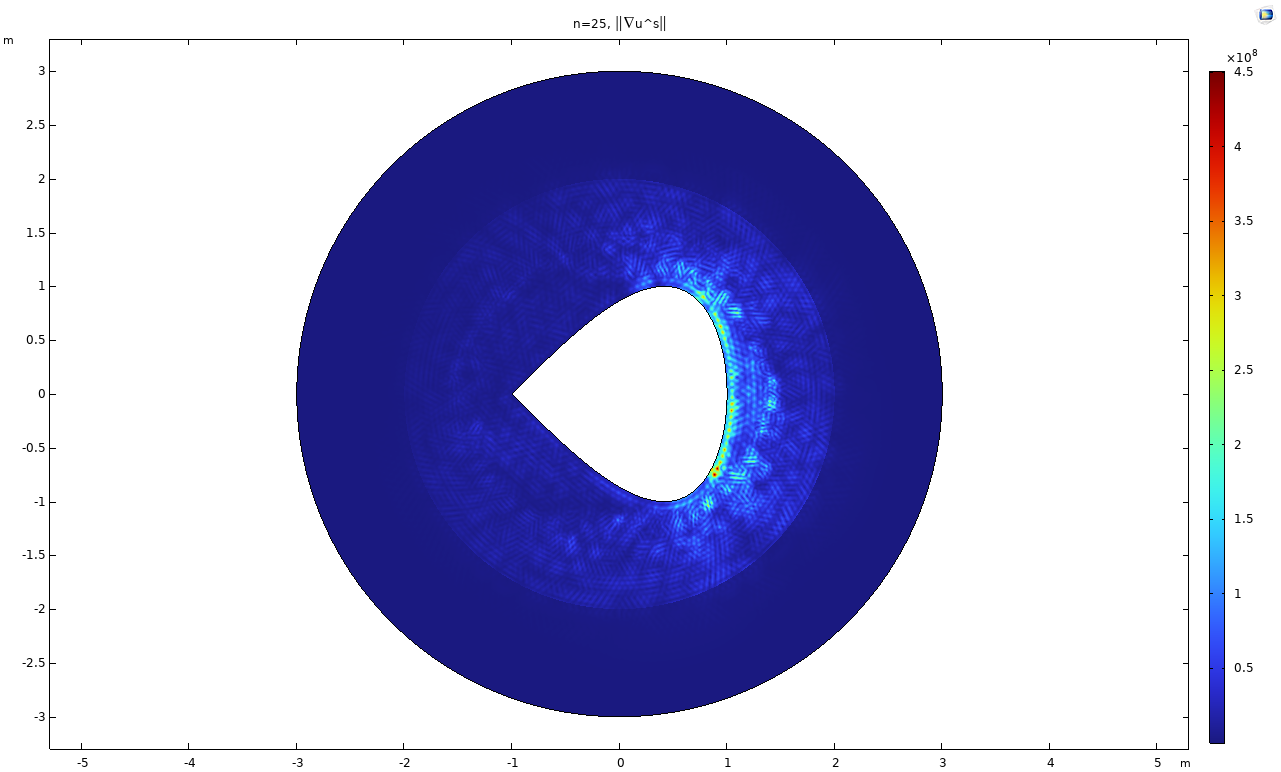}
  		\caption{$n=25$,\ $\| \nabla \mathbf{u}^s\|_{L^2\left(B_2 \backslash \overline D\right)^2}$}
  		%\label{fig:3}
  	\end{subfigure}
  	\caption{$\| \nabla \mathbf{u}^s\|_{L^2\left(B_2 \backslash \overline D\right)^2}$ for the incident wave $\mathbf{u}^i$ with different indices $n$ ($n=5,15,25$).}
  	\label{fig:12}
  \end{figure}

%\subsection{$\partial D$ is a apple-shaped curve} 

 \begin{figure}
 	\centering
 	% 只控制宽度，不设置高度，保持原始比例
 	\begin{subfigure}[t]{0.3\textwidth}
 		\centering
 		\includegraphics[width=\textwidth]{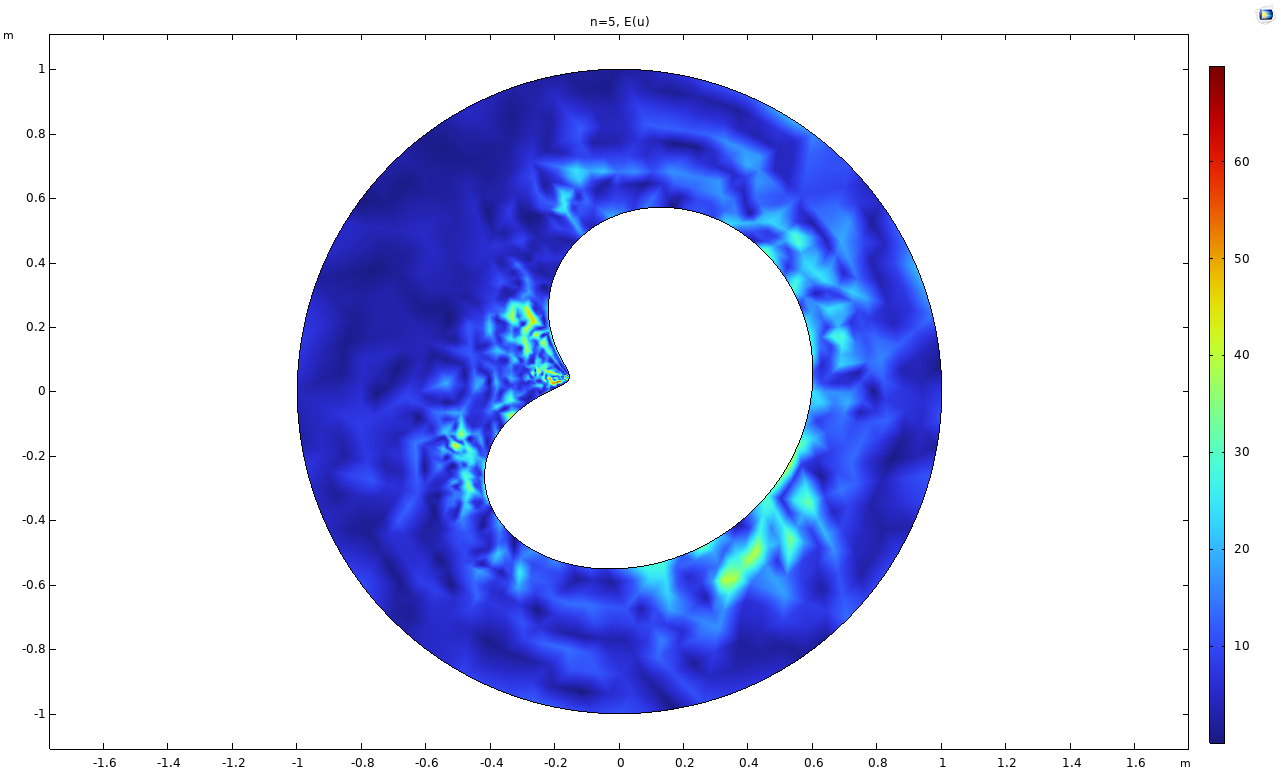} % 仅设置宽度，高度自动按比例计算
 		\caption{$n=5$,\ ${\mathcal E}(\mathbf{u})$}
 		%\label{fig:1}
 	\end{subfigure}% 取消换行空格
 	\hfill% 自动分配剩余空间作为间距
 	\begin{subfigure}[t]{0.3\textwidth}
 		\centering
 		\includegraphics[width=\textwidth]{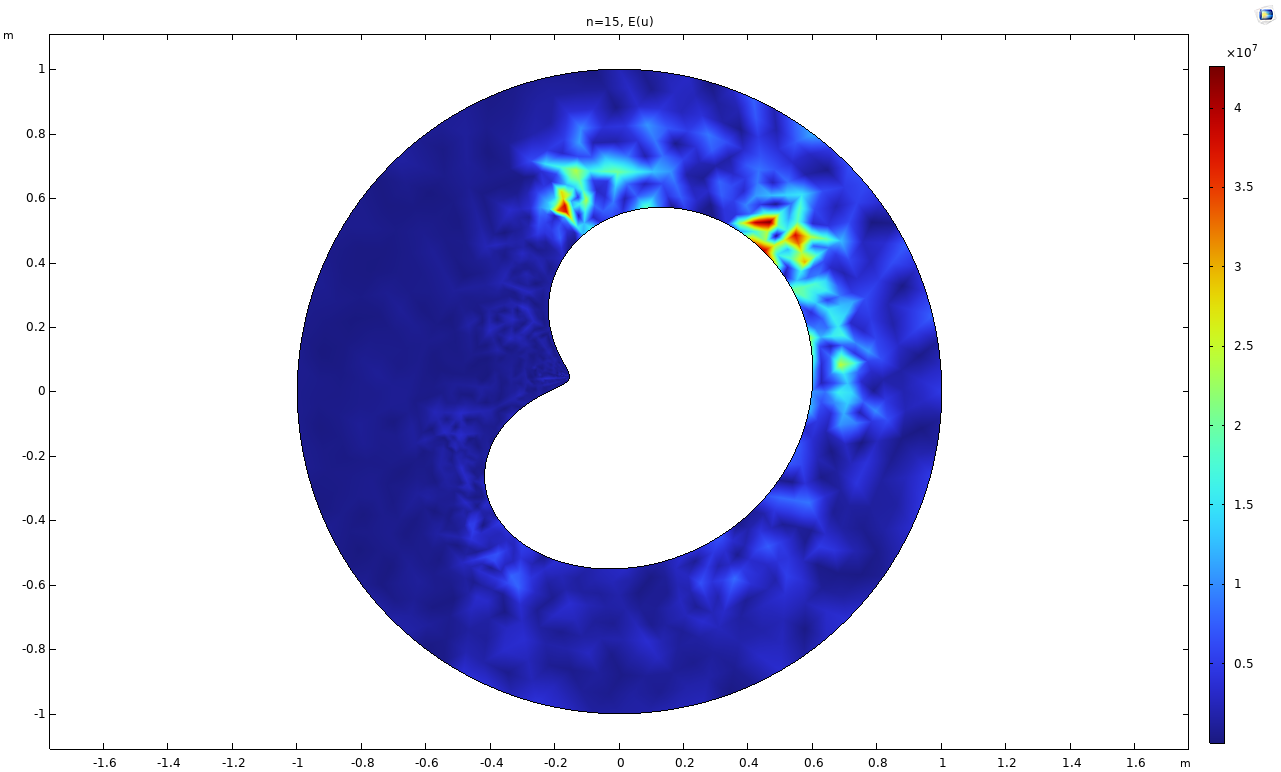}
 		\caption{$n=15$,\ ${\mathcal E}(\mathbf{u})$}
 		%\label{fig:2}
 	\end{subfigure}%
 	\hfill%
 	\begin{subfigure}[t]{0.3\textwidth}
 		\centering
 		\includegraphics[width=\textwidth]{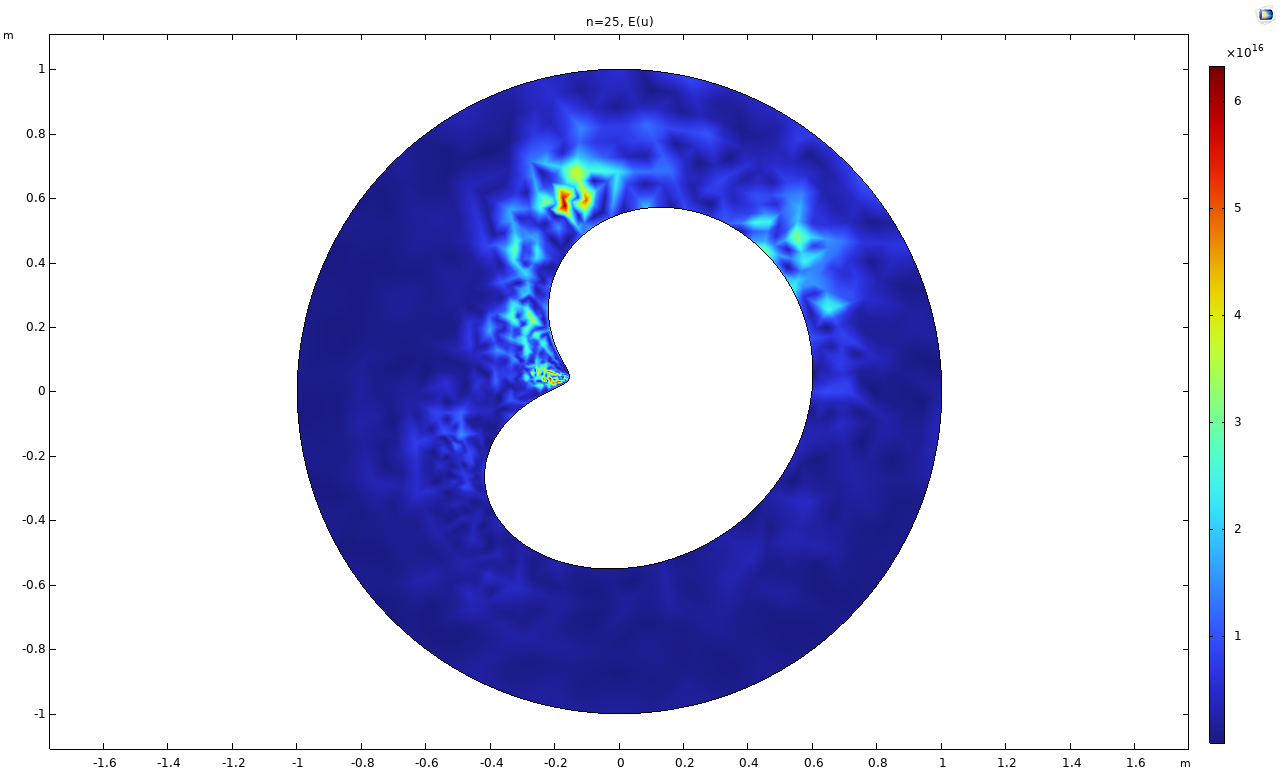}
 		\caption{$n=25$,\ ${\mathcal E}(\mathbf{u})$}
 		%\label{fig:3}
 	\end{subfigure}
 	\caption{ The stress $\mathcal{E}(\mathbf{u})$ of the exterior total field for the incident wave $\mathbf{u}^i$ with indices $n$ ($n=5, 15, 25$).}
 	\label{fig:18}
 \end{figure}

\begin{figure}
	\centering
	% 只控制宽度，不设置高度，保持原始比例
	\begin{subfigure}[t]{0.3\textwidth}
		\centering
		\includegraphics[width=\textwidth]{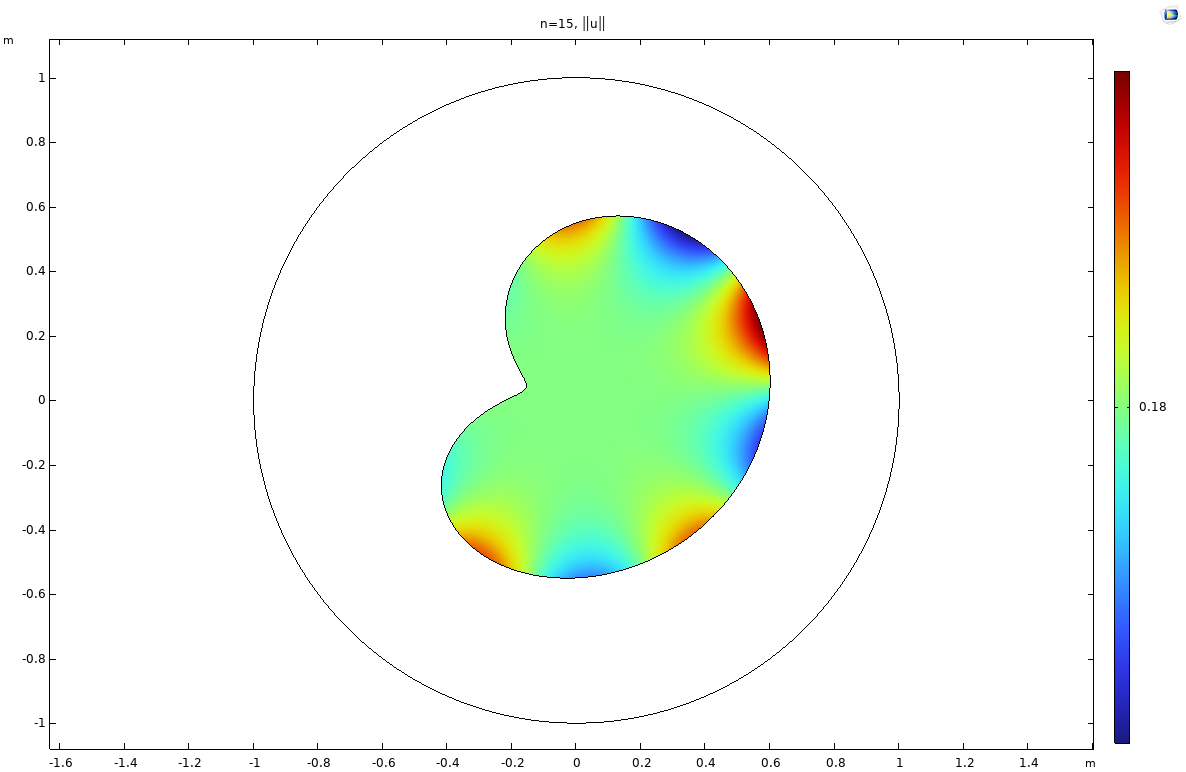} % 仅设置宽度，高度自动按比例计算
		\caption{$n=5$,\ $\|u\|_{L^2(D)^2}$}
		%\label{fig:1}
	\end{subfigure}% 取消换行空格
	\hfill% 自动分配剩余空间作为间距
	\begin{subfigure}[t]{0.3\textwidth}
		\centering
		\includegraphics[width=\textwidth]{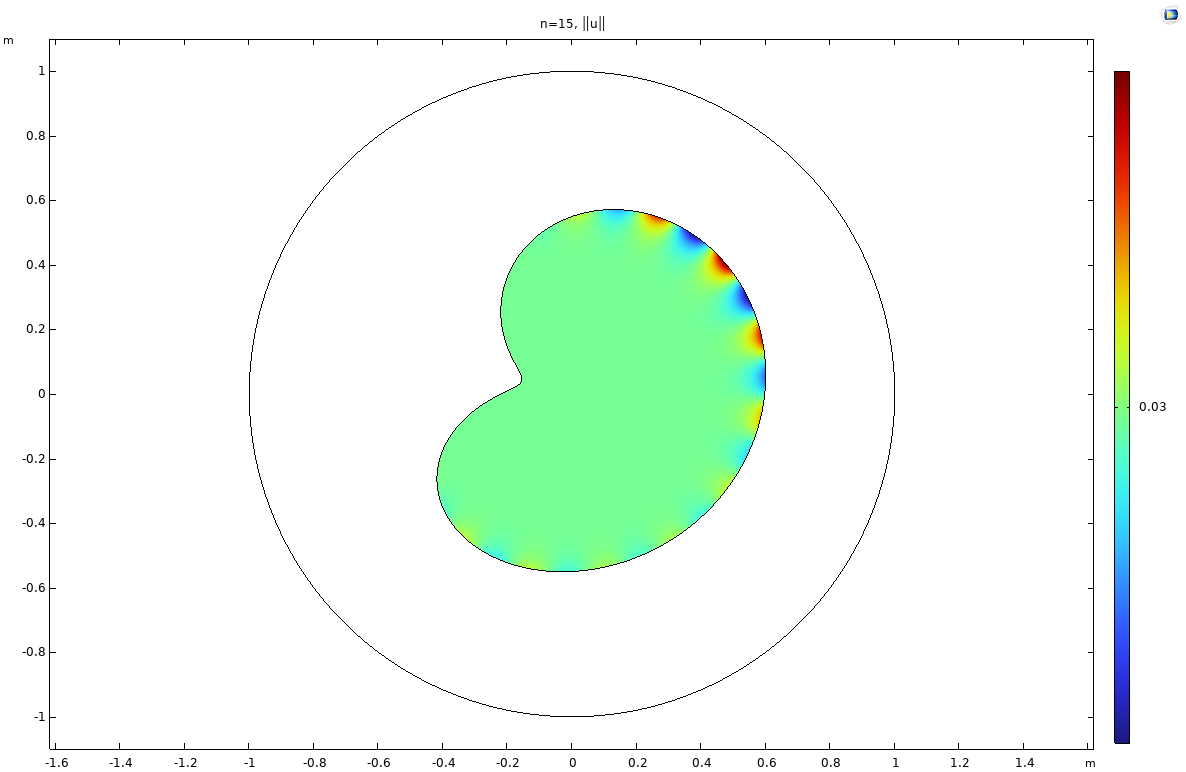}
		\caption{$n=15$,\ $\|u\|_{L^2(D)^2}$}
		%\label{fig:2}
	\end{subfigure}%
	\hfill%
	\begin{subfigure}[t]{0.3\textwidth}
		\centering
		\includegraphics[width=\textwidth]{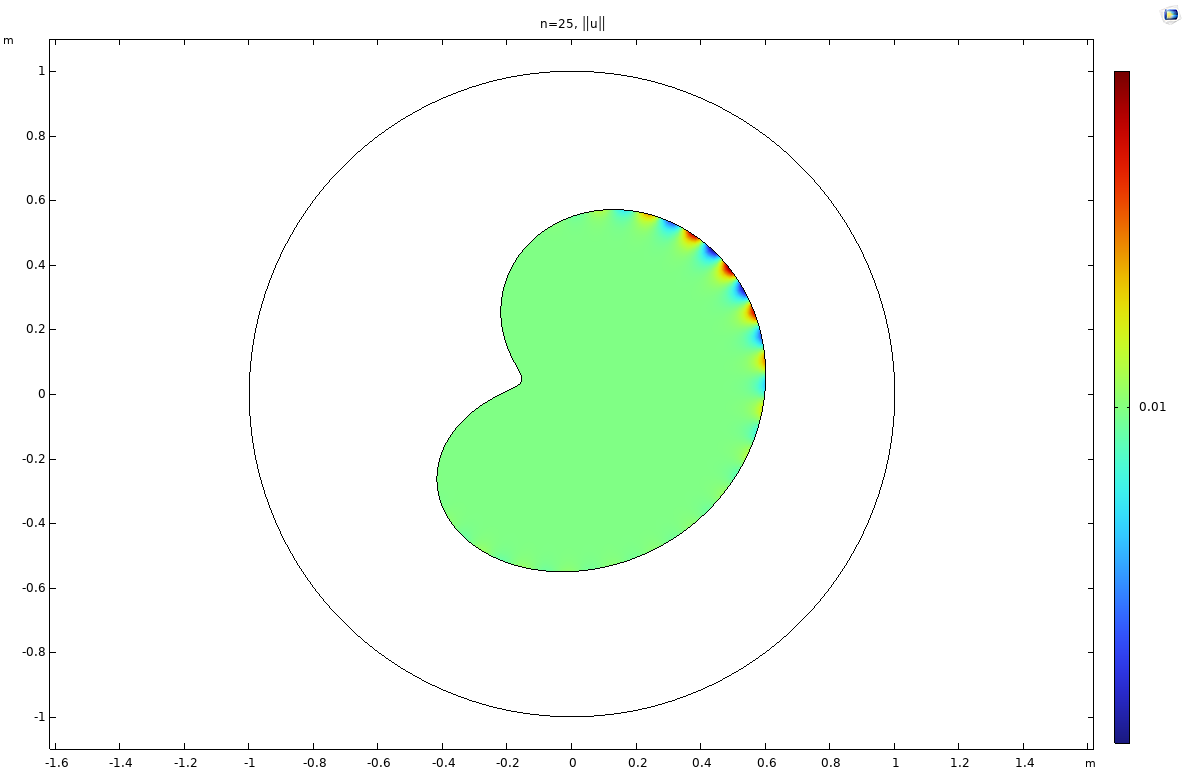}
		\caption{$n=25$,\ $\|u\|_{L^2(D)^2}$}
		%\label{fig:3}
	\end{subfigure}
	\caption{$\|u\|_{L^2(D)}$ for the incident wave $\mathbf{u}^i$ with different indices $n$ ($n=5,15,25$).}
	\label{fig:14}
\end{figure}

\begin{figure}
	\centering
	% 只控制宽度，不设置高度，保持原始比例
	\begin{subfigure}[t]{0.3\textwidth}
		\centering
		\includegraphics[width=\textwidth]{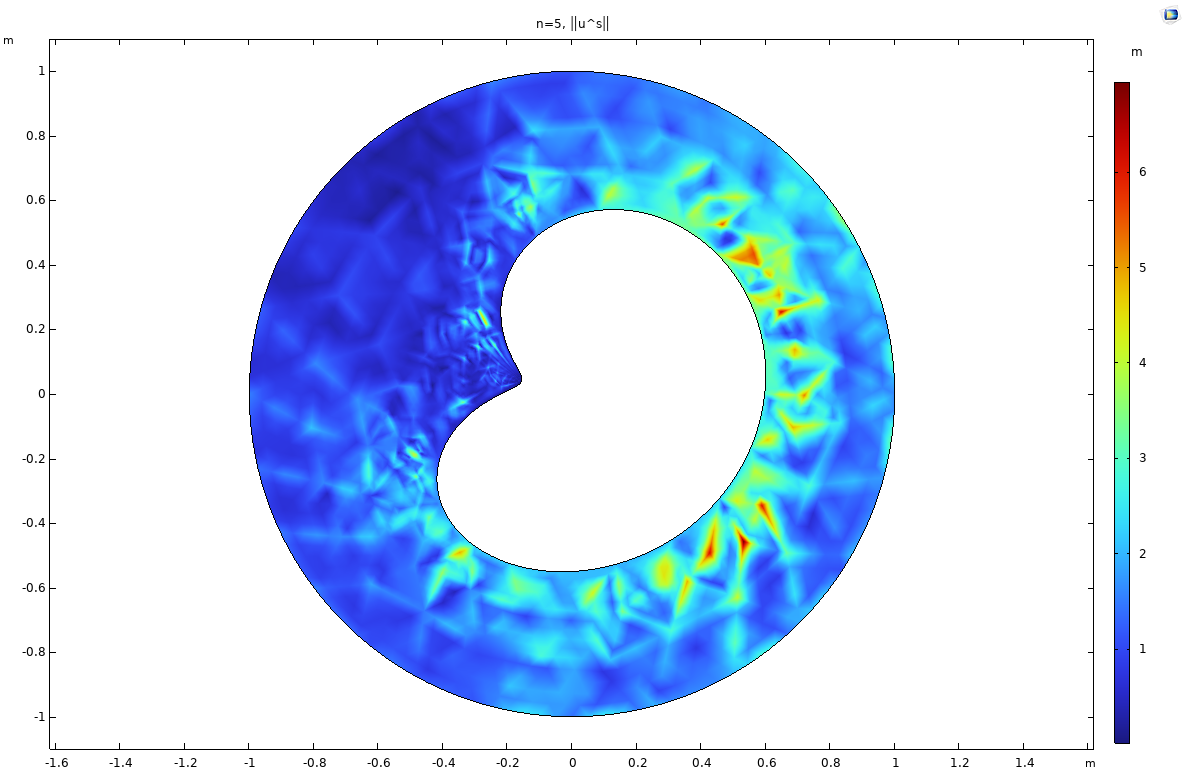} % 仅设置宽度，高度自动按比例计算
		\caption{$n=5$,\ $\|\mathbf{u}^s\|_{L^2\left(B_2 \backslash \overline D\right)^2}$}
		%\label{fig:1}
	\end{subfigure}% 取消换行空格
	\hfill% 自动分配剩余空间作为间距
	\begin{subfigure}[t]{0.3\textwidth}
		\centering
		\includegraphics[width=\textwidth]{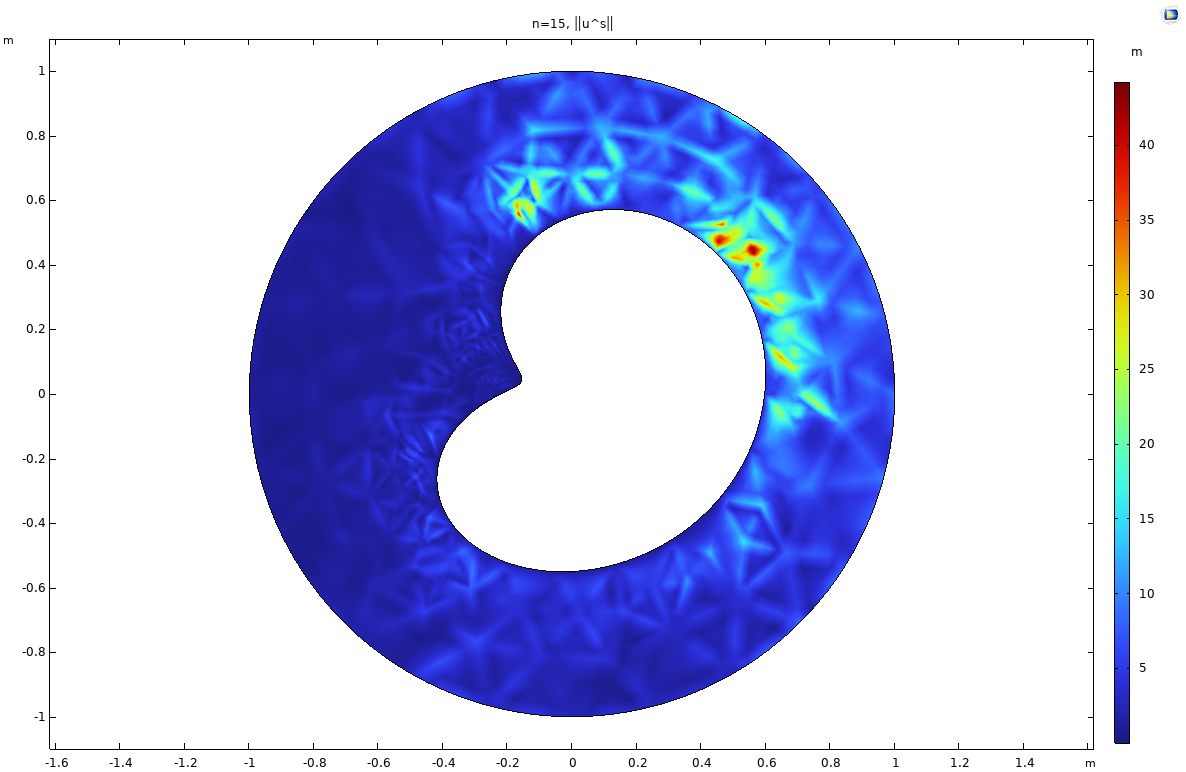}
		\caption{$n=15$,\ $\|\mathbf{u}^s\|_{L^2\left(B_2 \backslash \overline D\right)^2}$}
		%\label{fig:2}
	\end{subfigure}%
	\hfill%
	\begin{subfigure}[t]{0.3\textwidth}
		\centering
		\includegraphics[width=\textwidth]{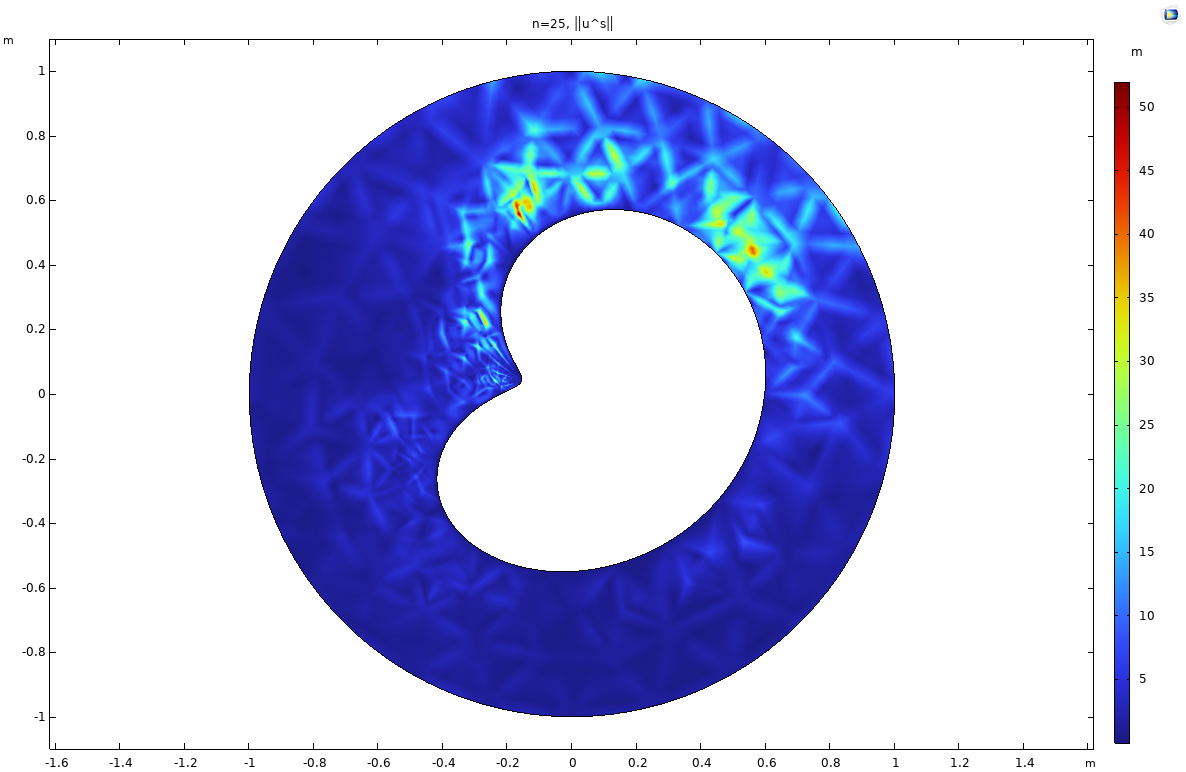}
		\caption{$n=25$,\ $\|\mathbf{u}^s\|_{L^2\left(B_2 \backslash \overline D\right)^2}$}
		%\label{fig:3}
	\end{subfigure}
	\caption{ $\|\mathbf{u}^s\|_{L^2\left(B_2 \backslash \overline D\right)^2}$ for the incident wave $\mathbf{u}^i$ with different indices $n$ ($n=5,15,25$).}
	\label{fig:15}
\end{figure}

\begin{figure}
	\centering
	% 只控制宽度，不设置高度，保持原始比例
	\begin{subfigure}[t]{0.3\textwidth}
		\centering
		\includegraphics[width=\textwidth]{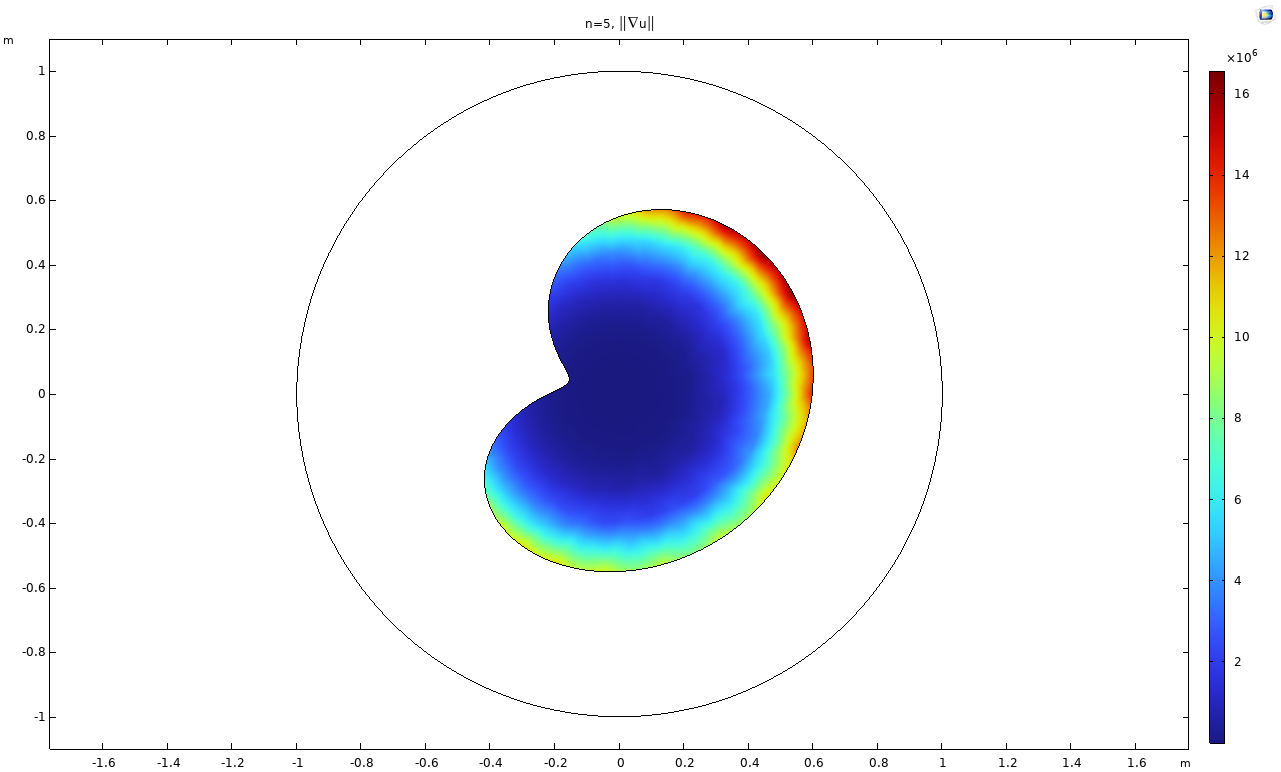} % 仅设置宽度，高度自动按比例计算
		\caption{$n=5$,\ $\| \nabla u\|_{L^2\left(D\right)^2}$}
		%\label{fig:1}
	\end{subfigure}% 取消换行空格
	\hfill% 自动分配剩余空间作为间距
	\begin{subfigure}[t]{0.3\textwidth}
		\centering
		\includegraphics[width=\textwidth]{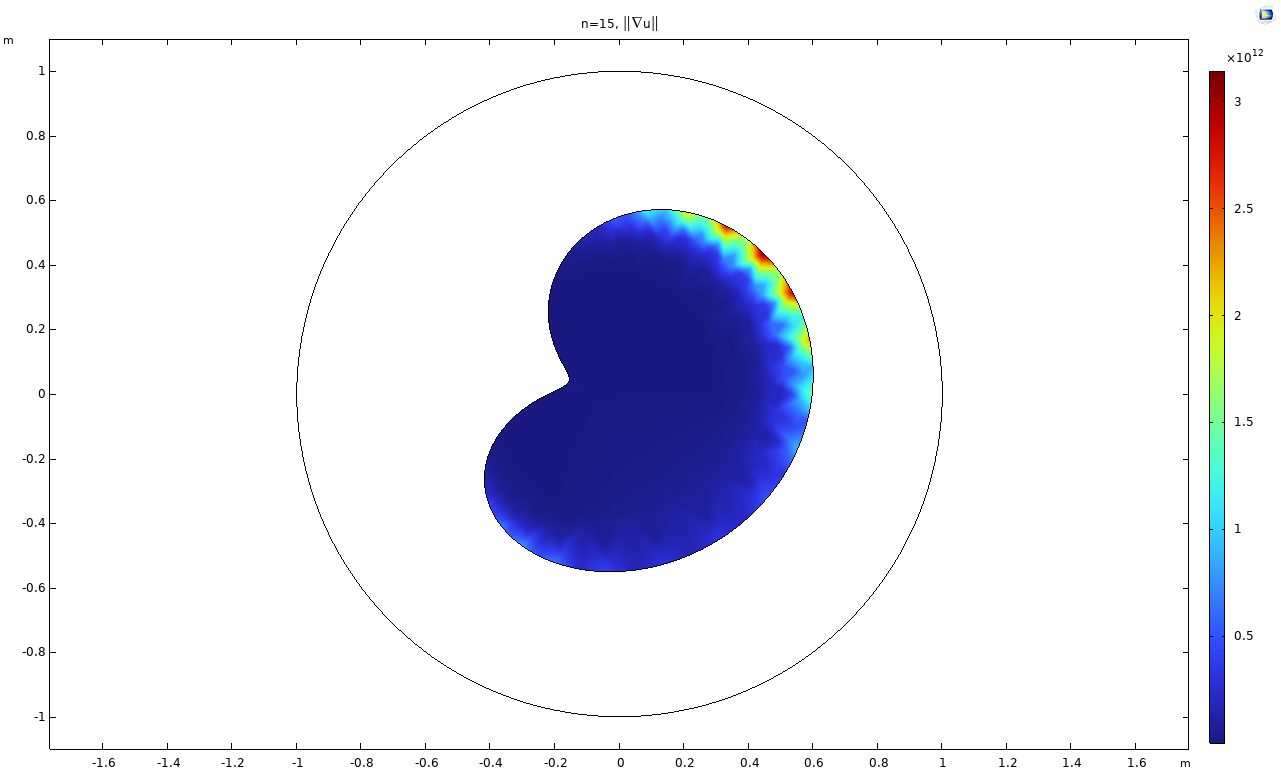}
		\caption{$n=15$,\ $\| \nabla u\|_{L^2\left(D\right)^2}$}
		%\label{fig:2}
	\end{subfigure}%
	\hfill%
	\begin{subfigure}[t]{0.3\textwidth}
		\centering
		\includegraphics[width=\textwidth]{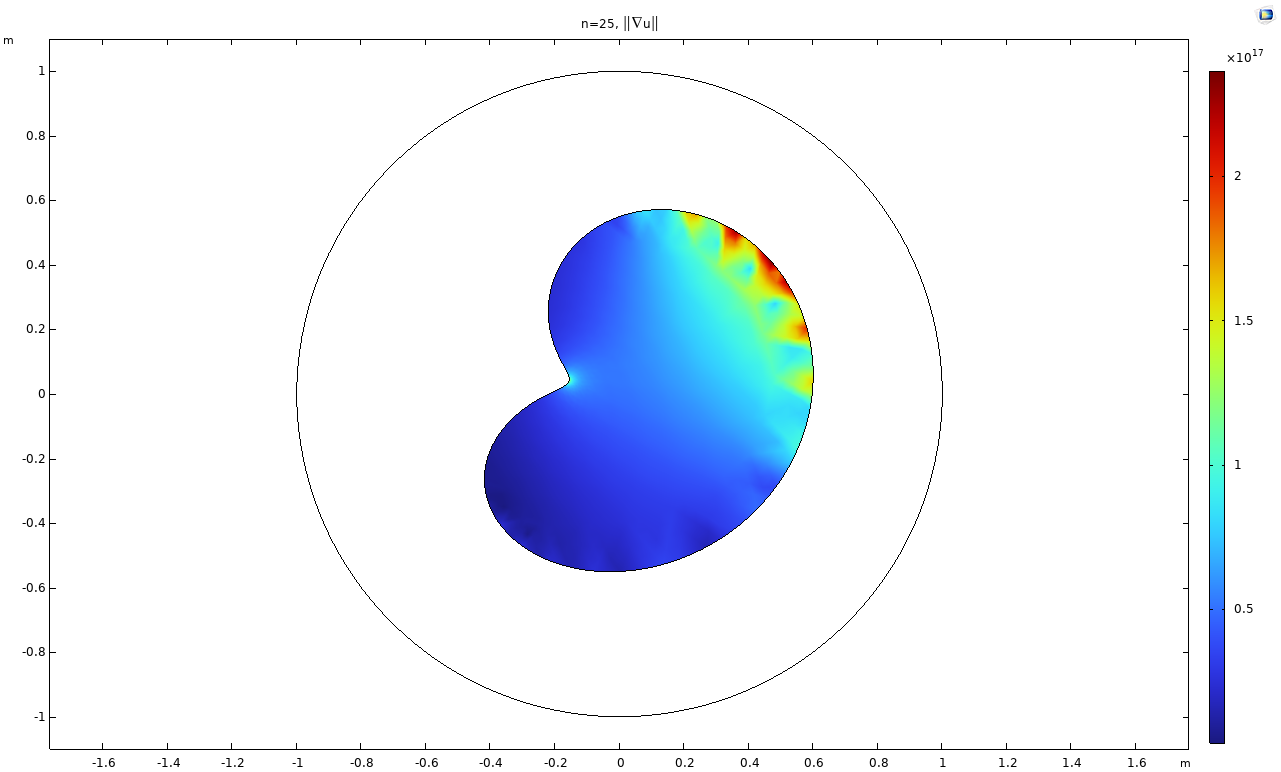}
		\caption{$n=25$,\ $\| \nabla u\|_{L^2\left(D\right)^2}$}
		%\label{fig:3}
	\end{subfigure}
	\caption{$\| \nabla u \|_{L^2(D)^2}$ for the incident wave $\mathbf{u}^i$ with different indices $n$ ($n=5,15,25$).}
	\label{fig:16}
\end{figure}

\begin{figure}
	\centering
	% 只控制宽度，不设置高度，保持原始比例
	\begin{subfigure}[t]{0.3\textwidth}
		\centering
		\includegraphics[width=\textwidth]{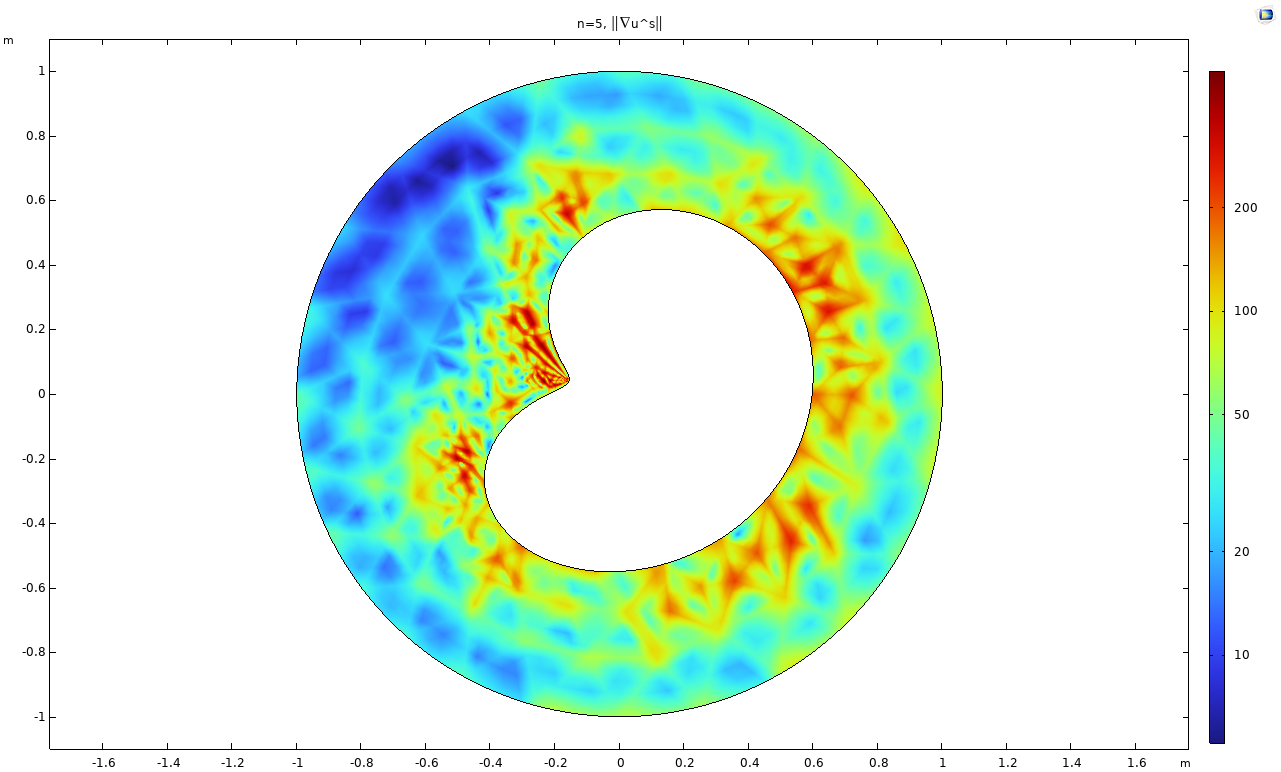} % 仅设置宽度，高度自动按比例计算
		\caption{$n=5$,\ $\| \nabla \mathbf{u}^s\|_{L^2\left(B_2 \backslash \overline D\right)^2}$}
		%\label{fig:1}
	\end{subfigure}% 取消换行空格
	\hfill% 自动分配剩余空间作为间距
	\begin{subfigure}[t]{0.3\textwidth}
		\centering
		\includegraphics[width=\textwidth]{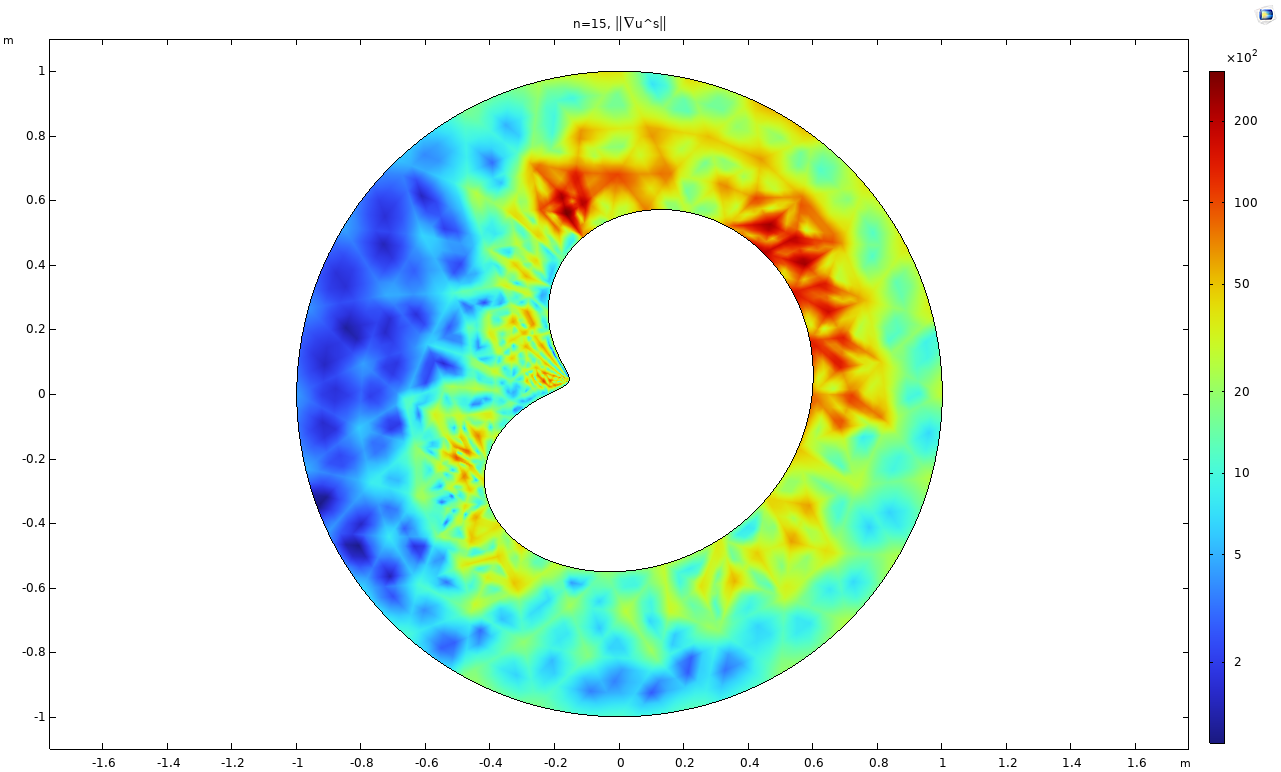}
		\caption{$n=15$,\ $\| \nabla \mathbf{u}^s\|_{L^2\left(B_2 \backslash \overline D\right)^2}$}
		%\label{fig:2}
	\end{subfigure}%
	\hfill%
	\begin{subfigure}[t]{0.3\textwidth}
		\centering
		\includegraphics[width=\textwidth]{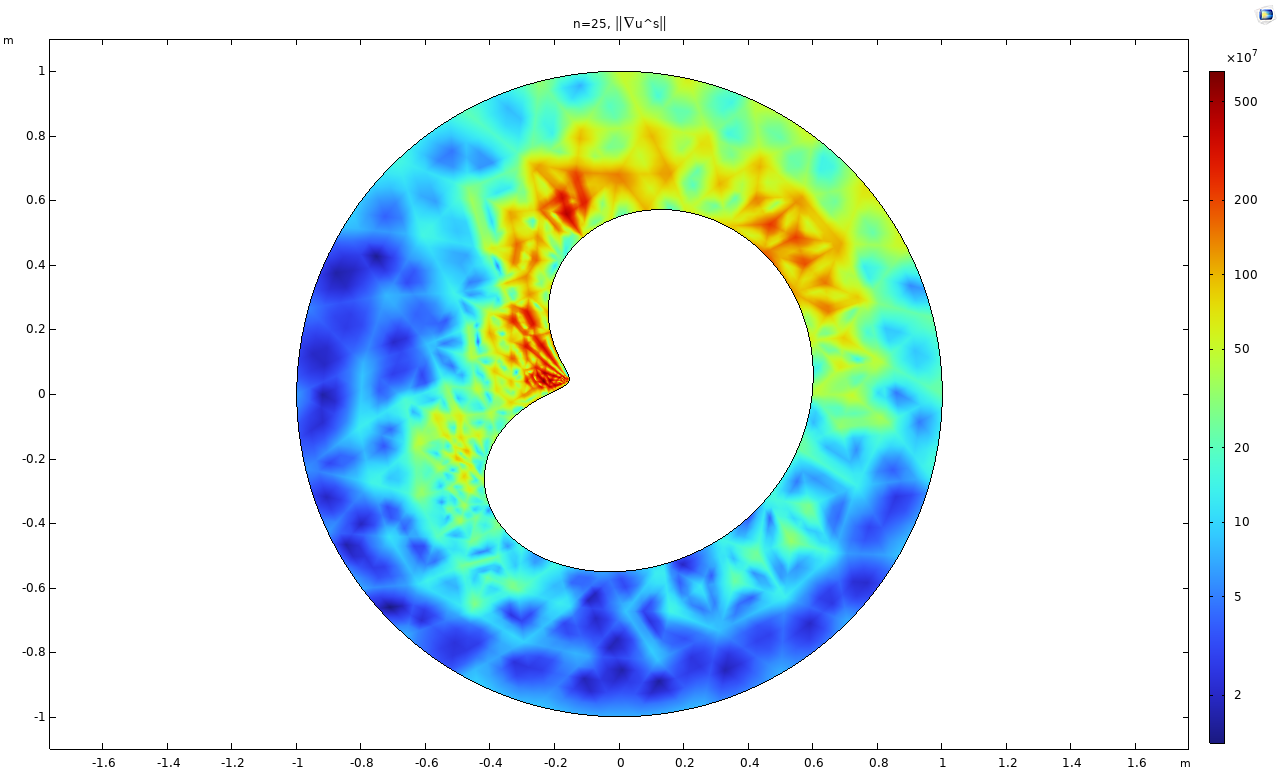}
		\caption{$n=25$,\ $\| \nabla \mathbf{u}^s\|_{L^2\left(B_2 \backslash \overline D\right)^2}$}
		%\label{fig:3}
	\end{subfigure}
	\caption{$\| \nabla \mathbf{u}^s\|_{L^2\left(B_2 \backslash \overline D\right)^2}$ for the incident wave $\mathbf{u}^i$ with different indices $n$ ($n=5,15,25$).}
	\label{fig:17}
\end{figure}

\subsection{$D$ is a unit ball}

\begin{exm}

In this example, we analyze an air bubble $D$, modeled as a unit sphere, using the physical parameter settings from \eqref{eq:parameter}. We consider the incident wave $\mathbf{u}^i$, defined in \eqref{eq:ui in 3D}, with a frequency $\omega = 0.1$ Hz and incident wave indices $n = 5, 15, 25$. As the index $n$ of the incident wave $\mathbf{u}^i$ increases, the results align with those observed for the unit disk in Examples~\ref{exm:1}--\ref{exm:3}. Figure \ref{fig:19} presents the $L^2$-norm of the interior total field $\mathbf{u}$, confirming this consistency. Additionally, Figure \ref{fig:20} displays the stress $\mathcal{E}(\mathbf{x})$ of the exterior total field $\mathbf{u}|_{B_2 \setminus \overline{D}}$. Figures~\ref{fig:19} and \ref{fig:20} demonstrate that the exterior total field $\mathbf{u}|_{\mathbb{R}^3 \setminus \overline{D}}$ and the interior total field $\mathbf{u}|_D$ exhibit more pronounced behavior near the region of the bubble $D$ in the direction of the vertically incident wave. Furthermore, the energy distribution of $\mathbf{u}|_D$ and $\mathbf{u}|_{\mathbb{R}^3 \setminus \overline{D}}$ displays symmetry and concentrates along the boundary, consistent with the mathematical analysis in Theorems~\ref{thm:th3.1} and \ref{thm:Eu definition in thm}.

Table~\ref{tab:22} reports numerical values of the stress  $E(\mathbf{u})$, as defined in \eqref{eq:Eus th4.2} of Theorem~\ref{thm:Eu definition in thm}, and $E(\mathbf{u}^s)$ for the exterior scattered field, alongside the estimated lower bound given by \eqref{eq:Eus th4.2} as follows
\begin{align}\label{eq:beta57}
\beta(n, \zeta_2, k, \lambda, \mu, \tau) = \frac{n^2 (\zeta_2 - 1) k^2}{27 \zeta_2 (\lambda + 2\mu)^2 \tau^{2n - 2}},
\end{align}
where $\tau = 0.33627$, as specified in \eqref{eq:5.4}. When the index $ n $ of the incident wave $ \mathbf{u}^i $ increases, the values of the stress $ E(\mathbf{u}) $ and $ E(\mathbf{u}^s) $ increase significantly. For $ n = 5 $, $ E(\mathbf{u}) $ is larger than both $ E(\mathbf{u}^s) $ and the lower bound $ \beta(n, \zeta_2, k, \lambda, \mu, \tau) $. For $ n = 15 $ and $ n = 25 $, $ E(\mathbf{u}) $ and $ E(\mathbf{u}^s) $ are equal and both are larger than $ \beta(n, \zeta_2, k, \lambda, \mu, \tau) $, with $ E(\mathbf{u}) $ and $ \beta(n, \zeta_2, k, \lambda, \mu, \tau) $ being of nearly the same order of magnitude. Hence, the lower bound $ \beta(n, \zeta_2, k, \lambda, \mu, \tau) $ is sharp, as confirmed by numerical validation. Furthermore, the contribution of $ E(\mathbf{u}^i) $ to $ E(\mathbf{u}) $ is negligible, which coincides with Remark \ref{rem:41rem}. These results, presented in Table~\ref{tab:22}, align with the analysis in Theorem~\ref{thm:Eu definition in thm}.

\end{exm}

\begin{figure}
	\centering
	% 只控制宽度，不设置高度，保持原始比例
	\begin{subfigure}[t]{0.3\textwidth}
		\centering
		\includegraphics[width=\textwidth]{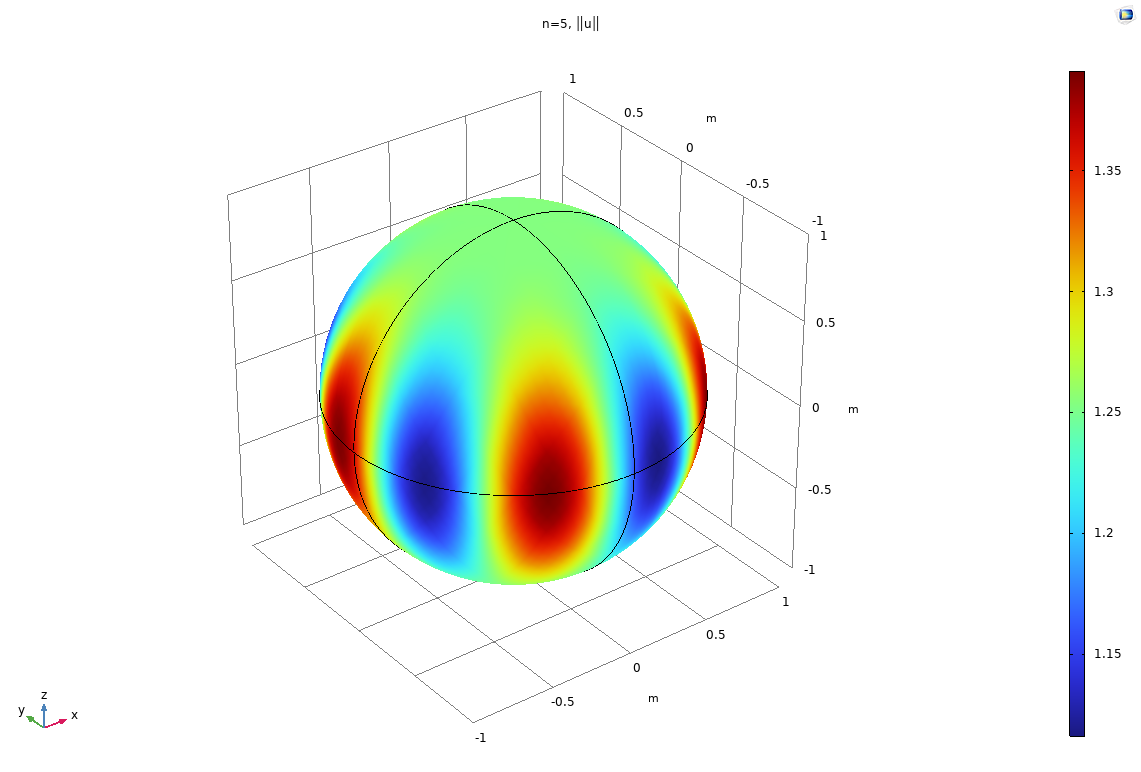} % 仅设置宽度，高度自动按比例计算
		\caption{$n=5$,\ $\|u\|_{L^2(D)}$}
		%\label{fig:1}
	\end{subfigure}% 取消换行空格
	\hfill% 自动分配剩余空间作为间距
	\begin{subfigure}[t]{0.3\textwidth}
		\centering
		\includegraphics[width=\textwidth]{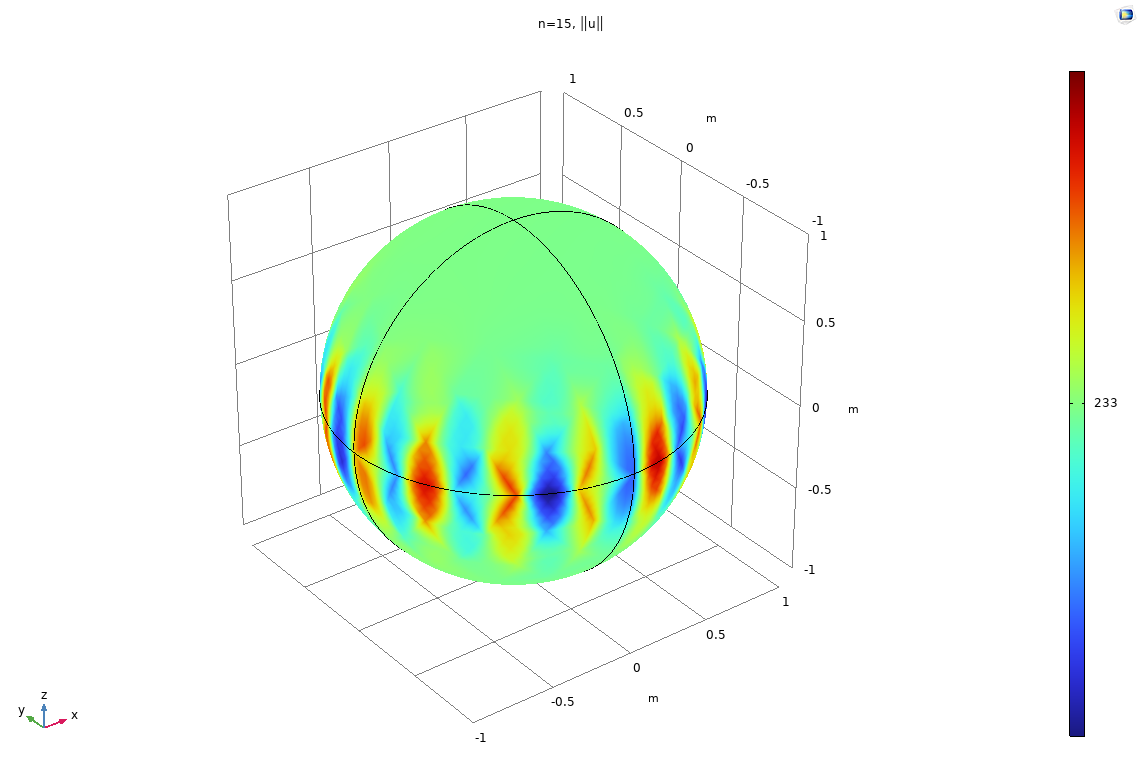}
		\caption{$n=15$,\ $\|u\|_{L^2(D)}$}
		%\label{fig:2}
	\end{subfigure}%
	\hfill%
	\begin{subfigure}[t]{0.3\textwidth}
		\centering
		\includegraphics[width=\textwidth]{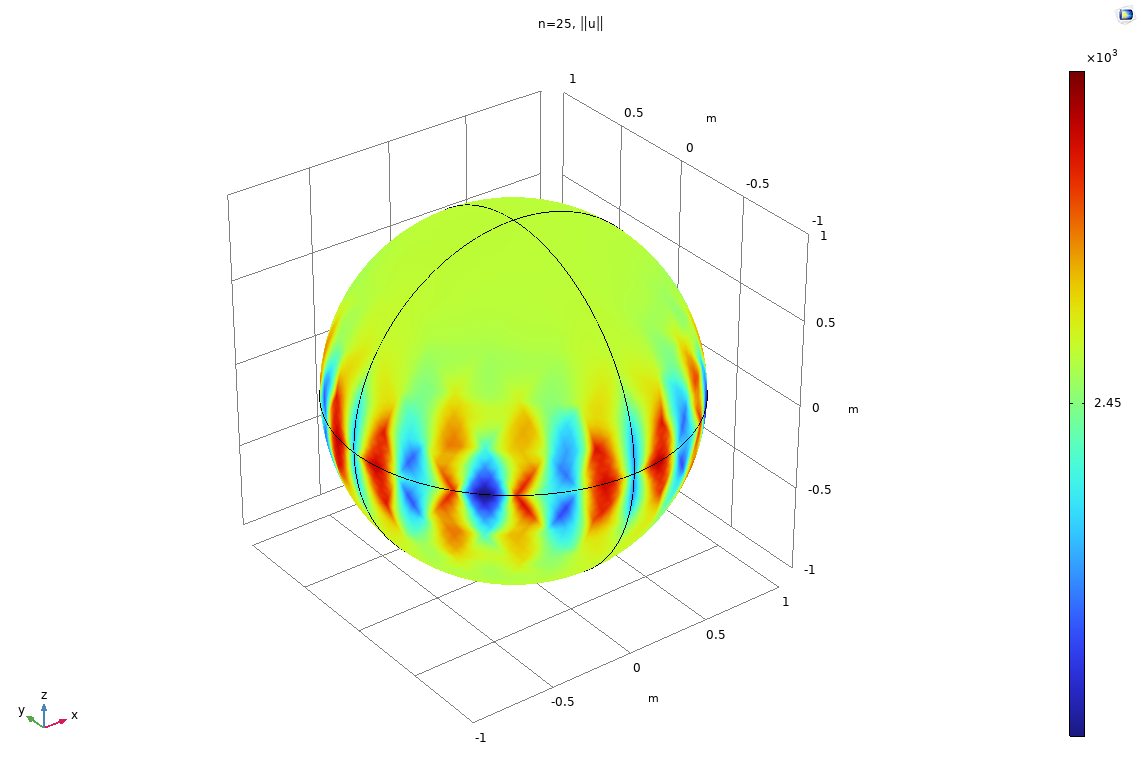}
		\caption{$n=25$,\ $\|u\|_{L^2(D)}$}
		%\label{fig:3}
	\end{subfigure}
	\caption{$\|u\|_{L^2(D)}$ for the incident wave $\mathbf{u}^i$ with different indices $n$ ($n=5,15,25$).}
	\label{fig:19}
\end{figure}

\begin{figure}
	\centering
	% 只控制宽度，不设置高度，保持原始比例
	\begin{subfigure}[t]{0.3\textwidth}
		\centering
		\includegraphics[width=\textwidth]{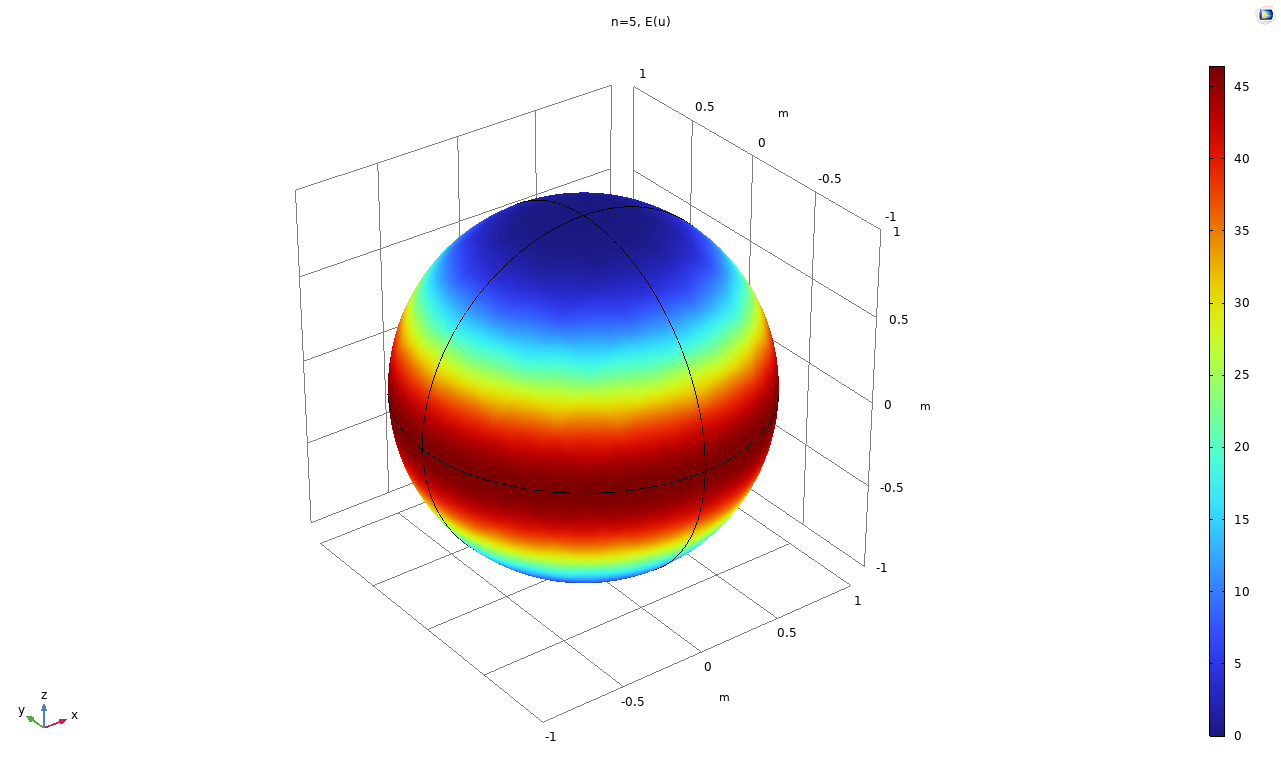} % 仅设置宽度，高度自动按比例计算
		\caption{$n=5$,\ ${\mathcal E}(\mathbf{u})$}
		%\label{fig:1}
	\end{subfigure}% 取消换行空格
	\hfill% 自动分配剩余空间作为间距
	\begin{subfigure}[t]{0.3\textwidth}
		\centering
		\includegraphics[width=\textwidth]{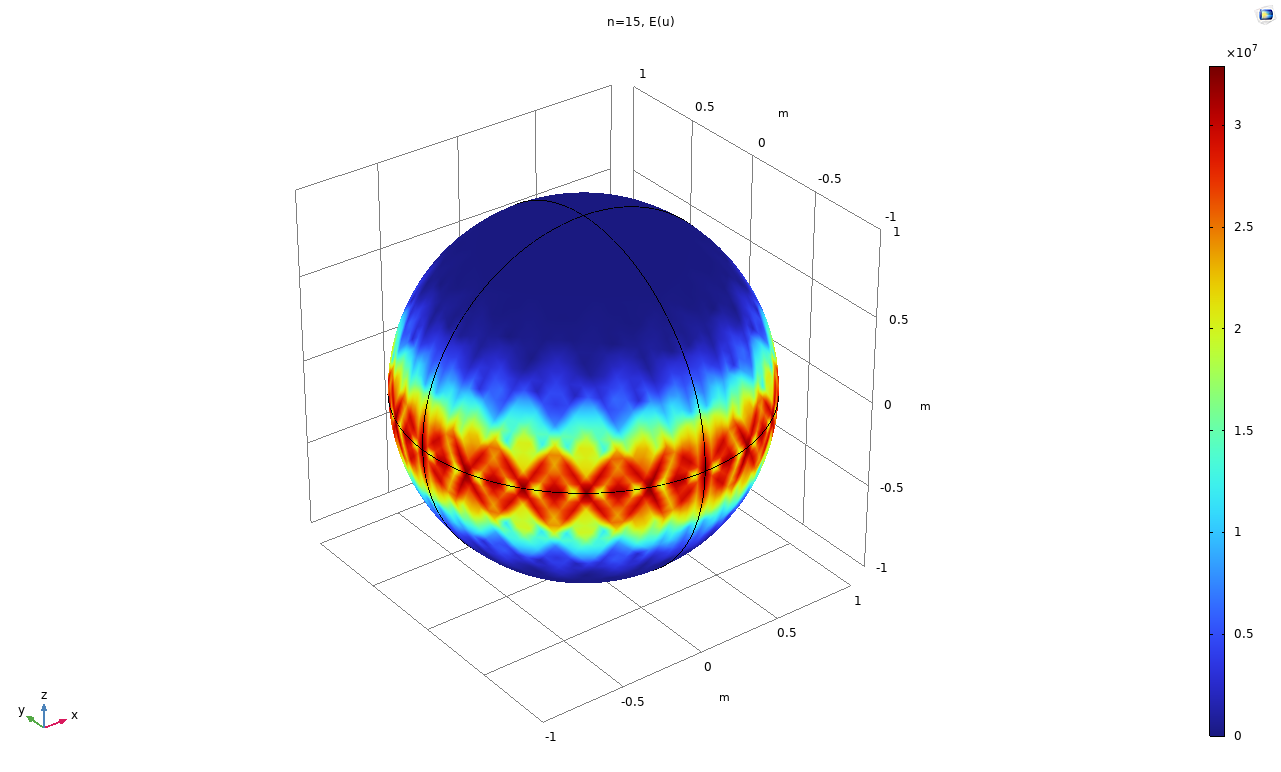}
		\caption{$n=15$,\ ${\mathcal E}(\mathbf{u})$}
		%\label{fig:2}
	\end{subfigure}%
	\hfill%
	\begin{subfigure}[t]{0.3\textwidth}
		\centering
		\includegraphics[width=\textwidth]{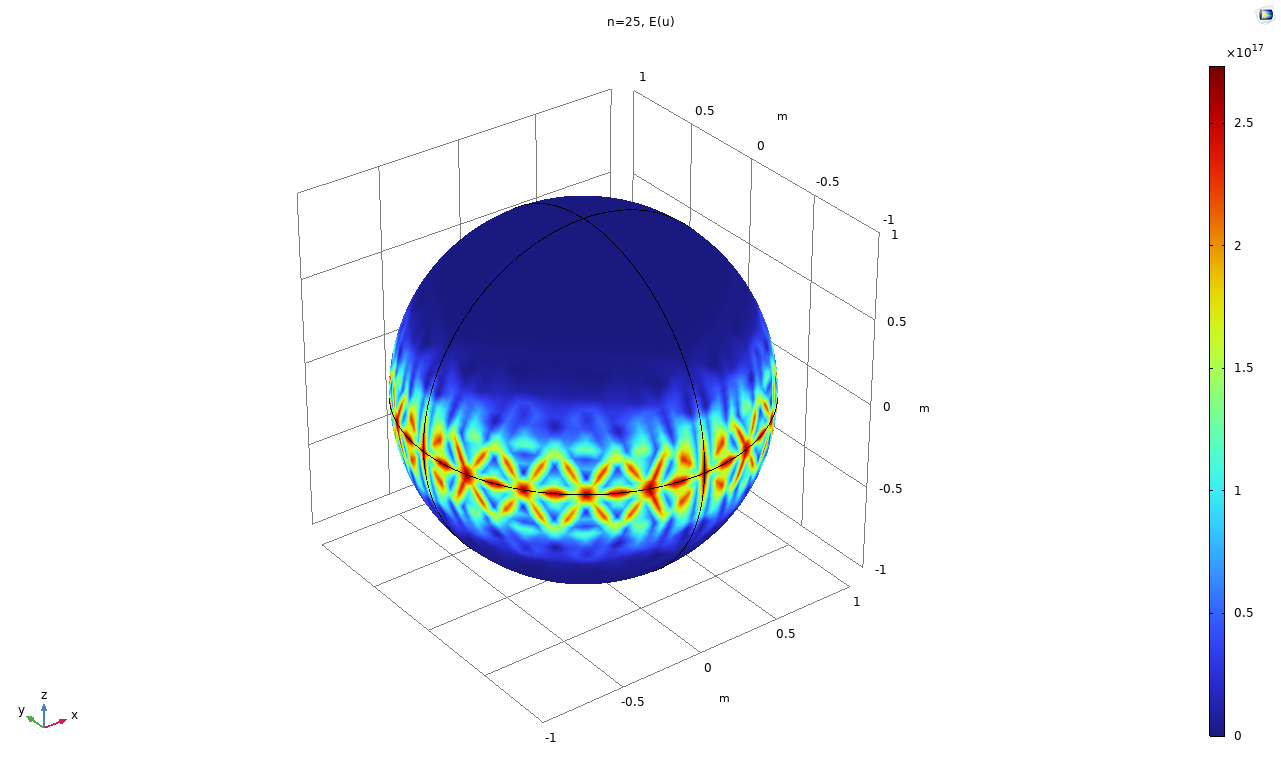}
		\caption{$n=25$,\  ${\mathcal E}(\mathbf{u})$}
		%\label{fig:3}
	\end{subfigure}
	\caption{ The stress $\mathcal{E}(\mathbf{u})$ of the exterior total field for the incident wave $\mathbf{u}^i$ with indices $n$ ($n=5, 15, 25$)}
	\label{fig:20}
\end{figure}

\begin{table}
	\centering
	\caption{Stress $E(\mathbf{u})$ and  $E(\mathbf{u}^s)$ in $\mathbb{R}^3$ and its lower bound estimate $ \beta(n,\zeta_2,k,\lambda,\mu,\tau)$ for different $n$ ($n=5,15,25$), and given $\zeta_2=1.1$. }
      \label{tab:22}
	\begin{tabular}{@{}cccc@{}}
		\toprule
		$n$ & $E(\mathbf{u})$ & $E(\mathbf{u}^s)$   & $ \beta(n,\zeta_2,k,\lambda,\mu,\tau)$  \\
		\midrule
		5 & $4.7163144808 \times 10^1$ &  $1.7184603683 \times 10^{-3}$  &$4.3753749785 \times 10^{-4}$    \\
		15 & $7.5832965064 \times 10^7$ & $7.5832965064 \times 10^7$   &$1.1521136196 \times 10^7$  \\
		25 & $6.6295335515 \times 10^{17}$ & $6.6295335515 \times 10^{17}$    &  $9.3633300494 \times 10^{16}$  \\
		\bottomrule
	\end{tabular}
\end{table}

	 	\bigskip
	\noindent\textbf{Declaration of competing interest.}
	The authors declare that they have no known competing financial interests or personal relationships that could have appeared to influence the work reported in this paper.

\bigskip

\noindent\textbf{Acknowledgment.}
The work of H. Diao is supported by the National Natural Science Foundation of China  (No. 12371422) and the Fundamental Research Funds for the Central Universities, JLU. The work of H. Liu is supported by the Hong Kong RGC General Research Funds (projects 11311122, 11300821, and 11303125),  the NSFC/RGC Joint Research Fund (project N\_CityU101/21), the France-Hong Kong ANR/RGC Joint Research Grant, A-CityU203/19.


\begin{thebibliography}{99}


    \bibitem{A12}
    J. Achenbach,  {\it Wave propagation in elastic solids}, Elsevier, 2012.
		
		\bibitem{ABD24}
		H. Ammari, S. Barandun, B. Davies, E. O. Hiltunen, T. Kosche and P. Liu,  {\it Exponentially localized interface eigenmodes in finite chains of resonators}, {\em Studies in Applied Mathematics}, {\bf 153(4)}(2024), e12765.
		
		
		
		
		\bibitem{ABU25}
		H. Ammari, S. Barandun and A. Uhlmann, {\it Subwavelength localization in disordered systems},  {\em Proceedings of the Royal Society A}, {\bf 481(2319)}(2025), 20250407.
		
		
		
		\bibitem{ABG}
		H. Ammari, E. Bretin, J. Garnier, H. Kang, H. Lee and A. Wahab, {\it Mathematical methods in elasticity imaging}, {\em Princeton University Press}, 2015.
		
		\bibitem{ACC}
		H. Ammari, D. P. Challa, A. P. Choudhury and M. Sini, {\it The equivalent media generated by bubbles of high contrasts: Volumetric metamaterials and metasurfaces}, {\em Multiscale Modeling $\&$ Simulation}, {\bf 18(1)} (2020), 240-293.
		
		\bibitem{ACL22}
		H. Ammari, Y. T. Chow and H. Liu,  {\it Localized sensitivity analysis at high-curvature boundary points of reconstructing inclusions in transmission problems}, {\em SIAM Journal on Mathematical Analysis}, {\bf 54(2)}(2022), 1543-1592.
		
		
		
	 
		\bibitem{ADH24}
		H. Ammari, B. Davies and E. O. Hiltunen, {\it Anderson localization in the subwavelength regime}, {\em Communications in Mathematical Physics},  {\bf 405(1)}(2024),  1.
	 
		
		
		
		
		
		
		
		
		\bibitem{AFGL}	
		H. Ammari, B. Fitzpatrick, D. Gontier, H. Lee and H. Zhang,  {\it Minnaert resonances for acoustic waves in bubbly media},  {\em In Annales de l'Institut Henri Poincar\'e C, Analyse non lin\'eaire}, {\bf 35(7)} (2018), 1975-1998. 	
		
		
		\bibitem{AFH18}
		H. Ammari, B. Fitzpatrick, E. O. Hiltunen and S. Yu,   {\it Subwavelength localized modes for acoustic waves in bubbly crystals with a defect}, {\em SIAM Journal on Applied Mathematics}, 78(6)(2018), 3316-3335.
		
		
		
		
		
		\bibitem{AHR25}
		H. Ammari, E. O. Hiltunen and L. Rueff,  {\it Space–time wave localization in systems of subwavelength resonators}, {\em In Proceedings A. The Royal Society},  {\bf 481(2307)}(2025), 20240698. 
		
		 
		
		
	 
		
		
		
		
	 
		
		\bibitem{BS81}	
		A. Ben-Menahem and  S. J. Singh, {\it Seismic Waves and Sources}, Springer-Verlag, New York, 1981.
		
		
   \bibitem{Born}
     P. Bormann, {\it New manual of seismological observatory practice 2 (NMSOP2)}, Deutsches GeoForschungsZentrum GFZ, Germany, 2012.
		
		
		
		\bibitem{CTL12}
	
		D. C. Calvo, A. L. Thangawng and C. N. Layman, {\it Low-frequency resonance of an oblate spheroidal cavity in a soft elastic medium},  {\em The Journal of the Acoustical Society of America}, {\bf 132(1)} (2012), EL1–EL7.
		
		
		
		
		
		
		
		
	\bibitem{CTL15}
		D. C. Calvo, A. L. Thangawng, C. N. Layman, R. Casalini and S. F. Othman, {\it Underwater sound transmission through arrays of disk cavities in a soft elastic medium},  {\em The Journal of the Acoustical Society of America}, {\bf 138(4)}(2015), 2537-2547.
		
		
		
		
		
		
		
		\bibitem{chengao}
	 B. Chen, Y. Gao, Y. Li and H. Liu, {\it Resonant modal approximation of time-domain elastic scattering from nano-bubbles in elastic materials}, {\em Multiscale Modeling $\&$ Simulation}, {\bf 22(2)} (2024), 713-751.
		
	
		
		 
		
		
		\bibitem{CK}
		D. Colton and R. Kress, {\it Inverse Acoustic and Electromagnetic Scattering Theory}, 4rd edition, Springer, Cham, 2019.
		
		 
		\bibitem{CG96}
		R. M. Corless, G. H. Gonnet, D. E. Hare, D. J. Jeffrey, and D. E. Knuth,  {\it On the Lambert W function}, {\em  Advances in Computational mathematics}, {\bf 5} (1996), 329-359.
		
		
		
		\bibitem{Dassions}
		G. Dassios and Z. Rigou, {\it Elastic herglotz functions}, {\emph{SIAM Journal on Applied Mathematics}}, \textbf{55(5)} (1995), 1345-1361.
		
	\bibitem{DGS}
	A. Dabrowski, A. Ghandriche and M. Sini, {\it Mathematical analysis of the acoustic imaging modality using bubbles as contrast agents at nearly resonating frequencies}, {\emph{Inverse Problems and Imaging}}, \textbf{15(5)} (2021), 555-597.
	
	
	\bibitem{DBH25}
	 B. Davies, S. Barandun, E. O. Hiltunen, R. V. Craster and H. Ammari,  {\it Two-scale effective model for defect-induced localization transitions in non-Hermitian systems}, {\em Physical Review B},  {\bf 111(3)}(2025), 035109.
	
	\bibitem{DLL}
		Y. Deng, H. Li and H. Liu, {\it On spectral properties of Neumann-Poincare operator and plasmonic cloaking in 3D elastostatics}, {\emph{J. Spectral Theory}}. \textbf{9(3)} (2019), 767-789.
		
		 
		
		\bibitem{DLL2020}
		Y. Deng, H. Li and H. Liu, {\it Spectral properties of Neumann-Poincar\'e operator and anomalous localized resonance in elasticity beyond quasi-static limit}, {\emph{Journal of Elasticity}}, \textbf{140(2)} (2020), 213-242.
		
 
		
		
		
		
		
 
		
		\bibitem{DLLT}
		H. Diao, H. Li, H. Liu and J. Tang, {\it Spectral properties of an acoustic-elastic transmission eigenvalue problem with applications}, {\em Journal of Differential Equations}, {\bf 371} (2023), 629-659.
		
		\bibitem{DTL}
		H. Diao, R. Tang and H. Liu, {\it On quasi-Minnaert resonances in elasticity and their applications to stress concentrations}. (2025), arXiv preprint arXiv:2505.03534.
		
		
		
	 
		
		
		
	 
		
		\bibitem{G10}
		
		
	S. Goldman, {\it Free energy wells for small gas bubbles in soft deformable materials}, {\em The Journal of chemical physics}, {\bf 132(16)}(2010).
		
		
		
		
		\bibitem{HL23}
		
		H. Hoppen, F. Langfeldt, W. Gleine and O. Von Estorff,  {\it Helmholtz resonator with two resonance frequencies by coupling with a mechanical resonator}, {\em Journal of sound and vibration}, {\bf 559} (2023), 117747.
		
		
		\bibitem{Kupradze}
		V. D. Kupradze, {\it Three-dimensional problems of elasticity and thermoelasticity}, Elsevier, 2012.
	 
		
		\bibitem{LB09}
		
		V. Leroy, A. Bretagne, M. Fink, H. Willaime, P. Tabeling and A. Tourin,  {\it Design and characterization of bubble phononic crystals},  {\em Applied Physics Letters}, {\bf 95(17)}	(2009).
		
	
		
		\bibitem{LPLF}
		M. Lanoy, R. Pierrat, F. Lemoult, M. Fink, V. Leroy and A. Tourin, {\it Subwavelength focusing in bubbly media using broadband time reversal}, {\em Physical Review B}, {\bf 91(22)} (2015), 224202.
		
	 
		
	
 
		
		
		\bibitem{LLZ}
		H. Li, H. Liu and J. Zou,  {\it Minnaert resonances for bubbles in soft elastic materials}, {\em SIAM Journal on Applied Mathematics}, {\bf 82(1)} (2022), 119-141.
		
		\bibitem{LZ23}
		H. Li and J. Zou,  {\it  Mathematical Justifications of Dipolar Resonances with Hard Inclusions Embedded in a Soft Elastic Material}, {\em SIAM Journal on Applied Mathematics}, {\bf 85(4)} (2025), 1810-1833.
	 
	 
		\bibitem{LLH06}
	Z. R. Li, C. W. Lim and L. H. He,  {\it Stress concentration around a nano-scale spherical cavity in elastic media: effect of surface stress}, {\em European Journal of Mechanics-A/Solids}, {\bf 25(2)}(2006), 260-270.
		
		
		
		
	 
		
		\bibitem{M1933}
		M. Minnaert, {\it On musical air-bubbles and the sounds of running water},  {\em London Edinburgh Dublin Philos. Mag}, {\bf 16} (1933), 235-248.
		
	
 
	
	
		
		\bibitem{OS}
		R. Ohayon and E. Sanchez-Palencia, {\it On the vibration problem for an elastic body surrounded by a slightly compressible fluid}, {\em RAIRO. Analyse num{\'e}rique}, {\bf 17(3)} (1983), 311-326.
		
		
		
		
	\bibitem{OSK}
	
	S. W. Ohl, A. Shrestha, B. C. Khoo and A. Kishen, {\it Characterizing bubble dynamics created by high-intensity focused ultrasound for the delivery of antibacterial nanoparticles into a dental hard tissue}, {\em Proceedings of the Institution of Mechanical Engineers, Part H: Journal of Engineering in Medicine}, {\bf 224(11)}(2010), 1285-1296.
	
		
		
		\bibitem{NIST}
		OlverDaniel. W. Frank,  Lozier. Daniel, Boisvert.  F. Ronald and Clark. Winthrop. Charles, {\it NIST Handbook of mathematical functions}, Cambridge University Press, 2010.
		
	
		
		\bibitem{PM73}
		Y. Pao and C. Mow, {\it Diffraction of Elastic Waves and Dynamic Stress Concentrations}, The Rand Corporation, New York, 1973.
		
		
		\bibitem{ROO}
		
		 H. Reese, S. W. Ohl and C. D. Ohl,  {\it Cavitation bubble induced wall shear stress on an elastic boundary}, {\em Physics of Fluids}, {\bf 35(7)}(2023).
		
		
	
	 
		
		\bibitem{SD23}
		M. Shehbaz, C. Du, D. Zhou, S. Xia and Z. Xu,  {\it Recent progress in dielectric resonator antenna: Materials, designs, fabrications, and their performance}, {\em Applied Physics Reviews},  {\bf 10(2)} (2023).
		
		
		\bibitem{SAB}
		
		R. Smith-Bindman, C. Aubin, J. Bailitz, R. N. Bengiamin,et al, {\it Ultrasonography versus computed tomography for suspected nephrolithiasis}, {\em New England Journal of Medicine}, {\bf 371(12)} (2014), 1100-1110.
		
		
		
		
	\bibitem{ZWH}

		
		Y. Zhong, J. Wang, J. Huang and  Y. Wang, {\it Cavitation caused by an elastic membrane deforming under the jetting of a spark-induced bubble},  {\em Physical Review Fluids}, {\bf 9(9)}(2024), 093604.
		
	\bibitem{ZDL}
		
		W. Zhou, H. Diao and H. Liu, {\it Quasi-Minnaert resonances in high-contrast acoustic structures and applications to invisibility cloaking}, {\em Journal of Computational Physics}, (2025),114310.
		
		
		
	\end{thebibliography}
\end{document}